\numberwithin{equation}{section}
\numberwithin{figure}{section}
\theoremstyle{plain}
\newtheorem{thm}{\protect\theoremname}[section]
\theoremstyle{plain}
\newtheorem{cor}[thm]{\protect\corollaryname}
\theoremstyle{plain}
\newtheorem{lem}[thm]{\protect\lemmaname}
\theoremstyle{plain}
\newtheorem{prop}[thm]{\protect\propositionname}
\theoremstyle{plain}
\newtheorem{fact}[thm]{\protect\factname}
\theoremstyle{definition}
\newtheorem{defn}[thm]{\protect\definitionname}
\theoremstyle{remark}
\newtheorem{claim}[thm]{\protect\claimname}
\theoremstyle{remark}
\newtheorem{rem}[thm]{\protect\remarkname}
\theoremstyle{plain}
\newtheorem*{thm*}{\protect\theoremname}
\newcommand{\classAC}{\operatorname{\mathsf{AC}}}
\newcommand{\orth}{\operatorname{orth}}
\newtheoremstyle{myplain}      {10pt}{10pt}{\itshape}{}{\scshape}{.}{.5em}{}
\newtheoremstyle{mydefinition} {10pt}{10pt}{}{}{\scshape}{.}{.5em}{}
\newtheoremstyle{myremark} {10pt}{10pt}{}{}{\itshape}{.}{.5em}{}
\renewcommand{\mathcal}[1]{\mathscr{#1}}
\def\@seccntformat#1{%
  \protect\textup{%
    \protect\@secnumfont
    \expandafter\protect\csname format#1\endcsname 
    \csname the#1\endcsname
    \protect\@secnumpunct
  }%
}
\newcommand \SparseDotfill {\leavevmode \leaders \hb@xt@ .7em{\hss .\hss }\hfill \kern \z@}
\def\@tocline#1#2#3#4#5#6#7{\relax
  \ifnum #1>\c@tocdepth 
  \else
    \par \addpenalty\@secpenalty\addvspace{\ifnum #1=1 2mm \else #2\fi}%
    \begingroup \hyphenpenalty\@M
    \@ifempty{#4}{%
      \@tempdima\csname r@tocindent\number#1\endcsname\relax
    }{%
      \@tempdima#4\relax
    }%
    \parindent\z@ \leftskip#3\relax \advance\leftskip\@tempdima\relax
    \rightskip\@pnumwidth plus4em \parfillskip-\@pnumwidth
          \ifnum #1=1 \bfseries #5\else #5\fi 
   \leavevmode\hskip-\@tempdima
      \ifcase #1
       \or\or \hskip 1em \or \hskip 2em \else \hskip 3em \fi%
#6     \nobreak\relax
{\ifnum #1=1\hfill \else \SparseDotfill\fi}
 \hbox to\@pnumwidth{\@tocpagenum{
    \ifnum #1=1 \bfseries \fi #7}}\par
    \nobreak
    \endgroup
  \fi}
\newcommand{\iu}{\mathbf{i}}
\DeclareMathOperator{\realpart}{Re}
\DeclareMathOperator{\imagpart}{Im}
\providecommand{\noopsort}[1]{}
\newcommand{\pomo}{\{-1,+1\}}
\newcommand{\onedeg}{\deg^+}
\DeclareMathOperator{\iterdisc}{rdisc}
\providecommand{\claimname}{Claim}
\providecommand{\corollaryname}{Corollary}
\providecommand{\definitionname}{Definition}
\providecommand{\factname}{Fact}
\providecommand{\lemmaname}{Lemma}
\providecommand{\propositionname}{Proposition}
\providecommand{\remarkname}{Remark}
\providecommand{\theoremname}{Theorem}
\begin{document}
\title[The Approximate Degree of DNF and CNF Formulas]{The Approximate Degree of DNF and CNF Formulas$^{*}$}
\author{Alexander A. Sherstov}
\thanks{$^{*}$ This manuscript is a much-expanded version of the STOC '22
paper, with several new results. Work supported by NSF grants CCF-1814947
and CCF-2220232. }
\thanks{\emph{\hspace{7mm}Author affiliation:} Computer Science Department,
UCLA, Los Angeles, CA~90095. \emph{Email:} \texttt{sherstov@cs.ucla.edu}}
\begin{abstract}
The \emph{approximate degree} of a Boolean function $f\colon\zoon\to\zoo$
is the minimum degree of a real polynomial $p$ that approximates
$f$ pointwise: $|f(x)-p(x)|\leq1/3$ for all $x\in\zoon.$ For every
$\delta>0,$ we construct CNF and DNF formulas of polynomial size
with approximate degree $\Omega(n^{1-\delta}),$ essentially matching
the trivial upper bound of $n.$ This improves polynomially on previous
lower bounds and fully resolves the approximate degree of constant-depth
circuits $(\classAC^{0}),$ a question that has seen extensive research
over the past 10~years. Prior to our work, an $\Omega(n^{1-\delta})$
lower bound was known only for $\classAC^{0}$ circuits of depth that
grows with $1/\delta$ (Bun and Thaler, FOCS~2017). Furthermore,
the CNF and DNF formulas that we construct are the simplest possible
in that they have \emph{constant} width. Our result holds even for
\emph{one-sided} approximation: for any~$\delta>0$, we construct
a polynomial-size constant-width CNF formula with one-sided approximate
degree $\Omega(n^{1-\delta})$. 

Our work has the following consequences.
\begin{enumerate}
\item We essentially settle the communication complexity of $\classAC^{0}$
circuits in the bounded-error quantum model, $k$-party number-on-the-forehead
randomized model, and $k$-party number-on-the-forehead nondeterministic
model: we prove that for every $\delta>0$, these models require $\Omega(n^{1-\delta})$,
$\Omega(n/4^{k}k^{2})^{1-\delta}$, and $\Omega(n/4^{k}k^{2})^{1-\delta}$,
respectively, bits of communication even for polynomial-size constant-width
CNF formulas.
\item In particular, we show that the multiparty communication class $\coNP_{k}$
can be separated essentially optimally from $\NP_{k}$ and $\BPP_{k}$
by a particularly simple function, a polynomial-size constant-width
CNF formula.
\item We give an essentially tight separation, of $O(1)$ versus $\Omega(n^{1-\delta})$,
for the one-sided versus two-sided approximate degree of a function;
and $O(1)$ versus $\Omega(n^{1-\delta})$ for the one-sided approximate
degree of a function $f$ versus its negation $\neg f$.
\end{enumerate}
\medskip{}
Our proof departs significantly from previous approaches and contributes
a novel, number-theoretic method for amplifying approximate degree.
\end{abstract}

\maketitle
\belowdisplayskip=8pt plus 3pt minus 2pt 
\abovedisplayskip=8pt plus 3pt minus 2pt 
\thispagestyle{empty}
\allowdisplaybreaks[1]

\newpage\thispagestyle{empty}
\hypersetup{linkcolor=black} \tableofcontents{}\newpage{}

\hypersetup{linkcolor=black} 
\thispagestyle{empty}

\hyphenation{com-po-nent-wise}

\pagebreak{}

\section{Introduction}

Representations of Boolean functions by real polynomials play a central
role in theoretical computer science. Our focus in this paper is on
\emph{approximate degree}, a particularly natural and useful complexity
measure. Formally, the $\epsilon$-approximate degree of a Boolean
function $f\colon\zoon\to\zoo$ is denoted \emph{$\deg_{\epsilon}(f)$}
and defined as the minimum degree of a real polynomial $p$ that approximates
$f$ within $\epsilon$ pointwise: $|f(x)-p(x)|\leq\epsilon$ for
all $x\in\zoon.$ The standard choice of the error parameter is $\epsilon=1/3,$
which is a largely arbitrary setting that can be replaced by any other
constant in $(0,1/2)$ without affecting the approximate degree by
more than a multiplicative constant. Since every function $f\colon\zoon\to\zoo$
can be computed with zero error by a polynomial of degree at most
$n,$ the $\epsilon$-approximate degree is always at most $n.$

The notion of approximate degree originated three decades ago in the
pioneering work of Nisan and Szegedy~\cite{nisan-szegedy94degree}
and has since proved to be a powerful tool in theoretical computer
science. Upper bounds on approximate degree have algorithmic applications,
whereas lower bounds are a staple in complexity theory. On the algorithmic
side, approximate degree underlies many of the strongest results obtained
to date in computational learning, differentially private data release,
and algorithm design in general. In complexity theory, the notion
of approximate degree has produced breakthroughs in quantum query
complexity, communication complexity, and circuit complexity. A detailed
bibliographic overview of these applications can be found in~\cite{sherstov17algopoly,bun-thaler17adeg-ac0}.

Approximate degree has been particularly prominent in the study of
$\classAC^{0},$ the class of polynomial-size constant-depth circuits
with gates $\vee,\wedge,\neg$ of unbounded fan-in. The simplest functions
in $\classAC^{0}$ are conjunctions and disjunctions, which have depth~$1$,
followed by polynomial-size CNF and DNF formulas, which have depth~$2$,
followed in turn by higher-depth circuits. Lower bounds on the approximate
degree of $\classAC^{0}$ functions have been used to settle the quantum
query complexity of Grover search~\cite{beals-et-al01quantum-by-polynomials},
element distinctness~\cite{aaronson-shi04distinctness}, and a host
of other problems~\cite{BKT17poly-strikes-back}; resolve the communication
complexity of set disjointness in the two-party quantum model~\cite{razborov02quantum,sherstov07quantum}
and number-on-the-forehead multiparty model~\cite{sherstov07ac-majmaj,sherstov07quantum,lee-shraibman08disjointness,chatt-ada08disjointness,dual-survey,beame-huyn-ngoc09multiparty-focs,sherstov12mdisj,sherstov13directional};
separate the communication complexity classes $\PP$ and $\UPP$~\cite{buhrman-dewolf01polynomials,sherstov07ac-majmaj};
and separate the polynomial hierarchy in communication complexity
from the communication class $\UPP$~\cite{RS07dc-dnf}. Despite
this array of applications and decades of study, our understanding
of the approximate degree of $\classAC^{0}$ has remained surprisingly
fragmented and incomplete. In this paper, we set out to resolve this
question in full.

In more detail, previous work on the approximate degree of $\classAC^{0}$
started with the seminal 1994 paper of Nisan and Szegedy~\cite{nisan-szegedy94degree},
who proved that the OR function on $n$ bits has approximate degree
$\Theta(\sqrt{n}).$ This was the best result until Aaronson and Shi's
celebrated lower bound of $\Omega(n^{2/3})$ for the element distinctness
problem~\cite{aaronson-shi04distinctness}. In a beautiful paper
from 2017, Bun and Thaler~\cite{bun-thaler17adeg-ac0} showed that
$\classAC^{0}$ contains functions in $n$ variables with approximate
degree $\Omega(n^{1-\delta})$, where the constant $\delta>0$ can
be made arbitrarily small at the expense of increasing the depth of
the circuit. In follow-up work, Bun and Thaler~\cite{BT18ac0-large-error}
proved an $\Omega(n^{1-\delta})$ lower bound for approximating $\classAC^{0}$
circuits even with error exponentially close to $1/2,$ where once
again the circuit depth grows with $1/\delta$. A stronger yet result
was obtained by Sherstov and Wu~\cite{sherstov-wu18sign-ac0}, who
showed that $\classAC^{0}$ has essentially the maximum possible \emph{threshold
degree} (defined as the limit of $\epsilon$-approximate degree as
$\epsilon\nearrow1/2$) and \emph{sign-rank} (a generalization of
threshold degree to arbitrary bases rather than just the basis of
monomials). Quantitatively, the authors of~\cite{sherstov-wu18sign-ac0}
proved a lower bound of $\Omega(n^{1-\delta})$ for threshold degree
and $\exp(\Omega(n^{1-\delta}))$ for sign-rank, essentially matching
the trivial upper bounds. As before, $\delta>0$ can be made arbitrarily
small at the expense of increasing the circuit depth. In particular,
$\classAC^{0}$ requires a polynomial of degree $\Omega(n^{1-\delta})$
even for approximation to error doubly (triply, quadruply, quintuply\ldots )
exponentially close to $1/2$.

The lower bounds of~\cite{bun-thaler17adeg-ac0,BT18ac0-large-error,sherstov-wu18sign-ac0}
show that $\classAC^{0}$ functions have essentially the maximum possible
complexity\textemdash but only if one is willing to look at circuits
of \emph{arbitrarily large} constant depth. What happens at \emph{small}
depths has been a wide open problem, with no techniques to address
it. Bun and Thaler observe that their $\classAC^{0}$ circuit in~\cite{bun-thaler17adeg-ac0}
with approximate degree $\Omega(n^{1-\delta})$ can be flattened to
produce a DNF formula of size $\exp(\log^{O(\log(1/\delta))}n)$,
but this is superpolynomial and thus no longer in $\classAC^{0}$.
The only progress of which we are aware is an $\Omega(n^{3/4-\delta})$
lower bound obtained for polynomial-size DNF formulas in~\cite{BKT17poly-strikes-back,mande-thaler-zhu20distinctness}.
This leaves a polynomial gap in the approximate degree for small depth
versus arbitrary constant depth. Our main contribution is to definitively
resolve the approximate degree of $\classAC^{0}$ by constructing,
for any constant $\delta>0,$ a polynomial-size DNF formula with approximate
degree $\Omega(n^{1-\delta})$. We now describe our main result and
its generalizations and applications.

\subsection{Approximate degree of DNF and CNF formulas}

Recall that a \emph{literal} is a Boolean variable $x_{1},x_{2},\ldots,x_{n}$
or its negation $\overline{x_{1}},\overline{x_{2}},\ldots,\overline{x_{n}}$.
A conjunction of literals is called a \emph{term}, and a disjunction
of literals is called a \emph{clause}. The \emph{width} of a term
or clause is the number of literals that it contains. A \emph{DNF
formula} is a disjunction of terms, and analogously a \emph{CNF formula}
is a conjunction of clauses. The \emph{width }of a DNF or CNF formula
is the maximum width of a term or clause in it, respectively. One
often refers to DNF and CNF formulas of width $k$ as $k$-DNF and
$k$-CNF formulas. The \emph{size }of a DNF or CNF formula is the
total number of terms or clauses, respectively, that it contains.
Thus, $\classAC^{0}$ circuits of depth~$1$ correspond precisely
to clauses and terms, whereas $\classAC^{0}$ circuits of depth~$2$
correspond precisely to polynomial-size DNF and CNF formulas. Our
main result on approximate degree is as follows.
\begin{thm}[Main result]
\label{thm:MAIN-dnf}Let $\delta>0$ be any constant. Then for each
$n\geq1,$ there is an $($explicitly given$)$ function $f\colon\zoon\to\zoo$
that has approximate degree
\[
\deg_{1/3}(f)=\Omega(n^{1-\delta})
\]
and is computable by a DNF formula of size $n^{O(1)}$ and width $O(1).$
\end{thm}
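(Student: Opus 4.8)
My plan is to build the DNF formula together with a \emph{dual witness}, exploiting the LP-duality characterization of approximate degree: to prove $\deg_{1/3}(f)=\Omega(n^{1-\delta})$ it suffices to exhibit $\psi\colon\zoon\to\mathbb{R}$ with $\|\psi\|_{1}=1$, correlation $\langle\psi,(-1)^{f}\rangle>1/3$, and \emph{pure high degree} $d=\Omega(n^{1-\delta})$ (i.e.\ $\psi$ orthogonal to every monomial of degree $<d$). It is cleaner to first target the stronger \emph{one-sided} statement for a \emph{CNF} formula $g$: a one-sided dual witness of pure high degree $d$ certifies $\deg^{+}_{1/3}(g)\ge d$, hence $\deg_{1/3}(g)\ge d$; and since approximate degree is invariant under negation and De~Morgan turns a width-$k$, size-$s$ CNF into a width-$k$, size-$s$ DNF, the formula $\neg g$ is then exactly the object asked for in Theorem~\ref{thm:MAIN-dnf}. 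So the whole game reduces to: a poly-size, constant-width CNF equipped with a one-sided dual witness of pure high degree $n^{1-\delta}$.

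\textbf{Amplification, and why the naive route fails.} The standard engine for boosting pure high degree is \emph{dual block composition}: from a witness $\psi$ for a function $h$ (pure high degree $k_{h}$) and a witness $\phi$ for a small gadget $\gamma$ that is itself a narrow DNF/CNF (pure high degree $k_{\gamma}$), one forms $\psi\star\phi$, a witness for $h\circ\gamma$ of pure high degree $\ge k_{h}k_{\gamma}$, with $L_{1}$ norm multiplicative, correlation controlled by those of $\psi$ and $\phi$, and one-sidedness inherited when $\phi$ is one-sided — this is the mechanism underlying \cite{bun-thaler17adeg-ac0,BT18ac0-large-error,sherstov-wu18sign-ac0}. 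Iterating $t$ times starting from $\mathrm{OR}$ (pure high degree $\sqrt{m}$) with a constant-size gadget reaches pure high degree $n^{1-\Theta(1/t)}$, but each composition adds two circuit layers, so the resulting $g$ has depth $\Theta(t)$ rather than depth~$2$. Collapsing this tower to depth~$2$ while keeping width constant is the crux.

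\textbf{The number-theoretic gadget.} The new ingredient I would pursue replaces the \emph{tower} of $t$ compositions by a single \emph{flat} family of compositions indexed by a number-theoretic structure, so that the top combining function is only depth~$1$ (an AND, giving a CNF) yet simulates the root of the whole tower. Concretely: arrange the $n$ variables as an array indexed by residues/divisors of a suitably chosen modulus $M$ — a product of small primes or prime powers with $M\approx n$; let the clauses of $g$ enforce only \emph{local} arithmetic/consistency constraints, each reading $O(1)$ variables, whose conjunction over all indices confines the input to a ``promise'' subcube $\mathcal{P}\subseteq\zoon$ on which $g$ agrees with a hard, non-CNF target $H$ built from the iterated composition. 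The dual witness is then the composed object $\psi\star(\text{tensor of gadget duals})$ transported onto $\mathcal{P}$. Number theory enters twice: to guarantee that $\mathcal{P}$ is dense and structured enough that this transport preserves the $L_{1}$ normalization and keeps correlation above $1/3$; and to show, via a count of solutions of the governing congruences, that the flat layer contributes a \emph{polynomial} (in $n$) factor to the pure high degree rather than the constant factor a single ordinary composition would give — so that an absolute constant number of ``number-theoretic levels'' already reaches $n^{1-\delta}$ for every $\delta>0$.

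\textbf{Main obstacle.} The hard part is making all parameters compatible at once: constant clause width forces each local arithmetic check to touch only $O(1)$ variables, which sharply limits the congruences one can impose directly (one likely has to encode digits redundantly and verify carries locally, in the style of a ripple-carry circuit) while still cutting out exactly $\mathcal{P}$ with only $n^{O(1)}$ clauses; the promise set must be dense and ``polynomial-closed'' enough that restricting the composed witness to it does not destroy correlation — error accumulation through composition is the usual nuisance and bites hardest precisely when the degree is near-maximal; and the number-theoretic density estimates must line up with the combinatorial block sizes so that the bound emerges as $n^{1-\delta}$ for arbitrary $\delta$, not merely $n^{1-c}$ for a fixed $c$. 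I expect the bulk of the work to be (a) the explicit design of the constant-width clause set together with the proof that it defines $\mathcal{P}$ precisely, and (b) the number-theoretic amplification lemma showing that the transported witness retains pure high degree $\Omega(n^{1-\delta})$ — the paper's stated novel contribution.
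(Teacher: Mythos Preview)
Your proposal misidentifies the structural crux. You frame the problem as needing to ``collapse the depth-$t$ tower to a single flat level'' via a promise subcube $\mathcal{P}$ cut out by local arithmetic constraints. The paper's mechanism is different and, in a sense, simpler: it \emph{keeps} the iteration but designs each amplification step $f\mapsto f\circ H$ so that every output bit of $H$ is itself a monotone DNF of \emph{constant} width. Since substituting constant-width monotone DNFs into the variables of a constant-width monotone DNF yields another constant-width monotone DNF (widths multiply), a constant number of iterations stays at depth~$2$ throughout. No tower-collapse, no promise restriction, no ripple-carry encoding is needed.

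The number theory also plays a different role than you guess. It is not used to define a promise subcube or to ``count solutions of congruences that boost pure high degree.'' Rather, it constructs an explicit low-discrepancy set of integers modulo $r$, which is converted into a \emph{balanced coloring} $\gamma\colon\binom{[n]}{k}\to[r]$: for almost every large $A\subseteq[n]$, each color class occupies a $(1\pm o(1))/r$ fraction of $\binom{A}{k}$. From $\gamma$ one builds $r$ probability distributions $\lambda_{1},\dots,\lambda_{r}$ on $\{0,1\}^{n}$ that are (i) essentially supported on pairwise disjoint sets of Hamming-weight-$k$ strings and (ii) pairwise indistinguishable by polynomials of degree $\Omega(\sqrt{n/m})$. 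The ``decoder'' $h$ that recovers the color from a weight-$k$ string is then a monotone $(k{+}1)$-DNF, and $H(x_{1},\dots,x_{\theta})=\bigvee_{i}h(x_{i})$ is the amplification map. The dual witness for $f\circ H$ is built by replacing each point in the domain of the dual for $f|_{\le\theta}$ by the corresponding product $\bigotimes_{i}\lambda_{v_{i}}$ (plus a Razborov--Sherstov correction to truncate high Hamming weight). Your proposed route---transport a block-composed dual onto a subcube defined by arithmetic checks and argue that congruence counts give a polynomial factor in the degree---has no visible mechanism for preserving orthogonality to low-degree polynomials after the restriction, and is not how the argument goes.
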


\noindent Theorem~\ref{thm:MAIN-dnf} almost matches the trivial
upper bound of $n$ on the approximate degree of any function. Thus,
the theorem shows that $\classAC^{0}$ circuits of depth~$2$ already
achieve essentially the maximum possible approximate degree. This
depth cannot be reduced further because $\classAC^{0}$ circuits of
depth~$1$ have approximate degree~$O(\sqrt{n}).$ Finally, the
DNF formulas constructed in Theorem~\ref{thm:MAIN-dnf} are the simplest
possible in that they have \emph{constant }width.

Recall that previously, a lower bound of $\Omega(n^{1-\delta})$ for
$\classAC^{0}$ was known only for circuits of large constant depth
that grows with $1/\delta$. The lack of progress on small-depth $\classAC^{0}$
prior to this paper had experts seriously entertaining~\cite{BT18ac0-large-error}
the possibility that $\classAC^{0}$ circuits of any given depth $d$
have approximate degree $O(n^{1-\delta_{d}})$, for some constant
$\delta_{d}=\delta_{d}(d)>0$. Such an upper bound would have far-reaching
consequences in computational learning and circuit complexity. Theorem~\ref{thm:MAIN-dnf}
rules it out.

\subsection{Large-error approximation}

Any Boolean function can be approximated pointwise within $1/2$ in
a trivial manner, by a constant polynomial. Approximation within $\frac{1}{2}-o(1),$
on the other hand, is a meaningful and extremely useful notion. We
obtain the following strengthening of our main result, in which the
approximation error is relaxed from $1/3$ to an optimal $\frac{1}{2}-\frac{1}{n^{\Theta(1)}}.$
\begin{thm}[Main result for large error]
\label{thm:MAIN-dnf-low-error}Let $\delta>0$ and $C\geq1$ be any
constants. Then for each $n\geq1,$ there is an $($explicitly given$)$
function $f\colon\zoon\to\zoo$ that has approximate degree
\[
\deg_{\frac{1}{2}-\frac{1}{n^{C}}}(f)=\Omega(n^{1-\delta})
\]
and is computable by a DNF formula of size $n^{O(1)}$ and width $O(1).$
\end{thm}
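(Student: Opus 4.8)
The plan is to derive Theorem~\ref{thm:MAIN-dnf-low-error} from the one-sided strengthening of Theorem~\ref{thm:MAIN-dnf} by a single application of error amplification against a wide $\operatorname{OR}$ gate, carried out on the dual side. Concretely, we use a polynomial-size, constant-width DNF $h\colon\zoo^{n_0}\to\zoo$ whose \emph{one-sided} approximate degree --- on whichever of the two sides the argument below calls for --- is $\Omega(n_0^{\,1-\delta'})$ for an auxiliary constant $\delta'\in(0,\delta)$; the introduction states such a lower bound for a CNF, and its negation is a DNF whose one-sided approximate degree on the opposite fiber is the same, so both orientations are at hand. Recall the LP-duality characterization of large-error approximate degree: $\deg_{\frac12-\beta}(F)>D$ holds iff there is a \emph{dual witness}, i.e.\ a function $\psi$ on the inputs of $F$ with $\|\psi\|_1=1$, mean zero, Fourier support confined to monomials of degree $>D$, and correlation $\langle\psi,F\rangle>\tfrac12-\beta$. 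Thus, for $\beta=N^{-C}$, the task is to exhibit a dual witness of pure high degree $\Omega(N^{1-\delta})$ that correlates with our function to within $\beta$ of the extremal value $\tfrac12$.

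Fix $\delta$ and $C$, choose $\delta'\in(0,\delta)$, and set $m:=\lceil n_0^{\epsilon}\rceil$ for a small constant $\epsilon=\epsilon(\delta,\delta')>0$. Define
\[
F\;:=\;\operatorname{OR}_m\circ h\colon\ \zoo^{N}\to\zoo,\qquad N:=m\,n_0=\Theta(n_0^{1+\epsilon}),
\]
the block composition feeding $m$ disjoint copies of $h$ into an $m$-bit $\operatorname{OR}$. Since $h$ is a DNF, $F$ is a DNF; it has size $m\cdot n_0^{O(1)}=N^{O(1)}$ and the same constant width, and after padding with dummy variables we may assume $N=n$. Note that $F^{-1}(0)=(h^{-1}(0))^m$, i.e.\ $F$ outputs $0$ precisely when all $m$ copies of $h$ evaluate to $0$.

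The heart of the argument is the dual witness for $F$. Let $\phi$ be a dual witness for the large-error approximate degree of $\operatorname{OR}_m$: a mean-zero $\phi$ on $\zoo^m$ with $\|\phi\|_1=1$, pure high degree $\geq1$, and $\langle\phi,\operatorname{OR}_m\rangle>\tfrac12-\tfrac12 N^{-C}$; such a $\phi$ exists because $\operatorname{OR}_m$ cannot be pointwise approximated by a constant to error less than $\tfrac12$, whereas $\tfrac12-\tfrac12 N^{-C}<\tfrac12$. Let $\psi_h$ be a one-sided dual witness for $h$: $\|\psi_h\|_1=1$, pure high degree exceeding $d$ with $d=\Omega(n_0^{1-\delta'})$, and --- the key feature --- nonnegative on one of the two fibers of $h$, which we orient to be $h^{-1}(0)$, matching the $0$-fiber of $F$. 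Form $\Psi$, the dual block composition of $\phi$ (outer) with $m$ copies of $\psi_h$ (inner), as in the pattern-matrix / dual-product construction. Its standard properties give $\|\Psi\|_1=1$, $\Psi$ mean zero, and pure high degree at least a constant times $(\text{pure high degree of }\phi)\cdot(\text{pure high degree of }\psi_h)\geq1\cdot d=\Omega(n_0^{1-\delta'})$, which is $\Omega(N^{1-\delta})$ once $\epsilon$ is small enough relative to $\delta-\delta'$. It remains to control the correlation $\langle\Psi,F\rangle$: collapsing $\Psi$ along the $m$ blocks reduces it to $\langle\phi,\operatorname{OR}_m\rangle$ up to an error term, and the sign constraint on $\psi_h$ over $h^{-1}(0)$ --- the very fiber on which $F$ vanishes --- is exactly what confines that error term below $\tfrac12 N^{-C}$, giving $\langle\Psi,F\rangle>\tfrac12-N^{-C}$. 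Assembling the three properties of $\Psi$, duality yields $\deg_{\frac12-N^{-C}}(F)=\Omega(N^{1-\delta})$, as claimed.

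The step I expect to be the main obstacle is precisely this correlation estimate. Composing the \emph{two-sided} dual of $h$ with $\operatorname{OR}$ would drive the composed correlation toward $0$ instead of toward $\tfrac12$; it is the existence of a one-sided hard DNF --- a genuine product of the machinery behind Theorem~\ref{thm:MAIN-dnf}, not of the trivial upper bound $n$ --- together with the alignment of its one-sided fiber with the $0$-fiber of $\operatorname{OR}\circ h$ that rescues the amplification. This step adapts the large-error amplification developed for $\classAC^0$ in~\cite{BT18ac0-large-error}. (Alternatively, one could forgo the reduction and rerun the number-theoretic amplification underlying Theorem~\ref{thm:MAIN-dnf} with the error parameter set to $\tfrac12-n^{-C}$ from the outset; the $\operatorname{OR}$-composition route has the advantage of isolating the entire large-error difficulty inside a single, well-understood composition lemma.)
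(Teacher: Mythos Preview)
Your parenthetical ``alternative'' is exactly what the paper does: the hardness amplification step (Corollary~\ref{cor:hardness-amplification}) already carries an arbitrary error parameter, degrading it by only $1/T^{A}$ per iteration, so starting at $\epsilon=\tfrac12-1/T_0^{A}$ and iterating a constant number of times yields Theorem~\ref{thm:adeg-dnf} directly with error $\tfrac12-1/n^{\Delta}$. No separate amplification stage is needed.

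Your primary route via $\operatorname{OR}_m\circ h$ has a genuine gap at the correlation step. For a constant-width DNF $h$, the orientation you are forced into is $\psi_h\ge 0$ on $h^{-1}(0)$ (equivalently, $\psi_h$ is a one-sided dual for $\neg h$); the other orientation is unavailable because $\onedeg_0(h)=O(1)$ for any bounded-width DNF, as the sum of its terms is already a one-sided approximant. But with that orientation the sign of $\psi_h$ tracks $\neg h$, not $h$: the negative part $\mu_-$ of $\psi_h$ is supported inside $h^{-1}(1)$, so under the block distribution $\Lambda_z$ each coordinate with $z_i=0$ forces $h(x_i)=1$ and hence $F(x)=1$. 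A direct computation then gives
\[
\frac{\langle F,\Psi\rangle}{\|\Psi\|_1}\;=\;\sum_{z}\phi(z)\,\E_{\Lambda_z}[F]\;=\;-\,\phi(1^m)\,(1-p_1)^{m},
\]
where $p_1=\Pr_{\mu_+}[h=1]$. This is \emph{not} $\langle\phi,\operatorname{OR}_m\rangle$ minus a small error; it is a single term whose magnitude you do not control with only a constant-error one-sided assumption (you know $1-p_1>2\epsilon$ but have no upper bound short of $1$), and in any case no $\phi$ with $\langle\phi,\operatorname{OR}_m\rangle$ near $\tfrac12$ makes it close to $\tfrac12$. The compositions that \emph{do} yield correlation near $\tfrac12$ are $\operatorname{OR}_m\circ g$ with $g$ the CNF and $\psi\ge 0$ on $g^{-1}(1)$, or equivalently $\operatorname{AND}_m\circ h$; both are depth-$3$ circuits, not polynomial-size DNFs, for any $m=\omega(1)$. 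Finally, if you invoke the full strength of Theorem~\ref{thm:MAIN-one-sided} (one-sided error $\tfrac12-1/n^{C}$) to force $p_1$ tiny, the composition becomes superfluous: that theorem already implies Theorem~\ref{thm:MAIN-dnf-low-error} in one line via $\deg_\epsilon=\deg_\epsilon(\neg\cdot)\ge\onedeg_\epsilon(\neg\cdot)$.
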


\noindent To rephrase Theorem~\ref{thm:MAIN-dnf-low-error}, polynomial-size
DNF formulas require degree $\Omega(n^{1-\delta})$ for approximation
not only to constant error but even to error $\frac{1}{2}-\frac{1}{n^{C}},$
where $C\geq1$ is an arbitrarily large constant. The error parameter
in Theorem~\ref{thm:MAIN-dnf-low-error} cannot be relaxed further
to $\frac{1}{2}-\frac{1}{n^{\omega(1)}}$ because any DNF formula
with $m$ terms can be approximated to error $\frac{1}{2}-\Omega(\frac{1}{m})$
by a polynomial of degree $O(\sqrt{n\log m})$.

Negating a function has no effect on the approximate degree. Indeed,
if $f$ is approximated to error $\epsilon$ by a polynomial $p$,
then the negated function $\neg f=1-f$ is approximated to the same
error $\epsilon$ by the polynomial $1-p.$ With this observation,
Theorems~\ref{thm:MAIN-dnf} and~\ref{thm:MAIN-dnf-low-error} carry
over to CNF formulas:
\begin{cor}
\label{cor:MAIN-cnf}Let $\delta>0$ and $C\geq1$ be any constants.
Then for each $n\geq1,$ there is an $($explicitly given$)$ function
$g\colon\zoon\to\zoo$ that has approximate degree
\[
\deg_{\frac{1}{2}-\frac{1}{n^{C}}}(g)=\Omega(n^{1-\delta})
\]
and is computable by a CNF formula of size $n^{O(1)}$ and width $O(1).$
\end{cor}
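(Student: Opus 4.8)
The plan is to obtain Corollary~\ref{cor:MAIN-cnf} from Theorem~\ref{thm:MAIN-dnf-low-error} by a purely syntactic transformation, with no new analytic content. Fix the constants $\delta>0$ and $C\geq1$, and let $f\colon\zoon\to\zoo$ be the function supplied by Theorem~\ref{thm:MAIN-dnf-low-error}, so that $\deg_{\frac12-\frac1{n^C}}(f)=\Omega(n^{1-\delta})$ and $f$ is computed by a DNF formula $\bigvee_i T_i$ of size $n^{O(1)}$ and width $O(1)$, each $T_i$ being a term of width $O(1)$. I would set $g=\neg f$ and then verify two things: that $g$ inherits the approximate-degree lower bound, and that $g$ admits a CNF formula with the same size and width bounds.

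The first point is the observation already recorded in the text preceding the corollary: if $p$ satisfies $|f(x)-p(x)|\leq\epsilon$ everywhere, then $1-p$ has the same degree and satisfies $|g(x)-(1-p(x))|=|f(x)-p(x)|\leq\epsilon$, so $\deg_\epsilon(g)=\deg_\epsilon(f)$ for every error parameter $\epsilon$; taking $\epsilon=\frac12-\frac1{n^C}$ yields $\deg_{\frac12-\frac1{n^C}}(g)=\Omega(n^{1-\delta})$. The second point is De Morgan's law: $g=\neg f=\bigwedge_i\neg T_i$, and the negation of a term $T_i=\ell_{i,1}\wedge\cdots\wedge\ell_{i,w_i}$ is the clause $\overline{\ell_{i,1}}\vee\cdots\vee\overline{\ell_{i,w_i}}$ of the same width (the complement of a literal being again a literal). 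Hence $g$ is a conjunction of $n^{O(1)}$ clauses, each of width $O(1)$, i.e.\ a polynomial-size constant-width CNF formula; and the construction remains explicit because negation and De Morgan's law are trivially constructive.

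The hard part here is, frankly, nothing: all of the difficulty is already concentrated in Theorem~\ref{thm:MAIN-dnf-low-error}, and the only thing to watch is the elementary bookkeeping that passing from $f$ to $\neg f$ preserves size, width, and explicitness while dualizing DNF into CNF.
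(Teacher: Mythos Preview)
Your proposal is correct and matches the paper's own approach exactly: the paper simply observes, in the text immediately preceding the corollary, that negation preserves approximate degree (via $p\mapsto 1-p$) and carries a DNF to a CNF of the same size and width by De Morgan, so Theorem~\ref{thm:MAIN-dnf-low-error} yields the corollary.
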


\subsection{One-sided approximation}

There is a natural notion of \emph{one-sided} approximation for Boolean
functions. Specifically, the \emph{one-sided $\epsilon$-approximate
degree }of a function $f\colon\zoon\to\zoo$ is defined as the minimum
degree of a real polynomial $p$ such that
\begin{align*}
f(x)=0 & \qquad\Rightarrow\qquad p(x)\in[-\epsilon,\epsilon],\\
f(x)=1 & \qquad\Rightarrow\qquad p(x)\in[1-\epsilon,+\infty)
\end{align*}
for every $x\in\zoon.$ This complexity measure is denoted $\onedeg_{\epsilon}(f)$.
It plays a considerable role~\cite{GS09npconp,bun-thaler13and-or-tree,sherstov13and-or,bun-thaler13amplification,sherstov14sign-deg-ac0,sherstov12mdisj,sherstov13directional}
in the area, both in its own right and due to its applications to
other asymmetric notions of computation such as nondeterminism and
Merlin\textendash Arthur protocols. One-sided approximation is meaningful
for any error parameter $\epsilon\in[0,1/2),$ and as before the standard
setting is $\epsilon=1/3$. By definition, one-sided approximate degree
is always at most $n$. Observe that the definitions of $\epsilon$-approximate
degree $\deg_{\epsilon}(f)$ and its one-sided variant $\onedeg_{\epsilon}(f)$
impose the same requirement for inputs $x\in f^{-1}(0)$: the approximating
polynomial must approximate $f$ within $\epsilon$ at each such $x$.
For inputs $x\in f^{-1}(1),$ on the other hand, the definitions of
$\deg_{\epsilon}(f)$ and $\onedeg_{\epsilon}(f)$ diverge dramatically,
with one-sided $\epsilon$-approximate degree not requiring any upper
bound on the approximating polynomial $p$. As a result, one always
has $\onedeg_{\epsilon}(f)\leq\deg_{\epsilon}(f)$, and it is reasonable
to expect a large gap between the two quantities for some $f$. Moreover,
the one-sided approximate degree of a function is in general not equal
to that of its negation: $\onedeg_{\epsilon}(f)\ne\onedeg_{\epsilon}(\neg f).$
This contrasts with the equality $\deg_{\epsilon}(f)=\deg_{\epsilon}(\neg f)$
for two-sided approximation.

In this light, there are three particularly natural questions to ask
about one-sided approximate degree:
\begin{enumerate}
\item \label{enu:problem-onedeg-AC0}What is the one-sided approximate degree
of $\classAC^{0}$ circuits?
\item \label{enu:problem-onedeg-deg}What is the largest possible gap between
approximate degree and one-sided approximate degree?
\item \label{enu:problem-onedeg-f-onedeg-neg-f}What is the largest possible
gap between the one-sided approximate degree of a function $f$ and
that of its negation $\neg f$?
\end{enumerate}
In this paper, we resolve all three questions in detail. For question~\ref{enu:problem-onedeg-AC0},
we prove that polynomial-size CNF formulas achieve essentially the
maximum possible one-sided approximate degree. In fact, our result
holds even for approximation to error vanishingly close to random
guessing, $\frac{1}{2}-o(1)$:
\begin{thm}
\label{thm:MAIN-one-sided}Let $\delta>0$ and $C\geq1$ be any constants.
Then for each $n\geq1,$ there is an $($explicitly given$)$ function
$g\colon\zoon\to\zoo$ that has one-sided approximate degree
\[
\onedeg_{\frac{1}{2}-\frac{1}{n^{C}}}(g)=\Omega(n^{1-\delta})
\]
and is computable by a CNF formula of size $n^{O(1)}$ and width $O(1).$
\end{thm}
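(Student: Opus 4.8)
The plan is to take $g$ to be the negation $g=\neg f$ of the DNF formula $f$ furnished by Theorem~\ref{thm:MAIN-dnf-low-error}; then $g$ is a polynomial-size constant-width CNF formula for free, and it suffices to upgrade the two-sided bound $\deg_{1/2-n^{-C}}(f)=\deg_{1/2-n^{-C}}(g)=\Omega(n^{1-\delta})$ to the one-sided statement. I would work with the linear-programming dual characterization of one-sided approximate degree: $\onedeg_\epsilon(g)\ge d$ if and only if there is $\psi\colon\zoon\to\mathbb{R}$ with (i)~$\psi$ orthogonal to every real polynomial of degree $<d$, (ii)~$\sum_x|\psi(x)|=1$, (iii)~$\langle\psi,g\rangle>\epsilon$, and (iv)~$\psi(x)\ge 0$ for every $x\in g^{-1}(1)$. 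Properties (i)--(iii) are precisely what an ordinary dual witness for $\deg_\epsilon(g)$ provides; moreover, since any $\psi$ satisfying (i) is orthogonal to the constant function, $\langle\psi,\neg f\rangle=-\langle\psi,f\rangle$, so the dual witness $\psi_0$ underlying the lower bound of Theorem~\ref{thm:MAIN-dnf-low-error} yields, after the sign flip $\psi:=-\psi_0$, a function meeting (i)--(iii) for $g$ with pure high degree $\Omega(n^{1-\delta})$. Thus the whole theorem reduces to verifying property (iv): that $\psi_0$ carries no positive mass on $f^{-1}(0)=g^{-1}(1)$, the set of inputs that falsify every term of the DNF formula $f$.

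To check (iv) I would track the sign of $\psi_0$ through the number-theoretic amplification used to build $f$ and $\psi_0$. The witness $\psi_0$ is assembled from a dual object for the outer, disjunctive layer of the construction together with dual objects for the inner gadgets; on the fiber $f^{-1}(0)$, where every inner gadget rejects, the sign of $\psi_0$ collapses to the sign of the outer dual object at its all-rejecting configuration. One then sets up the construction so that this configuration is a single distinguished point of the outer dual object carrying mass of one fixed sign --- the analogue, for the present amplification, of the elementary fact that a dual witness for a disjunction is single-signed off the all-zeros input --- and takes the base-case witness to be one-sided as well. It remains to confirm that sign-definiteness on the $0$-fiber survives each amplification step and the error reduction to $\tfrac12-n^{-C}$ (performed exactly as for Theorem~\ref{thm:MAIN-dnf-low-error}); combined with (i)--(iii), this produces a one-sided dual witness for $g$ of pure high degree $\Omega(n^{1-\delta})$.

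The hard part will be exactly this sign bookkeeping. In the classical method of dual block composition, preservation of one-sidedness under composition is a known lemma and comes essentially for free; here the amplification is of a different, number-theoretic character, so one must re-derive from scratch that no positive mass leaks onto $f^{-1}(0)$ at any stage. Should the dual witness of Theorem~\ref{thm:MAIN-dnf-low-error}, as originally built, fail (iv), the fallback is to modify the construction slightly --- for instance by prepending an outermost OR to $f$ (equivalently an outermost AND to $g$), or by substituting a one-sided base-case witness --- while retaining polynomial size, constant width, pure high degree $\Omega(n^{1-\delta})$, and correlation exceeding $\tfrac12-n^{-C}$. I expect, though, that the construction will have been arranged so that the first option --- direct verification that the witness already in hand is one-sided --- suffices.
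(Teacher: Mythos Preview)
Your high-level plan is exactly the paper's: set $g=\neg f$ with $f$ the monotone constant-width DNF, and show the amplified dual witness is one-sided by tracking signs through each amplification step (the paper packages this as Theorem~\ref{thm:hardness-amplification-onesided} and then Theorem~\ref{thm:adeg-dnf-onesided}). But your expectation that ``the witness already in hand'' satisfies the sign condition is wrong, and your listed fallbacks do not address the actual obstacle.

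The issue is the Razborov--Sherstov correction. In each amplification step, the product distribution $\Lambda_{\mathbf v}=\bigotimes_i\lambda_{\mathbf v_i}$ is perturbed to a $\widetilde{\Lambda_{\mathbf v}}$ supported on low Hamming weight; the generic corrector (Lemma~\ref{lem:zero-high-Hamming-weight}) used in the two-sided proof is sign-agnostic and can deposit mass of either sign anywhere in $(\zoon)^\theta|_{\leq T}$, so the resulting $\Psi$ has no reason to be sign-definite on $(f\circ H)^{-1}(1)$. The paper's fix has two ingredients you did not identify. First, a refined corrector (Lemma~\ref{lem:AytildeAy}) is engineered so that each $\zeta_y$ vanishes on all of $(\zoon|_{\leq k})^\theta$; hence $\widetilde{\Lambda_{\mathbf v}}=\Lambda_{\mathbf v}$ whenever every block of the input has weight at most $k$. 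This costs the slightly stronger hypothesis $T\ge D+n$ in place of $T\ge D$, but is otherwise free. Second, the paper maintains the inductive invariant $f(1^N)=0$ (automatic since at each stage $f$ is the negation of a monotone DNF) and exploits that the encoding map $h$ is designed to output $1^N$ on any block of weight greater than $k$. The sign argument (Claim~\ref{claim:Psi-nonnegative}) is then a two-case split on an input $x$: if some block has weight $>k$ then $H(x)=1^N$ and $(f\circ H)(x)=f(1^N)=0$; otherwise $\widetilde{\Lambda_{\mathbf v}}(x)=\Lambda_{\mathbf v}(x)\ge 0$ for all $\mathbf v$, and any negative contribution to $\Psi(x)$ must come from some $\psi(u)<0$ with $u$ equal to the actual value $H(x)$, whence $(f\circ H)(x)=f(u)=0$ by the one-sided inductive hypothesis on $\psi$. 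Neither ``prepending an outermost OR'' nor ``substituting a one-sided base-case witness'' touches the corrector, which is where the real work lies; the base case (the dictator $x\mapsto x_1$) is already one-sided.
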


\noindent Theorem~\ref{thm:MAIN-one-sided} essentially settles the
one-sided approximate degree of $\classAC^{0}$. The theorem is optimal
with respect to circuit depth; recall that depth-$1$ circuits have
approximate degree $O(\sqrt{n})$ and hence also one-sided approximate
degree $O(\sqrt{n})$. Previous work on the one-sided approximate
degree of $\classAC^{0}$ was suboptimal with respect to the degree
bound and/or circuit depth. Specifically, the best previous lower
bounds were $\Omega(n/\log n)^{2/3}$ due to Bun and Thaler~\cite{bun-thaler13amplification}
for a polynomial-size CNF formula, and $\Omega(n^{1-\delta})$ due
to Sherstov and Wu~\cite{sherstov-wu18sign-ac0} for $\classAC^{0}$
circuits of depth that grows with $1/\delta.$

As an application of Theorem~\ref{thm:MAIN-one-sided}, we resolve
questions~\ref{enu:problem-onedeg-deg} and~\ref{enu:problem-onedeg-f-onedeg-neg-f}
in full, establishing a gap of $O(1)$ versus $\Omega(n^{1-\delta})$
in each case. Moreover, we prove that these gaps remain valid well
beyond the standard error regime of $\epsilon=1/3$. A detailed statement
of our separations follows.
\begin{cor}
\label{cor:deg-onedeg-onedeg-neg-separations} Let $\delta>0$ and
$C\geq1$ be any constants. Then for each $n\geq1,$ there is an $($explicitly
given$)$ function $f\colon\zoon\to\zoo$ with
\begin{equation}
\onedeg_{0}(f)=O(1)\label{eq:onedeg-g}
\end{equation}
but 
\begin{align}
 & \deg_{\frac{1}{2}-\frac{1}{n^{C}}}(f)=\Omega(n^{1-\delta}),\label{eq:deg-g}\\
 & \onedeg_{\frac{1}{2}-\frac{1}{n^{C}}}(\neg f)=\Omega(n^{1-\delta}).\label{eq:onedeg-neg-g}
\end{align}
Moreover, $f$ is computable by a DNF formula of size $n^{O(1)}$
and width $O(1).$
\end{cor}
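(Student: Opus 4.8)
My plan is to derive Corollary~\ref{cor:deg-onedeg-onedeg-neg-separations} from Theorem~\ref{thm:MAIN-one-sided} by a short reduction, with no new technical content. Fix the constants $\delta,C$, let $g\colon\zoon\to\zoo$ be the polynomial-size, width-$O(1)$ CNF formula furnished by Theorem~\ref{thm:MAIN-one-sided}, and set $f=\neg g$. By De~Morgan's laws, the negation of a width-$k$ CNF formula with $m$ clauses is a width-$k$ DNF formula with $m$ terms, so $f$ is computable by a DNF formula of size $n^{O(1)}$ and width $O(1)$; this establishes the last sentence of the corollary.

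Next I would verify~\eqref{eq:onedeg-g} via the general fact that every width-$k$ DNF formula has $\onedeg_{0}$ at most $k$. Writing $f=\bigvee_{i=1}^{m}T_{i}$ with $\operatorname{width}(T_{i})\leq k$, let $t_{i}\colon\zoon\to\{0,1\}$ be the polynomial obtained by multiplying the factors $x_{\ell}$ for the positive literals of $T_{i}$ and $1-x_{\ell}$ for the negative ones; then $\deg t_{i}\leq k$, and $t_{i}(x)=1$ exactly when $T_{i}$ is satisfied. Put $p=\sum_{i=1}^{m}t_{i}$, so $\deg p\leq k=O(1)$. If $f(x)=0$ then every $t_{i}(x)=0$ and hence $p(x)=0$; if $f(x)=1$ then some $t_{i}(x)=1$ and, as all summands are nonnegative, $p(x)\geq1$. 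Thus $p$ witnesses $\onedeg_{0}(f)\leq k$, which is~\eqref{eq:onedeg-g}. The crude bound $p(x)\leq m$ on $f^{-1}(1)$ is harmless, since the one-sided notion imposes no upper bound on the polynomial at $1$-inputs.

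The two lower bounds then follow from the elementary relations recalled in the introduction. Set $\epsilon=\frac12-\frac1{n^{C}}$. Because negation preserves two-sided approximate degree and $\onedeg_{\epsilon}\leq\deg_{\epsilon}$ always, we have $\deg_{\epsilon}(f)=\deg_{\epsilon}(\neg f)=\deg_{\epsilon}(g)\geq\onedeg_{\epsilon}(g)=\Omega(n^{1-\delta})$ by Theorem~\ref{thm:MAIN-one-sided}, giving~\eqref{eq:deg-g}; and $\onedeg_{\epsilon}(\neg f)=\onedeg_{\epsilon}(g)=\Omega(n^{1-\delta})$ is precisely the conclusion of Theorem~\ref{thm:MAIN-one-sided}, giving~\eqref{eq:onedeg-neg-g}.

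The argument has no real obstacle: all of the difficulty lives in Theorem~\ref{thm:MAIN-one-sided}, and what remains is only the bookkeeping of checking that the De~Morgan dual preserves size and width and that the obvious degree-$O(1)$ polynomial above legitimately certifies one-sided $0$-approximability. I would simply present this reduction in full as the proof.
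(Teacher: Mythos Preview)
Your proposal is correct and follows essentially the same route as the paper's proof: take $g$ from Theorem~\ref{thm:MAIN-one-sided}, set $f=\neg g$, use De~Morgan to get the DNF representation, sum the term polynomials to witness~\eqref{eq:onedeg-g}, and derive~\eqref{eq:deg-g} and~\eqref{eq:onedeg-neg-g} from $\deg_{\epsilon}(f)=\deg_{\epsilon}(\neg f)\geq\onedeg_{\epsilon}(\neg f)=\onedeg_{\epsilon}(g)$. Your write-up simply spells out the bookkeeping in slightly more detail than the paper does.
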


\noindent Equations~(\ref{eq:onedeg-g}) and~(\ref{eq:deg-g}) in
this result give the promised $O(1)$ versus $\Omega(n^{1-\delta})$
separation for question~\ref{enu:problem-onedeg-deg}. Analogously,
(\ref{eq:onedeg-g}) and~(\ref{eq:onedeg-neg-g}) give an $O(1)$
versus $\Omega(n^{1-\delta})$ separation for question~\ref{enu:problem-onedeg-f-onedeg-neg-f}.
Of particular note in both separations is the error regime: the upper
bound remains valid even under the stronger requirement of zero error,
whereas the lower bounds remain valid even under the weaker requirement
of error $\frac{1}{2}-o(1)$. Our separations improve on previous
work. For question~\ref{enu:problem-onedeg-deg}, the best previous
separation was $(\log n)^{O_{\delta}(1)}$ versus $\Omega(n^{1-\delta})$
for any fixed $\delta>0$, implicit in~\cite{bun-thaler17adeg-ac0}.
For the harder question~\ref{enu:problem-onedeg-f-onedeg-neg-f},
the best previous separation~\cite{bun-thaler13amplification} was
$O(\log n)$ versus $\Omega(n/\log n)^{2/3}$, which is polynomially
weaker than ours.

The derivation of Corollary~\ref{cor:deg-onedeg-onedeg-neg-separations}
from Theorem~\ref{thm:MAIN-one-sided} is short and illustrative,
and we include it here.
\begin{proof}[Proof of Corollary~\emph{\ref{cor:deg-onedeg-onedeg-neg-separations}.}]
 Let $g$ be the function from Theorem~\ref{thm:MAIN-one-sided},
and set $f=\neg g$. Then~(\ref{eq:onedeg-neg-g}) is immediate.
Equation~(\ref{eq:deg-g}) follows from~(\ref{eq:onedeg-neg-g})
in light of the basic relations $\deg_{\epsilon}(f)=\deg_{\epsilon}(\neg f)\geq\onedeg_{\epsilon}(\neg f)$,
valid for all $f$ and $\epsilon.$ Finally,~(\ref{eq:onedeg-g})
can be seen as follows. Since $g$ is a CNF formula of width $O(1),$
its negation $f$ is a DNF formula of width $O(1).$ Thus, every term
of $f$ can be represented exactly by a polynomial of degree $O(1)$.
Summing these polynomials gives a $0$-error one-sided approximant
for $f.$
\end{proof}
We now discuss applications of our results on approximate degree and
one-sided approximate degree to fundamental questions in communication
complexity. 

\subsection{Randomized multiparty communication}

We adopt the \emph{number-on-the-forehead} model of Chandra, Furst,
and Lipton~\cite{cfl83multiparty}, which is the most powerful formalism
of multiparty communication. The model features $k$ communicating
players and a Boolean function $F\colon X_{1}\times X_{2}\times\cdots\times X_{k}\to\zoo$
with $k$ arguments. An input $(x_{1},x_{2},\dots,x_{k})$ is distributed
among the $k$ players by giving the $i$-th player the arguments
$x_{1},\dots,x_{i-1},x_{i+1},\dots,x_{k}$ but not $x_{i}$. This
arrangement can be visualized as having the $k$ players seated in
a circle with $x_{i}$ written on the $i$-th player's forehead, whence
the name of the model. Number-on-the-forehead is the canonical model
in the area because any other way of assigning arguments to players
results in a less powerful model\textemdash provided of course that
one does not assign all the arguments to some player, in which case
there is never a need to communicate.

The players communicate according to a protocol agreed upon in advance.
The communication occurs in the form of broadcasts, with a message
sent by any given player instantly reaching everyone else. The players'
objective is to compute $F$ on any given input with minimal communication.
To this end, the players have access to an unbounded supply of shared
random bits which they can use in deciding what message to send at
any given point in the protocol. The\emph{ cost }of a protocol is
the total bit length of all the messages broadcast in a worst-case
execution. The\emph{ $\epsilon$-error randomized communication complexity
$R_{\epsilon}(F)$} of a given function $F$ is the least cost of
a protocol that computes $F$ with probability of error at most $\epsilon$
on every input. As with approximate degree, the standard setting of
the error parameter is $\epsilon=1/3.$

The number-on-the-forehead communication complexity of constant-depth
circuits is a challenging question that has been the focus of extensive
research, e.g.,~\cite{beame06corruption,lee-shraibman08disjointness,chatt-ada08disjointness,dual-survey,beame-huyn-ngoc09multiparty-focs,sherstov12mdisj,sherstov13directional,bun-thaler17adeg-ac0}.
In contrast to the two-party model, where a lower bound of $\Omega(\sqrt{n})$
for $\classAC^{0}$ circuits is straightforward to prove from first
principles~\cite{BFS86cc}, the first $n^{\Omega(1)}$ multiparty
lower bound~\cite{sherstov12mdisj} for $\classAC^{0}$ was obtained
only in~2012. The strongest known multiparty lower bounds for $\classAC^{0}$
are obtained using the \emph{pattern matrix method} of~\cite{sherstov13directional},
which transforms approximate degree lower bounds in a black-box manner
into communication lower bounds. In the most recent application of
this method, Bun and Thaler~\cite{bun-thaler17adeg-ac0} gave a $k$-party
communication problem $F\colon(\zoo^{n})^{k}\to\zoo$ in $\classAC^{0}$
with communication complexity $\Omega(n/4^{k}k^{2})^{1-\delta},$
where the constant $\delta>0$ can be taken arbitrarily small at the
expense of increasing the depth of the $\classAC^{0}$ circuit. This
shows that $\classAC^{0}$ has essentially the maximum possible multiparty
communication complexity\textemdash as long as one is willing to use
circuits of arbitrarily large constant depth. For circuits of small
depth, the best lower bound is polynomially weaker: $\Omega(n/4^{k}k^{2})^{3/4-\delta}$
for the $k$-party communication complexity of polynomial-size DNF
formulas, which can be proved by applying the pattern matrix method
to the approximate degree lower bounds in~\cite{BKT17poly-strikes-back,mande-thaler-zhu20distinctness}.
This fragmented state of the art closely parallels that for approximate
degree prior to our work.

We resolve the multiparty communication complexity of $\classAC^{0}$
in detail in the following theorem.
\begin{thm}
\label{thm:MAIN-multiparty}Fix any constants $\delta\in(0,1]$ and
$C\geq1$. Then for all integers $n,k\geq2,$ there is an $($explicitly
given$)$ $k$-party communication problem $F_{n,k}\colon(\zoon)^{k}\to\zoo$
with
\begin{align*}
 & R_{1/3}(F_{n,k})\geq\left(\frac{n}{c'4^{k}k^{2}}\right)^{1-\delta},\\
 & R_{\frac{1}{2}-\frac{1}{n^{C}}}(F_{n,k})\geq\frac{n^{1-\delta}}{c'4^{k}},
\end{align*}
where $c'\geq1$ is a constant independent of $n$ and $k.$ Moreover,
$F_{n,k}$ is computable by a DNF formula of size $n^{c'}$ and width
$c'k$.
\end{thm}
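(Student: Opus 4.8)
The plan is to derive Theorem~\ref{thm:MAIN-multiparty} by feeding the DNF formulas of Theorems~\ref{thm:MAIN-dnf} and~\ref{thm:MAIN-dnf-low-error} into the number-on-the-forehead version of the pattern matrix method of Sherstov~\cite{sherstov13directional}, exactly as Bun and Thaler~\cite{bun-thaler17adeg-ac0} derived their multiparty lower bound from their approximate-degree lower bound. Recall the black box: starting from a function $f$ on $m$ variables and a selection gadget of a chosen block length $\ell$, this method builds a $k$-party communication problem $F$ (the $k$-party pattern matrix of $f$) in which each player holds $\Theta(m\ell)$ bits, and it shows $R_{\epsilon}(F)=\Omega(\deg_{\epsilon}(f))$ as soon as $\ell$ exceeds a threshold $\Theta(4^{k}k^{2})$ dictated by the multiparty discrepancy analysis; the same estimate holds in the large-error regime, with $\epsilon=\tfrac12-\gamma$ on both sides. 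Crucially for the ``moreover'' clause, the construction preserves formula simplicity: if $f$ has a DNF of width $w$ and size $s$, then $F$ has a DNF of width $O(wk)$ and size $s\cdot\ell^{O(w)}$.

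First I would fix $\delta'=\delta/2$ and block length $\ell=\Theta(4^{k}k^{2})$, and take $m=\Theta(n/\ell)=\Theta(n/(4^{k}k^{2}))$ so that each of the $k$ players holds exactly $n$ bits. For the constant-error bound, Theorem~\ref{thm:MAIN-dnf} applied on $m$ variables supplies an $f$ with $\deg_{1/3}(f)=\Omega(m^{1-\delta'})$, computed by a width-$O(1)$, size-$m^{O(1)}$ DNF. For the large-error bound, I would instead apply Theorem~\ref{thm:MAIN-dnf-low-error} on $m$ variables with error exponent a sufficiently large constant $C_{0}=C_{0}(C,\delta)$, obtaining $\deg_{1/2-1/m^{C_{0}}}(f)=\Omega(m^{1-\delta'})$; here I use that the target bound $n^{1-\delta}/(c'4^{k})$ is vacuous unless $4^{k}\lesssim n^{1-\delta}$, a range in which $m=\Omega(n^{\delta'})$, so that for all $n$ above a constant threshold $m^{C_{0}}\ge n^{C}$ and hence $\deg_{1/2-1/n^{C}}(f)\ge\deg_{1/2-1/m^{C_{0}}}(f)=\Omega(m^{1-\delta'})$, the remaining bounded range of $n$ being absorbed by enlarging $c'$.

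Next I would form $F_{n,k}$ as the $k$-party pattern matrix of this $f$ with block length $\ell$. Combining the bound $R_{\epsilon}(F)=\Omega(\deg_{\epsilon}(f))$ from the cited method with the approximate-degree estimates above, both $R_{1/3}(F_{n,k})$ and $R_{1/2-1/n^{C}}(F_{n,k})$ are $\Omega(m^{1-\delta'})=\Omega((n/(4^{k}k^{2}))^{1-\delta'})$. Since $\delta'=\delta/2$, enlarging $c'$ makes this dominate $(n/(c'4^{k}k^{2}))^{1-\delta}$, giving the first bound; and since $k^{2}\lesssim\log^{2}n\ll n^{\delta/2}$ throughout the nonvacuous range, it also dominates $n^{1-\delta}/(c'4^{k})$, giving the second. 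Finally, the formula-preservation property gives that $F_{n,k}$ is a DNF of width $O(1\cdot k)=O(k)$ and size $m^{O(1)}\cdot\ell^{O(1)}=n^{O(1)}$ (here $\ell=\Theta(4^{k}k^{2})\le n$ in the nonvacuous range), so enlarging $c'$ once more yields width $c'k$ and size $n^{c'}$.

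I expect the one genuinely delicate point to be the large-error case: one must verify that the fine error level $\tfrac12-1/n^{C}$ survives the lift despite the $2^{\Theta(k)}$ loss inherent in the number-on-the-forehead discrepancy bound, and that Theorem~\ref{thm:MAIN-dnf-low-error} is being invoked at an error which, after translating between the $m$-variable scale of $f$ and the $n$-variable scale of $F_{n,k}$, is still of the admissible form $\tfrac12-1/(\text{number of variables})^{O(1)}$. The exponent slack from the choice $\delta'=\delta/2$, together with the fact that the statement is vacuous unless $n$ is polynomially larger than $4^{k}k^{2}$, is exactly what makes this translation go through; beyond this accounting, the argument is a direct application of the cited black box to our new hard function.
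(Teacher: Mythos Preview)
Your approach is essentially the same as the paper's: lift the hard DNF of Theorem~\ref{thm:adeg-dnf} (which already gives the large-error bound, subsuming Theorem~\ref{thm:MAIN-dnf}) via the multiparty pattern matrix method with the $\DISJ$ gadget of size $\Theta(4^{k}k^{2})$, and absorb the polylog slack in $k$ through the exponent $\delta'<\delta$. One small correction: the pattern matrix bound you invoke is not literally $R_{\epsilon}(F)=\Omega(\deg_{\epsilon}(f))$; the actual statement (Theorem~\ref{thm:pattern-matrix-randomized}) reads $R_{\beta}(F)\geq\tfrac{1}{2}\deg_{\alpha/2}(f)\cdot\log(\sqrt{m}/(C2^{k}k))-\log\tfrac{1}{\alpha-2\beta}$, so at gadget size $m=\Theta(4^{k}k^{2})$ the multiplicative factor is $\Theta(1)$ but the additive $-\log\tfrac{1}{\alpha-2\beta}$ term costs $\Theta(C\log n)$ in the large-error regime. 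The paper handles this exactly as you anticipate, by invoking the approximate-degree bound at error $\tfrac12-1/n^{2C/\delta}$ (rather than $\tfrac12-1/n^{C}$) so that the surviving degree $\Omega(m^{1-\delta/2})$ dominates this logarithmic loss; your choice of $C_{0}=C_{0}(C,\delta)$ large enough serves the same purpose. The paper also uses a single hard function for both error regimes rather than two, which streamlines the exposition but makes no substantive difference.
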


\noindent Theorem~\ref{thm:MAIN-multiparty} essentially represents
the state of the art for multiparty communication lower bounds. Indeed,
the best communication lower bound to date for any explicit function
$F\colon(\zoon)^{k}\to\zoo,$ whether or not $F$ is computable by
an $\classAC^{0}$ circuit, is $\Omega(n/2^{k})$~\cite{bns92}.
Theorem~\ref{thm:MAIN-multiparty} comes close to matching the trivial
upper bound of $n+1$ for any communication problem, thereby showing
that $\classAC^{0}$ circuits of depth~$2$ achieve nearly the maximum
possible communication complexity. Moreover, our result holds not
only for bounded-error communication but also for communication with
error $\frac{1}{2}-\frac{1}{n^{C}}$ for any $C\geq1.$ The error
parameter in Theorem~\ref{thm:MAIN-multiparty} is optimal and cannot
be further increased to $\frac{1}{2}-\frac{1}{n^{\omega(1)}}$; indeed,
it is straightforward to see that any DNF formula with $m$ terms
has a communication protocol with error $\frac{1}{2}-\Omega(\frac{1}{m})$
and cost $2$~bits. Theorem~\ref{thm:MAIN-multiparty} is also optimal
with respect to circuit depth because the multiparty communication
complexity of $\classAC^{0}$ circuits of depth~$1$ is at most $2$
bits.

Since randomized communication complexity is invariant under function
negation, Theorem~\ref{thm:MAIN-multiparty} remains valid with the
word ``DNF'' replaced with ``CNF.''

\subsection{Nondeterministic and Merlin\textendash Arthur multiparty communication}

Here again, we adopt the $k$-party number-on-the-forehead model of
Chandra, Furst, and Lipton~\cite{cfl83multiparty}. Nondeterministic
communication is defined in complete analogy with computational complexity.
Specifically, a nondeterministic protocol starts with a guess string,
whose length counts toward the protocol's communication cost, and
proceeds deterministically thenceforth. A nondeterministic protocol
for a given communication problem $F\colon X_{1}\times X_{2}\times\cdots\times X_{k}\to\zoo$
is required to output the correct answer for \emph{all} guess strings
when presented with a negative instance of $F,$ and for \emph{some}
guess string when presented with a positive instance. We further consider
\emph{Merlin\textendash Arthur protocols}~\cite{babai85arthur-merlin,bm88am},
a communication model that combines the power of randomization and
nondeterminism. As before, a Merlin\textendash Arthur protocol for
a given problem $F$ starts with a guess string, whose length counts
toward the communication cost. From then on, the parties run an ordinary
randomized protocol. The randomized phase in a Merlin\textendash Arthur
protocol must produce the correct answer with probability at least
$2/3$ for \emph{all} guess strings when presented with a negative
instance of $F,$ and for \emph{some} guess string when presented
with a positive instance. Thus, the cost of a nondeterministic or
Merlin\textendash Arthur protocol is the sum of the costs of the guessing
phase and communication phase. The minimum cost of a valid protocol
for $F$ in these models is called the \emph{nondeterministic communication
complexity of $F$}, denoted $N(F)$, and \emph{Merlin\textendash Arthur
communication complexity of $F,$} denoted $\ma_{1/3}(F)$. The quantity
$N(\neg F)$ is called the \emph{co-nondeterministic communication
complexity of $F$.}

Nondeterministic and Merlin\textendash Arthur protocols have been
extensively studied for $\mbox{\ensuremath{k=2}}$ parties but are
much less understood in the multiparty setting~\cite{BDPW10d-bpp,GS09npconp,sherstov12mdisj,sherstov13directional}.
Prior to our paper, the best lower bounds in these models for an $\classAC^{0}$
circuit $F\colon(\zoon)^{k}\to\zoo$ were $\Omega(\sqrt{n}/2^{k}k)$
for nondeterministic communication and $\Omega(\sqrt{n}/2^{k}k)^{1/2}$
for Merlin\textendash Arthur communication, obtained in~\cite{sherstov13directional}
for the set disjointness problem. We give a quadratic improvement
on these lower bounds. In particular, our result for nondeterminism
essentially matches the trivial upper bound. Moreover, we obtain our
result for a particularly simple function in $\classAC^{0}$, namely,
a polynomial-size CNF formula of constant width. A detailed statement
follows.
\begin{thm}
\label{thm:MAIN-N-MA} Let $\delta>0$ be arbitrary. Then for all
integers $n,k\geq2,$ there is an $($explicitly given$)$ $k$-party
communication problem $G_{n,k}\colon(\zoon)^{k}\to\zoo$ with
\[
N(\neg G_{n,k})\leq c\log n
\]
 but
\begin{align}
N(G_{n,k}) & \geq\left(\frac{n}{c4^{k}k^{2}}\right)^{1-\delta},\label{eq:N-dnf-1}\\
R_{1/3}(G_{n,k}) & \geq\left(\frac{n}{c4^{k}k^{2}}\right)^{1-\delta},\label{eq:N-cnf-1}\\
\ma_{1/3}(G_{n,k}) & \geq\left(\frac{n}{c4^{k}k^{2}}\right)^{\frac{1-\delta}{2}},\label{eq:ma-cnf-1}
\end{align}
where $c\geq1$ is a constant independent of $n$ and $k.$ Moreover,
$G_{n,k}$ is computable by a CNF formula of width $ck$ and size
$n^{c}$.
\end{thm}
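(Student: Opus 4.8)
The plan is to derive Theorem~\ref{thm:MAIN-N-MA} from our one-sided lower bound, Theorem~\ref{thm:MAIN-one-sided}, by feeding the CNF formula it produces into the $k$-party pattern matrix method of~\cite{sherstov13directional}. Concretely, let $g\colon\zoo^{m}\to\zoo$ be the width-$O(1)$, size-$\mathrm{poly}(m)$ CNF formula from Theorem~\ref{thm:MAIN-one-sided}, applied with a parameter $\delta'$ slightly below $\delta$, so that $\onedeg_{1/2-1/m^{C}}(g)=\Omega(m^{1-\delta'})$; and let $G_{n,k}=G[g]$ be the $k$-party communication problem obtained by composing $g$ with the $k$-party pattern-matrix gadget, with $m$ chosen of order $n/4^{\Theta(k)}$ so that $G_{n,k}$ acts on $(\zoon)^{k}$. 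The first thing to verify is the circuit claim: substituting the (read-once) gadget for each variable of a constant-width CNF yields a constant-depth formula that distributes into a CNF of width $O(k)$ whose number of clauses exceeds that of $g$ by only a $k^{O(1)}$ factor; since the communication bounds are vacuous unless $k=O(\log n)$, this factor and every other dependence on $k$ in the size is absorbed into $n^{c}$.

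The three lower bounds are then essentially black-box consequences of the pattern matrix method. For~\eqref{eq:N-cnf-1}, the randomized form of the method gives $R_{1/3}(G_{n,k})\geq\deg_{1/3}(g)/(4^{\Theta(k)}k^{O(1)})$, and $\deg_{1/3}(g)\geq\onedeg_{1/3}(g)\geq\onedeg_{1/2-1/m^{C}}(g)=\Omega(m^{1-\delta'})$. For~\eqref{eq:N-dnf-1}, the nondeterministic form of the method (cf.~\cite{GS09npconp,sherstov13directional})---which converts a cheap nondeterministic protocol for $G[g]$ into a low-degree \emph{one-sided} approximant of $g$---gives $N(G_{n,k})\geq\onedeg_{1/3}(g)/(4^{\Theta(k)}k^{O(1)})=\Omega(m^{1-\delta'})/(4^{\Theta(k)}k^{O(1)})$. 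For~\eqref{eq:ma-cnf-1}, the Merlin--Arthur form of the method incurs a quadratic loss: a cost-$t$ Merlin--Arthur protocol for $G[g]$ yields a polynomial of degree $O(t^{2})\cdot4^{\Theta(k)}k^{O(1)}$ that approximates $g$ to error $\tfrac12-2^{-\Theta(t)}$, so the large-error bound $\deg_{1/2-1/m^{C}}(g)\geq\onedeg_{1/2-1/m^{C}}(g)=\Omega(m^{1-\delta'})$ forces $t=\Omega\!\big((m^{1-\delta'}/(4^{\Theta(k)}k^{O(1)}))^{1/2}\big)$; this is the source of the exponent $(1-\delta)/2$. In each case, plugging $m=\Theta(n/4^{k})$ and choosing $\delta'$ slightly below $\delta$ converts the bound into the stated $(n/c4^{k}k^{2})^{1-\delta}$, with the same $c$ absorbing all constants.

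It remains to prove the co-nondeterministic upper bound $N(\neg G_{n,k})\leq c\log n$, and here the actual CNF \emph{structure} of $g$ is used, not just its approximate-degree profile. Write $g=\bigwedge_{j}C_{j}$ with $\mathrm{poly}(m)$ clauses, each of width $O(1)$. Then $\neg G_{n,k}$ evaluates to $1$ exactly when some $C_{j}\circ(\text{gadget})$ is falsified, i.e.\ when the $O(1)$ gadget values feeding $C_{j}$ all take their falsifying settings. A nondeterministic protocol for $\neg G_{n,k}$ therefore guesses the index $j$ ($O(\log n)$ bits) together with a short witness for each of the $O(1)$ relevant gadget values---the coordinate selected inside each gadget block, $O(\log n)$ bits apiece---and broadcasts them; since in the number-on-the-forehead model each player sees all but one argument, the players can then locally check consistency of each witness with their inputs and reject unless every check passes. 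The total cost is $O(\log n)$ (again using $k=O(\log n)$), a positive instance of $\neg G_{n,k}$ admits a succeeding guess, and a negative instance is rejected under all guesses, which is exactly what a nondeterministic protocol for $\neg G_{n,k}$ requires.

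The only genuinely delicate part is the bookkeeping: one must pin down the gadget so that (a) composing it with a constant-width CNF stays a width-$O(k)$, size-$\mathrm{poly}(n)$ CNF, (b) the $4^{\Theta(k)}$ and $k^{O(1)}$ losses in the three reductions reconcile with the advertised $4^{k}k^{2}$, and (c) the ``falsified clause'' certificate above is genuinely verifiable in $O(\log n)$ bits in the NOF model. All three are handled exactly as in the prior applications of the pattern matrix method to $\classAC^{0}$~\cite{sherstov13directional,bun-thaler17adeg-ac0}, and no new ideas beyond Theorem~\ref{thm:MAIN-one-sided} are required. I expect step~(b)---matching the polynomial-in-$k$ slack to the clean $k^{2}$ in the statement, and picking $\delta'$ to absorb lower-order factors---to be the fussiest, though it is entirely routine; an alternative presentation is to first prove the analogue of Theorem~\ref{thm:MAIN-multiparty} and then note that the present theorem differs only in tracking the one-sided side of the construction.
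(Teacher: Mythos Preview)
Your proposal is correct and takes essentially the same approach as the paper (Theorem~\ref{thm:multiparty-cc-N-MA}): compose the one-sided-hard function with $\neg\DISJ_{\Theta(4^{k}k^{2}),k}$, apply the one-sided pattern-matrix bound (which the paper assembles explicitly as Theorem~\ref{thm:pattern-matrix-N-MA} from the ingredients in~\cite{GS09npconp,sherstov13directional}, since that transformation had not previously been stated in this generality), and read off the co-nondeterministic upper bound from the DNF structure of $\neg G_{n,k}$. The paper's presentation is slightly cleaner in two places: it uses constant error $\epsilon=3/8$ rather than invoking the large-error bound, and for $N(\neg G_{n,k})$ it simply observes that $\neg G_{n,k}$ is already a width-$O(k)$, size-$n^{O(1)}$ DNF, so the protocol just guesses a term and evaluates it in $O(1)$ bits---no separate gadget-level witnesses are needed.
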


\noindent This result can be viewed as a far-reaching generalization
of Theorem~\ref{thm:MAIN-multiparty} to nondeterministic and Merlin\textendash Arthur
protocols. To obtain Theorem~\ref{thm:MAIN-N-MA}, we adapt the pattern
matrix method~\cite{sherstov13directional} to be able to transform
any lower bound on one-sided approximate degree into a multiparty
communication lower bound in the nondeterministic and Merlin\textendash Arthur
models. With this tool in hand, we obtain Theorem~\ref{thm:MAIN-N-MA}
from our one-sided approximate degree lower bound (Theorem~\ref{thm:MAIN-one-sided}).

\subsection{Multiparty communication classes}

Theorem~\ref{thm:MAIN-N-MA} sheds new light on communication complexity
classes, defined in the seminal work of Babai, Frankl, and Simon~\cite{BFS86cc}.
An infinite family $\{F_{n}\}_{n=1}^{\infty},$ where each $F_{n}\colon(\zoon)^{k}\to\zoo$
is a $k$-party number-on-the-forehead communication problem, is said
to be \emph{efficiently solvable} in a given model of communication
if $F_{n}$ has communication complexity at most $\log^{c}n$ in that
model, for a large enough constant $c>1$ and all $n>c.$ One defines
$\BPP_{k},$ $\NP_{k},$ $\coNP_{k},$ and $\MA_{k}$ as the classes
of families that are efficiently solvable in the randomized, nondeterministic,
co-nondeterministic, and Merlin\textendash Arthur models, respectively.
In particular, $\MA_{k}$ is a superset of $\NP_{k}$ and $\BPP_{k}$.
In these definitions, $k=k(n)$ can be any function of $n,$ including
constant functions such as $k=3.$ The relations among these multiparty
classes have been actively studied over the past decade~\cite{BDPW07p-rp,lee-shraibman08disjointness,chatt-ada08disjointness,david-pitassi-viola08bpp-np,beame-huyn-ngoc09multiparty-focs,BDPW10d-bpp,GS09npconp,sherstov12mdisj,sherstov13directional}.
It particular, for $k\leq\Theta(\log n),$ it is known that $\coNP_{k}$
is not contained in $\BPP_{k},$ $\NP_{k},$ or even $\MA_{k}$. Quantitatively,
these results can be summarized as follows. \sloppy
\begin{enumerate}[itemsep=1mm]
\item  Prior to our work, the strongest $k$-party separation of co-nondeterministic
versus randomized communication complexity was $O(\log n)$ versus
$\Omega(\sqrt{n}/2^{k}k)$, proved in~\cite{sherstov13directional}
for the set disjointness function.
\item The best previous $k$-party separations of co-nondeterministic versus
nondeterministic communication complexity were: $O(\log n)$ versus
$\Omega(n),$ proved~in~\cite{sherstov13directional} nonconstructively
by the probabilistic method; and $O(\log n)$ versus $\Omega(\sqrt{n}/2^{k}k)$,
proved in~\cite{sherstov13directional} for the set disjointness
problem.
\item The best previous $k$-party separation of co-nondeterministic versus
Merlin\textendash Arthur communication complexity was $O(\log n)$
versus $\Omega(\sqrt{n}/2^{k}k)^{1/2}$, proved in~\cite{sherstov13directional}
for the set disjointness problem.
\end{enumerate}
Theorem~\ref{thm:MAIN-N-MA} gives a quadratic improvement on these
previous separations, excluding the nonconstructive separation of
$\coNP_{k}$ from $\NP_{k}$ in~\cite{BDPW10d-bpp}. Moreover, our
quadratically improved separations are achieved for a particularly
simple function, namely, the polynomial-size constant-width CNF formula
$G_{n,k}$. In the regime $k\leq\Theta(\log n),$ our separations
of $\coNP_{k}$ from $\BPP_{k}$ and $\NP_{k}$ are essentially optimal,
and our separation of $\coNP_{k}$ from $\MA_{k}$ is within a square
of optimal. Recall that no explicit lower bounds at all are currently
known in the regime $k\geq\log n,$ even for deterministic communication.
We state our contributions for communication complexity classes as
a corollary below.
\begin{cor}
Let $k=k(n)$ be a function with $k(n)\leq(\frac{1}{2}-\epsilon)\log n$
for some constant $\epsilon>0$.\emph{ }Then the communication problem
$G_{n,k}$ from Theorem~\emph{\ref{thm:MAIN-N-MA}} satisfies
\begin{align*}
 & \{G_{n,k}\}_{n=1}^{\infty}\in\coNP_{k}\setminus\BPP_{k},\\
 & \{G_{n,k}\}_{n=1}^{\infty}\in\coNP_{k}\setminus\NP_{k},\\
 & \{G_{n,k}\}_{n=1}^{\infty}\in\coNP_{k}\setminus\MA_{k}.
\end{align*}
Analogously, the communication problem $F_{n,k}$ from Theorem~\emph{\ref{thm:MAIN-multiparty}}
satisfies 
\[
\{F_{n,k}\}_{n=1}^{\infty}\in\NP_{k}\setminus\BPP_{k}.
\]
\end{cor}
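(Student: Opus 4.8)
The plan is to derive the corollary mechanically from Theorems~\ref{thm:MAIN-N-MA} and~\ref{thm:MAIN-multiparty} by unwinding the definition of ``efficiently solvable'' and exploiting the hypothesis $k=k(n)\le(\tfrac{1}{2}-\epsilon)\log n$ (with $k\ge2$, as in those theorems), which turns the $1/(4^{k}k^{2})$ factors into a merely polynomial loss. Indeed, with logarithms to base two, $4^{k}=2^{2k}\le 2^{(1-2\epsilon)\log n}=n^{1-2\epsilon}$ and $k^{2}\le(\log n)^{2}$, so for every absolute constant $a\ge1$ and all sufficiently large $n$ we get $n/(a\,4^{k}k^{2})\ge n^{2\epsilon}/(a(\log n)^{2})\ge n^{\epsilon}$. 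Consequently every quantity of the form $(n/(a4^{k}k^{2}))^{1-\delta}$ or $(n/(a4^{k}k^{2}))^{(1-\delta)/2}$ appearing in those two theorems is bounded below by $n^{\Omega(1)}$, hence eventually exceeds $\log^{c_0}n$ for any prescribed constant $c_0$; in other words, the family in question is \emph{not} efficiently solvable in the corresponding model.

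For the membership side, the one non-bookkeeping ingredient is the observation that a DNF formula of size $m$ over the $nk$ input bits of a $k$-party problem $(k\ge2)$ has a nondeterministic number-on-the-forehead protocol of cost $\lceil\log m\rceil+2$: the players guess the index of a satisfied term, then Player~$1$ (who sees the blocks $x_{2},\dots,x_{k}$) announces whether all literals of that term lying in $x_{2}\cup\dots\cup x_{k}$ are satisfied, Player~$2$ (who sees $x_{1}$) announces whether all literals lying in $x_{1}$ are satisfied, and the protocol accepts iff both announcements are affirmative. Since each literal of the term lies in a single block, the two announcements jointly cover all of its literals, so this is a valid nondeterministic protocol (on a negative instance no term is satisfied, so every guess leads to rejection; on a positive instance the index of a satisfied term is an accepting guess). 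Applying this to the polynomial-size DNF formula computing $F_{n,k}$ from Theorem~\ref{thm:MAIN-multiparty} gives $N(F_{n,k})=O(\log n)$, hence $\{F_{n,k}\}_{n=1}^{\infty}\in\NP_{k}$; applying it to the polynomial-size DNF formula $\neg G_{n,k}$ (the negation of the CNF formula of Theorem~\ref{thm:MAIN-N-MA}) gives $N(\neg G_{n,k})=O(\log n)$, hence $\{G_{n,k}\}_{n=1}^{\infty}\in\coNP_{k}$ --- though this last containment is in any case immediate from the bound $N(\neg G_{n,k})\le c\log n$ already stated in Theorem~\ref{thm:MAIN-N-MA}.

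It then remains only to assemble the separations. Fix $\delta\in(0,1)$, say $\delta=\tfrac{1}{2}$ (and, for Theorem~\ref{thm:MAIN-multiparty}, some constant $C\ge1$); this pins down the families $\{G_{n,k}\}_{n}$ and $\{F_{n,k}\}_{n}$ with $k=k(n)$. We have just seen $\{G_{n,k}\}\in\coNP_{k}$ and $\{F_{n,k}\}\in\NP_{k}$. By the first paragraph, $R_{1/3}(G_{n,k})$, $N(G_{n,k})$, and $\ma_{1/3}(G_{n,k})$ are each $n^{\Omega(1)}$, so $\{G_{n,k}\}\notin\BPP_{k}$, $\{G_{n,k}\}\notin\NP_{k}$, and $\{G_{n,k}\}\notin\MA_{k}$ respectively; alternatively, all three non-containments follow from the single weakest bound, the one on $\ma_{1/3}$, since $\MA_{k}\supseteq\NP_{k}$ and $\MA_{k}\supseteq\BPP_{k}$. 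Likewise $R_{1/3}(F_{n,k})=n^{\Omega(1)}$ forces $\{F_{n,k}\}\notin\BPP_{k}$. Combining these facts yields the four claimed memberships and non-memberships.

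The proof presents no genuine obstacle; it is a direct corollary of the two theorems and the definition of the communication classes. The only step that calls for a moment's care is the cheap nondeterministic protocol for a DNF in the \emph{multiparty} number-on-the-forehead model --- the point being that verifying a guessed term costs only $O(1)$ bits, independently of its width, because each literal lives in a single block and two suitably chosen players jointly see every block --- together with keeping the quantifier order in the definition of ``efficiently solvable'' straight while $k$ is permitted to grow with $n$.
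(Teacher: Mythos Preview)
Your proof is correct and follows essentially the same approach as the paper's. The paper's proof is terser---it declares the claims for $G_{n,k}$ ``immediate from Theorem~\ref{thm:MAIN-N-MA} and the definitions'' without spelling out the arithmetic showing that $(n/(c4^{k}k^{2}))^{1-\delta}=n^{\Omega(1)}$ under the hypothesis $k\le(\tfrac12-\epsilon)\log n$, and it gives the same guess-a-term nondeterministic protocol you describe to establish $N(F_{n,k})=O(\log n)$---but the substance is identical.
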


\begin{proof}
The claims for $G_{n,k}$ are immediate from Theorem~\ref{thm:MAIN-N-MA}
and the definitions of $\NP_{k},\coNP_{k},\BPP_{k},\MA_{k}.$ For
the remaining separation, we need only prove the upper bound $N(F_{n,k})=O(\log n).$
Recall from Theorem~\ref{thm:MAIN-multiparty} that $F_{n,k}$ is
a DNF formula with $n^{c'}$ terms. This gives the desired nondeterministic
protocol: the parties ``guess'' one of the terms in $F_{n,k}$ (for
a cost of $\lceil\log n^{c'}\rceil$ bits), evaluate it (using another
$2$ bits of communication), and output the result.
\end{proof}

\subsection{Quantum communication complexity}

We adopt the standard model of quantum communication, where two parties
exchange quantum messages according to an agreed-upon protocol in
order to solve a two-party communication problem $F\colon X\times Y\to\zoo$.
As usual, an input $(x,y)\in X\times Y$ is split between the parties,
with one party knowing only $x$ and the other party knowing only
$y.$ We allow arbitrary prior entanglement at the start of the communication.
A measurement at the end of the protocol produces a single-bit answer,
which is interpreted as the protocol output. An\emph{ $\epsilon$-error
protocol for $F$} is required to output, on every input $(x,y)\in X\times Y,$
the correct value $F(x,y)$ with probability at least $1-\epsilon.$
The \emph{cost} of a quantum protocol is the total number of quantum
bits exchanged in the worst case on any input. The \emph{$\epsilon$-error
quantum communication complexity of $F$}, denoted $Q_{\epsilon}^{*}(F),$
is the least cost of an $\epsilon$-error quantum protocol for $F.$
The asterisk in $Q_{\epsilon}^{*}(F)$ indicates that the parties
share arbitrary prior entanglement. The standard setting of the error
parameter is $\epsilon=1/3,$ which is as usual without loss of generality.
For a detailed formal description of the quantum model, we refer the
reader to~\cite{dewolf-thesis,razborov02quantum,sherstov07quantum}.

Proving lower bounds for bounded-error quantum communication is significantly
more challenging than for randomized communication. An illustrative
example is the set disjointness problem on $n$ bits. Babai, Frankl,
and Simon~\cite{BFS86cc} obtained an $\Omega(\sqrt{n})$ randomized
communication lower bound for this function in 1986 using a short
and elementary proof, which was later improved to a tight $\Omega(n)$
in~\cite{KS92disj,razborov90disj,baryossef04info-complexity}. This
is in stark contrast with the quantum model, where the best lower
bound for set disjointness was for a long time a trivial $\Omega(\log n)$
until a tight $\Omega(\sqrt{n})$ was proved by Razborov~\cite{razborov02quantum}
in 2002.

A completely different proof of the $\Omega(\sqrt{n})$ lower bound
for set disjointness was given in~\cite{sherstov07quantum} by introducing
the pattern matrix method. Since then, the method has produced the
strongest known quantum lower bounds for $\classAC^{0}$. Of these,
the best lower bound prior to our work was $\Omega(n^{1-\delta})$
due to Bun and Thaler~\cite{bun-thaler17adeg-ac0}, where the constant
$\delta>0$ can be taken arbitrarily small at the expense of circuit
depth. In the following theorem, we resolve the quantum communication
complexity of $\classAC^{0}$ in full by proving that polynomial-size
DNF formulas achieve near-maximum communication complexity.
\begin{thm}
\label{thm:MAIN-quantum}Let $\delta>0$ and $C\geq1$ be any constants.
Then for each $n\geq1,$ there is an $($explicitly given$)$ two-party
communication problem $F\colon\zoo^{n}\times\zoo^{n}\to\zoo$ that
has quantum communication complexity
\[
Q_{\frac{1}{2}-\frac{1}{n^{C}}}^{*}(F)=\Omega(n^{1-\delta})
\]
and is representable by a DNF formula of size $n^{O(1)}$ and width
$O(1).$
\end{thm}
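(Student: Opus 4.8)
The plan is to derive Theorem~\ref{thm:MAIN-quantum} from the large-error approximate degree bound of Theorem~\ref{thm:MAIN-dnf-low-error} by the pattern matrix method, exactly as the multiparty bounds of Theorems~\ref{thm:MAIN-multiparty} and~\ref{thm:MAIN-N-MA} are obtained; for two parties in the quantum model the relevant transformation is the pattern matrix method of Sherstov~\cite{sherstov07quantum}, in its large-error form. Thus there are three ingredients: a hard inner DNF formula supplied by Theorem~\ref{thm:MAIN-dnf-low-error}; the fact that the pattern matrix of a polynomial-size constant-width DNF formula is again a polynomial-size constant-width DNF formula; and the quantum communication lower bound that the pattern matrix method attaches to an approximate-degree lower bound for the inner function.

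Concretely, I would fix the constants $\delta,C$ and a sufficiently large constant block size $t$, and set $m=\lfloor n/t\rfloor=\Theta(n)$. Applying Theorem~\ref{thm:MAIN-dnf-low-error} with parameters $\delta$ and a suitably larger constant $C'=C'(C)$ to inputs of length $m$ then gives an explicit DNF formula $f\colon\zoo^{m}\to\zoo$ of size $m^{O(1)}$ and width $O(1)$ with $\deg_{\frac{1}{2}-\frac{1}{m^{C'}}}(f)=\Omega(m^{1-\delta})=\Omega(n^{1-\delta})$.

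Next I would form the $(m,t,f)$-pattern matrix $F$, viewed as a two-party problem in which Alice holds $x\in\zoo^{mt}$, Bob holds a selector $y\in[t]^{m}$ (encoded in binary) together with a mask $z\in\zoo^{m}$, and $F(x,(y,z))=f(\dots,x_{i,y_i}\oplus z_i,\dots)$; padding makes each party's input exactly $n$ bits (both $mt$ and $m(\lceil\log_2 t\rceil+1)$ are at most $n$). The point is that $F$ is computable by a DNF formula of size $n^{O(1)}$ and width $O(1)$: each effective bit $x_{i,y_i}\oplus z_i$ of $f$ depends on only $t+\lceil\log_2 t\rceil+1=O(1)$ bits of the input to $F$, hence is a DNF formula of constant size and width; substituting these gadgets into a term of $f$ (a conjunction of $O(1)$ literals) and distributing yields a DNF formula of constant size and width, and disjoining over the $m^{O(1)}$ terms of $f$ gives the claimed formula for $F$.

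It then remains to invoke the pattern matrix method for quantum communication in the regime of error close to $\frac{1}{2}$: for $t$ a large enough constant it yields $Q^{*}_{\frac{1}{2}-\frac{1}{n^{C}}}(F)=\Omega(\deg_{\frac{1}{2}-\frac{1}{m^{C'}}}(f))$, where $C'$ depends only on $C$ and the implied constant only on $t$, and combined with the previous paragraph this gives $Q^{*}_{\frac{1}{2}-\frac{1}{n^{C}}}(F)=\Omega(n^{1-\delta})$. Internally this proceeds by taking a dual witness certifying $\deg_{\frac{1}{2}-\frac{1}{m^{C'}}}(f)>d$ (a function of unit $\ell_1$-norm, orthogonal to all polynomials of degree $<d$, and suitably correlated with $f$), lifting it to a dual object for the matrix $F$, and bounding the associated quantum quantity by the generalized discrepancy method. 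The genuinely delicate step is this lift: one must track how the error, the $\ell_1$-norm, and the correlation compose, so that approximating $f$ only to error $\frac{1}{2}-\frac{1}{m^{C'}}$ still certifies quantum communication complexity $\Omega(d)$ against error as large as $\frac{1}{2}-\frac{1}{n^{C}}$. This bookkeeping is routine given the pattern matrix machinery, and a polynomial degradation of the error is harmless because Theorem~\ref{thm:MAIN-dnf-low-error} is available for every constant exponent; everything else (that $F$ is a polynomial-size constant-width DNF formula and that $m=\Theta(n)$) is immediate.
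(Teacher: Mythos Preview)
Your proposal is correct and follows the same high-level approach as the paper: take the hard constant-width DNF formula from the approximate degree theorem, lift it via the pattern matrix method, and observe that the composed function remains a polynomial-size constant-width DNF formula.

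The one noteworthy difference is the choice of inner gadget. You use the original pattern matrix gadget $x_{i,y_i}\oplus z_i$ with a selector $y_i\in[t]$ and a mask bit $z_i$. The paper instead invokes the version of the pattern matrix method (its Theorem~\ref{thm:pattern-matrix-quantum}, from~\cite{sherstov07quantum}) whose gadget is $\bigvee_{i=1}^{4}(x_{j,i}\wedge y_{j,i})$, i.e., set disjointness on $4$ coordinates. Both gadgets depend on $O(1)$ input bits, so in either case substituting into a width-$O(1)$ DNF yields a width-$O(1)$ DNF of polynomial size, and both already come with a large-error communication bound of the form $Q_{\beta}^{*}(F)\geq\Omega(\deg_{\alpha/2}(f))-O(\log\frac{1}{\alpha-2\beta})$, so no additional bookkeeping is needed beyond choosing the error parameters. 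The paper's gadget has the minor bonus of being monotone, so that the resulting $F$ is a \emph{monotone} DNF formula when $f$ is; your gadget involves XOR and hence loses monotonicity, but the theorem statement does not demand it.
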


\noindent This theorem remains valid for CNF formulas since quantum
communication complexity is invariant under function negation. As
in all of our results, Theorem~\ref{thm:MAIN-quantum} essentially
matches the trivial upper bound, showing that $\classAC^{0}$ circuits
of depth~$2$ achieve nearly the maximum possible complexity. Again
analogous to our other results, Theorem~\ref{thm:MAIN-quantum} holds
not only for bounded-error communication but also for communication
with error $\frac{1}{2}-\frac{1}{n^{C}}$ for any $C\geq1.$ The error
parameter in Theorem~\ref{thm:MAIN-quantum} is optimal and cannot
be further increased to $\frac{1}{2}-\frac{1}{n^{\omega(1)}}$: as
remarked above, any DNF formula with $m$ terms has a classical communication
protocol with error $\frac{1}{2}-\Omega(\frac{1}{m})$ and cost $2$~bits.
Lastly, Theorem~\ref{thm:MAIN-quantum} is optimal with respect to
circuit depth because $\classAC^{0}$ circuits of depth~$1$ have
communication complexity at most $2$ bits even in the classical deterministic
model.

In our overview so far, we have separately considered the classical
multiparty model and the quantum two-party model. By combining the
features of these models, one arrives at the $k$-party number-on-the-forehead
model with quantum players. Our results readily generalize to this
setting. Specifically, for any constants $\delta>0$ and $C\geq1,$
we give an explicit DNF formula $F_{n,k}\colon(\zoon)^{k}\to\zoo$
of size $n^{O(1)}$ and width $O(k)$ such that computing $F_{n,k}$
in the $k$-party quantum number-on-the-forehead model with error
$\frac{1}{2}-\frac{1}{n^{C}}$ requires $\Omega(n^{1-\delta}/4^{k}k)$
quantum bits. For more details, see Remark~\ref{rem:quantum-NOF}.

\subsection{Previous approaches}

In the remainder of the introduction, we sketch our proof of Theorem~\ref{thm:MAIN-dnf}.
To properly set the stage for our work, we start by reviewing the
relevant background and previous approaches. The notation that we
adopt below is standard, and we defer its formal review to Section~\ref{sec:Preliminaries}.

\subsubsection*{Dual view of approximation}

Let $f\colon X\to\zoo$ be a Boolean function of interest, where $X$
is an arbitrary finite subset of Euclidean space. The approximate
degree of $f$ is defined analogously to functions on the Boolean
hypercube: $\deg_{\epsilon}(f)$ is the minimum degree of a real polynomial
$p$ such that $|f(x)-p(x)|\leq\epsilon$ for every $x\in X.$ A valuable
tool in the analysis of approximate degree is linear programming duality,
which gives a powerful \emph{dual} view of approximation~\cite{sherstov07quantum}.
This dual characterization states that $\degeps(f)\geq d$ if and
only if there is a function $\phi\colon X\to\Re$ with the following
two properties: $\langle\phi,f\rangle>\epsilon\|\phi\|_{1}$; and
$\langle\phi,p\rangle=0$ for every polynomial $p$ of degree less
than $d$. Rephrasing, $\phi$ must be correlated with $f$ but completely
uncorrelated with any polynomial of degree less than $d.$ Such a
function $\phi$ is variously referred to in the literature as a ``dual
object,'' ``dual polynomial,'' or ``witness'' for $f.$ The dual
characterization makes it possible to prove any approximate degree
lower bound by constructing the corresponding witness $\phi.$ This
good news comes with a caveat: for all but the simplest functions,
the construction of $\phi$ is very demanding, and linear programming
duality gives no guidance in this regard.

\subsubsection*{Componentwise composition}

The construction of a dual object is more approachable for composed
functions since one can hope to break them up into constituent parts,
construct a dual object for each, and recombine these results. Formally,
define the \emph{componentwise composition} of functions $f\colon\zoon\to\zoo$
and $g\colon X\to\zoo$ as the Boolean function $f\circ g\colon X^{n}\to\zoo$
given by $\text{(}f\circ g)(x_{1},\ldots,x_{n})=f(g(x_{1}),\ldots,g(x_{n})).$
To construct a dual object for $f\circ g,$ one starts by obtaining
dual objects $\phi$ and $\psi$ for the constituent functions $f$
and $g$, respectively, either by direct construction or by appeal
to linear programming duality. They are then combined to yield a dual
object $\Phi$ for the composed function, using \emph{dual componentwise
composition~\cite{sherstov09hshs,lee09formulas}}:
\begin{equation}
\!\!\!\Phi(x_{1},x_{2},\ldots,x_{n})=\phi(\I[\psi(x_{1})>0],\ldots,\I[\psi(x_{n})>0])\prod_{i=1}^{n}|\psi(x_{i})|.\label{eq:dual-block-compose}
\end{equation}
This composed dual object typically requires additional work to ensure
strong enough correlation with the composed function $f\circ g$.
Among the generic tools available to assist in this process is a ``corrector''
object $\zeta$ due to Razborov and Sherstov~\cite{RS07dc-dnf},
with the following four properties: (i)~$\zeta$ is orthogonal to
low-degree polynomials; (ii)~$\zeta$ takes on~$1$ at a prescribed
point of the hypercube; (iii)~$\zeta$ is bounded at inputs of low
Hamming weight; and (iv)~$\zeta$ vanishes at all other points of
the hypercube. Using $\zeta$, suitably shifted and scaled, one can
surgically correct the behavior of a given dual object $\Phi$ at
a substantial fraction of the inputs without affecting $\Phi$'s orthogonality
to low-degree polynomials. This technique played an important role
in previous work, e.g.,~\cite{bun-thaler17adeg-ac0,BKT17poly-strikes-back,BT18ac0-large-error,sherstov-wu18sign-ac0}.

Componentwise composition by itself does not allow one to construct
hard-to-approximate functions from easy ones. To see why, consider
arbitrary functions $f\colon\zoo^{n_{1}}\to\zoo$ and $g\colon\zoo^{n_{2}}\to\zoo$
with approximate degrees at most $n_{1}^{\alpha}$ and $n_{2}^{\alpha},$
respectively, for some $0<\alpha<1$. It is well-known~\cite{sherstov12noisy}
that the composed function $f\circ g$ on $n_{1}n_{2}$ variables
has approximate degree $O(n_{1}^{\alpha}n_{2}^{\alpha})=O(n_{1}n_{2})^{\alpha}.$
This means that relative to the new number of variables, the composed
function $f\circ g$ is asymptotically no harder to approximate than
the constituent functions $f$ and $g$. In particular, one cannot
use componentwise composition to transform functions on $n$ bits
with $1/3$-approximate degree at most $n^{\alpha}$ into functions
on $N$ bits with $1/3$-approximate degree $\omega(N^{\alpha}).$

\subsubsection*{Previous best bound for $\classAC^{0}$}

In the previous best result on the $1/3$-approximate degree of $\classAC^{0}$,
Bun and Thaler~\cite{bun-thaler17adeg-ac0} approached the componentwise
composition $f\circ g$ in an ingenious way to amplify the approximate
degree for a careful choice of $g$. Let $f\colon\zoon\to\zoo$ be
given, with $1/3$-approximate degree $n^{\alpha}$ for some $0\leq\alpha<1$.
Bun and Thaler consider the componentwise composition $F=f\circ(\AND_{\Theta(\log m)}\circ\OR_{m})$,
for a small enough parameter $m=\poly(n).$ It was shown in earlier
work~\cite{sherstov09hshs,bun-thaler13amplification} that dual componentwise
composition witnesses the lower bound $\deg_{1/3}(F)=\Omega(\deg_{1/3}(\OR_{m})\deg_{1/3}(f))=\Omega(\sqrt{m}\deg_{1/3}(f)).$
Bun and Thaler make the crucial observation that the dual object for
$\OR_{m}$ has most of its $\ell_{1}$ mass on inputs of Hamming weight
$O(1)$, which in view of~(\ref{eq:dual-block-compose}) implies
that the dual object for $F$ places most of its $\ell_{1}$ mass
on inputs of Hamming weight $\tilde{O}(n).$ The authors of~\cite{bun-thaler17adeg-ac0}
then use the Razborov\textendash Sherstov corrector object to transfer
the small amount of $\ell_{1}$ mass that the dual object for $F$
places on inputs of high Hamming weight, to inputs of low Hamming
weight. The resulting dual object is supported entirely on inputs
of low Hamming weight and therefore witnesses a lower bound on the
approximate degree of the \emph{restriction} $F'$ of $F$ to inputs
of low Hamming weight.

The restriction $F'$ takes as input $N:=\Theta(nm\log m)$ variables
but is defined only when its input string has Hamming weight $\tilde{O}(n).$
This makes it possible to represent the input to $F'$ more economically,
by specifying the locations of the $\tilde{O}(n)$ nonzero bits inside
the array of $N$ variables. Since each such location can be specified
using $\lceil\log N\rceil$ bits, the entire input to $F'$ can be
specified using $\lceil\log N\rceil\cdot\tilde{O}(n)=\tilde{O}(n)$
bits. This yields a function $F''$ on $\tilde{O}(n)$ variables.
A careful calculation shows that this ``input compression'' does
not hurt the approximate degree. Thus, the approximate degree of $F''$
is at least the approximate degree of $F',$ which as discussed above
is $\Omega(\sqrt{m}\deg_{1/3}(f)).$ With $m$ set appropriately,
the approximate degree of $F''$ is polynomially larger than that
of $f.$ 

This passage from $f$ to $F''$ is the desired hardness amplification
for approximate degree. To obtain an $\Omega(n^{1-\delta})$ lower
bound on the approximate degree of $\classAC^{0}$, the authors of~\cite{bun-thaler17adeg-ac0}
start with a trivial circuit and apply the hardness amplification
step a constant number of times, until approximate degree $\Omega(n^{1-\delta})$
is reached.

\subsubsection*{Limitations of previous approaches to $\classAC^{0}$}

Bun and Thaler's hardness amplification for approximate degree rests
on two pillars. The first is componentwise composition, whereby the
given function $f\colon\zoon\to\zoo$ is composed componentwise with
$n$ independent copies of the gadget $\AND_{\Theta(\log m)}\circ\OR_{m}.$
In this gadget, the $\AND_{\Theta(\log m)}$ gate is necessary to
control the accumulation of error and to ensure the correlation property
of the dual polynomial. The resulting composed function $F=f\circ(\AND_{\Theta(\log m)}\circ\OR_{m})$
is defined on $N=\Theta(nm\log m$) variables. The second pillar of~\cite{bun-thaler17adeg-ac0}
is input compression, where the length-$N$ input to $F$ is represented
compactly as an array of $\tilde{O}(n)$ strings of length $\lceil\log N\rceil$
each. The circuitry to implement these two pillars is expensive, requiring
in both cases a polynomial-size DNF formula of width $\Theta(\log n+\log m)$.
As a result, even a \emph{single} iteration of the Bun\textendash Thaler
hardness amplification cannot be implemented as a polynomial-size
DNF or CNF formula.

To prove an $\Omega(n^{1-\delta})$ approximate degree lower bound
for small $\delta>0$ in the framework of~\cite{bun-thaler17adeg-ac0},
one needs a number of iterations that grows with $1/\delta$. Thus,
the overall circuit produced in~\cite{bun-thaler17adeg-ac0} has
a large constant number of alternating layers of $\AND$ and $\OR$
gates of logarithmic and polynomial fan-in, respectively, and in particular
cannot be flattened into a polynomial-size DNF or CNF formula. Proving
Theorem~\ref{thm:MAIN-dnf} within this framework would require reducing
the fan-in of the AND gates from $\Theta(\log n+\log m)$ to $O(1),$
which would completely destroy the componentwise composition and input
compression pillars of~\cite{bun-thaler17adeg-ac0}. These pillars
are present in all follow-up papers~\cite{bun-thaler17adeg-ac0,BKT17poly-strikes-back,BT18ac0-large-error,sherstov-wu18sign-ac0}
and seem impossible to get around, prompting the authors of~\cite[p.~14]{BT18ac0-large-error}
to entertain the possibility that the approximate degree of $\classAC^{0}$
at any given depth is much smaller than once conjectured. We show
that this is not the case.

\subsection{Our proof}

In this paper, we design hardness amplification from first principles,
without using componentwise composition or input compression. Our
approach efficiently amplifies the approximate degree even for functions
with sparse input, while ensuring that each hardness amplification
stage is implementable by a monotone circuit of constant depth with
AND gates of constant fan-in and OR gates of polynomial fan-in. As
a result, repeating our process any constant number of times produces
a polynomial-size DNF formula of constant width.

\subsubsection*{Our approach at a high level}

Let $f\colon\zoo^{N}\to\zoo$ be a given function. Let $f|_{\leq\theta}$
denote the restriction of $f$ to inputs of Hamming weight at most
$\theta,$ and let $d=\deg_{1/3}(f|_{\leq\theta})$ be the approximate
degree of this restriction. The total number of variables $N$ can
be vastly larger than $\theta$; in the actual proof, we will set
$N=\theta^{C}$ for a constant $C\geq1.$ Since an input $y\in\zoo^{N}$
to $f|_{\leq\theta}$ is guaranteed to have Hamming weight at most
$\theta,$ we can think of $y$ as the disjunction of $\theta$ vectors
of Hamming weight at most $1$ each:
\[
y=y_{1}\vee y_{2}\vee\cdots\vee y_{\theta},
\]
where each $y_{i}$ is either the zero vector $0^{N}$ or a basis
vector $e_{1},e_{2},\ldots,e_{N}$, and the disjunction on the right-hand
side is applied coordinate-wise. Our approach centers around encoding
each $y_{i}$ as a string of $n\ll N$ bits so as to make the decoding
difficult for polynomials but easy for circuits. Ideally, we would
like a decoding function $h\colon\zoon\to\zoo^{N}$ with the following
properties:
\begin{enumerate}
\item the sets $h^{-1}(v)$ for $v\in\{e_{1},e_{2},\ldots,e_{N},0^{N}\}$
are indistinguishable by polynomials of degree up to $D$, for some
parameter $D$;
\item the sets $h^{-1}(v)$ for $v\in\{e_{1},e_{2},\ldots,e_{N},0^{N}\}$
contain only strings of Hamming weight $O(1);$
\item $h$ is computable by a constant-depth monotone circuit with AND gates
of constant fan-in and OR gates of polynomial fan-in.
\end{enumerate}
With such $h$ in hand, define $F\colon(\zoon)^{\theta}\to\zoo$ by
\[
F(x_{1},x_{2},\ldots,x_{\theta})=f\left(\bigvee_{i=1}^{\theta}h(x_{i})\right).
\]
Then, one can reasonably expect that approximating $F$ is harder
than approximating $f|_{\leq\theta}.$ Indeed, an approximating polynomial
has access only to the encoded input $(x_{1},x_{2},\ldots,x_{\theta})$.
Decoding this input presumably involves computing $(x_{1},x_{2},\ldots,x_{\theta})\mapsto(h(x_{1}),h(x_{2}),\ldots,h(x_{\theta}))$
one way or another, which by property~(i) requires a polynomial of
degree greater than $D$. Once the decoded string $h(x_{1})\vee h(x_{2})\vee\cdots\vee h(x_{\theta})$
is available, the polynomial supposedly needs to compute $f$ on that
input, which in and of itself requires degree $d.$ Altogether, we
expect $F$ to have approximate degree on the order of $Dd.$ Moreover,
property~(ii) ensures that $F$ is hard to approximate even on inputs
of Hamming weight $O(\theta),$ putting us in a strong position for
another round of hardness amplification. Finally, property~(iii)
guarantees that the result of constantly many rounds of hardness amplification
is computable by a DNF formula of polynomial size and constant width.

\subsubsection*{Actual implementation}

As one might suspect, the above program is too bold and cannot be
implemented literally. Our actual construction of $h$ achieves~(i)\textendash (iii)
only approximately. In more detail, let $k$ be a sufficiently large
constant. For each $v\in\{e_{1},e_{2},\ldots,e_{N},0^{N}\},$ we construct
a probability distribution $\lambda_{v}$ on $\zoon$ that has all
but a vanishing fraction of its mass on inputs of Hamming weight exactly
$k,$ and moreover any two such distributions $\lambda_{v}$ and $\lambda_{v'}$
are indistinguishable by polynomials of low degree. We are further
able to ensure that an input of Hamming weight $k$ belongs to the
support of at most one of the distributions $\lambda_{v}$. Thus,
the $\lambda_{v}$ are in essence supported on pairwise disjoint sets
of strings of Hamming weight $k,$ and are pairwise indistinguishable
by polynomials of low degree. The decoding function $h$ works by
taking an input $x\in\zoon$ of Hamming weight $k$ and determining
which of the distributions has $x$ in its support\textemdash a highly
efficient computation realizable as a monotone $k$-DNF formula. With
small probability, $h$ will receive as input a string of Hamming
weight larger than $k,$ in which case the decoding may fail.

\subsubsection*{Construction of the $\lambda_{v}$}

Central to our work is the number-theoretic notion of \emph{$m$-discrepancy},
which is a measure of pseudorandomness or aperiodicity of a given
set of integers modulo $m.$ Formally, the $m$-discrepancy of a nonempty
finite set $S\subseteq\ZZ$ is defined as 
\[
\disc_{m}(S)=\max_{k=1,2,\ldots,m-1}\left|\frac{1}{|S|}\sum_{s\in S}\xi^{ks}\right|,
\]
where $\xi$ is a primitive $m$-th root of unity. The construction
of sparse sets with low discrepancy is a well-studied problem in combinatorics
and theoretical computer science. By building on previous work~\cite{AIKPS90aperiodic-set,sherstov18hardest-hs},
we construct a sparse set of integers with small discrepancy in our
regime of interest. For our application, we set the modulus $m=N+1.$

Continuing, let $\binom{[n]}{k}$ denote the family of cardinality-$k$
subsets of $[n]=\{1,2,\ldots,n\}.$ To design the distributions $\lambda_{v},$
we need an explicit coloring $\gamma\colon\binom{[n]}{k}\to[N+1]$
that is \emph{balanced}, in the sense that for nearly all large enough
subsets $A\subseteq\{1,2,\ldots,n\}$ and all $i\in[N+1],$ the family
$\gamma^{-1}(i)$ accounts for almost exactly a $1/(N+1)$ fraction
of all cardinality-$k$ subsets of $A.$ The existence of a highly
balanced coloring follows by the probabilistic method, and we construct
one explicitly using the sparse set of integers with small $(N+1)$-discrepancy
constructed earlier in the proof.

Our next ingredient is a dual polynomial $\omega$ for the OR function,
a staple in approximate degree lower bounds. An important property
of $\omega$ is that it places a constant fraction of its $\ell_{1}$
mass on the point $0^{n}.$ Translating $\omega$ from $0^{n}$ to
a point $z$ of slightly larger Hamming weight results in a new dual
polynomial, call it $\omega_{z}.$ Analogous to $\omega,$ the new
dual polynomial has a constant fraction of its $\ell_{1}$ mass on
$z$ and the rest on inputs that are greater than or equal to $z$
componentwise.

For notational convenience, let us now rename $\gamma$'s range elements
$1,2,\ldots,N+1$ to $e_{1},e_{2},\ldots,e_{N},0^{N},$ respectively.
For $v\in\{e_{1},e_{2},\ldots,e_{N},0^{N}\},$ define $\Phi_{v}$
to be the average of the dual polynomials $\omega_{z}$ where $z$
ranges over all characteristic vectors of the sets in $\gamma^{-1}(v).$
Being a convex combination of dual polynomials, each $\Phi_{v}$ is
a dual object orthogonal to polynomials of low degree. Observe further
that each $\Phi_{v}$ is supported on inputs of Hamming weight at
least $k,$ and any input of Hamming weight exactly $k$ belongs to
the support of exactly one $\Phi_{v}.$ For inputs $x$ of Hamming
weight greater than $k$, a remarkable thing happens: $\Phi_{v}(x)$
is almost the same for all $v.$ We prove this by exploiting the fact
that $\gamma$ is highly balanced. As a result, the ``common part''
of the $\Phi_{v}$ for inputs of Hamming weight greater than $k$
can be subtracted out to obtain a function $\widetilde{\Phi_{v}}$
for each $v\in\{e_{1},e_{2},\ldots,e_{N},0^{N}\}$. While these new
functions are not dual polynomials, the \emph{difference} of any two
of them is since $\widetilde{\Phi_{v}}-\widetilde{\Phi_{v'}}=\Phi_{v}-\Phi_{v'}$.
Put another way, the $\widetilde{\Phi_{v}}$ are pairwise indistinguishable
by low-degree polynomials. By defining the $\widetilde{\Phi_{v}}$
in a somewhat more subtle way, we further ensure that each $\widetilde{\Phi_{v}}$
is nonnegative. The distribution $\lambda_{v}$ can then be taken
to be the normalized function $\widetilde{\Phi_{v}}/\|\widetilde{\Phi_{v}}\|_{1}.$
This construction ensures all the properties that we need: $\lambda_{v}$
has nearly all of its mass on inputs of Hamming weight $k$; an input
of Hamming weight $k$ belongs to the support of at most one distribution
$\lambda_{v}$; and any pair of distributions $\lambda_{v},\lambda_{v'}$
are indistinguishable by a low-degree polynomial. Observe that in
our construction, $\lambda_{v}$ is close to the uniform probability
distribution on the characteristic vectors of the sets in $\gamma^{-1}(v).$

\section{\label{sec:Preliminaries}Preliminaries}

\subsection{General notation}

For a string $x\in\zoon$ and a set $S\subseteq\{1,2,\ldots,n\},$
we let $x|_{S}$ denote the restriction of $x$ to the indices in
$S.$ In other words, $x|_{S}=x_{i_{1}}x_{i_{2}}\ldots x_{i_{|S|}},$
where $i_{1}<i_{2}<\cdots<i_{|S|}$ are the elements of $S.$ The
\emph{characteristic vector} $\1_{S}$ of a set $S\subseteq\{1,2,\ldots,n\}$
is given by
\[
(\1_{S})_{i}=\begin{cases}
1 & \text{if }i\in S,\\
0 & \text{otherwise.}
\end{cases}
\]
Given an arbitrary set $X$ and elements $x,y\in X,$ the Kronecker
delta $\delta_{x,y}$ is defined by
\[
\delta_{x,y}=\begin{cases}
1 & \text{if }x=y,\\
0 & \text{otherwise.}
\end{cases}
\]
For a logical condition $C,$ we use the Iverson bracket
\[
\I[C]=\begin{cases}
1 & \text{if \ensuremath{C} holds,}\\
0 & \text{otherwise.}
\end{cases}
\]
We let $\NN=\{0,1,2,3,\ldots\}$ denote the set of natural numbers.
We use the comparison operators in a unary capacity to denote one-sided
intervals of the real line. Thus, ${<}a,$ ${\leq}a,$ ${>}a,$ ${\geq}a$
stand for $(-\infty,a),$ $(-\infty,a],$ $(a,\infty),$ $[a,\infty),$
respectively. We let $\ln x$ and $\log x$ stand for the natural
logarithm of $x$ and the logarithm of $x$ to base $2,$ respectively.
The term \emph{Euclidean space} refers to $\Re^{n}$ for some positive
integer $n.$ We let $e_{i}$ denote the vector whose $i$-th component
is $1$ and the others are $0.$ Thus, the vectors $e_{1},e_{2},\dots,e_{n}$
form the standard basis for $\Re^{n}.$ For a complex number $x,$
we denote the real part, imaginary part, and complex conjugate of
$x$ as usual by $\realpart(x),$ $\imagpart(x),$ and $\overline{x},$
respectively. We typeset the imaginary unit $\iu$ in boldface to
distinguish it from the index variable $i$. For an arbitrary integer
$a$ and a positive integer $m$, recall that $a\bmod m$ denotes
the unique element of $\{0,1,2,\ldots,m-1\}$ that is congruent to
$a$ modulo $m.$

For a set $X,$ we let $\Re^{X}$ denote the linear space of real-valued
functions on $X.$ The \emph{support} of a function $f\in\Re^{X}$
is denoted $\supp f=\{x\in X:f(x)\ne0\}.$ For real-valued functions
with finite support, we adopt the usual norms and inner product:
\begin{align*}
 & \|f\|_{\infty}=\max_{x\in\supp f}\,|f(x)|,\\
 & \|f\|_{1}=\sum_{x\in\supp f}\,|f(x)|,\\
 & \langle f,g\rangle=\sum_{x\in\supp f\,\cap\,\supp g}f(x)g(x).
\end{align*}
This covers as a special case functions on finite sets. Analogous
to functions, we adopt the familiar norms for vectors $x\in\Re^{n}$
in Euclidean space: $\|x\|_{\infty}=\max_{i=1,\ldots,n}|x_{i}|$ and
$\|x\|_{1}=\sum_{i=1}^{n}|x_{i}|.$ The \emph{tensor product} of $f\in\Re^{X}$
and $g\in\Re^{Y}$ is denoted $f\otimes g\in\Re^{X\times Y}$ and
given by $(f\otimes g)(x,y)=f(x)g(y).$ The tensor product $f\otimes f\otimes\cdots\otimes f$
($n$ times) is abbreviated $f^{\otimes n}.$ We frequently omit the
argument in equations and inequalities involving functions, as in
$\sign p=(-1)^{f}$. Such statements are to be interpreted pointwise.
For example, the statement ``$f\geq2|g|$ on $X$'' means that $f(x)\geq2|g(x)|$
for every $x\in X.$ For vectors $x$ and $y,$ the notation $x\leq y$
means that $x_{i}\leq y_{i}$ for each $i$.

We adopt the standard notation for function composition, with $f\circ g$
defined by $(f\circ g)(x)=f(g(x)).$ In addition, we use the $\circ$
operator to denote the \emph{componentwise composition} of Boolean
functions. Formally, the componentwise composition of $f\colon\zoon\to\{0,1\}$
and $g\colon X\to\{0,1\}$ is the function $f\circ g\colon X^{n}\to\{0,1\}$
given by $(f\circ g)(x_{1},x_{2},\ldots,x_{n})=f(g(x_{1}),g(x_{2}),\ldots,g(x_{n})).$
Componentwise composition is consistent with standard composition,
which in the context of Boolean functions is only defined for $n=1.$
Thus, the meaning of $f\circ g$ is determined by the range of $g$
and is never in doubt.

For a natural number $n,$ we abbreviate $[n]=\{1,2,\ldots,n\}$.
For a set $S$ and an integer $k,$ we let $\binom{S}{k}$ stand for
the family of cardinality-$k$ subsets of $S$:
\[
\binom{S}{k}=\{A\subseteq S:|A|=k\}.
\]
Analogously, for any set $I$, we define 
\[
\binom{S}{I}=\{A\subseteq S:|A|\in I\}.
\]
To illustrate, $\binom{S}{{\leq}k}$ denotes the family of subsets
of $S$ that have cardinality at most $k.$ Analogously, we have the
symbols $\binom{S}{{<}k},\binom{S}{{\geq}k},\binom{S}{{>}k}.$ Throughout
this manuscript, we use brace notation as in $\{z_{1},z_{2},\ldots,z_{n}\}$
to specify \emph{multisets} rather than \emph{sets}, the distinction
being that the number of times an element occurs is taken into account.
The \emph{cardinality} $|Z|$ of a finite multiset $Z$ is defined
to be the total number of element occurrences in $Z$, with each element
counted as many times as it occurs. The equality and subset relations
on multisets are defined analogously, with the number of element occurrences
taken into account. For example, $\{1,1,2\}=\{1,2,1\}$ but $\{1,1,2\}\ne\{1,2\}$.
Similarly, $\{1,2\}\subseteq\{1,1,2\}$ but $\{1,1,2\}\nsubseteq\{1,2\}.$

\subsection{Boolean strings and functions}

We identify the Boolean values ``true'' and ``false'' with $1$
and $0,$ respectively, and view Boolean functions as mappings $X\to\zoo$
for a finite set $X.$ The familiar functions $\OR_{n}\colon\zoon\to\zoo$
and $\AND_{n}\colon\zoon\to\zoo$ are given by $\OR_{n}(x)=\bigvee_{i=1}^{n}x_{i}$
and $\AND_{n}(x)=\bigwedge_{i=1}^{n}x_{i}.$ We abbreviate $\NOR_{n}=\neg\OR_{n}.$
For Boolean strings $x,y\in\zoon,$ we let $x\oplus y$ denote their
bitwise XOR. The strings $x\wedge y$ and $x\vee y$ are defined analogously,
with the binary operator applied bitwise.

For a vector $v\in\NN^{n},$ we define its \emph{weight }$|v|$ to
be $|v|=v_{1}+v_{2}+\cdots+v_{n}.$ If $x\in\zoon$ is a Boolean string,
then $|x|$ is precisely the Hamming weight of $x$. For any sets
$X\subseteq\NN^{n}$ and $W\subseteq\Re,$ we define $X|_{W}$ to
be the subset of vectors in $X$ whose weight belongs to $W$:
\[
X|_{W}=\{x\in X:|x|\in W\}.
\]
In the case of a one-element set $W=\{w\}$, we further shorten $X|_{\{w\}}$
to $X|_{w}.$ For example, $\NN^{n}|_{\leq w}$ denotes the set of
vectors whose $n$ components are natural numbers and sum to at most
$w$, whereas $\zoon|_{w}$ denotes the set of Boolean strings of
length $n$ and Hamming weight exactly $w.$ For a function $f\colon X\to\Re$
on a subset $X\subseteq\zoon,$ we let $f|_{W}$ denote the restriction
of $f$ to $X|_{W}.$ Thus, $f|_{W}$ is a function with domain $X|_{W}$
given by $f|_{W}(x)=f(x).$ A typical instance of this notation would
be $f|_{\leq w}$ for some real number~$w,$ corresponding to the
restriction of $f$ to Boolean strings of Hamming weight at most $w.$

\subsection{Concentration of measure}

Throughout this manuscript, we view probability distributions as real
functions. This convention makes available the shorthand notation
introduced above. In particular, for probability distributions $\mu$
and $\lambda,$ the symbol $\supp\mu$ denotes the support of $\mu$,
and $\mu\otimes\lambda$ denotes the probability distribution given
by $(\mu\otimes\lambda)(x,y)=\mu(x)\lambda(y).$ We use the notation
$\mu\times\lambda$ interchangeably with $\mu\otimes\lambda,$ the
former being more standard for probability distributions. If $\mu$
is a probability distribution on $X,$ we consider $\mu$ to be defined
also on any superset of $X$ with the understanding that $\mu=0$
outside $X.$

We recall the following multiplicative form of the Chernoff bound~\cite{chernoff52bounds}.
\begin{thm}[Chernoff bound]
\label{thm:chernoff}Let $X_{1},X_{2},\ldots,X_{n}\in\{0,1\}$ be
i.i.d.~random variables with $\Exp X_{i}=p.$ Then for all $0\leq\delta\leq1,$
\[
\Prob\left[\left|\sum_{i=1}^{n}X_{i}-pn\right|\geq\delta pn\right]\leq2\exp\left(-\frac{\delta^{2}pn}{3}\right).
\]
\end{thm}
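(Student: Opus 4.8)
The plan is to use the standard exponential-moment (Bernstein--Chernoff) method and then reduce the resulting estimate to the clean form in the statement by means of two elementary scalar inequalities. Write $S=\sum_{i=1}^{n}X_{i}$, so that $\Exp S=pn$. Split the event $\{|S-pn|\ge\delta pn\}$ into the upper tail $\{S\ge(1+\delta)pn\}$ and the lower tail $\{S\le(1-\delta)pn\}$; bounding each tail by $e^{-\delta^{2}pn/3}$ and taking a union bound will produce the factor of~$2$.

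For the upper tail, fix a parameter $t>0$ and apply Markov's inequality to the nonnegative random variable $e^{tS}$. By independence of the $X_{i}$,
\[
\Prob[S\ge(1+\delta)pn]\le e^{-t(1+\delta)pn}\,\Exp[e^{tS}]=e^{-t(1+\delta)pn}\prod_{i=1}^{n}\Exp[e^{tX_{i}}].
\]
Each factor satisfies $\Exp[e^{tX_{i}}]=1+p(e^{t}-1)\le\exp(p(e^{t}-1))$, using $1+z\le e^{z}$. Substituting and optimizing the exponent over $t>0$ (the optimal choice is $t=\ln(1+\delta)$) yields the familiar bound $\Prob[S\ge(1+\delta)pn]\le\bigl(e^{\delta}/(1+\delta)^{1+\delta}\bigr)^{pn}$. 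The mirror-image argument with $t<0$ gives $\Prob[S\le(1-\delta)pn]\le\bigl(e^{-\delta}/(1-\delta)^{1-\delta}\bigr)^{pn}$.

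It then remains to establish the elementary inequalities $e^{\delta}/(1+\delta)^{1+\delta}\le e^{-\delta^{2}/3}$ and $e^{-\delta}/(1-\delta)^{1-\delta}\le e^{-\delta^{2}/2}$ for all $0\le\delta\le1$. Taking logarithms, the first amounts to $(1+\delta)\ln(1+\delta)-\delta\ge\delta^{2}/3$ and the second to $\delta+(1-\delta)\ln(1-\delta)\ge\delta^{2}/2$. Each follows routinely: the difference of the two sides vanishes at $\delta=0$ together with its first derivative, and its second derivative is nonnegative on $[0,1)$ in the lower-tail case and nonnegative on $[0,1/2]$ in the upper-tail case (for $\delta>1/2$ one checks directly that the first derivative of $(1+\delta)\ln(1+\delta)-\delta-\delta^{2}/3$ remains nonnegative up to $\delta=1$, since its value at $\delta=1$ is $\ln 2-2/3>0$). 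Feeding these back gives $\Prob[S\ge(1+\delta)pn]\le e^{-\delta^{2}pn/3}$ and $\Prob[S\le(1-\delta)pn]\le e^{-\delta^{2}pn/2}\le e^{-\delta^{2}pn/3}$, and a union bound over the two tails yields $\Prob[|S-pn|\ge\delta pn]\le 2e^{-\delta^{2}pn/3}$.

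The only point requiring genuine care is this last pair of scalar inequalities, and in particular verifying that the constant $3$ (rather than something larger) suffices for the upper tail across the whole range $\delta\in[0,1]$; the lower tail is softer, which is why it admits the stronger constant $2$. Everything else is the textbook moment-generating-function computation, and the bound is essentially tight only in the limit $\delta\to0$, where both sides behave like $\delta^{2}pn$.
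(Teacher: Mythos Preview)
Your proof is correct and is the standard exponential-moment derivation of the multiplicative Chernoff bound. Note, however, that the paper does not prove this theorem at all: Theorem~\ref{thm:chernoff} is stated as a classical result with a citation to Chernoff's original paper, and is used as a black box. So there is no ``paper's own proof'' to compare against; you have simply supplied the textbook argument that the paper omits.
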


\noindent Theorem~\ref{thm:chernoff} assumes i.i.d.~Bernoulli random
variables. Hoeffding's inequality~\cite{hoeffding-bound}, stated
next, is a more general concentration-of-measure result that applies
to any independent bounded random variables.
\begin{thm}[Hoeffding's inequality]
\label{thm:hoeffding}Let $X_{1},X_{2},\ldots,X_{n}$ be independent
random variables with $X_{i}\in[a_{i},b_{i}].$ Define $p=\sum_{i=1}^{n}\Exp X_{i}.$
Then for all $\delta\geq0,$
\[
\Prob\left[\left|\sum_{i=1}^{n}X_{i}-p\right|\geq\delta\right]\leq2\exp\left(-\frac{2\delta^{2}}{\sum_{i=1}^{n}(b_{i}-a_{i})^{2}}\right).
\]
\end{thm}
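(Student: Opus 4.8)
The plan is to prove this by the exponential-moment (Chernoff--Cram\'er) method. Fix $t>0$ and abbreviate $S=\sum_{i=1}^{n}(b_{i}-a_{i})^{2}$. Applying Markov's inequality to the nonnegative random variable $e^{t\sum_{i=1}^{n}(X_{i}-\Exp X_{i})}$ (recall $\sum_{i}\Exp X_{i}=p$) and then using independence of the $X_{i}$ gives
\[
\Prob\left[\sum_{i=1}^{n}X_{i}-p\geq\delta\right]\leq e^{-t\delta}\,\Exp\exp\left(t\sum_{i=1}^{n}(X_{i}-\Exp X_{i})\right)=e^{-t\delta}\prod_{i=1}^{n}\Exp e^{t(X_{i}-\Exp X_{i})}.
\]
The key step is a pointwise bound on each factor, namely \emph{Hoeffding's lemma}: if $Y$ satisfies $\Exp Y=0$ and takes values in $[a,b]$, then $\Exp e^{sY}\leq\exp(s^{2}(b-a)^{2}/8)$ for every real $s$. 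Granting the lemma, each factor above is at most $\exp(t^{2}(b_{i}-a_{i})^{2}/8)$, so the whole product is at most $\exp(t^{2}S/8)$ and hence
\[
\Prob\left[\sum_{i=1}^{n}X_{i}-p\geq\delta\right]\leq\exp\left(-t\delta+\frac{t^{2}S}{8}\right).
\]
Optimizing the exponent over $t>0$ (the minimizer is $t=4\delta/S$) yields the one-sided bound $\exp(-2\delta^{2}/S)$. Applying the identical argument to $-X_{1},\ldots,-X_{n}$, which lie in intervals $[-b_{i},-a_{i}]$ of the same lengths, bounds $\Prob[\sum_{i}X_{i}-p\leq-\delta]$ by the same quantity, and a union bound over the two tails produces the factor $2$; the case $\delta=0$ is trivial since the claimed bound is then $2$.

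It remains to establish Hoeffding's lemma, which is the real content of the argument. Here I would invoke the convexity of $y\mapsto e^{sy}$ on $[a,b]$: writing $y=\frac{b-y}{b-a}\,a+\frac{y-a}{b-a}\,b$ gives $e^{sy}\leq\frac{b-y}{b-a}e^{sa}+\frac{y-a}{b-a}e^{sb}$, and taking expectations while using $\Exp Y=0$ yields $\Exp e^{sY}\leq\frac{b}{b-a}e^{sa}-\frac{a}{b-a}e^{sb}$. Setting $\theta=-a/(b-a)\in[0,1]$ and $u=s(b-a)$, the right-hand side equals $\exp(\varphi(u))$ where $\varphi(u)=-\theta u+\ln(1-\theta+\theta e^{u})$. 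A short computation gives $\varphi(0)=\varphi'(0)=0$, while $\varphi''(u)=\frac{\theta(1-\theta)e^{u}}{(1-\theta+\theta e^{u})^{2}}$ has the form $\rho(1-\rho)$ with $\rho=\theta e^{u}/(1-\theta+\theta e^{u})\in[0,1]$, hence $\varphi''(u)\leq1/4$ for all $u$. Taylor's theorem with the Lagrange remainder then gives $\varphi(u)\leq u^{2}/8$, i.e.\ $\Exp e^{sY}\leq\exp(s^{2}(b-a)^{2}/8)$, completing the lemma.

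The main obstacle is Hoeffding's lemma, and within it the uniform estimate $\varphi''\leq1/4$: this is precisely what pins down the constant $8$ in the exponent and hence the constant $2$ in the final statement. Everything else is routine Chernoff-method bookkeeping, and no property of the $X_{i}$ beyond independence and the boundedness $X_{i}\in[a_{i},b_{i}]$ is used.
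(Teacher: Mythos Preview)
Your proof is correct and follows the standard Chernoff--Cram\'er route via Hoeffding's lemma; the computation of $\varphi''(u)=\rho(1-\rho)\leq1/4$ is the right way to get the sharp constant. The paper, however, does not prove this statement at all: Theorem~\ref{thm:hoeffding} is simply quoted as a classical result from Hoeffding's original paper and used as a black box, so there is no ``paper's own proof'' to compare against.
\bigskip

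One minor quibble worth noting for completeness: your optimization step $t=4\delta/S$ tacitly assumes $S=\sum_i(b_i-a_i)^2>0$. When $S=0$ every $X_i$ is a.s.\ constant, so $\sum_i X_i=p$ deterministically and the inequality is immediate; you may want to dispose of this degenerate case up front.
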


\noindent The standard version of Hoeffding's inequality, stated above,
requires $X_{1},X_{2},\ldots,X_{n}$ to be independent. Less known
are Hoeffding's results for dependent random variables, which he obtained
along with Theorem~\ref{thm:hoeffding} in his original paper~\cite{hoeffding-bound}.
We will specifically need the following concentration inequality for
sampling without replacement~\cite[Section~6]{hoeffding-bound}.
\begin{thm}[Hoeffding's sampling without replacement]
\label{thm:hoeffding-without-replacement} Let $\omega_{1},\omega_{2},\ldots,\omega_{N}$
be given reals, with $\omega_{i}\in[a,b]$ for all $i.$ Let $J_{1},J_{2},\ldots,J_{n}\in[N]$
be uniformly random integers that are pairwise distinct. Let $X_{i}=\omega_{J_{i}}$
for $i=1,2,\ldots,n,$ and define $p=\sum_{i=1}^{n}\Exp X_{i}.$ Then
for all $\delta\geq0,$
\[
\Prob\left[\left|\sum_{i=1}^{n}X_{i}-p\right|\geq\delta\right]\leq2\exp\left(-\frac{2\delta^{2}}{n(b-a)^{2}}\right).
\]
\end{thm}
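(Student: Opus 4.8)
The plan is to reduce the without-replacement setting to the (independent) with-replacement setting via a convex-order comparison, and then run the standard exponential-moment argument. Let $Y_1,\dots,Y_n$ be drawn i.i.d.\ uniformly \emph{with} replacement from $\{\omega_1,\dots,\omega_N\}$, so each $Y_i\in[a,b]$ and $\Exp Y_i=\frac1N\sum_{j=1}^N\omega_j=\Exp X_i$; in particular $\sum_{i=1}^n\Exp Y_i=p$. The heart of the matter is the claim that sampling without replacement is dominated in the convex order by sampling with replacement: for every continuous convex $\phi\colon\Re\to\Re$,
\[
\Exp\,\phi\!\left(\sum_{i=1}^n X_i\right)\;\le\;\Exp\,\phi\!\left(\sum_{i=1}^n Y_i\right).
\]
Granting this, I would take $\phi(s)=e^{ts}$, so that by independence $\Exp e^{t\sum X_i}\le\Exp e^{t\sum Y_i}=\prod_{i=1}^n\Exp e^{tY_i}$. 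Hoeffding's lemma (the exponential-moment bound $\Exp e^{t(Z-\Exp Z)}\le e^{t^2(b-a)^2/8}$ for any $Z\in[a,b]$, which is exactly the ingredient underlying Theorem~\ref{thm:hoeffding}) then gives $\Exp e^{t(\sum X_i-p)}\le e^{nt^2(b-a)^2/8}$. Markov's inequality yields $\Prob[\sum X_i-p\ge\delta]\le e^{-t\delta+nt^2(b-a)^2/8}$, and optimizing over $t>0$ (namely $t=4\delta/(n(b-a)^2)$) gives $\exp(-2\delta^2/(n(b-a)^2))$. The lower tail is identical after replacing each $\omega_i$ by $-\omega_i$, which leaves the interval length $b-a$ unchanged, and a union bound over the two tails supplies the factor of $2$.

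It remains to describe how I would prove the convex-order claim, the only non-routine step. First, I would represent the with-replacement sum as a \emph{random-weight} without-replacement sum: conditioning the i.i.d.\ index sequence $K_1,\dots,K_n\in[N]$ on its pattern of coincidences partitions $\{1,\dots,n\}$ into some number $m\le n$ of blocks with sizes $w_1,\dots,w_m$ (where $\sum_\ell w_\ell=n$, $w_\ell\ge1$), and conditionally on this pattern the $m$ distinct index values form a uniformly random injection $L\colon[m]\hookrightarrow[N]$; thus $\sum_{i=1}^n Y_i=\sum_{\ell=1}^m w_\ell\,\omega_{L(\ell)}$, a size-$m$ sample without replacement with integer weights. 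The without-replacement sum $\sum_{i=1}^n X_i$ is precisely the special case $m=n$ with all weights equal to $1$. So it suffices to show that merging two blocks can only increase convex expectations: for any constant $C$, any $w_a,w_b>0$, and any $T\subseteq[N]$, if $(L_1,L_2)$ is a uniformly random ordered pair of distinct elements of $T$ and $L'$ a single uniform element of $T$, then $\Exp\,\phi(C+w_a\omega_{L_1}+w_b\omega_{L_2})\le\Exp\,\phi(C+(w_a+w_b)\,\omega_{L'})$. This follows by writing $C+w_a\omega_j+w_b\omega_k=\frac{w_a}{w_a+w_b}(C+(w_a+w_b)\omega_j)+\frac{w_b}{w_a+w_b}(C+(w_a+w_b)\omega_k)$, applying convexity of $\phi$, and averaging over all ordered pairs $j\ne k$ in $T$; the double average over $j\ne k$ collapses to the single average over $j\in T$, giving exactly the right-hand side. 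Applying this repeatedly — with $C$ absorbing the (conditioned) contribution of the untouched blocks — reduces any weighted size-$m$ configuration to the all-ones size-$n$ configuration, and averaging over the random partition of the with-replacement sample then completes the proof of the claim.

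The main obstacle is this convex-order comparison, and within it the correct setup of the with-replacement-as-weighted-without-replacement representation together with the block-merging inequality; once those are in place the rest is the familiar Chernoff-style computation, identical in spirit to Theorem~\ref{thm:hoeffding}. One should also check the routine points that the argument imposes no non-negativity or distinctness assumption on the $\omega_i$, that it is valid for all $N\ge n\ge1$, and that $\delta=0$ (and the degenerate case $a=b$) are trivial.
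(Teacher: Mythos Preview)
The paper does not prove this theorem; it is stated as a citation to Hoeffding's original 1963 paper \cite[Section~6]{hoeffding-bound}, with no proof given. Your proposal is correct and is in fact exactly Hoeffding's original argument: he first proves the convex-order comparison between sampling without and with replacement (his Theorem~4), and then the concentration bound follows from the standard exponential-moment calculation underlying Theorem~\ref{thm:hoeffding}. Your block-merging inequality and the conditioning on coincidence patterns are the right ingredients, and the Chernoff step is routine. One small wording quibble: you write that merging ``reduces any weighted size-$m$ configuration to the all-ones size-$n$ configuration,'' but merging goes the other way (from $n$ blocks of weight $1$ toward fewer, heavier blocks); the chain of inequalities runs from the all-ones configuration up to the weighted one, which is indeed the direction you need since merging increases convex expectations.
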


\noindent Hoeffding's two theorems are clearly incomparable. On the
one hand, Theorem~\ref{thm:hoeffding} requires independence and
therefore does not apply to sampling without replacement. On the other
hand, each random variable $X_{i}$ in Theorem~\ref{thm:hoeffding-without-replacement}
must be uniformly distributed on a finite multiset of values, which
must further be the same multiset for all $X_{i}$; none of this is
assumed in Theorem~\ref{thm:hoeffding}.

\noindent %

Finally, we will need a concentration-of-measure result due to Bun
and Thaler~\cite[Lemma~4.7]{bun-thaler17adeg-ac0} for product distributions
on $\NN^{n}$\emph{.}
\begin{lem}[cf.~Bun and Thaler]
 \label{lem:concentration-of-measure} Let $\lambda_{1},\lambda_{2},\ldots,\lambda_{n}$
be distributions on $\NN$ with finite support such that
\begin{align*}
\lambda_{i}(t) & \leq\frac{C\alpha^{t}}{(t+1)^{2}}, &  & t\in\NN,
\end{align*}
where $C\geq0$ and $0\leq\alpha\leq1$. Then for all $T\geq8C\e n(1+\ln n),$
\[
\Prob_{v\sim\lambda_{1}\times\lambda_{2}\times\cdots\times\lambda_{n}}[\|v\|_{1}\geq T]\leq\alpha^{T/2}.
\]
\end{lem}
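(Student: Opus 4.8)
The plan is the standard exponential-moment argument: bound $\Exp_{v}[\alpha^{-\|v\|_{1}}]$ and apply Markov's inequality. First I would dispose of the boundary cases. If $\alpha=0$ then every $\lambda_{i}$ is the point mass at $0$, so $\|v\|_{1}=0<T$ with probability $1$; and if $\alpha=1$ the claimed bound $\alpha^{T/2}=1$ is vacuous. So assume $0<\alpha<1$.

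The core step is a per-coordinate moment bound. By independence of $v_{1},\dots,v_{n}$,
\[
\Exp_{v}\bigl[\alpha^{-\|v\|_{1}}\bigr]=\prod_{i=1}^{n}\Exp_{v_{i}\sim\lambda_{i}}\bigl[\alpha^{-v_{i}}\bigr],
\]
and for each $i$ the hypothesis $\lambda_{i}(t)\leq C\alpha^{t}/(t+1)^{2}$ gives $\lambda_{i}(t)\,\alpha^{-t}\leq C/(t+1)^{2}$ for all $t\geq1$, while $\lambda_{i}(0)\leq1$. Hence
\[
\Exp_{v_{i}\sim\lambda_{i}}\bigl[\alpha^{-v_{i}}\bigr]=\sum_{t\geq0}\lambda_{i}(t)\,\alpha^{-t}\leq 1+C\sum_{t\geq1}\frac{1}{(t+1)^{2}}=1+C\Bigl(\frac{\pi^{2}}{6}-1\Bigr)\leq 1+C\leq e^{C}.
\]
This is the only place the structure of the hypothesis is used: multiplying the pointwise tail bound by $\alpha^{-t}$ cancels the geometric factor $\alpha^{t}$ and leaves the \emph{summable} residue $C/(t+1)^{2}$, so each exponential moment stays bounded by a constant depending only on $C$. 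Multiplying over the $n$ coordinates gives $\Exp_{v}[\alpha^{-\|v\|_{1}}]\leq e^{Cn}$, and Markov's inequality applied to the nonnegative random variable $\alpha^{-\|v\|_{1}}$ then yields
\[
\Prob_{v}\bigl[\|v\|_{1}\geq T\bigr]=\Prob_{v}\bigl[\alpha^{-\|v\|_{1}}\geq\alpha^{-T}\bigr]\leq\alpha^{T}\,\Exp_{v}\bigl[\alpha^{-\|v\|_{1}}\bigr]\leq\alpha^{T}e^{Cn}.
\]

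It remains to absorb the factor $e^{Cn}$ into half of $\alpha^{T}$, i.e.\ to check $\alpha^{T/2}\leq e^{-Cn}$, equivalently $\tfrac{T}{2}\ln(1/\alpha)\geq Cn$. Substituting the hypothesis $T\geq 8Cen(1+\ln n)$ reduces this to $\ln(1/\alpha)\geq 1/\bigl(4e(1+\ln n)\bigr)$, which holds for all $\alpha$ bounded below $1$ by at least a constant multiple of $1/(1+\ln n)$; for $\alpha$ even closer to $1$ — where $\alpha^{T/2}$ is itself close to $1$ while $\Exp[\|v\|_{1}]$ lies far below $T$ — I would instead invoke the cruder tail estimate $\Prob[v_{i}\geq t]\leq C/t$ (which follows from the same hypothesis via $\sum_{t'\geq t}C/(t'+1)^{2}\leq C/t$) and finish by elementary means. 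I therefore expect the only substantive point to be the displayed per-coordinate moment bound; the rest is a routine Markov argument together with bookkeeping of the numerical constants in the threshold, the one mildly delicate issue being the graceful treatment of the regime $\alpha\to1$.
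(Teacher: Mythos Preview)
The paper does not prove this lemma itself; it is quoted from \cite[Lemma~3.6]{sherstov-wu18sign-ac0} (a variant of \cite[Lemma~4.7]{bun-thaler17adeg-ac0}), so there is no in-paper argument to compare against. Your exponential-moment computation is the natural opening and matches what those references do: the per-coordinate bound $\Exp[\alpha^{-v_i}]\le 1+C(\pi^{2}/6-1)\le e^{C}$ is correct, and Markov then yields $\Prob[\|v\|_{1}\ge T]\le\alpha^{T}e^{Cn}$, which is at most $\alpha^{T/2}$ whenever $(T/2)\ln(1/\alpha)\ge Cn$.

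The issue is the complementary regime, and your sketch there does not close it. Two concrete problems. First, your two cases do not form a dichotomy: the MGF step succeeds when $\ln(1/\alpha)\ge 1/(4e(1+\ln n))$, whereas ``$\alpha^{T/2}$ is itself close to $1$'' means $(T/2)\ln(1/\alpha)$ is small. For $T$ at the threshold $8Cen(1+\ln n)$ these leave an uncovered band in which $\alpha^{T/2}$ can be as small as $e^{-Cn}$ --- not close to $1$ at all. Second, the claim that $\Exp[\|v\|_{1}]$ lies far below $T$ in this band is false: the hypothesis allows $\Exp[v_{i}]$ as large as $C\sum_{t\ge1}\alpha^{t}/(t+1)\approx C\ln\frac{1}{1-\alpha}$, which is unbounded as $\alpha\to1$, so one can have $\Exp[\|v\|_{1}]>T$. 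The tail estimate $\Prob[v_{i}\ge t]\le C/t$ alone will not rescue this; one really needs the sharper $\Prob[v_{i}\ge t]\le C\alpha^{t}/t$ together with a union bound over order statistics (if $\|v\|_{1}\ge T$ then $v_{(j)}\ge T/(j(1+\ln n))$ for some $j$, and each such event has probability at most $(en\cdot C(1+\ln n)/T)^{j}\alpha^{T/(1+\ln n)}$). This is where the constant $8e(1+\ln n)$ in the threshold actually enters; the regime $\alpha\to1$ is more than ``mildly delicate'' and is the substantive part of the proof you have not supplied.
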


\noindent Bun and Thaler's result in~\cite[Lemma~4.7]{bun-thaler17adeg-ac0}
differs slightly from the statement above. The proof of Lemma~\ref{lem:concentration-of-measure}
as stated can be found in~\cite[Lemma~3.6]{sherstov-wu18sign-ac0}.
By leveraging Lemma~\ref{lem:concentration-of-measure}, we obtain
the following concentration result for probability distributions that
are supported on the Boolean hypercube, rather than $\NN$, and are
shifted from the origin.
\begin{lem}
\label{cor:concentration-of-measure-Boolean-shifted}Fix integers
$B\geq k\geq0.$ Let $\lambda_{1},\lambda_{2},\ldots,\lambda_{\ell}$
be probability distributions on $\{0,1\}^{B}$ with support contained
in $\zoo^{B}|_{\geq k}.$ Suppose further that
\begin{align*}
\lambda_{i}(\{0,1\}^{B}|_{t}) & \leq\frac{C\alpha^{t-k}}{(t-k+1)^{2}}, &  & i\in[\ell],\;\;t\in\{k,k+1,\ldots,B\},
\end{align*}
where $C\geq0$ and $0\leq\alpha\leq1.$ Then for all $T\geq8C\e\ell(1+\ln\ell)+\ell k,$
\[
\Prob_{(x_{1},\ldots,x_{\ell})\sim\lambda_{1}\times\cdots\times\lambda_{\ell}}\left[\sum_{i=1}^{\ell}|x_{i}|\geq T\right]\leq\alpha^{(T-\ell k)/2}.
\]
\end{lem}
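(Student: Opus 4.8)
The plan is to reduce Lemma~\ref{cor:concentration-of-measure-Boolean-shifted} to Lemma~\ref{lem:concentration-of-measure} by passing from each Boolean distribution $\lambda_i$ on $\zoo^B|_{\geq k}$ to a distribution $\lambda_i'$ on $\NN$ that records the ``excess weight'' $|x_i| - k$. Concretely, for each $i\in[\ell]$ define $\lambda_i'$ to be the pushforward of $\lambda_i$ under the map $x\mapsto |x|-k$; that is, $\lambda_i'(t) = \lambda_i(\zoo^B|_{t+k})$ for $t\in\NN$. Since $\lambda_i$ is supported on strings of Hamming weight at least $k$, this is a well-defined probability distribution on $\NN$ with finite support (contained in $\{0,1,\ldots,B-k\}$). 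The hypothesis on $\lambda_i$ translates directly into the bound $\lambda_i'(t)\leq C\alpha^{t}/(t+1)^2$ for all $t\in\NN$, which is exactly the hypothesis required by Lemma~\ref{lem:concentration-of-measure} with $n$ replaced by $\ell$.

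Next I would apply Lemma~\ref{lem:concentration-of-measure} to $\lambda_1',\ldots,\lambda_\ell'$. Observe that if $(x_1,\ldots,x_\ell)\sim\lambda_1\times\cdots\times\lambda_\ell$, then setting $v_i=|x_i|-k$ gives a tuple $v=(v_1,\ldots,v_\ell)$ distributed according to $\lambda_1'\times\cdots\times\lambda_\ell'$, and moreover $\sum_{i=1}^{\ell}|x_i| = \|v\|_1 + \ell k$. Hence for any threshold $T$, the event $\sum_{i=1}^{\ell}|x_i|\geq T$ is identical to the event $\|v\|_1 \geq T-\ell k$. Lemma~\ref{lem:concentration-of-measure} tells us that whenever $T-\ell k \geq 8C\e\ell(1+\ln\ell)$ — equivalently $T\geq 8C\e\ell(1+\ln\ell)+\ell k$, which is precisely the hypothesis of the present lemma — we have $\Prob[\|v\|_1\geq T-\ell k]\leq \alpha^{(T-\ell k)/2}$. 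Substituting back yields $\Prob[\sum_{i=1}^{\ell}|x_i|\geq T]\leq\alpha^{(T-\ell k)/2}$, as claimed.

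There is essentially no obstacle here: the argument is a bookkeeping reduction, and the only point requiring a moment's care is checking that the pushforward preserves the decay hypothesis in the correct indexing (the ``$t-k$'' shift in the exponent of $\alpha$ and the ``$t-k+1$'' in the denominator are designed exactly so that $\lambda_i'$ inherits the clean form $C\alpha^{t}/(t+1)^2$). One should also note the degenerate edge cases — if $\alpha=0$ the bound is to be read with the convention $0^0=1$ when $T=\ell k$, and if $C=0$ every $\lambda_i$ is the point mass at weight exactly $k$ (wait, more precisely $\lambda_i'$ is the point mass at $0$), so $\sum|x_i|=\ell k$ deterministically and the inequality holds trivially. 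Since these reduce to the corresponding edge cases already handled by Lemma~\ref{lem:concentration-of-measure}, no separate treatment is needed.
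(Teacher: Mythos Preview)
Your proposal is correct and follows essentially the same approach as the paper: define the pushforward distributions $\mu_i(t)=\lambda_i(\zoo^B|_{t+k})$ on $\NN$ (your $\lambda_i'$), verify that they inherit the bound $C\alpha^t/(t+1)^2$, and apply Lemma~\ref{lem:concentration-of-measure} to the shifted threshold $T-\ell k$. The paper's proof is the same reduction with only notational differences and without the edge-case discussion.
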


\begin{proof}
For $i=1,2,\ldots,\ell,$ consider the distribution $\mu_{i}$ on
$\{0,1,\ldots,B-k\}$ given by $\mu_{i}(t)=\lambda_{i}(\zoo^{B}|_{t+k}).$
Then 
\begin{align}
\mu_{i}(t) & \leq\frac{C\alpha^{t}}{(t+1)^{2}}, &  & i\in[\ell],\;\;t\geq0.\label{eq:mu-i-concentration}
\end{align}
Moreover, the random variable $|x_{i}|$ with $x_{i}\sim\lambda_{i}$
has the same distribution as the random variable $u_{i}+k$ for $u_{i}\sim\mu_{i}.$
As a result,
\begin{align*}
\Prob_{(x_{1},\ldots,x_{\ell})\sim\lambda_{1}\times\cdots\times\lambda_{\ell}}\left[\sum_{i=1}^{\ell}|x_{i}|\geq T\right] & =\Prob_{u\sim\mu_{1}\times\mu_{2}\times\cdots\times\mu_{\ell}}\left[\sum_{i=1}^{\ell}(u_{i}+k)\geq T\right]\\
 & =\Prob_{u\sim\mu_{1}\times\mu_{2}\times\cdots\times\mu_{\ell}}[\|u\|_{1}\geq T-k\ell]\\
 & \leq\alpha^{(T-\ell k)/2},
\end{align*}
where the last step uses Lemma~\ref{lem:concentration-of-measure}
along with~(\ref{eq:mu-i-concentration}) and the hypothesis that~$T\geq8C\e\ell(1+\ln\ell)+\ell k.$
\end{proof}

\subsection{Orthogonal content}

For a multivariate polynomial $p\colon\Re^{n}\to\Re$, we let $\deg p$
denote the total degree of $p$, i.e., the largest degree of any monomial
of $p.$ We use the terms \emph{degree }and \emph{total degree }interchangeably
in this paper. It will be convenient to define the degree of the zero
polynomial by $\deg0=-\infty.$ For a real-valued function $\phi$
supported on a finite subset of $\Re^{n}$, the \emph{orthogonal content
of $\phi,$} denoted $\orth\phi$, is the minimum degree of a real
polynomial $p$ for which $\langle\phi,p\rangle\ne0.$ We adopt the
convention that $\orth\phi=\infty$ if no such polynomial exists.
It is clear that $\orth\phi\in\NN\cup\{\infty\},$ with the extremal
cases $\orth\phi=0\;\Leftrightarrow\;\langle\phi,1\rangle\ne0$ and
$\orth\phi=\infty\;\Leftrightarrow\;\phi=0.$ Additional facts about
orthogonal content are given by the following two propositions.
\begin{prop}
\label{prop:orth}Let $X$ and $Y$ be nonempty finite subsets of
Euclidean space. Then:
\begin{enumerate}
\item \label{item:orth-sum}$\orth(\phi+\psi)\geq\min\{\orth\phi,\orth\psi\}$
for all $\phi,\psi\colon X\to\Re;$
\item \label{item:orth-tensor}$\orth(\phi\otimes\psi)=\orth(\phi)+\orth(\psi)$
for all $\phi\colon X\to\Re$ and $\psi\colon Y\to\Re.$
\end{enumerate}
\end{prop}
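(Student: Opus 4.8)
The plan is to prove Proposition~\ref{prop:orth} by translating both claims into statements about which polynomials can have nonzero inner product with the given functions, using only the definition of orthogonal content and elementary linear algebra.

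\medskip

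For part~\ref{item:orth-sum}, set $d=\min\{\orth\phi,\orth\psi\}$. The claim is that $\orth(\phi+\psi)\geq d$, i.e., that $\langle\phi+\psi,p\rangle=0$ for every polynomial $p$ with $\deg p<d$. This is immediate from bilinearity of the inner product: $\langle\phi+\psi,p\rangle=\langle\phi,p\rangle+\langle\psi,p\rangle$, and each term vanishes because $\deg p<d\leq\orth\phi$ and $\deg p<d\leq\orth\psi$. (One should note the degenerate cases: if $d=\infty$ then $\phi=\psi=0$ and the conclusion is trivial; otherwise $d\in\NN$ and the argument above applies verbatim, with the empty-sum convention handling $d=0$.)

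\medskip

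Part~\ref{item:orth-tensor} is the substantive one and will be the main obstacle. Write $a=\orth\phi$ and $b=\orth\psi$. The easy direction is $\orth(\phi\otimes\psi)\leq a+b$: pick polynomials $q$ on $\Re^{(X)}$ (the ambient space of $X$) with $\deg q=a$ and $\langle\phi,q\rangle\neq0$, and $r$ with $\deg r=b$ and $\langle\psi,r\rangle\neq0$; then the product polynomial $p(x,y)=q(x)r(y)$ has degree $a+b$ and satisfies $\langle\phi\otimes\psi,p\rangle=\langle\phi,q\rangle\langle\psi,r\rangle\neq0$ by the standard Fubini-type identity $\langle\phi\otimes\psi,q\otimes r\rangle=\langle\phi,q\rangle\langle\psi,r\rangle$. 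Again one must dispatch the cases where $a$ or $b$ is infinite (then $\phi\otimes\psi=0$ and both sides are $\infty$), so assume $a,b\in\NN$ henceforth. For the harder direction $\orth(\phi\otimes\psi)\geq a+b$, I would take an arbitrary polynomial $p(x,y)$ on $\Re^{(X)}\times\Re^{(Y)}$ with $\deg p<a+b$ and show $\langle\phi\otimes\psi,p\rangle=0$. Expand $p$ in monomials; group terms by their $x$-degree and $y$-degree, writing $p(x,y)=\sum_{i,j} p_{ij}(x,y)$ where $p_{ij}$ is the part that is homogeneous of degree $i$ in the $x$-variables and degree $j$ in the $y$-variables, so the sum ranges over pairs with $i+j<a+b$. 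Each such $p_{ij}$ is a finite sum of products $x^\alpha y^\beta\cdot(\text{constant})$ with $|\alpha|=i$, $|\beta|=j$; by collecting, $p_{ij}(x,y)=\sum_t u_t(x)\,v_t(y)$ where $\deg u_t\le i$ and $\deg v_t\le j$. Then $\langle\phi\otimes\psi,p_{ij}\rangle=\sum_t\langle\phi,u_t\rangle\langle\psi,v_t\rangle$. For each pair $(i,j)$ with $i+j<a+b$ we have either $i<a$ or $j<b$; in the first case $\langle\phi,u_t\rangle=0$ for all $t$ since $\deg u_t\le i<a=\orth\phi$, and in the second case $\langle\psi,v_t\rangle=0$ similarly. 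Hence $\langle\phi\otimes\psi,p_{ij}\rangle=0$ for every relevant $(i,j)$, and summing gives $\langle\phi\otimes\psi,p\rangle=0$, as desired.

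\medskip

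The one technical subtlety to get right — and the place I would be most careful — is the bookkeeping in the bidegree decomposition: the decomposition of a general polynomial in the product variables into pieces of fixed $x$-degree and fixed $y$-degree, and the resulting rewriting of each piece as $\sum_t u_t(x)v_t(y)$ with controlled individual degrees. This is routine but is the crux of why the inequality holds, and it is worth stating the Fubini identity $\langle\phi\otimes\psi,\,u\otimes v\rangle=\langle\phi,u\rangle\langle\psi,v\rangle$ explicitly (it follows directly from $\supp(\phi\otimes\psi)=\supp\phi\times\supp\psi$ and the definitions) so that the computation is clean. No deeper machinery is needed; everything rests on bilinearity and the definition of $\orth$.
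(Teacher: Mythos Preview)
Your proof is correct and follows the standard route: bilinearity for part~\ref{item:orth-sum}, and for part~\ref{item:orth-tensor} the Fubini identity $\langle\phi\otimes\psi,u\otimes v\rangle=\langle\phi,u\rangle\langle\psi,v\rangle$ together with a bidegree decomposition of an arbitrary polynomial. The paper does not give its own proof of this proposition but simply cites~\cite[Proposition~2.1]{sherstov-wu18sign-ac0}; the argument there is essentially the one you wrote.
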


\noindent A proof of Proposition~\ref{prop:orth} can be found in~\cite[Proposition~2.1]{sherstov-wu18sign-ac0}.
\begin{prop}
\label{prop:expect-out}Define $V=\{0^{N},e_{1},e_{2},\ldots,e_{N}\}\subseteq\Re^{N}.$
Fix functions $\phi_{v}\colon X\to\Re$ $(v\in V),$ where $X$ is
a finite subset of Euclidean space. Suppose that 
\begin{align}
\orth(\phi_{u}-\phi_{v}) & \geq D, &  & u,v\in V,\label{eq:orth-phi-u-phi-v}
\end{align}
where $D$ is a positive integer. Then for every polynomial $p\colon X^{\ell}\to\Re,$
the mapping $z\mapsto\langle\bigotimes_{i=1}^{\ell}\phi_{z_{i}},p\rangle$
is a polynomial on $V^{\ell}$ of degree at most $(\deg p)/D.$
\end{prop}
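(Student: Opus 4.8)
The plan is to expand $p$ into monomials and show that each monomial contributes a function of $z$ whose degree as a polynomial on $V^\ell$ is controlled by its multidegree in the individual blocks. Concretely, write $p(x^{(1)},\dots,x^{(\ell)}) = \sum_\alpha c_\alpha \prod_{i=1}^\ell m_{\alpha,i}(x^{(i)})$, where each $m_{\alpha,i}$ is a monomial in the variables of the $i$-th block $x^{(i)}\in X$, and $\sum_i \deg m_{\alpha,i} \le \deg p$. By multilinearity of the inner product over the tensor structure, $\langle \bigotimes_i \phi_{z_i}, p\rangle = \sum_\alpha c_\alpha \prod_{i=1}^\ell \langle \phi_{z_i}, m_{\alpha,i}\rangle$. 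So it suffices to understand each factor $z_i \mapsto \langle \phi_{z_i}, m_{\alpha,i}\rangle$ as a function on $V$, and then argue the product has the right degree on $V^\ell$.

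The key step is the single-block claim: for any monomial $m$ of degree $e$ in the variables on $X$, the map $v \mapsto \langle \phi_v, m\rangle$ is a polynomial on $V$ of degree at most $e/D$ (in particular, it is constant when $e < D$). For the ``constant'' case, observe that for $u,v \in V$ we have $\langle \phi_u - \phi_v, m\rangle = 0$ whenever $\deg m < D$, by hypothesis~\eqref{eq:orth-phi-u-phi-v} and the definition of orthogonal content; hence $\langle \phi_v, m\rangle$ does not depend on $v$ when $e < D$. For larger $e$, I would exploit the special combinatorial structure of $V = \{0^N, e_1,\dots,e_N\}$: a polynomial $q$ on $V$ of degree $r$ is exactly a function of the form $q(v) = q(0^N) + \sum_{j=1}^N (q(e_j) - q(0^N)) v_j$ once $r \ge 1$, but more to the point, the linear space of polynomials on $V$ of degree $\le r$ coincides, for all $1 \le r \le$ (something), with all functions on $V$ — wait, that is not quite what is needed, so instead I would use the cleaner formulation: any function $q\colon V \to \Re$ has a unique expression as $q(v) = \sum_{S} \hat q(S)\, \1[\text{$v$ is consistent with $S$}]$, and being a polynomial of degree $\le r$ on $V$ means the ``support'' is captured by degree-$r$ monomials; the precise statement I need is that $q$ is a degree-$\le r$ polynomial on $V$ iff $q$ agrees with such a polynomial, and I would verify that $v \mapsto \langle\phi_v, m\rangle$ lands in this space by a telescoping/inclusion–exclusion argument writing $\langle \phi_v, m\rangle - \langle\phi_{0^N}, m\rangle$ and bounding its complexity via repeated application of~\eqref{eq:orth-phi-u-phi-v} to lower-degree monomials. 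The honest version: group the monomials of $m$'s ``parts'' and induct on $e$ — if $e \ge D$, write $\langle\phi_v, m\rangle = \langle\phi_{0^N},m\rangle + (\langle\phi_v,m\rangle - \langle\phi_{0^N},m\rangle)$; the difference $\langle\phi_v - \phi_{0^N}, m\rangle$ vanishes unless $e \ge D$, and when it does not vanish we absorb one ``unit'' of degree into the $V$-polynomial degree count and recurse on the remaining $e - D$.

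Assembling the blocks: once each factor $z_i \mapsto \langle\phi_{z_i}, m_{\alpha,i}\rangle$ is a polynomial on $V$ of degree $\le (\deg m_{\alpha,i})/D$, the product over $i$ is a polynomial on $V^\ell$ of degree $\le \sum_i (\deg m_{\alpha,i})/D = (\sum_i \deg m_{\alpha,i})/D \le (\deg p)/D$; summing over $\alpha$ with Proposition~\ref{prop:orth}\eqref{item:orth-sum}-type reasoning (here just that a sum of polynomials of degree $\le (\deg p)/D$ has degree $\le (\deg p)/D$) gives the conclusion. I expect the main obstacle to be the single-block claim, specifically pinning down the right inductive statement that trades ``orthogonal content $\ge D$ of all pairwise differences'' for ``degree $\le e/D$ on $V$'' — the bookkeeping of how each increment of $D$ in monomial degree buys exactly one degree on $V^\ell$ is where care is needed, and the clean way to see it is probably to reduce to the $\ell = 1$ case stated for a single block and then tensor, rather than handling everything at once. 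One should also double-check the edge cases $e = 0$ (constant, degree $0 = \lceil 0/D\rceil$ on $V$, fine) and the floor-versus-ceiling convention in $(\deg p)/D$, which the statement leaves as an exact fraction and which is consistent as long as we interpret ``degree at most $(\deg p)/D$'' as ``degree $\le \lfloor (\deg p)/D\rfloor$''.
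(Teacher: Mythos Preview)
Your overall plan---reduce by linearity to products $\prod_i m_{\alpha,i}(x^{(i)})$, factor the inner product as $\prod_i \langle \phi_{z_i}, m_{\alpha,i}\rangle$, analyze each factor separately, then tensor---is exactly the paper's approach. The gap is in your single-block analysis.

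You actually write down the key observation and then talk yourself out of it. The sentence ``$q(v) = q(0^N) + \sum_{j=1}^N (q(e_j) - q(0^N)) v_j$'' is the whole point: \emph{every} function on $V=\{0^N,e_1,\dots,e_N\}$ is a polynomial of degree at most $1$. There is nothing more to prove for the single-block step. For each block $i$, the factor $z_i\mapsto\langle\phi_{z_i},m_{\alpha,i}\rangle$ is constant when $\deg m_{\alpha,i}<D$ (by~\eqref{eq:orth-phi-u-phi-v}, as you noted) and has degree at most $1$ on $V$ otherwise (by the formula you wrote). Hence the product over $i$ has degree on $V^\ell$ at most
\[
|\{i:\deg m_{\alpha,i}\geq D\}| \;\leq\; \frac{1}{D}\sum_i \deg m_{\alpha,i} \;\leq\; \frac{\deg p}{D},
\]
and summing over $\alpha$ finishes the proof. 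This is exactly what the paper does.

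The inductive scheme you sketch instead---``absorb one unit of degree and recurse on the remaining $e-D$''---does not make sense as stated: once you write $\langle\phi_v,m\rangle = \langle\phi_{0^N},m\rangle + \langle\phi_v-\phi_{0^N},m\rangle$, the second term is just a scalar-valued function of $v$, not an inner product against a lower-degree monomial, so there is nothing of degree $e-D$ to recurse on. Drop that paragraph and use the degree-$1$ representation you already have.
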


\begin{proof}
By linearity, it suffices to consider factored polynomials $p(x_{1},\ldots,x_{\ell})=\prod_{i=1}^{\ell}p_{i}(x_{i}),$
where each $p_{i}$ is a nonzero polynomial on $X.$ In this setting,
\begin{align}
\left\langle \bigotimes_{i=1}^{\ell}\phi_{z_{i}},p\right\rangle  & =\prod_{i=1}^{\ell}\left\langle \phi_{z_{i}},p_{i}\right\rangle .\label{eq:factored-inner-product-e1e2}
\end{align}
By (\ref{eq:orth-phi-u-phi-v}), we have $\langle\phi_{0^{N}},p_{i}\rangle=\langle\phi_{e_{1}},p_{i}\rangle=\langle\phi_{e_{2}},p_{i}\rangle=\cdots=\langle\phi_{e_{N}},p_{i}\rangle$
for any index $i$ with $\deg p_{i}<D.$ As a result, polynomials
$p_{i}$ with $\deg p_{i}<D$ do not contribute to the degree of the
right-hand side of~(\ref{eq:factored-inner-product-e1e2}) as a function
of $z.$ For the other polynomials $p_{i}$, the inner product $\langle\phi_{z_{i}},p_{i}\rangle$
is a linear polynomial in $z_{i},$ namely,
\begin{multline*}
\langle\phi_{z_{i}},p_{i}\rangle=z_{i,1}\langle\phi_{e_{1}},p_{i}\rangle+z_{i,2}\langle\phi_{e_{2}},p_{i}\rangle+\cdots+z_{i,N}\langle\phi_{e_{N}},p_{i}\rangle\\
+\left(1-\sum_{j=1}^{N}z_{i,j}\right)\langle\phi_{0^{N}},p_{i}\rangle.
\end{multline*}
Thus, polynomials $p_{i}$ with $\deg p_{i}\geq D$ contribute at
most $1$ each to the degree. Summarizing, the right-hand side of~(\ref{eq:factored-inner-product-e1e2})
is a real polynomial in $z_{1},z_{2},\ldots,z_{\ell}$ of degree at
most $|\{i:\deg p_{i}\geq D\}|\leq\frac{\deg p}{D}.$
\end{proof}
\noindent Proposition~\ref{prop:expect-out} generalizes an analogous
result in~\cite[Proposition~2.2]{sherstov-wu18sign-ac0}, where the
special case $N=1$ was treated.

\subsection{Polynomial approximation}

For a real number $\epsilon\geq0$ and a function $f\colon X\to\Re$
on a finite subset $X$ of Euclidean space, the\emph{ $\epsilon$-approximate
degree of $f$} is denoted $\deg_{\epsilon}(f)$ and is defined to
be the minimum degree of a polynomial $p$ such that $\|f-p\|_{\infty}\leq\epsilon.$
For $\epsilon<0$, it will be convenient to define $\deg_{\epsilon}(f)=+\infty$
since no polynomial satisfies $\|f-p\|_{\infty}\leq\epsilon$ in this
case. We focus on the approximate degree of Boolean functions $f\colon X\to\zoo.$
In this setting, the standard choice of the error parameter is $\epsilon=1/3.$
This choice is without loss of generality since $\deg_{\epsilon}(f)=\Theta(\deg_{1/3}(f))$
for every Boolean function $f$ and every constant $0<\epsilon<1/2.$
In what follows, we refer to $1/3$-approximate degree simply as ``approximate
degree.'' The notion of approximate degree has the following dual
characterization~\cite{sherstov07quantum,sherstov11quantum-sdpt}.
\begin{fact}
\label{fact:adeg-dual} Let $f\colon X\to\Re$ be given, for a finite
set $X\subset\Re^{n}$. Let $d\geq0$ be an integer and $\epsilon\geq0$
a real number. Then $\deg_{\epsilon}(f)\geq d$ if and only if there
exists a function $\psi\colon X\to\Re$ such that 
\begin{align*}
 & \langle f,\psi\rangle>\epsilon\|\psi\|_{1},\\
 & \orth\psi\geq d.
\end{align*}
\end{fact}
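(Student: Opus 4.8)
The statement to prove is Fact~\ref{fact:adeg-dual}, the LP-duality characterization of approximate degree. Here is how I would approach it.

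\medskip

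\textbf{Setup as a linear program.} The plan is to recognize $\deg_\epsilon(f) < d$ as a feasibility question for a linear program and take its dual. Fix the integer $d \geq 0$. The condition $\deg_\epsilon(f) < d$ says: there exist real coefficients $(c_\alpha)_{|\alpha| < d}$ such that the polynomial $p(x) = \sum_{|\alpha|<d} c_\alpha x^\alpha$ satisfies $|f(x) - p(x)| \leq \epsilon$ for every $x \in X$. Since $X$ is finite, this is a finite system of linear inequalities in the finitely many variables $c_\alpha$ (and one can think of it as the pair of inequalities $p(x) - f(x) \leq \epsilon$ and $f(x) - p(x) \leq \epsilon$ for each $x \in X$). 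Equivalently, introducing a scalar objective, $\deg_\epsilon(f) < d$ iff the optimal value of
\[
\min_{p:\ \deg p < d}\ \max_{x \in X} |f(x) - p(x)|
\]
is at most $\epsilon$ — actually, because we want a clean strict/non-strict duality, it is cleaner to prove the contrapositive-flavored equivalence: $\deg_\epsilon(f) \geq d$ iff there is no degree-$<d$ polynomial $p$ with $\|f-p\|_\infty \leq \epsilon$, and then show the latter is equivalent to the existence of the dual witness $\psi$.

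\medskip

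\textbf{Applying LP duality / separation.} First I would handle the easy direction: given $\psi$ with $\langle f,\psi\rangle > \epsilon\|\psi\|_1$ and $\orth\psi \geq d$, suppose for contradiction some $p$ with $\deg p < d$ satisfies $\|f-p\|_\infty \leq \epsilon$. Then $\langle f,\psi\rangle = \langle f-p,\psi\rangle + \langle p,\psi\rangle = \langle f-p,\psi\rangle$ since $\orth\psi \geq d$ kills $p$; and $\langle f-p,\psi\rangle \leq \|f-p\|_\infty \|\psi\|_1 \leq \epsilon\|\psi\|_1$, contradicting $\langle f,\psi\rangle > \epsilon\|\psi\|_1$. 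Hence $\deg_\epsilon(f) \geq d$. For the converse, suppose no degree-$<d$ polynomial $\epsilon$-approximates $f$. Consider the convex set $K = \{ g \in \Re^X : \|g\|_\infty \leq \epsilon \}$ and the affine subspace (really a linear subspace, since degree-$<d$ polynomials form a subspace) $P = \{ f - p : \deg p < d \} \subseteq \Re^X$. By hypothesis $P \cap K = \emptyset$. Since $K$ is a compact convex set and $P$ is a closed affine subspace disjoint from it, the separating hyperplane theorem yields a linear functional $\psi \in \Re^X$ (identifying $(\Re^X)^*$ with $\Re^X$ via the inner product) and a constant $c$ with $\langle g, \psi\rangle \leq c < \langle h, \psi\rangle$ for all $g \in K$, $h \in P$. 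The part $\langle g,\psi\rangle \leq c$ for all $g$ with $\|g\|_\infty \le \epsilon$ forces $c \geq \sup_{\|g\|_\infty \le \epsilon}\langle g,\psi\rangle = \epsilon\|\psi\|_1$. The part $\langle h,\psi\rangle > c$ for all $h = f - p \in P$: since $P$ is a linear subspace translated by $f$, $\langle f - p, \psi\rangle$ is bounded below over all degree-$<d$ polynomials $p$ only if $\langle p, \psi\rangle \equiv 0$ for every such $p$ (otherwise we could scale $p$ to drive the inner product to $-\infty$); this gives $\orth\psi \geq d$, and then $\langle f,\psi\rangle = \langle f - 0,\psi\rangle > c \geq \epsilon\|\psi\|_1$. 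That is exactly the desired witness. One should also dispatch the degenerate case $d=0$: then $\orth\psi \geq 0$ is automatic, and the claim $\deg_\epsilon(f)\ge 0$ is trivially true, matched by e.g. $\psi = \operatorname{sign}(f)$ appropriately scaled, or one simply notes the general argument still goes through.

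\medskip

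\textbf{Main obstacle.} The routine calculations are the norm-duality identity $\sup_{\|g\|_\infty\le\epsilon}\langle g,\psi\rangle = \epsilon\|\psi\|_1$ (pointwise: take $g(x) = \epsilon\,\operatorname{sign}\psi(x)$) and the scaling argument that $\langle p,\psi\rangle$ must vanish identically on degree-$<d$ polynomials. The one genuinely delicate point is getting the \emph{strict} inequality $\langle f,\psi\rangle > \epsilon\|\psi\|_1$ rather than $\geq$. This is where it matters that $K$ is compact and $P$ is closed: disjoint closed convex sets one of which is compact can be \emph{strictly} separated, so there is a real number strictly between the two sides, which propagates to the strict inequality in the witness. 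If one were sloppy and only used the weak separating hyperplane theorem one would get $\geq$, which is not enough. So the plan is: (i) easy direction by the killing-$p$ computation; (ii) hard direction by strict separation of the compact convex $\|\cdot\|_\infty$-ball from the closed affine family $\{f-p\}$, being careful to invoke the compact-vs-closed version; (iii) extract $\psi$ and read off both $\orth\psi\ge d$ (from boundedness of $\langle f-p,\psi\rangle$) and the strict correlation inequality (from strictness of the separation). I would cite this as standard — it is the content of the references \cite{sherstov07quantum,sherstov11quantum-sdpt} — and keep the written proof short, since Fact~\ref{fact:adeg-dual} is a well-known tool rather than a new contribution.
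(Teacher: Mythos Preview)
Your argument is correct and matches the paper's approach: the paper does not spell out a proof of Fact~\ref{fact:adeg-dual} at all, simply remarking that it ``can be verified using linear programming duality'' and citing \cite{sherstov07quantum,sherstov11quantum-sdpt}, which is precisely the separation/LP-duality route you outline. One small quibble: your aside that ``$\deg_\epsilon(f)\ge 0$ is trivially true'' is not quite right under the paper's convention $\deg 0=-\infty$ (it holds iff $\|f\|_\infty>\epsilon$), but as you note the general separation argument covers $d=0$ without special treatment, so this does not affect the proof.
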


\noindent This characterization of approximate degree can be verified
using linear programming duality, cf.~\cite{sherstov07quantum,sherstov11quantum-sdpt}.
We now recall a variant of approximate degree for one-sided approximation.
For a Boolean function $f\colon X\to\zoo$ and $\epsilon\geq0,$ the
\emph{one-sided $\epsilon$-approximate degree} of $f$ is denoted
$\onedeg_{\epsilon}(f)$ and defined to be the minimum degree of a
real polynomial $p$ such that 
\begin{align*}
f(x)-\epsilon & \leq p(x)\leq f(x)+\epsilon, &  & x\in f^{-1}(0),\\
f(x)-\epsilon & \leq p(x), &  & x\in f^{-1}(1).
\end{align*}

\noindent We refer to any such polynomial as a \emph{one-sided approximant
}for $f$ with error $\epsilon.$ As usual, the canonical setting
of the error parameter is $\epsilon=1/3.$ In the pathological case
$\epsilon<0$, it will be convenient to define $\onedeg_{\epsilon}(f)=+\infty$.
Observe the asymmetric treatment of $f^{-1}(0)$ and $f^{-1}(1)$
in this formalism. In particular, the one-sided approximate degree
of Boolean functions is in general not invariant under negation. One-sided
approximate degree enjoys the following dual characterization~\cite{bun-thaler13amplification}. 
\begin{fact}
\noindent \label{fact:onedeg-dual}Let $f\colon X\to\zoo$ be given,
for a finite set $X\subset\Re^{n}$. Let $d\geq0$ be an integer and
$\epsilon\geq0$ a real number. Then $\onedeg_{\epsilon}(f)\geq d$
if and only if there exists a function $\psi\colon X\to\Re$ such
that 
\begin{align*}
 & \langle f,\psi\rangle>\epsilon\|\psi\|_{1},\\
 & \orth\psi\geq d,\\
 & \psi(x)\geq0\;\text{whenever }f(x)=1.
\end{align*}
\end{fact}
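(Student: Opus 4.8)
The plan is to recognize that $\onedeg_{\epsilon}(f)\geq d$ is precisely the assertion that a certain finite linear program is infeasible — the program searching for a one-sided $\epsilon$-approximant of $f$ of degree less than $d$ — and then to invoke linear programming duality (in Farkas / theorem-of-the-alternative form). Since $X$ is finite, every space in sight is finite-dimensional, so no compactness or closedness subtleties intervene beyond the one noted at the end. Exactly as with the two-sided version (Fact~\ref{fact:adeg-dual}), the ``if'' direction is an immediate pointwise computation, and only the ``only if'' direction uses duality.

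For the ``if'' direction: given $\psi$ satisfying the three displayed conditions, if there were a polynomial $p$ with $\deg p<d$ that one-sided $\epsilon$-approximates $f$, then, using $\orth\psi\geq d>\deg p$ to get $\langle p,\psi\rangle=0$, one has $\langle f,\psi\rangle=\langle f-p,\psi\rangle$. Splitting this sum over $f^{-1}(0)$ and $f^{-1}(1)$: on $f^{-1}(0)$ the bound $|f-p|\leq\epsilon$ gives $(f(x)-p(x))\psi(x)\leq\epsilon|\psi(x)|$; on $f^{-1}(1)$ the bound $p(x)\geq1-\epsilon$ gives $f(x)-p(x)=1-p(x)\leq\epsilon$, which together with $\psi(x)\geq0$ again yields $(f(x)-p(x))\psi(x)\leq\epsilon|\psi(x)|$. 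Summing gives $\langle f,\psi\rangle\leq\epsilon\|\psi\|_{1}$, contradicting $\langle f,\psi\rangle>\epsilon\|\psi\|_{1}$. Note that the sign condition $\psi\geq0$ on $f^{-1}(1)$ is exactly what compensates for the missing upper bound on $p$ there.

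For the ``only if'' direction: assume no polynomial $p$ of degree less than $d$ satisfies $-\epsilon\leq p(x)\leq\epsilon$ for $x\in f^{-1}(0)$ and $p(x)\geq1-\epsilon$ for $x\in f^{-1}(1)$. Treating the coefficients of $p$ as the unknowns, this is an infeasible system of finitely many non-strict linear inequalities, so there exist multipliers $a_{x},b_{x}\geq0$ (for the two constraints at $x\in f^{-1}(0)$) and $c_{x}\geq0$ (for the constraint at $x\in f^{-1}(1)$) such that the corresponding signed combination of the constraint left-hand sides is the zero functional on polynomials of degree less than $d$, while the same combination of right-hand sides is strictly negative, i.e. $\epsilon\sum_{x\in f^{-1}(0)}(a_{x}+b_{x})-(1-\epsilon)\sum_{x\in f^{-1}(1)}c_{x}<0$. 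Set $\psi(x)=b_{x}-a_{x}$ for $x\in f^{-1}(0)$ and $\psi(x)=c_{x}$ for $x\in f^{-1}(1)$. Vanishing of the left-hand-side combination is exactly $\langle\psi,m\rangle=0$ for every monomial $m$ of degree less than $d$, so $\orth\psi\geq d$; nonnegativity of the $c_{x}$ gives $\psi\geq0$ on $f^{-1}(1)$; and since $\langle f,\psi\rangle=\sum_{x\in f^{-1}(1)}c_{x}$ while $\|\psi\|_{1}\leq\sum_{x\in f^{-1}(0)}(a_{x}+b_{x})+\sum_{x\in f^{-1}(1)}c_{x}$, the negativity above rearranges — using $\epsilon\geq0$ — to $\langle f,\psi\rangle>\epsilon\|\psi\|_{1}$. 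This $\psi$ is the desired witness.

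The one point that needs care — and the only real obstacle — is matching the \emph{strict} inequality $\langle f,\psi\rangle>\epsilon\|\psi\|_{1}$ in the dual against the non-strict constraints defining a one-sided approximant. This is handled automatically by the right form of duality: infeasibility of $Ay\leq b$ is equivalent to the existence of $z\geq0$ with $A^{\mathsf T}z=0$ and $b^{\mathsf T}z<0$, and the strict negativity of $b^{\mathsf T}z$ is precisely what produces the strict conclusion. Equivalently, one may argue geometrically: the set $K=\{g\colon X\to\Re : |g|\leq\epsilon \text{ on } f^{-1}(0),\ g\geq1-\epsilon\text{ on } f^{-1}(1)\}$ is a nonempty polyhedron, disjoint from the (polyhedral) subspace of restrictions to $X$ of polynomials of degree less than $d$ exactly when $\onedeg_{\epsilon}(f)\geq d$; since the Minkowski difference of two polyhedra is closed, $0$ can be strictly separated from it, yielding $\psi$ with the supremum over the subspace strictly below $\inf_{g\in K}\langle\psi,g\rangle$, hence $\psi$ orthogonal to the subspace ($\orth\psi\geq d$), $\psi\geq0$ on $f^{-1}(1)$ (finiteness of the infimum), and $\inf_{g\in K}\langle\psi,g\rangle=(1-\epsilon)\sum_{f^{-1}(1)}\psi-\epsilon\sum_{f^{-1}(0)}|\psi|>0$, which rearranges to $\langle f,\psi\rangle>\epsilon\|\psi\|_{1}$. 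Everything else is bookkeeping; in particular the argument is uniform over all $\epsilon\geq0$, since for $\epsilon\geq1$ the constant polynomial $0$ is already a one-sided approximant and both sides of the equivalence become vacuously false.
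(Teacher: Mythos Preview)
Your proof is correct. The paper does not actually prove this statement: it is stated as a Fact and attributed to Bun and Thaler~\cite{bun-thaler13amplification}, with no argument given. Your derivation via linear programming duality (Farkas' lemma / theorem of the alternative) is precisely the standard route by which such dual characterizations are established, and it matches what the cited reference does. Both directions are handled cleanly; in particular, your remark that the sign constraint $\psi\geq 0$ on $f^{-1}(1)$ is exactly what absorbs the missing upper bound on $p$ there is the key observation for the ``if'' direction, and your use of the strict-inequality form of Farkas (infeasibility of $Ay\leq b$ iff $\exists\,z\geq 0$ with $A^{\mathsf T}z=0$ and $b^{\mathsf T}z<0$) correctly delivers the strict inequality $\langle f,\psi\rangle>\epsilon\|\psi\|_{1}$ in the ``only if'' direction.
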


\subsection{Dual polynomials}

\noindent Facts~\ref{fact:adeg-dual} and~\ref{fact:onedeg-dual}
make it possible to prove lower bounds on approximate degree in a
constructive manner, by exhibiting a dual object $\psi$ that serves
as a witness. This object is referred to as a \emph{dual polynomial.}
Often, a dual polynomial for a composed function $f$ can be constructed
by combining dual objects for various components of $f.$ Of particular
importance in the study of $\classAC^{0}$ is the dual object for
the OR function. The first dual polynomial for OR was constructed
by Špalek~\cite{spalek08dual-or}, with many refinements and generalizations
obtained in follow-up work~\cite{bun-thaler13and-or-tree,sherstov14sign-deg-ac0,sherstov15asymmetry,bun-thaler17adeg-ac0,BKT17poly-strikes-back,sherstov-wu18sign-ac0}.
We will use the following construction from~\cite[Lemma~B.2]{sherstov-wu18sign-ac0}.
\begin{lem}
\label{lem:SW-OR-dual}Let $\epsilon$ be given, $0<\epsilon<1$.
Then for some constant $c=c(\epsilon)\in(0,1)$ and every integer
$n\geq1,$ there is an $($explicitly given$)$ function $\omega\colon\{0,1,2,\dots,n\}\to\Re$
such that
\begin{align*}
 & \omega(0)>\frac{1-\epsilon}{2}\cdot\|\omega\|_{1},\\
 & |\omega(t)|\leq\frac{1}{ct^{2}\,2^{ct/\sqrt{n}}}\cdot\|\omega\|_{1} &  & (t=1,2,\ldots,n),\\
 & (-1)^{t}\omega(t)\geq0 &  & (t=0,1,2,\dots,n),\\
 & \orth\omega\geq c\sqrt{n}.
\end{align*}
\end{lem}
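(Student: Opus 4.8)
The plan is to turn the classical bound $\deg_{1/3}(\OR_n)=\Theta(\sqrt n)$ into a statement about the fine structure of a dual witness, and to produce that witness explicitly. Mere existence of some $\omega$ with $\orth\omega\ge c\sqrt n$ and $\omega(0)>\tfrac{1-\epsilon}{2}\|\omega\|_{1}$ is equivalent, after symmetrization and up to an overall sign, to the known estimate $\deg_{\frac{1-\epsilon}{2}}(\OR_n)\ge c\sqrt n$ (via Fact~\ref{fact:adeg-dual}); the real content is the envelope bound $|\omega(t)|\le\|\omega\|_{1}/(ct^{2}2^{ct/\sqrt n})$ together with the exact sign pattern, and these force us to build $\omega$ by hand. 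The first step is to reformulate the constraints through the generating function $W(z)=\sum_{t=0}^{n}\omega(t)z^{t}$, a polynomial since $\omega$ is supported on the finite set $\{0,1,\dots,n\}$. Then $\orth\omega\ge d$ is equivalent to $(z-1)^{d}\mid W(z)$, while $(-1)^{t}\omega(t)\ge 0$ says exactly that $W(-z)$ has nonnegative coefficients, in which case $\|\omega\|_{1}=W(-1)$ and $\omega(0)=W(0)$. In dual language, producing such an $\omega$ is the same as solving the extremal problem: among polynomials $p$ of degree $d=\lceil c\sqrt n\rceil$ satisfying $|p(t)|\le 1$ for all $t\in\{1,2,\dots,n\}$, exhibit one whose extrapolated value $|p(0)|$ is a constant strictly larger than $1$; the coefficients $\omega(t)$ are then read off from the Lagrange/linear-programming dual of this problem, supported on $\{0\}$ together with the points of $\{1,\dots,n\}$ where $p$ equioscillates.

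For the construction itself I would take $p$ to be (a discrete analogue of) a Chebyshev polynomial $T_{d}$ transplanted by an affine map from $[-1,1]$ onto the interval $[1,n]$. The classical estimate $T_{d}(1+\Theta(1/n))=\Theta(\exp(\Theta(d/\sqrt n)))$ shows that with $d=\Theta(\sqrt n)$ this polynomial stays within $\pm1$ on $\{1,\dots,n\}$ yet attains a value of constant magnitude exceeding $1$ at the point $0$ --- precisely the slack that powers an $\OR$ lower bound and, quantitatively, the constant-fraction mass of $\omega$ at $0$. Consecutive extrema of such a polynomial carry alternating signs, so the sign condition $(-1)^{t}\omega(t)\ge 0$ holds automatically, and its extrema near the left endpoint are spaced quadratically. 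To manufacture the polynomial factor $t^{-2}$ in the envelope, I would convolve this base dual object with the fixed summable kernel $\{1/(t+1)^{2}\}_{t\ge 0}$ (or an equivalent smoothing device), which costs only $O(1)$ in the orthogonality degree and a constant factor in the $\ell_{1}$ mass at $0$. Finally one shrinks $c=c(\epsilon)\in(0,1)$ so that a single constant simultaneously controls the rate in the exponent $2^{ct/\sqrt n}$, the orthogonality degree $c\sqrt n$, and the constant $\tfrac{1-\epsilon}{2}$ in the mass bound.

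It then remains to verify the four conclusions for the resulting $\omega$. The orthogonality bound $\orth\omega\ge c\sqrt n$ is immediate from $(z-1)^{d}\mid W(z)$; the sign condition is built into the alternating Chebyshev extrema; and the mass bound $\omega(0)>\tfrac{1-\epsilon}{2}\|\omega\|_{1}$ follows from the quantitative gap between $|p(0)|$ and the sup-norm of $p$ on $\{1,\dots,n\}$, which after dualizing forces the non-zero-index even mass of $\omega$ to be only a small fraction of $\|\omega\|_{1}$. The hard part, and the step I expect to dominate the write-up, is the envelope bound: one must bound the magnitude of the dual weight at the $i$-th equioscillation point and show that it decays geometrically at \emph{exactly} the Chebyshev rate, all while the weight at $0$ keeps a constant fraction of the total $\ell_{1}$ mass. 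I expect this to be handled by combining Stirling/saddle-point estimates for the relevant Lagrange coefficients with the standard Chebyshev growth and Markov--Bernstein inequalities on $[1,n]$. It is this quantitative requirement that makes the extremal-polynomial route preferable to a more naive linear combination of finite-difference atoms with aligned signs: for the latter, a short calculation using the same generating-function identities shows that the mass at $0$ would be exponentially small in $d$, incompatible with $\omega(0)>\tfrac{1-\epsilon}{2}\|\omega\|_{1}$.
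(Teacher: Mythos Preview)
The paper does not prove this lemma; it imports it wholesale from \cite[Lemma~B.2]{sherstov-wu18sign-ac0}, so there is no in-paper argument to compare against. Your high-level plan---build the dual witness from a Chebyshev-type extremal polynomial on $[1,n]$ and read off the weights via LP/Lagrange duality---is exactly the approach taken in that reference and its predecessors.

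There is one concrete gap. The convolution step you propose, smoothing the base dual object against the nonnegative kernel $\{1/(t+1)^{2}\}_{t\ge 0}$ to force the $t^{-2}$ envelope, does not preserve the sign condition $(-1)^{t}\omega(t)\ge 0$: writing $\omega'(t)=(-1)^{t}a(t)$ with $a\ge 0$, the value of the convolution at $t$ has sign governed by $\sum_{u\ge 0}(-1)^{u}a(t-u)/(u+1)^{2}$, which has no reason to be nonnegative. (It also pushes mass outside $\{0,\dots,n\}$, a separate headache.) In the cited construction the $t^{-2}$ factor is not obtained by smoothing but comes directly from the node placement: one takes the support $T$ of $\omega$ to be quadratically spaced near $0$ (roughly $\{0,1,4,9,\dots\}$) with a geometrically growing tail, and estimates the divided-difference weights $|\omega(t_i)|=\prod_{j\ne i}|t_i-t_j|^{-1}$ by hand. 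The quadratic spacing yields the $1/t^{2}$ factor, the geometric tail yields $2^{-ct/\sqrt n}$, and the alternation $(-1)^{t_i}\omega(t_i)\ge 0$ is enforced by choosing all consecutive gaps $t_{i+1}-t_i$ odd, so that the divided-difference sign $(-1)^{|T|-1-i}$ matches $(-1)^{t_i}$ up to a global sign. You already noted the quadratic spacing of the Chebyshev extrema; that observation, not convolution, is where the $t^{-2}$ comes from.
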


A useful tool in the construction of dual polynomials is the following
lemma due to Razborov and Sherstov~\cite{RS07dc-dnf}.
\begin{lem}[Razborov and Sherstov]
\label{lem:razborov-sherstov}Fix integers $D$ and $n,$ where $0\leq D<n.$
Then there is an $($explicitly given$)$ function $\zeta\colon\zoon\to\Re$
such that
\begin{align}
 & \supp\zeta\subseteq\zoon|_{\leq D}\cup\{1^{n}\},\label{eq:RS-zeta-support}\\
 & \zeta(1^{n})=1,\label{eq:RS-zeta-unit-mass}\\
 & \|\zeta\|_{1}\leq1+2^{D}\binom{n}{D},\label{eq:RS-zeta-ell1}\\
 & \orth\zeta>D.\label{eq:RS-zeta-orth}
\end{align}
\end{lem}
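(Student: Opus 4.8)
The plan is to construct $\zeta$ by symmetrization and interpolation. Since $\zeta$ is required to be orthogonal to all polynomials of degree at most $D$ and to be supported on $\zoon|_{\leq D}\cup\{1^n\}$, the natural first move is to look for a \emph{symmetric} $\zeta$, i.e., one that depends only on the Hamming weight $|x|$. Writing $\zeta(x) = g(|x|)$ for a function $g\colon\{0,1,\dots,n\}\to\Re$, the orthogonality condition $\orth\zeta > D$ is equivalent—after symmetrizing any test polynomial—to the vanishing of the weighted moments $\sum_{t=0}^{n} \binom{n}{t} g(t)\, t^{\underline{j}} = 0$ for $j = 0,1,\dots,D$, where $t^{\underline{j}}$ is the falling factorial; equivalently, $\sum_t \binom{n}{t} g(t) q(t) = 0$ for every univariate polynomial $q$ of degree at most $D$. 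The support constraint forces $g(t) = 0$ for $D < t < n$, so the only free values are $g(0), g(1), \dots, g(D)$ and $g(n)$, a total of $D+2$ unknowns subject to $D+1$ linear moment conditions. Normalizing $g(n) = \zeta(1^n) = 1$ pins everything down.

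The key step is to solve this linear system explicitly. The natural guess is that $g(t)$ for $t \le D$ is, up to the normalization, the value at $t$ of the degree-$D$ polynomial interpolating the "delta function at $n$" against the weight $\binom{n}{t}$; concretely, one takes
\[
g(t) = \frac{(-1)^{D-t}}{\binom{n}{t}}\cdot\frac{\binom{D}{t}\binom{n}{D}}{\binom{n-t-1}{D-t}}\cdot\frac{1}{\text{(normalizing constant)}}, \qquad t = 0,1,\dots,D,
\]
or some such closed form—the precise coefficients are obtained by expanding $\mathbb{1}[t=n]$ minus its degree-$D$ Lagrange interpolant through the nodes $0,1,\dots,D$ and reading off residues. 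One verifies directly that with these values the moment conditions hold: the combination $\sum_{t=0}^{D}\binom{n}{t}g(t)q(t) + \binom{n}{n}\cdot 1\cdot q(n)$ is, by construction, the evaluation of the divided difference of $q$ over the nodes $0,1,\dots,D,n$, which is zero whenever $\deg q \le D$. The $\ell_1$ bound then follows by summing $\binom{n}{t}|g(t)|$ over $t \le D$ and the single term at $1^n$; using $\binom{D}{t} \le 2^D$ and crude bounds on the binomial ratios, each term is controlled by $2^D\binom{n}{D}$ after normalization, and there are at most $D+1$ of them, giving $\|\zeta\|_1 \le 1 + 2^D\binom{n}{D}$ as claimed (possibly after a slightly more careful accounting of constants, but the stated bound is generous enough to absorb this).

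The main obstacle I anticipate is pinning down the normalization and the explicit formula for $g$ so that the $\ell_1$ bound comes out exactly as $1 + 2^D\binom{n}{D}$ rather than, say, $1 + (D+1)2^D\binom{n}{D}$ or with an extra polynomial factor in $n$. This requires either a clever choice of which "delta" to interpolate (interpolating at $1^n$ directly versus interpolating a rescaled indicator), or a telescoping/combinatorial identity that shows the sum $\sum_{t=0}^{D}\binom{D}{t}\cdot(\text{ratio terms})$ collapses to exactly $2^D\binom{n}{D}$—this is the kind of identity that is true but whose discovery drives the whole construction. A secondary, purely bookkeeping concern is the degenerate boundary case $D = n-1$, where the node $n$ is adjacent to the interpolation nodes and one must check no division by zero occurs; the hypothesis $D < n$ is exactly what is needed, and the factor $\binom{n-t-1}{D-t}$ (or its analogue) stays positive throughout. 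Once the closed form is in hand, verifying \eqref{eq:RS-zeta-support}–\eqref{eq:RS-zeta-orth} is mechanical.
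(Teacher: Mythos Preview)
The paper does not supply its own proof of this lemma; it is quoted from \cite{RS07dc-dnf}, with the remark that the statement corresponds to taking $k=D$ and $\zeta=(-1)^n g$ in the proof of Lemma~3.2 there. Your symmetric Lagrange-interpolation construction is exactly the one Razborov and Sherstov use, so the approach is correct and there is nothing further to compare against.

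Your worry about the $\ell_1$ bound is unfounded and no clever identity is needed. Following your outline, set $\binom{n}{j}g(j)=-L_j(n)$, where $L_j$ is the degree-$D$ Lagrange basis polynomial on the nodes $0,1,\dots,D$; then the moment conditions $\sum_{t}\binom{n}{t}g(t)q(t)=0$ for $\deg q\le D$ hold by construction. A direct calculation gives
\[
\bigl|L_j(n)\bigr|=\frac{n!}{(n-D-1)!\,(n-j)\,j!\,(D-j)!}=\binom{n}{D}(n-D)\cdot\frac{\binom{D}{j}}{n-j},
\]
so that
\[
\|\zeta\|_1-1=\sum_{j=0}^{D}\bigl|L_j(n)\bigr|=\binom{n}{D}(n-D)\sum_{j=0}^{D}\frac{\binom{D}{j}}{n-j}\le\binom{n}{D}(n-D)\cdot\frac{2^{D}}{n-D}=2^{D}\binom{n}{D},
\]
using only $n-j\ge n-D>0$ for $0\le j\le D$. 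The bound comes out exactly as stated, with no extra factor of $D+1$, and the hypothesis $D<n$ is precisely what keeps all denominators nonzero.
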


\noindent In more detail, this result corresponds to taking $k=D$
and $\zeta=(-1)^{n}g$ in the proof of Lemma~3.2 of~\cite{RS07dc-dnf}.
We will need the following natural generalization of Lemma~\ref{lem:razborov-sherstov}. 
\begin{lem}
\label{lem:razborov-sherstov-generalized}Fix integers $D$ and $B,$
where $0\leq D<B.$ Let $y\in\zoo^{B}$ be a string with $|y|>D.$
Then there is an $($explicitly given$)$ function $\zeta_{y}\colon\zoo^{B}\to\Re$
such that
\begin{align}
 & \supp\zeta_{y}\subseteq\{x:x\leq y\text{ and }|x|\leq D\}\cup\{y\},\label{eq:zeta-y-supp}\\
 & \zeta_{y}(y)=1,\label{eq:zeta-y-point-mass}\\
 & \|\zeta_{y}\|_{1}\leq1+2^{D}\binom{B}{D},\label{eq:zeta-y-ell1}\\
 & \orth\zeta_{y}>D.\label{eq:zeta-y-orth}
\end{align}
\end{lem}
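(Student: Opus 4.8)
The plan is to reduce Lemma~\ref{lem:razborov-sherstov-generalized} to the already-established Lemma~\ref{lem:razborov-sherstov} by restricting attention to the coordinates on which $y$ is~$1$. Write $S=\{i:y_i=1\}$, so $|S|=|y|>D$, and let $n=|S|$. Apply Lemma~\ref{lem:razborov-sherstov} with this value of $n$ and the given $D$ (legitimate since $0\le D<|y|=n$) to obtain $\zeta\colon\zoo^{S}\to\Re$ with $\supp\zeta\subseteq\zoo^{S}|_{\le D}\cup\{1^{S}\}$, $\zeta(1^{S})=1$, $\|\zeta\|_1\le 1+2^D\binom{n}{D}$, and $\orth\zeta>D$. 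The idea is then to ``lift'' $\zeta$ from $\zoo^{S}$ to $\zoo^{B}$ by padding the coordinates outside $S$ with zeros.

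Concretely, I would define $\zeta_y\colon\zoo^{B}\to\Re$ by $\zeta_y(x)=\zeta(x|_S)$ if $x|_{[B]\setminus S}=0^{B-|S|}$, and $\zeta_y(x)=0$ otherwise. Equivalently, identifying $\zoo^{B}$ with $\zoo^{S}\times\zoo^{[B]\setminus S}$, we have $\zeta_y=\zeta\otimes\delta_{0}$, where $\delta_0$ is the function on $\zoo^{[B]\setminus S}$ that is $1$ at the all-zeros string and $0$ elsewhere. Now each of the four conclusions follows mechanically. For~(\ref{eq:zeta-y-supp}): any $x\in\supp\zeta_y$ has $x|_{[B]\setminus S}=0$ (so $x\le\1_S=y$) and $x|_S\in\supp\zeta\subseteq\zoo^S|_{\le D}\cup\{1^S\}$; in the first case $|x|=|x|_S|\le D$, and in the second case $x=\1_S=y$, giving exactly the claimed support. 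Conclusion~(\ref{eq:zeta-y-point-mass}) is immediate: $\zeta_y(y)=\zeta_y(\1_S)=\zeta(1^S)=1$. Conclusion~(\ref{eq:zeta-y-ell1}) holds because $\|\zeta_y\|_1=\|\zeta\|_1\cdot\|\delta_0\|_1=\|\zeta\|_1\le 1+2^D\binom{n}{D}\le 1+2^D\binom{B}{D}$, using $n=|y|\le B$ and monotonicity of $D\mapsto\binom{m}{D}$ in $m$ for $D\le m$.

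For the orthogonality conclusion~(\ref{eq:zeta-y-orth}), I would use Proposition~\ref{prop:orth}\eqref{item:orth-tensor}: $\orth\zeta_y=\orth(\zeta\otimes\delta_0)=\orth\zeta+\orth\delta_0\ge\orth\zeta+0>D$, since $\orth\delta_0\ge 0$ trivially (in fact $\orth\delta_0=0$ as $\langle\delta_0,1\rangle=1\ne0$, but all we need is $\ge 0$). Alternatively, one can argue directly: for any polynomial $p$ on $\zoo^B$ of degree $<D$, the restriction $p(x|_S,0^{B-|S|})$ is a polynomial in $x|_S$ of degree $<D$, hence $\langle\zeta_y,p\rangle=\langle\zeta,\,p(\cdot,0)\rangle=0$ by $\orth\zeta>D$. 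Finally, explicitness is inherited: $\zeta$ is explicitly given by Lemma~\ref{lem:razborov-sherstov}, and the padding construction is explicit. I do not anticipate a genuine obstacle here — the only thing to be careful about is the bookkeeping around the bound $\binom{n}{D}\le\binom{B}{D}$ and making sure the support condition ``$x\le y$'' is correctly extracted from the zero-padding, but both are routine.
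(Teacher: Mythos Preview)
Your proposal is correct and essentially identical to the paper's proof: the paper also sets $S=\{i:y_i=1\}$, invokes Lemma~\ref{lem:razborov-sherstov} with $n=|y|$, and defines $\zeta_y(x)=\zeta(x|_S)\prod_{i\notin S}(1-x_i)$, which is exactly your $\zeta\otimes\delta_0$. The verification of the four properties proceeds the same way, with~(\ref{eq:zeta-y-orth}) obtained via Proposition~\ref{prop:orth}\ref{item:orth-tensor}.
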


\begin{proof}
Set $n=|y|.$ Lemma~\ref{lem:razborov-sherstov} gives an explicit
function $\zeta\colon\zoon\to\Re$ that satisfies~(\ref{eq:RS-zeta-support})\textendash (\ref{eq:RS-zeta-orth}).
Define $\zeta_{y}\colon\zoo^{B}\to\Re$ by
\[
\zeta_{y}(x)=\zeta(x|_{S})\prod_{i\notin S}(1-x_{i}),
\]
where $S=\{i:y_{i}=1\}$. Then~(\ref{eq:zeta-y-point-mass}) and~(\ref{eq:zeta-y-ell1})
are immediate from~(\ref{eq:RS-zeta-unit-mass}) and~(\ref{eq:RS-zeta-ell1}),
respectively. Property~(\ref{eq:zeta-y-orth}) follows from~(\ref{eq:RS-zeta-orth})
in light of Proposition~\ref{prop:orth}~\ref{item:orth-tensor}.
To verify the remaining property~(\ref{eq:zeta-y-supp}), fix any
input $x$ with $\zeta_{y}(x)\ne0$. Then the definition of $\zeta_{y}$
implies that $x|_{\overline{S}}$ is the zero vector, whereas (\ref{eq:RS-zeta-support})
implies that $x|_{S}$ is either $1^{n}$ or a string of Hamming weight
at most $D.$ In the former case, we have $x=y$; in the latter case,
$x\leq y$ and $|x|\leq D.$
\end{proof}
\noindent Informally, Lemmas~\ref{lem:razborov-sherstov} and~\ref{lem:razborov-sherstov-generalized}
are useful when one needs to adjust a dual object's metric properties
while preserving its orthogonality to low-degree polynomials. These
lemmas play a basic role in several recent papers~\cite{RS07dc-dnf,bun-thaler17adeg-ac0,BKT17poly-strikes-back,BT18ac0-large-error,sherstov-wu18sign-ac0}
as well as our work. For the reader's benefit, we encapsulate this
procedure as Lemma~\ref{lem:zero-high-Hamming-weight} below and
provide a detailed proof.
\begin{lem}
\label{lem:zero-high-Hamming-weight}Let $\Phi\colon\zoo^{B}\to\Re$
be given. Fix integers $T\geq D\geq0$. Then there is an $($explicitly
given$)$ function $\tilde{\Phi}\colon\zoo^{B}\to\Re$ such that
\begin{align}
 & \supp\tilde{\Phi}\subseteq\zoo^{B}|_{\leq T},\label{eq:Phi-tilde-support}\\
 & \orth(\Phi-\tilde{\Phi})>D,\label{eq:Phi-tilde-Phi-orth}\\
 & \|\Phi-\tilde{\Phi}\|_{1}\leq\left(1+2^{D}\binom{B}{D}\right)\sum_{x:|x|>T}|\Phi(x)|.\label{eq:Phi-tilde-Phi-ell1}
\end{align}
\end{lem}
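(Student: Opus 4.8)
The statement of Lemma~\ref{lem:zero-high-Hamming-weight} asks us to modify an arbitrary function $\Phi\colon\zoo^B\to\Re$ so that the modified function $\tilde\Phi$ is supported on low Hamming weight, while the difference $\Phi-\tilde\Phi$ stays orthogonal to polynomials of degree $\leq D$ and has small $\ell_1$ norm. The natural strategy is to \emph{surgically cancel} each offending point $y$ with $|y|>T$ by subtracting a scaled copy of the Razborov--Sherstov corrector object $\zeta_y$ from Lemma~\ref{lem:razborov-sherstov-generalized}. Concretely, I would define
\[
\tilde\Phi = \Phi - \sum_{y:\,|y|>T} \Phi(y)\,\zeta_y,
\]
where for each $y$ with $|y|>T$ (hence $|y|>D$ since $T\geq D$) the function $\zeta_y\colon\zoo^B\to\Re$ is the one furnished by Lemma~\ref{lem:razborov-sherstov-generalized} with parameters $D$ and $B$.

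\textbf{Verifying the three properties.} For~(\ref{eq:Phi-tilde-Phi-orth}): by construction $\Phi-\tilde\Phi = \sum_{y:|y|>T}\Phi(y)\zeta_y$, and each $\zeta_y$ has $\orth\zeta_y>D$ by~(\ref{eq:zeta-y-orth}); a finite linear combination of functions each orthogonal to all polynomials of degree $\leq D$ is again orthogonal to all such polynomials, so $\orth(\Phi-\tilde\Phi)>D$. (This uses Proposition~\ref{prop:orth}\ref{item:orth-sum} repeatedly, or simply the linearity of the inner product.) For~(\ref{eq:Phi-tilde-Phi-ell1}): by the triangle inequality $\|\Phi-\tilde\Phi\|_1 \leq \sum_{y:|y|>T}|\Phi(y)|\cdot\|\zeta_y\|_1$, and each $\|\zeta_y\|_1\leq 1+2^D\binom{B}{D}$ by~(\ref{eq:zeta-y-ell1}), giving exactly the claimed bound. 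For~(\ref{eq:Phi-tilde-Phi-support}): fix $x$ with $|x|>T$; I must show $\tilde\Phi(x)=0$. We have $\tilde\Phi(x) = \Phi(x) - \sum_{y:|y|>T}\Phi(y)\zeta_y(x)$. By the support property~(\ref{eq:zeta-y-supp}), $\zeta_y(x)\neq 0$ forces either $x=y$, or else $x\leq y$ with $|x|\leq D$; the latter is impossible since $|x|>T\geq D$. Hence the only surviving term in the sum is $y=x$, which contributes $\Phi(x)\zeta_x(x)=\Phi(x)$ by~(\ref{eq:zeta-y-point-mass}), so $\tilde\Phi(x)=\Phi(x)-\Phi(x)=0$. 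Since this holds for every $x$ with $|x|>T$, we get $\supp\tilde\Phi\subseteq\zoo^B|_{\leq T}$. Explicitness is inherited from the explicitness of each $\zeta_y$ in Lemma~\ref{lem:razborov-sherstov-generalized}.

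\textbf{Where the subtlety lies.} The proof is essentially routine bookkeeping once the right object is written down; the only point demanding care is the support argument, where one must confirm that subtracting $\Phi(x)\zeta_x$ kills the value at $x$ \emph{without} any other corrector $\zeta_y$ (for $y\neq x$, $|y|>T$) reintroducing mass at that same point $x$. This is exactly what property~(\ref{eq:zeta-y-supp}) is engineered to guarantee: a corrector $\zeta_y$ can only touch $y$ itself and points of Hamming weight $\leq D$ dominated by $y$, so it never interferes at another high-weight point. I would also note in passing that the roles of $T$ and $D$ are kept separate throughout—$D$ controls the orthogonality and $\ell_1$ blow-up, while $T$ is the weight threshold above which we zero out $\Phi$—and the only interaction between them is the inequality $T\geq D$, used to conclude $|y|>D$ for every corrected point $y$ so that Lemma~\ref{lem:razborov-sherstov-generalized} applies.
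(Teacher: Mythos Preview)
Your proposal is correct and follows essentially the same approach as the paper: define $\tilde\Phi=\Phi-\sum_{|y|>T}\Phi(y)\zeta_y$ using the corrector objects of Lemma~\ref{lem:razborov-sherstov-generalized}, then verify the three properties via~(\ref{eq:zeta-y-supp})--(\ref{eq:zeta-y-orth}) exactly as you do. The paper additionally singles out the trivial case $T\geq B$ (where the sum is empty and $\tilde\Phi=\Phi$), but your argument handles this implicitly, and your observation that $T\geq D$ is what guarantees $|y|>D$ so that Lemma~\ref{lem:razborov-sherstov-generalized} is applicable matches the paper's reasoning.
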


\begin{proof}[Proof \emph{(adapted from~\cite{RS07dc-dnf,bun-thaler17adeg-ac0,BKT17poly-strikes-back,BT18ac0-large-error,sherstov-wu18sign-ac0}).}]
 For $T\geq B,$ the lemma holds trivially with $\tilde{\Phi}=\Phi.$
In what follows, we treat the complementary case $T<B.$

For each $y\in\zoo^{B}|_{>T}$, Lemma~\ref{lem:razborov-sherstov-generalized}
constructs a function $\zeta_{y}\colon\zoo^{B}\to\Re$ that obeys~(\ref{eq:zeta-y-supp})\textendash (\ref{eq:zeta-y-orth}).
Define 
\[
\tilde{\Phi}=\Phi-\sum_{y\in\zoo^{B}|_{>T}}\Phi(y)\zeta_{y}.
\]
Then for $x\in\zoo^{B}|_{>T}$, properties~(\ref{eq:zeta-y-supp})
and~(\ref{eq:zeta-y-point-mass}) force $\zeta_{y}(x)=\delta_{x,y}$
and consequently $\tilde{\Phi}(x)=\Phi(x)-\Phi(x)=0.$ This settles~(\ref{eq:Phi-tilde-support}).
Property~(\ref{eq:Phi-tilde-Phi-orth}) is justified by
\[
\orth(\Phi-\tilde{\Phi})=\orth\left(\sum_{y\in\zoo^{B}|_{>T}}\Phi(y)\zeta_{y}\right)\geq\min_{y\in\zoo^{B}|_{>T}}\orth\zeta_{y}>D,
\]
where the last two steps use Proposition~\ref{prop:orth}\ref{item:orth-sum}
and~(\ref{eq:zeta-y-orth}), respectively. The final property~(\ref{eq:Phi-tilde-Phi-ell1})
can be derived as follows:
\begin{align*}
\|\Phi-\tilde{\Phi}\|_{1} & =\left\Vert \sum_{y\in\zoo^{B}|_{>T}}\Phi(y)\zeta_{y}\right\Vert _{1}\\
 & \leq\sum_{y\in\zoo^{B}|_{>T}}|\Phi(y)|\|\zeta_{y}\|_{1}\\
 & \leq\left(1+2^{D}\binom{B}{D}\right)\sum_{y\in\zoo^{B}|_{>T}}|\Phi(y)|,
\end{align*}
where the last two steps use the triangle inequality and~(\ref{eq:zeta-y-ell1}),
respectively.
\end{proof}

\subsection{Symmetrization}

Let $S_{n}$ denote the symmetric group on $n$ elements. For a permutation
$\sigma\in S_{n}$ and an arbitrary sequence $x=(x_{1},x_{2},\ldots,x_{n}),$
we adopt the shorthand $\sigma x=(x_{\sigma(1)},x_{\sigma(2)},\ldots,x_{\sigma(n)}).$
A function $f(x_{1},x_{2},\ldots,x_{n})$ is called \emph{symmetric}
if it is invariant under permutation of the input variables: $f(x_{1},x_{2},\ldots,x_{n})=f(x_{\sigma(1)},x_{\sigma(2)},\ldots,x_{\sigma(n)})$
for all $x$ and $\sigma.$ Symmetric functions on $\zoon$ are intimately
related to univariate polynomials, as was first observed by Minsky
and Papert in their \emph{symmetrization argument}~\cite{minsky88perceptrons}.
\begin{prop}[Minsky and Papert]
\label{prop:minsky-papert}Let $p\colon\Re^{n}\to\Re$ be a given
polynomial. Then the mapping
\[
t\mapsto\Exp_{\substack{x\in\zoon|_{t}}
}\;p(x)
\]
is a univariate polynomial on $\{0,1,2,\ldots,n\}$ of degree at most
$\deg p.$
\end{prop}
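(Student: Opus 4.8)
The plan is to reduce to monomials by linearity of expectation and then to evaluate the symmetrization of a single monomial in closed form.

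First I would pass to a multilinear representative of $p$. On the Boolean hypercube every power $x_{i}^{k}$ with $k\geq1$ agrees with $x_{i}$, so there is a multilinear polynomial $q$ with $\deg q\leq\deg p$ that agrees with $p$ on $\zoon$; replacing $p$ by $q$ changes neither the mapping $t\mapsto\Exp_{x\in\zoon|_{t}}p(x)$ nor the degree bound we are after. Writing $q=\sum_{S}c_{S}\prod_{i\in S}x_{i}$ with the sum over subsets $S\subseteq[n]$ of size at most $\deg p$, it suffices by linearity of expectation to show that for each fixed $S$ with $|S|=d$ the mapping $t\mapsto\Exp_{x\in\zoon|_{t}}\prod_{i\in S}x_{i}$ is a univariate polynomial in $t$ of degree at most $d$ on $\{0,1,\ldots,n\}$.

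The key step is the observation that $\prod_{i\in S}x_{i}$ is the indicator that $x_{i}=1$ for all $i\in S$, so its expectation over a uniformly random Boolean string of Hamming weight $t$ equals the probability that a uniformly random $t$-element subset of $[n]$ contains $S$. This probability is $\binom{n-d}{t-d}/\binom{n}{t}$ for $t\geq d$ and $0$ for $t<d$, and a routine simplification gives
\[
\frac{\binom{n-d}{t-d}}{\binom{n}{t}}=\frac{t(t-1)(t-2)\cdots(t-d+1)}{n(n-1)(n-2)\cdots(n-d+1)}.
\]
The right-hand side is a polynomial in $t$ of degree exactly $d$ (its leading coefficient is the nonzero number $1/(n(n-1)\cdots(n-d+1))$), and it vanishes at $t\in\{0,1,\ldots,d-1\}$, so it correctly represents the desired mapping on all of $\{0,1,\ldots,n\}$. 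Summing these expressions against the coefficients $c_{S}$ yields a polynomial in $t$ of degree at most $\max_{S}|S|\leq\deg p$, which is the assertion.

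I do not expect a genuine obstacle in this argument. The only points that require a little care are the initial reduction to a multilinear polynomial — so that one may assume each variable occurs to the first power and each monomial is a set indicator — and the bookkeeping in the degenerate range $t<d$, where one must check that the closed-form rational expression above really does evaluate to $0$ and hence extends the combinatorial formula to the full domain $\{0,1,\ldots,n\}$.
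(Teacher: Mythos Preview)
Your argument is correct and is the standard proof of Minsky--Papert symmetrization. Note, however, that the paper does not actually prove this proposition: it is stated as a classical result attributed to Minsky and Papert~\cite{minsky88perceptrons} and used without proof, so there is no ``paper's own proof'' to compare against. Your write-up --- reduce to the multilinear representative, then by linearity to a single monomial $\prod_{i\in S}x_i$, and finally compute its expectation over $\zoon|_t$ as the falling factorial $t(t-1)\cdots(t-|S|+1)$ divided by $n(n-1)\cdots(n-|S|+1)$ --- is exactly the textbook derivation.
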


\noindent 

\noindent The next result, proved in~\cite[Corollary~2.13]{sherstov-wu18sign-ac0},
generalizes Minsky and Papert's symmetrization to the setting when
$x_{1},x_{2},\ldots,x_{n}$ are vectors rather than bits.
\begin{fact}[Sherstov and Wu]
\label{fact:ambainis-symmetrization}Let $p\colon(\Re^{N})^{\theta}\to\Re$
be a given polynomial. Then the mapping
\begin{equation}
v\mapsto\Exp_{\substack{x\in\{0^{N},e_{1},e_{2},\ldots,e_{N}\}^{\theta}:\\
x_{1}+x_{2}+\cdots+x_{\theta}=v
}
}\;\;p(x)\label{eq:ambainis-mapping}
\end{equation}
is a polynomial on $\NN^{N}|_{\leq\theta}$ of degree at most $\deg p.$
\end{fact}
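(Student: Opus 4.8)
The plan is to reduce to the case of a single monomial and then evaluate the symmetrized quantity by a direct count. Since the average in~(\ref{eq:ambainis-mapping}) only evaluates $p$ on the finite set $\{0^{N},e_{1},\ldots,e_{N}\}^{\theta}$, we may replace $p$ by any polynomial that agrees with it there and has no larger degree. On this set every coordinate $x_{i,j}$ is $\{0,1\}$-valued, and for each fixed block $i$ at most one of $x_{i,1},\ldots,x_{i,N}$ is nonzero; thus the identities $x_{i,j}^{2}=x_{i,j}$ and $x_{i,j}x_{i,j'}=0$ (for $j\ne j'$) hold identically on the set. Using them to rewrite each monomial of $p$, a monomial of degree $d$ either vanishes on the set (if it contains two distinct columns from a common block) or collapses to a multilinear monomial $\prod_{i\in S}x_{i,j_{i}}$ with $S\subseteq[\theta]$, $|S|\leq d$, and a choice of column $j_{i}$ for each $i\in S$. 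By linearity of expectation it therefore suffices to show that for each such monomial, the map $v\mapsto\Exp_{x:\,x_{1}+\cdots+x_{\theta}=v}\prod_{i\in S}x_{i,j_{i}}$ agrees on $\NN^{N}|_{\leq\theta}$ with a polynomial in $v$ of degree $|S|$.

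To evaluate this, I would view a point $x\in\{0^{N},e_{1},\ldots,e_{N}\}^{\theta}$ as a coloring of the $\theta$ blocks by colors in $\{0,1,\ldots,N\}$, where color $j\geq1$ encodes $x_{i}=e_{j}$ and color $0$ encodes $x_{i}=0^{N}$. The condition $x_{1}+\cdots+x_{\theta}=v$ says exactly that color $j$ is used $v_{j}$ times for each $j\in[N]$ and color $0$ is used $\theta-|v|$ times, so there are $\theta!/(v_{1}!\cdots v_{N}!\,(\theta-|v|)!)$ such colorings. Setting $c_{j}=|\{i\in S:j_{i}=j\}|$, the monomial $\prod_{i\in S}x_{i,j_{i}}$ equals $1$ precisely on those colorings that in addition give block $i$ the color $j_{i}$ for every $i\in S$; their number is $(\theta-|S|)!/((v_{1}-c_{1})!\cdots(v_{N}-c_{N})!\,(\theta-|v|)!)$ when $v_{j}\geq c_{j}$ for all $j$, and $0$ otherwise. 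Dividing, the conditional expectation equals
\[
\frac{(\theta-|S|)!}{\theta!}\;\prod_{j=1}^{N}\frac{v_{j}!}{(v_{j}-c_{j})!}
\]
whenever $v_{j}\geq c_{j}$ for every $j$. The crucial observation is that $v_{j}!/(v_{j}-c_{j})!$ is the falling factorial $v_{j}(v_{j}-1)\cdots(v_{j}-c_{j}+1)$, a polynomial in $v_{j}$ of degree $c_{j}$ that \emph{vanishes} at every nonnegative integer $v_{j}<c_{j}$; hence the displayed formula is in fact valid for all $v\in\NN^{N}|_{\leq\theta}$, and it is a polynomial in $v$ of total degree $\sum_{j}c_{j}=|S|$. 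Summing these polynomials over the reduced monomials of $p$ yields a single polynomial of degree at most $\deg p$ that agrees with~(\ref{eq:ambainis-mapping}) on $\NN^{N}|_{\leq\theta}$, which is what we want.

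The argument is elementary bookkeeping; the one point that requires care — and which I would flag as the crux — is the passage from the case-split count, with its hypothesis $v_{j}\geq c_{j}$ and a $0$ in the complementary case, to a single uniform polynomial expression. This is exactly what the vanishing of the falling factorial at small nonnegative integers delivers, so that no indicator functions or ad hoc corrections survive into the final formula. (The special case $N=1$ with Boolean blocks recovers the Minsky--Papert symmetrization of Proposition~\ref{prop:minsky-papert}.)
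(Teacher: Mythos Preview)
The paper does not actually prove this statement; it is quoted without proof from \cite[Corollary~2.13]{sherstov-wu18sign-ac0}. Your argument is correct and entirely self-contained. The reduction to multilinear ``one-column-per-block'' monomials via the identities $x_{i,j}^{2}=x_{i,j}$ and $x_{i,j}x_{i,j'}=0$ on the domain $\{0^{N},e_{1},\ldots,e_{N}\}^{\theta}$ is valid and does not increase degree; the multinomial count is correct; and the key point you flag---that the falling factorial $v_{j}(v_{j}-1)\cdots(v_{j}-c_{j}+1)$ already vanishes at every nonnegative integer $v_{j}<c_{j}$---is precisely what eliminates the case split and produces a genuine polynomial of total degree $\sum_{j}c_{j}=|S|\leq\deg p$. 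This is a clean, elementary proof of the fact.
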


\noindent Minsky and Papert's symmetrization corresponds to $N=1$
in Fact~\ref{fact:ambainis-symmetrization}.

\subsection{\label{subsec:Number-theoretic-preliminaries}Number theory}

For positive integers $a$ and $b$ that are relatively prime, we
let $(a^{-1})_{b}\in\{1,2,\ldots,b-1\}$ denote the multiplicative
inverse of $a$ modulo $b.$ The following fact is well-known and
straightforward to verify;~see, e.g.,~\cite[Fact~2.8]{sherstov18hardest-hs}.
\begin{fact}
\label{fact:rel-prime}For any positive integers $a$ and $b$ that
are relatively prime,
\[
\frac{(a^{-1})_{b}}{b}+\frac{(b^{-1})_{a}}{a}-\frac{1}{ab}\in\mathbb{Z}.
\]
\end{fact}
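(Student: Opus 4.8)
The plan is to clear denominators and reduce the claim to a pair of elementary divisibility statements. Set $u=(a^{-1})_{b}$ and $v=(b^{-1})_{a}$, so that by definition $au\equiv 1\pmod b$ and $bv\equiv 1\pmod a$. Writing the three fractions over the common denominator $ab$, the quantity in question equals $(au+bv-1)/(ab)$. Hence it suffices to prove that $ab$ divides the integer $au+bv-1$.

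Since $\gcd(a,b)=1$, it is enough to check divisibility by $a$ and by $b$ separately. Reducing $au+bv-1$ modulo $a$ kills the term $au$ and leaves $bv-1\equiv 0\pmod a$, which holds by the defining congruence for $v$. Symmetrically, reducing modulo $b$ kills the term $bv$ and leaves $au-1\equiv 0\pmod b$, which holds by the defining congruence for $u$. Therefore $a\mid au+bv-1$ and $b\mid au+bv-1$, and coprimality upgrades this to $ab\mid au+bv-1$, which is exactly what is required.

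I expect no genuine obstacle here: the statement is essentially a repackaging of the definition of modular inverse together with the Chinese remainder theorem, so the only real work is keeping track of which term vanishes modulo which prime power. The one point that calls for a moment's care is the degenerate case $a=1$ or $b=1$, where one of the ``inverses'' is the empty-product convention $0$ and one checks by hand that the expression still evaluates to an integer (indeed to $0$). This case plays no role in the intended application, where the modulus is $m=N+1\geq 2$, but it is worth a remark for completeness.
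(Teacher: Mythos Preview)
Your argument is correct and is the standard verification: clear denominators to reduce to $ab\mid au+bv-1$, then check divisibility by $a$ and $b$ separately using the defining congruences and coprimality. The paper does not supply its own proof of this fact; it merely states it as well-known and cites \cite[Fact~2.8]{sherstov18hardest-hs}, so there is nothing to compare against beyond noting that your proof is exactly the kind of one-paragraph check the paper alludes to.

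One small quibble: your remark about the degenerate case $a=1$ or $b=1$ invokes an ``empty-product convention $0$'' for the inverse, but the paper's own definition places $(a^{-1})_b$ in $\{1,2,\ldots,b-1\}$, which is empty when $b=1$. So strictly speaking the notation is undefined in that case rather than equal to $0$; the fact should be read as implicitly assuming $a,b\geq 2$, which, as you note, is the only regime used in the application.
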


The \emph{prime counting function} $\pi(x)$ for a real argument $x\geq0$
evaluates to the number of prime numbers less than or equal to $x.$
In this manuscript, it will be clear from the context whether $\pi$
refers to $3.14159\ldots$ or the prime counting function. The asymptotic
growth of the latter is given by the \emph{prime number theorem},
which states that $\pi(n)\sim n/\ln n.$ The following explicit bound
on $\pi(n)$ is due to Rosser~\cite{rosser41primes}. 
\begin{fact}[Rosser]
\label{fact:PNT}For $n\geq55,$
\[
\frac{n}{\ln n+2}<\pi(n)<\frac{n}{\ln n-4}.
\]
\end{fact}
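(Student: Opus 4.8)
The statement is Rosser's 1941 theorem~\cite{rosser41primes}, which the paper invokes as a black box; what follows is a sketch of how one would establish it from scratch. The plan is the standard route from the prime number theorem, in effective form, to explicit bounds on $\pi$. First I would pass from $\pi$ to the Chebyshev function $\vartheta(x)=\sum_{p\le x}\ln p$, the two being linked by Abel summation:
\[
\pi(x)=\frac{\vartheta(x)}{\ln x}+\int_{2}^{x}\frac{\vartheta(t)}{t\ln^{2}t}\,dt.
\]
Everything then reduces to a quantitative version of the estimate $\vartheta(x)=x+o(x)$.

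To obtain one, I would work instead with $\psi(x)=\sum_{p^{k}\le x}\ln p$, which differs from $\vartheta(x)$ by $O(\sqrt{x}\ln x)$ (crudely bounding the contribution of proper prime powers), and invoke the truncated von Mangoldt explicit formula
\[
\psi(x)=x-\sum_{|\gamma|\le T}\frac{x^{\rho}}{\rho}+O\!\left(\frac{x\ln^{2}x}{T}\right),
\]
the sum running over nontrivial zeros $\rho=\beta+\iu\gamma$ of the Riemann zeta function. Feeding in the classical de la Vallée Poussin zero-free region $\beta\le 1-c/\ln(|\gamma|+2)$ and the count $\#\{\rho:|\gamma|\le T\}=O(T\ln T)$, one has $|x^{\rho}/\rho|\le x\exp(-c\ln x/\ln(|\gamma|+2))/|\gamma|$; summing over zeros and optimizing $T$ yields $\psi(x)=x+O(x\exp(-c'\sqrt{\ln x}))$, and hence the same bound for $\vartheta(x)$. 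Writing $\vartheta(t)=t+E(t)$ in the Abel-summation identity and integrating the main term by parts gives
\[
\pi(x)=\frac{x}{\ln x}+\frac{x}{\ln^{2}x}+O\!\left(\frac{x}{\ln^{3}x}\right),
\]
with the $E$-contributions absorbed into the error.

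Finally I would compare this with the two target expressions. Expanding geometrically, $\tfrac{x}{\ln x+2}=\tfrac{x}{\ln x}-\tfrac{2x}{\ln^{2}x}+O(x/\ln^{3}x)$ and $\tfrac{x}{\ln x-4}=\tfrac{x}{\ln x}+\tfrac{4x}{\ln^{2}x}+O(x/\ln^{3}x)$, so both $\pi(x)-\tfrac{x}{\ln x+2}$ and $\tfrac{x}{\ln x-4}-\pi(x)$ equal $\tfrac{3x}{\ln^{2}x}+O(x/\ln^{3}x)$, which is positive for all $x\ge x_{0}$ with $x_{0}$ effectively computable; the finitely many remaining values $55\le x\le x_{0}$ are then checked by direct tabulation of primes (the lower threshold $n\ge 55$ is exactly what makes $\ln n-4$ positive, so that the upper bound is even well posed). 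The main obstacle is not the shape of this argument but its \emph{effectivization}: one needs a fully explicit zero-free region with numerical constants, explicit bounds on the sum over zeros in the explicit formula, and a verified zero-free portion of the critical strip, all calibrated tightly enough that $x_{0}$ is small enough to make the residual finite check feasible. That bookkeeping---carried out by Rosser and refined in subsequent work---is the genuinely laborious part, whereas the deduction of the stated inequalities from the effective prime number theorem estimate is routine.
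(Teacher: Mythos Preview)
The paper does not prove this fact; it is stated as a citation to Rosser~\cite{rosser41primes} and used as a black box. You correctly identify this at the outset, and your sketch of the standard route---Abel summation from $\pi$ to $\vartheta$, the explicit formula for $\psi$, an effective zero-free region, and a final finite check---is an accurate outline of how such explicit bounds are obtained, with the effectivization of all constants correctly flagged as the real work. There is nothing to compare against in the paper itself.
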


\noindent 

\noindent The number of distinct prime divisors of a natural number
$n$ is denoted $\nu(n)$. The following first-principles bound on
$\nu(n)$ is asymptotically tight for infinitely many $n;$ see~\cite[Fact~2.11]{sherstov18hardest-hs}
for details.
\begin{fact}
\label{fact:num-prime-factors}The number of distinct prime divisors
of $n$ obeys
\[
(\nu(n)+1)!\leq n.
\]
In particular,
\[
\nu(n)\leq(1+o(1))\frac{\ln n}{\ln\ln n}.
\]
\end{fact}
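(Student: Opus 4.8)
The plan is to establish the two assertions separately: the inequality $(\nu(n)+1)!\le n$ by a direct size estimate, and the asymptotic bound by taking logarithms.

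First I would prove $(\nu(n)+1)!\le n$. Write $r=\nu(n)$ and let $p_1<p_2<\cdots<p_r$ be the distinct prime divisors of $n$ in increasing order; since each divides $n$ at least once, $n\ge p_1p_2\cdots p_r$. The $p_i$ form a strictly increasing sequence of integers with $p_1\ge 2$, so $p_i\ge i+1$ for every $i$ by an immediate induction, and therefore $n\ge\prod_{i=1}^r(i+1)=(r+1)!$. (For $n=1$ this is the trivial statement $1!\le 1$.)

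Next I would turn to the asymptotic estimate. Setting $m=\nu(n)+1$, the first part gives $m!\le n$, and bounding $\ln(m!)=\sum_{j=1}^m\ln j\ge\int_1^m\ln t\,dt=m\ln m-m+1$ yields
\[
m(\ln m-1)\le\ln n.
\]
If $\nu(n)$ remains bounded as $n\to\infty$ there is nothing to prove, so I may assume $m\to\infty$. Fix $\epsilon>0$ and set $m^\ast=\lceil(1+\epsilon)\ln n/\ln\ln n\rceil$; a short computation gives $\ln m^\ast=\ln\ln n-\ln\ln\ln n+O(1)=(1-o(1))\ln\ln n$, hence $m^\ast(\ln m^\ast-1)=(1+\epsilon)(1-o(1))\ln n>\ln n$ once $n$ is large. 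Since $t\mapsto t(\ln t-1)$ is increasing for $t>1$, the two displayed inequalities force $m<m^\ast$, i.e.\ $\nu(n)<(1+\epsilon)\ln n/\ln\ln n$; letting $\epsilon\downarrow 0$ gives $\nu(n)\le(1+o(1))\ln n/\ln\ln n$.

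I do not expect a genuine obstacle here. The only point requiring any care is to extract the sharp constant $1+o(1)$, rather than a crude absolute constant, from the implicit relation $m(\ln m-1)\le\ln n$; comparing against the explicit test value $m^\ast$ and invoking the monotonicity of $t(\ln t-1)$ handles this cleanly, provided one first disposes of the degenerate case in which $\nu(n)$ does not tend to infinity.
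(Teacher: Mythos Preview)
Your proof is correct. The paper does not supply its own proof of this fact but instead cites \cite[Fact~2.11]{sherstov18hardest-hs}; your argument is the standard one (bound $n\ge p_1\cdots p_r\ge 2\cdot 3\cdots (r+1)=(r+1)!$ via $p_i\ge i+1$, then invert $m(\ln m-1)\le \ln n$), so there is nothing substantive to compare.
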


\noindent 

\section{\label{sec:Balanced-colorings}Balanced colorings}

For integers $n\geq k\geq1$ and $r\geq1,$ consider a mapping $\gamma\colon\binom{[n]}{k}\to[r].$
We refer to any such $\gamma$ as a \emph{coloring of $\binom{[n]}{k}$
with $r$ colors}. An important ingredient in our work is the construction
of a \emph{balanced} coloring, in the following technical sense.
\begin{defn}
Let $\gamma\colon\binom{[n]}{k}\to[r]$ be a given coloring. For a
subset $A\subseteq[n],$ we say that \emph{$\gamma$ is $\epsilon$-balanced
on $A$} iff for each $i\in[r],$
\[
\frac{1-\epsilon}{r}\binom{|A|}{k}\leq\left|\gamma^{-1}(i)\cap\binom{A}{k}\right|\leq\frac{1+\epsilon}{r}\binom{|A|}{k}.
\]
We define $\gamma$ to be \emph{$(\epsilon,\delta,m)$-balanced} iff
\[
\Prob_{A\in\binom{[n]}{\ell}}[\gamma\text{ is }\text{\ensuremath{\epsilon}-balanced on \ensuremath{A}]}\ensuremath{\geq1-\delta}
\]
for all $\ell\in\{m,m+1,\ldots,n\}.$
\end{defn}

As one might expect, a uniformly random coloring is balanced with
high probability; we establish this fact in Section~\ref{subsec:Existence-proof}.
In Sections~\ref{subsec:Discrepancy-defined}\textendash \ref{subsec:An-explicit-balanced}
that follow, we construct a highly balanced coloring based on an integer
set with low discrepancy. The reader who is interested only in the
quantitative aspect of our theorems and is not concerned about explicitness,
may read Section~\ref{subsec:Existence-proof} and skip without loss
of continuity to Section~\ref{sec:hardness-amplification}.

\subsection{\label{subsec:Existence-proof}Existence of balanced colorings}

The next lemma uses the probabilistic method to establish the existence
of balanced colorings with excellent parameters.
\begin{lem}
\label{lem:balanced-coloring-existential}Let $\epsilon,\delta\in(0,1]$
be given. Let $n,m,k,r$ be positive integers with $n\geq m\geq k$
and
\begin{equation}
\binom{m}{k}\geq\frac{3r}{\epsilon^{2}}\cdot\ln\frac{2rn}{\delta}.\label{eq:eps-m-k-n-r}
\end{equation}
Then there exists an $(\epsilon,\delta,m)$-balanced coloring $\gamma\colon\binom{[n]}{k}\to[r]$.
\end{lem}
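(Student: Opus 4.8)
The plan is a standard application of the probabilistic method, with the Chernoff bound (Theorem~\ref{thm:chernoff}) doing the quantitative work. I would sample the coloring $\gamma\colon\binom{[n]}{k}\to[r]$ at random by assigning to each $k$-subset an independent, uniformly distributed color in $[r]$, and argue that with positive probability the resulting $\gamma$ is $(\epsilon,\delta,m)$-balanced. The key point is that the balance conditions for all window sizes $\ell\in\{m,\ldots,n\}$ must hold \emph{simultaneously}, so after bounding the failure probability for each individual $\ell$ I would sum these bounds and apply a first-moment argument.

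First I would fix a length $\ell\in\{m,\ldots,n\}$, a set $A\in\binom{[n]}{\ell}$, and a color $i\in[r]$. The count $\bigl|\gamma^{-1}(i)\cap\binom{A}{k}\bigr|=\sum_{B\in\binom{A}{k}}\I[\gamma(B)=i]$ is a sum of $\binom{\ell}{k}$ i.i.d.\ Bernoulli$(1/r)$ random variables with mean $\binom{\ell}{k}/r$. Applying Theorem~\ref{thm:chernoff} with deviation parameter $\epsilon$ (permissible since $\epsilon\le1$), and using the monotonicity $\binom{\ell}{k}\geq\binom{m}{k}$ together with hypothesis~(\ref{eq:eps-m-k-n-r}) rewritten as $\epsilon^{2}\binom{m}{k}/(3r)\geq\ln(2rn/\delta)$, I obtain
\[
\Prob_{\gamma}\!\left[\Bigl|\,\bigl|\gamma^{-1}(i)\cap\tbinom{A}{k}\bigr|-\tfrac{1}{r}\tbinom{\ell}{k}\,\Bigr|\geq\tfrac{\epsilon}{r}\tbinom{\ell}{k}\right]\;\leq\; 2\exp\!\left(-\tfrac{\epsilon^{2}\binom{\ell}{k}}{3r}\right)\;\leq\;\frac{\delta}{rn}.
\]
A union bound over the $r$ colors then gives $\Prob_{\gamma}[\gamma\text{ is not }\epsilon\text{-balanced on }A]\leq\delta/n$ for every admissible $A$, regardless of its size.

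Next, writing $p_{\ell}(\gamma)=\Prob_{A\in\binom{[n]}{\ell}}[\gamma\text{ is not }\epsilon\text{-balanced on }A]$, I would average the previous bound over $A\in\binom{[n]}{\ell}$ (interchanging the two expectations is trivial on the finite sample space) to get $\Exp_{\gamma}[p_{\ell}(\gamma)]\leq\delta/n$, and then sum over the at most $n$ admissible values of $\ell$ to obtain $\Exp_{\gamma}\bigl[\sum_{\ell=m}^{n}p_{\ell}(\gamma)\bigr]\leq\delta$. Consequently there is a fixed coloring $\gamma$ with $\sum_{\ell=m}^{n}p_{\ell}(\gamma)\leq\delta$; since every term is nonnegative, this forces $p_{\ell}(\gamma)\leq\delta$ for each $\ell\in\{m,\ldots,n\}$, which is precisely the statement that $\gamma$ is $(\epsilon,\delta,m)$-balanced.

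I do not expect a genuine obstacle here; the argument is entirely routine once the sampling model is chosen. The one place that warrants a moment's care is the coupling across window sizes: rather than handling each $\ell$ in isolation (which would lose a factor of $n$ in the wrong direction), one sums the failure probabilities $p_{\ell}$ over $\ell$ and invokes the first moment \emph{once}. It is also worth noting, as a sanity check, that hypothesis~(\ref{eq:eps-m-k-n-r}) already forces $\binom{m}{k}$ to be fairly large (the right-hand side exceeds $2$), so in every non-vacuous instance $m$ is comfortably larger than $k$; the proof above, however, uses nothing beyond $n\geq m\geq k\geq1$.
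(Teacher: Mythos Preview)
Your proposal is correct and mirrors the paper's proof essentially step for step: sample $\gamma$ uniformly at random, apply the Chernoff bound and a union bound over colors to get $\Prob_\gamma[\gamma\text{ not }\epsilon\text{-balanced on }A]\le\delta/n$, then use a first-moment argument over the sum $\sum_{\ell=m}^n p_\ell(\gamma)$ to produce a single coloring good for all $\ell$. The paper phrases the last step as $\Exp_\gamma\max_\ell p_\ell\le\Exp_\gamma\sum_\ell p_\ell\le\delta$, but this is the same argument.
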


\begin{proof}
Let $\gamma\colon\binom{[n]}{k}\to[r]$ be a uniformly random coloring.
For fixed $i$ and $A\in\binom{[n]}{{\geq}m},$ the cardinality $|\gamma^{-1}(i)\cap\binom{A}{k}|$
is the sum of $\binom{|A|}{k}$ independent Bernoulli random variables,
each with expected value $1/r.$ As a result,
\begin{align}
 & \Prob_{\gamma}[\text{\ensuremath{\gamma} is not \ensuremath{\epsilon}-balanced on \ensuremath{A}}]\nonumber \\
 & \qquad\qquad=\Prob_{\gamma}\left[\max_{i\in[r]}\left|\left|\gamma^{-1}(i)\cap\binom{A}{k}\right|-\frac{1}{r}\binom{|A|}{k}\right|>\frac{\epsilon}{r}\binom{|A|}{k}\right]\nonumber \\
 & \qquad\qquad\leq r\max_{i\in[r]}\;\;\Prob_{\gamma}\left[\left|\left|\gamma^{-1}(i)\cap\binom{A}{k}\right|-\frac{1}{r}\binom{|A|}{k}\right|>\frac{\epsilon}{r}\binom{|A|}{k}\right]\nonumber \\
 & \qquad\qquad\leq r\cdot2\exp\left(-\frac{\epsilon^{2}}{3r}\binom{|A|}{k}\right)\nonumber \\
 & \qquad\qquad\leq r\cdot2\exp\left(-\frac{\epsilon^{2}}{3r}\binom{m}{k}\right)\nonumber \\
 & \qquad\qquad\leq\frac{\delta}{n},\label{eq:gamma-not-balanced-on-A}
\end{align}
where the second step applies the union bound over $i\in[r],$ the
third step uses the Chernoff bound (Theorem~\ref{thm:chernoff}),
and the fifth step uses~(\ref{eq:eps-m-k-n-r}). Now
\begin{align*}
 & \Exp_{\gamma}\;\max_{\ell\in\{m,m+1,\ldots,n\}}\Prob_{A\in\binom{[n]}{\ell}}[\text{\ensuremath{\gamma} is not \ensuremath{\epsilon}-balanced on \ensuremath{A}]}\\
 & \qquad\qquad\leq\Exp_{\gamma}\;\;\sum_{\ell=m}^{n}\Prob_{A\in\binom{[n]}{\ell}}[\text{\ensuremath{\gamma} is not \ensuremath{\epsilon}-balanced on \ensuremath{A}]}\\
 & \qquad\qquad=\sum_{\ell=m}^{n}\Exp_{A\in\binom{[n]}{\ell}}\Prob_{\gamma}[\text{\ensuremath{\gamma} is not \ensuremath{\epsilon}-balanced on \ensuremath{A}}]\\
 & \qquad\qquad\leq\sum_{\ell=m}^{n}\frac{\delta}{n}\\
 & \qquad\qquad\leq\delta,
\end{align*}
where the next-to-last step uses~(\ref{eq:gamma-not-balanced-on-A}).
We conclude that there exists a coloring $\gamma$ with
\[
\max_{\ell\in\{m,m+1,\ldots,n\}}\Prob_{A\in\binom{[n]}{\ell}}[\text{\ensuremath{\gamma} is not \ensuremath{\epsilon}-balanced on \ensuremath{A}]}\leq\delta,
\]
which is the definition of an $(\epsilon,\delta,m)$-balanced coloring.
\end{proof}
For our purposes, the following consequence of Lemma~\ref{lem:balanced-coloring-existential}
will be sufficient.
\begin{cor}
\label{cor:balanced-coloring-existential}Let $n,m,k,r$ be positive
integers with $n\geq m\geq k^{2}.$ Let $\epsilon\in(0,1]$ be given
with 
\[
\epsilon\geq\frac{3r\sqrt{k\ln(n+1)}}{m^{k/4}}.
\]
Then there exists an $(\epsilon,\epsilon,m)$-balanced coloring $\gamma\colon\binom{[n]}{k}\to[r].$
\end{cor}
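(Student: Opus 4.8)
The plan is to obtain Corollary~\ref{cor:balanced-coloring-existential} as a direct specialization of Lemma~\ref{lem:balanced-coloring-existential}, taking the failure parameter equal to the balance parameter, $\delta=\epsilon$. With this choice, everything reduces to checking that the hypotheses $n\geq m\geq k^{2}$ and $\epsilon\geq 3r\sqrt{k\ln(n+1)}/m^{k/4}$ together imply the single inequality
\[
\binom{m}{k}\geq\frac{3r}{\epsilon^{2}}\ln\frac{2rn}{\epsilon}
\]
demanded by Lemma~\ref{lem:balanced-coloring-existential}. A preliminary remark: one may assume $n\geq2$, since for $n=1$ the constraint $m\geq k^{2}$ forces $m=k=1$, and then the hypothesis on $\epsilon$ reads $\epsilon\geq 3r\sqrt{\ln2}>1$, contradicting $\epsilon\in(0,1]$; thus the corollary is vacuously true in that case.

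The first step is to lower-bound the binomial coefficient cleanly. Using $\binom{m}{k}\geq(m/k)^{k}$ together with $m\geq k^{2}$ (i.e.\ $m/k\geq\sqrt{m}$), I would derive $\binom{m}{k}\geq m^{k/2}$. The second step is to repackage the hypothesis on $\epsilon$ into two usable forms: squaring it gives $m^{k/2}\geq 9r^{2}k\ln(n+1)/\epsilon^{2}$, and --- after noting that the denominator $3r\sqrt{k\ln(n+1)}$ is at least $1$ (here $r,k\geq1$ and $\ln(n+1)\geq\ln3>1$) --- it also gives the crude bound $1/\epsilon\leq m^{k/4}\leq n^{k/4}$, hence $\ln(1/\epsilon)\leq\tfrac{k}{4}\ln n$.

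The crux is then a purely elementary logarithmic comparison. Combining the displays from the previous step, it remains to verify
\[
3rk\ln(n+1)\;\geq\;\ln\frac{2rn}{\epsilon},
\]
which is precisely the assertion $9r^{2}k\ln(n+1)/\epsilon^{2}\geq(3r/\epsilon^{2})\ln(2rn/\epsilon)$. Expanding the right-hand side as $\ln2+\ln r+\ln n+\ln(1/\epsilon)$ and substituting $\ln(1/\epsilon)\leq\tfrac{k}{4}\ln n$, it is bounded by $\ln2+\ln r+\tfrac{5k}{4}\ln(n+1)$ (using $k\geq1$); since $r\geq1$ and $\ln(n+1)>1$, the surplus $3rk\ln(n+1)-\tfrac{5k}{4}\ln(n+1)\geq\tfrac{7r}{4}\ln(n+1)>\tfrac{7r}{4}$ absorbs $\ln2+\ln r$ for every $r\geq1$. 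Chaining everything, $\binom{m}{k}\geq m^{k/2}\geq 9r^{2}k\ln(n+1)/\epsilon^{2}\geq(3r/\epsilon^{2})\ln(2rn/\epsilon)$, so Lemma~\ref{lem:balanced-coloring-existential} applies with $\delta=\epsilon$ and yields the desired $(\epsilon,\epsilon,m)$-balanced coloring.

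The one place calling for genuine care --- and what I expect to be the main (if modest) obstacle --- is the logarithmic comparison in the last paragraph, in particular tracking the $r$-dependence: the right-hand side $\ln(2rn/\epsilon)$ contributes a $+\ln r$ term and the quantity $\ln(1/\epsilon)$ can by itself be as large as $\tfrac{k}{4}\ln n$, so one must be sure that the leading factor $3r$ on the left (rather than a mere constant) genuinely dominates. Everything else is routine bookkeeping with binomial coefficients and the given inequalities.
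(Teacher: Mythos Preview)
Your proposal is correct and follows essentially the same route as the paper: set $\delta=\epsilon$ in Lemma~\ref{lem:balanced-coloring-existential}, reduce to the single inequality $\binom{m}{k}\geq(3r/\epsilon^{2})\ln(2rn/\epsilon)$, bound $\binom{m}{k}\geq(m/k)^{k}\geq m^{k/2}$ via $m\geq k^{2}$, and then verify a logarithmic comparison using $1/\epsilon\leq m^{k/4}\leq n^{k/4}$. The paper organizes the final estimate as a single chain of inequalities rather than isolating the comparison $3rk\ln(n+1)\geq\ln(2rn/\epsilon)$, but the ingredients and the arithmetic are the same; your treatment of the degenerate case $n=1$ is a small extra bit of care that the paper leaves implicit.
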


\begin{proof}
We have
\begin{align*}
\frac{3r}{\epsilon^{2}}\cdot\ln\frac{2rn}{\epsilon} & \leq\frac{3rm^{k/2}}{9r^{2}k\ln(n+1)}\cdot\ln\left(\frac{2rn}{3r\sqrt{k\ln(n+1)}}\cdot m^{k/4}\right)\\
 & \leq\frac{m^{k/2}}{3rk\ln(n+1)}\cdot\ln(n\cdot m^{k/4})\\
 & \leq\frac{m^{k/2}}{3rk\ln(n+1)}\cdot2k\ln n\\
 & \leq m^{k/2}\\
 & \leq\left(\frac{m}{k}\right)^{k}\\
 & \leq\binom{m}{k},
\end{align*}
where the next-to-last step uses the hypothesis $m\geq k^{2}.$ By
Lemma~\ref{lem:balanced-coloring-existential}, we conclude that
there is an $(\epsilon,\epsilon,m)$-balanced coloring $\gamma\colon\binom{[n]}{k}\to[r].$
\end{proof}
\noindent In Sections~\ref{subsec:Discrepancy-defined}\textendash \ref{subsec:An-explicit-balanced}
below, we will give an explicit coloring with parameters essentially
matching Corollary~\ref{cor:balanced-coloring-existential}.

\subsection{\label{subsec:Discrepancy-defined}Discrepancy defined}

Discrepancy is a measure of pseudorandomness or aperiodicity of a
multiset of integers with respect to a given modulus $M.$ Formally,
let $M\geq2$ be a given integer. The \emph{$M$-discrepancy} of a
nonempty multiset $Z=\{z_{1},z_{2},\ldots,z_{n}\}$ of arbitrary integers
is defined as
\[
\disc_{M}(Z)=\max_{k=1,2,\ldots,M-1}\left|\frac{1}{n}\sum_{j=1}^{n}\omega^{kz_{j}}\right|,
\]
where $\omega$ is a primitive $M$-th root of unity; the right-hand
side is obviously the same for any such $\omega$. Equivalently, we
may write
\[
\disc_{M}(Z)=\max_{\omega\ne1:\omega^{M}=1}\left|\frac{1}{n}\sum_{j=1}^{n}\omega^{z_{j}}\right|,
\]
where the maximum is over $M$-th roots of unity $\omega$ other than
$1$. Yet another way to think of $M$-discrepancy is in terms of
the discrete Fourier transform on $\ZZ_{M}.$ Specifically, consider
the \emph{frequency vector $(f_{0},f_{1},\ldots,f_{M-1})$} of $Z$,
where $f_{j}$ is the total number of element occurrences in $Z$
that are congruent to $j$ modulo $M.$ Applying the discrete Fourier
transform to $(f_{j})_{j=0}^{M-1}$ produces the sequence $(\sum_{j=0}^{M-1}f_{j}\exp(-2\pi\iu kj/M))_{k=0}^{M-1}=(\sum_{j=1}^{n}\exp(-2\pi\iu kz_{j}/M))_{k=0}^{M-1},$
which is a permutation of $(n,\sum_{j=1}^{n}\omega^{z_{j}},\ldots,\sum_{j=1}^{n}\omega^{(M-1)z_{j}})$
for a primitive $M$-th root of unity $\omega.$ Thus, the $M$-discrepancy
of $Z$ coincides up to a normalizing factor with the largest absolute
value of a nonconstant Fourier coefficient of the frequency vector
of $Z.$ The notion of $m$-discrepancy has a long history in combinatorics
and theoretical computer science; see~\cite{sherstov18hardest-hs}
for a bibliographic overview.
\begin{lem}[Discrepancy under sampling without replacement]
\label{lem:discrepancy-subset}Fix integers $n\geq t\geq1$ and $M\geq2.$
Let $Z=\{z_{1},z_{2},\ldots,z_{n}\}$ be a multiset of integers. Then
for all $\alpha\in[0,1],$
\[
\Prob_{S\in\binom{[n]}{t}}[\disc_{M}(\{z_{i}:i\in S\})-\disc_{M}(Z)\geq\alpha]\leq4M\exp\left(-\frac{t\alpha^{2}}{8}\right),
\]
where $\{z_{i}:i\in S\}$ is understood to be a multiset of cardinality
$t$.
\end{lem}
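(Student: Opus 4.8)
The plan is to bound, for each fixed nonconstant $M$-th root of unity $\omega$, the deviation of $\frac{1}{t}\sum_{i\in S}\omega^{z_i}$ from $\frac{1}{n}\sum_{j=1}^{n}\omega^{z_j}$, and then take a union bound over the $M-1$ relevant roots. The key point is that $S\in\binom{[n]}{t}$ is exactly a uniformly random size-$t$ subset, so the samples $\{z_i : i\in S\}$ are drawn \emph{without replacement} from the multiset $Z$, which is precisely the setting of Hoeffding's sampling-without-replacement inequality (Theorem~\ref{thm:hoeffding-without-replacement}). The only wrinkle is that $\omega^{z_j}$ is complex, so I would split into real and imaginary parts: write $\omega^{z_j} = \cos\theta_j + \iu\sin\theta_j$ with $\theta_j = 2\pi z_j/M$ (for the appropriate branch), and apply Theorem~\ref{thm:hoeffding-without-replacement} separately to the reals $\realpart(\omega^{z_1}),\ldots,\realpart(\omega^{z_n})\in[-1,1]$ and to $\imagpart(\omega^{z_1}),\ldots,\imagpart(\omega^{z_n})\in[-1,1]$.

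In detail, fix a primitive $M$-th root of unity $\zeta$ and a value $k\in\{1,2,\ldots,M-1\}$, and set $\omega=\zeta^{k}$, $\omega_j := \omega^{z_j}$. Let $J_1,\ldots,J_t$ enumerate $S$ (in random order, or fixed order — it does not matter); these are uniformly random pairwise-distinct indices in $[n]$. Put $p_{\mathrm{re}} = \sum_{i=1}^{t}\Exp\,\realpart(\omega_{J_i}) = \frac{t}{n}\sum_{j=1}^{n}\realpart(\omega_j)$, and similarly $p_{\mathrm{im}}$. Theorem~\ref{thm:hoeffding-without-replacement} with $[a,b]=[-1,1]$ (so $(b-a)^2=4$) gives, for each $\beta\geq0$,
\[
\Prob\!\left[\left|\sum_{i=1}^{t}\realpart(\omega_{J_i}) - p_{\mathrm{re}}\right| \geq \beta\right] \leq 2\exp\!\left(-\frac{\beta^{2}}{2t}\right),
\]
and the same bound for the imaginary part. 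Choosing $\beta = t\alpha/\sqrt{2}$ and dividing through by $t$, each of these says $|\frac{1}{t}\sum_{i\in S}\realpart(\omega_i) - \frac{1}{n}\sum_j\realpart(\omega_j)| < \alpha/\sqrt{2}$ except with probability at most $2\exp(-t\alpha^2/4)$, and likewise for imaginary parts. On the complement of both bad events, the triangle inequality in $\mathbb{C}$ (combining the real and imaginary deviations, each $<\alpha/\sqrt 2$) yields $\left|\frac{1}{t}\sum_{i\in S}\omega_i - \frac{1}{n}\sum_{j=1}^{n}\omega_j\right| < \alpha$. Hence for this fixed $\omega$ the deviation is $\geq\alpha$ with probability at most $4\exp(-t\alpha^2/4)$.

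To finish, note that $\disc_M(\{z_i:i\in S\}) = \max_{k=1,\ldots,M-1}|\frac{1}{t}\sum_{i\in S}\zeta^{kz_i}|$ and $\disc_M(Z) = \max_k|\frac{1}{n}\sum_j \zeta^{kz_j}|$, so if $\disc_M(\{z_i:i\in S\}) - \disc_M(Z) \geq \alpha$ then for some $k$ we have $|\frac{1}{t}\sum_{i\in S}\zeta^{kz_i}| \geq |\frac{1}{n}\sum_j\zeta^{kz_j}| + \alpha \geq |\frac{1}{t}\sum_{i\in S}\zeta^{kz_i} - \frac{1}{n}\sum_j\zeta^{kz_j}| $\ldots hmm, more carefully: $|\frac1t\sum_{i\in S}\omega_i|-|\frac1n\sum_j\omega_j| \le |\frac1t\sum_{i\in S}\omega_i - \frac1n\sum_j\omega_j|$, so the event forces the per-$k$ deviation to be at least $\alpha$ for that particular $k$. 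Taking a union bound over $k\in\{1,\ldots,M-1\}$ gives
\[
\Prob_{S\in\binom{[n]}{t}}\!\left[\disc_M(\{z_i:i\in S\}) - \disc_M(Z) \geq \alpha\right] \leq (M-1)\cdot 4\exp\!\left(-\frac{t\alpha^2}{4}\right) \leq 4M\exp\!\left(-\frac{t\alpha^2}{8}\right),
\]
where the final slack (using $\exp(-t\alpha^2/4)\le\exp(-t\alpha^2/8)$) absorbs the difference between $M-1$ and $M$ and keeps the constant in the exponent at the stated $8$ even if one is slightly more wasteful in the $\mathbb{C}$-triangle-inequality step (e.g.\ splitting the target deviation as $\alpha/2$ for each of the real and imaginary parts rather than $\alpha/\sqrt2$, which is cleaner and still fits under $8$). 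I do not anticipate a genuine obstacle here; the only thing to be careful about is the complex-to-real reduction and bookkeeping the constant so that $8$ (rather than $4$) comes out, which the $\alpha/2$-splitting arranges automatically. One should also dispose of the degenerate case $\alpha=0$ (or $\disc$-difference $\le 0$) separately, where the bound is trivial since the right-hand side is $\ge 4M > 1$.
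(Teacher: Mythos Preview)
Your proposal is correct and follows essentially the same route as the paper: split $\omega^{z_j}$ into real and imaginary parts, apply Hoeffding's sampling-without-replacement inequality (Theorem~\ref{thm:hoeffding-without-replacement}) with $[a,b]=[-1,1]$ to each part, combine via the triangle inequality in $\mathbb{C}$, and finish with a union bound over the $M-1$ nontrivial $M$-th roots of unity. The paper uses the $\alpha/2$ split (which you yourself note is cleaner) to land directly on the exponent $-t\alpha^{2}/8$, giving the per-$\omega$ bound $4\exp(-t\alpha^{2}/8)$ and then $4M\exp(-t\alpha^{2}/8)$ after the union bound.
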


\begin{proof}
Fix an $M$-th root of unity $\omega.$ Then $\realpart(\omega^{z_{1}}),\realpart(\omega^{z_{2}}),\ldots,\realpart(\omega^{z_{n}})$
range in $[-1,1]$. Now, let $S\in\binom{[n]}{t}$ be a uniformly
random subset. Then the Hoeffding inequality for sampling without
replacement (Theorem~\ref{thm:hoeffding-without-replacement}) implies
that
\[
\Prob_{S\in\binom{[n]}{t}}\left[\left|\frac{1}{t}\sum_{j\in S}\realpart(\omega^{z_{j}})-\frac{1}{n}\sum_{j=1}^{n}\realpart(\omega^{z_{j}})\right|\geq\frac{\alpha}{2}\right]\leq2\exp\left(-\frac{t\alpha^{2}}{8}\right).
\]
Analogously,
\[
\Prob_{S\in\binom{[n]}{t}}\left[\left|\frac{1}{t}\sum_{j\in S}\imagpart(\omega^{z_{j}})-\frac{1}{n}\sum_{j=1}^{n}\imagpart(\omega^{z_{j}})\right|\geq\frac{\alpha}{2}\right]\leq2\exp\left(-\frac{t\alpha^{2}}{8}\right).
\]
Combining these two equations shows that for every $M$-th root of
unity $\omega,$
\begin{equation}
\Prob_{S\in\binom{[n]}{t}}\left[\left|\frac{1}{t}\sum_{j\in S}\omega^{z_{j}}-\frac{1}{n}\sum_{j=1}^{n}\omega^{z_{j}}\right|\geq\alpha\right]\leq4\exp\left(-\frac{t\alpha^{2}}{8}\right).\label{eq:w-close}
\end{equation}
Now
\begin{align}
 & \disc_{M}(\{z_{i}:i\in S\})-\disc_{M}(Z)\nonumber \\
 & \qquad\qquad=\max_{\omega}\left|\frac{1}{t}\sum_{j\in S}\omega^{z_{j}}\right|-\max_{\omega}\left|\frac{1}{n}\sum_{j=1}^{n}\omega^{z_{j}}\right|\nonumber \\
 & \qquad\qquad\leq\max_{\omega}\left\{ \left|\frac{1}{t}\sum_{j\in S}\omega^{z_{j}}\right|-\left|\frac{1}{n}\sum_{j=1}^{n}\omega^{z_{j}}\right|\right\} \nonumber \\
 & \qquad\qquad\leq\max_{\omega}\left|\frac{1}{t}\sum_{j\in S}\omega^{z_{j}}-\frac{1}{n}\sum_{j=1}^{n}\omega^{z_{j}}\right|,\label{eq:w-w-close}
\end{align}
where the maximum in all equations is taken over $M$-th roots of
unity $\omega\ne1.$ Using~(\ref{eq:w-close}) and the union bound
over $\omega,$ we see that the right-hand side of~(\ref{eq:w-w-close})
is bounded by $\alpha$ with probability at least $1-4M\exp(-t\alpha^{2}/8).$
\end{proof}

\subsection{\label{subsec:An-explicit-construction}A low-discrepancy set}

The construction of sparse integer sets with small discrepancy relative
to a given modulus $M$ is a well-studied problem. There is an inherent
trade-off between the size of the set and the discrepancy it achieves,
and different works have focused on different regimes depending on
the application at hand. We work in a regime not considered previously:
for any constant $\epsilon>0$, we construct a set of cardinality
at most $M^{\epsilon}$ that has $M$-discrepancy at most $M^{-\delta}$
for some constant $\delta=\delta(\epsilon)>0.$ We construct such
a set based on the following result.
\begin{thm}[cf.~\cite{AIKPS90aperiodic-set,sherstov18hardest-hs}]
\label{thm:ajtai-iteration}Fix an integer $R\geq1$ and reals $P\geq2$
and $\Delta\geq1$. Let $M$ be an integer with
\[
M\geq P^{2}(R+1).
\]
Fix a set $S_{p}\subseteq\{1,2,\ldots,p-1\}$ for each prime $p\in(P/2,P]$
with $p\nmid M$. Suppose further that the cardinalities of any two
sets from among the $S_{p}$ differ by a factor of at most $\Delta.$
Consider the multiset
\begin{multline}
S=\{(r+s\cdot(p^{-1})_{M})\bmod M:\\
\qquad r=1,\ldots,R;\quad p\in(P/2,P]\text{ prime with }p\nmid M;\quad s\in S_{p}\}.\label{eq:low-discrepancy-set}
\end{multline}
Then the elements of $S$ are pairwise distinct and nonzero. Moreover,
if $S\ne\varnothing$ then
\[
\disc_{M}(S)\leq\frac{c}{\sqrt{R}}+\frac{c\log M}{\log\log M}\cdot\frac{\log P}{P}\cdot\Delta+\max_{p}\{\disc_{p}(S_{p})\}
\]
for some $($explicitly given$)$ constant $c\geq1$ independent of
$P,R,M,\Delta.$
\end{thm}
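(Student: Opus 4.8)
The plan is to expand $\disc_M(S)$ as a product of two character sums—a geometric kernel in the variable $r$ and a sum over the pairs $(p,s)$—and to control the second using Fact~\ref{fact:rel-prime}. First, though, I would dispose of the claim that the elements of $S$ are pairwise distinct and nonzero, since this gives $|S|=R\sum_p|S_p|$. If $r_1+s_1(p_1^{-1})_M\equiv r_2+s_2(p_2^{-1})_M\pmod M$, multiplying by $p_1p_2$ and using $p_i(p_i^{-1})_M\equiv1\pmod M$ gives $p_1p_2(r_1-r_2)+s_1p_2-s_2p_1\equiv0\pmod M$; the left side is an integer of absolute value less than $P^2(R+1)\leq M$, hence $0$, and reducing it modulo $p_1$ gives $p_1\mid s_1p_2$, so $p_1=p_2$ (both prime), after which $p_1(r_1-r_2)=s_2-s_1$ with $|s_2-s_1|<p_1$ forces $r_1=r_2$, $s_1=s_2$. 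Nonzeroness is analogous: $r+s(p^{-1})_M\equiv0$ would give $pr+s\equiv0\pmod M$, impossible since $0<pr+s<P(R+1)<M$. Note that when $S\neq\varnothing$ the balance hypothesis makes every relevant $S_p$ nonempty, so $\max_p\disc_p(S_p)$ is well defined.

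For the discrepancy bound, fix an $M$-th root of unity $\omega=e^{2\pi\iu k/M}\neq1$ with $k\in\{1,\dots,M-1\}$. Since $\omega^M=1$, the modular reduction in the definition of $S$ is immaterial and
\[
\frac{1}{|S|}\sum_{x\in S}\omega^x=\Bigl(\frac1R\sum_{r=1}^R\omega^r\Bigr)\Bigl(\frac{1}{\sum_p|S_p|}\sum_p\sum_{s\in S_p}\omega^{s(p^{-1})_M}\Bigr).
\]
The first factor is a geometric sum bounded by $\min\{1,\,M/(2R\min\{k,M-k\})\}$, and the second is an average of unit-modulus numbers, hence at most $1$. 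So if $\min\{k,M-k\}\geq M/(2\sqrt R)$, this $\omega$ contributes at most $1/\sqrt R$ to $\disc_M(S)$, which is inside the claimed bound.

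It remains to handle $\min\{k,M-k\}<M/(2\sqrt R)$; replacing $\omega$ by $\overline\omega$ if necessary (which leaves $|\sum_{x\in S}\omega^x|$ unchanged) I may assume $k<M/(2\sqrt R)$. Now Fact~\ref{fact:rel-prime} applied to the coprime pair $(p,M)$ gives $\omega^{s(p^{-1})_M}=\zeta_p^{-a_ps}\cdot e^{2\pi\iu ks/(pM)}$ where $\zeta_p=e^{2\pi\iu/p}$, $a_p=k(M^{-1})_p\bmod p$, and $|e^{2\pi\iu ks/(pM)}-1|\leq2\pi k/M<\pi/\sqrt R$. Peeling this stray factor off term by term, and using $|\tfrac1{|S_p|}\sum_{s\in S_p}\zeta_p^{-a_ps}|\leq\disc_p(S_p)$ whenever $p\nmid k$ (then $\zeta_p^{-a_p}$ is a primitive $p$-th root of unity) together with the trivial bound $|S_p|$ whenever $p\mid k$, I get that the second factor above is at most $\max_p\disc_p(S_p)+\pi/\sqrt R+\bigl(\sum_{p\mid k}|S_p|\bigr)/\bigl(\sum_p|S_p|\bigr)$. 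Finally the balance hypothesis gives $\sum_{p\mid k}|S_p|\leq\nu(k)\cdot\Delta\cdot(\sum_p|S_p|)/|Q|$, where $Q$ is the set of primes in $(P/2,P]$ not dividing $M$; bounding $\nu(k)=O(\log M/\log\log M)$ by Fact~\ref{fact:num-prime-factors} and $|Q|\geq(\pi(P)-\pi(P/2))-\nu(M)=\Omega(P/\log P)$ by Facts~\ref{fact:PNT} and~\ref{fact:num-prime-factors} yields the middle term of the claimed bound. Taking the maximum over $\omega\neq1$ and collecting constants finishes the proof.

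The fussiest step is the last estimate $|Q|=\Omega(P/\log P)$: it requires $P$ above an absolute constant and requires $\nu(M)$ (so, roughly, $\log M/\log\log M$) to be smaller than a constant multiple of $P/\log P$. When either fails, $\frac{\log M}{\log\log M}\cdot\frac{\log P}{P}\cdot\Delta$ is bounded below by an absolute constant, and the inequality is then immediate from $\disc_M(S)\leq1$; organizing this dichotomy and pinning down a single constant $c$ that works in every regime is essentially the only bookkeeping. Everything else reduces to the two-factor factorization and a single application of Fact~\ref{fact:rel-prime}.
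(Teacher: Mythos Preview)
Your proof is correct and follows essentially the same route as the paper's: distinctness via the size bound $P^2(R+1)\leq M$, then the two-regime split on $\min\{k,M-k\}$ with the geometric sum in $r$ handling large $k$ and Fact~\ref{fact:rel-prime} plus $\disc_p(S_p)$ handling small $k$, with the bad primes $p\mid k$ controlled by $\nu(k)/|Q|$ and the trivial-bound escape hatch when $|Q|$ is too small. Your explicit factorization of $\sum_{x\in S}\omega^x$ and the conjugation trick to reduce to $k\leq M/2$ are slightly cleaner than the paper's presentation (which keeps both $K=k$ and $K=k-m$ in play and accordingly carries $\nu(k)+\nu(m-k)$ rather than just $\nu(k)$), but the substance is identical.
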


\noindent Ajtai et al.~\cite{AIKPS90aperiodic-set} proved a special
case of Theorem~\ref{thm:ajtai-iteration} for $M$ prime and $\Delta=1.$
Their argument was generalized in~\cite[Theorem~3.6]{sherstov18hardest-hs}
to arbitrary moduli $M$, again in the setting of $\Delta=1$. The
treatment in~\cite{sherstov18hardest-hs} in turn readily generalizes
to any $\Delta\geq1,$ and for the reader's convenience we provide
a complete proof of Theorem~\ref{thm:ajtai-iteration} in Appendix~\ref{app:ajtai-iteration}.
With this result in hand, we obtain the low-discrepancy set with the
needed parameters:
\begin{thm}[Explicit low-discrepancy set]
\label{thm:explicit-low-discrepancy-set}For all integers $M\geq2$
and $t\geq2,$ there is an $($explicitly given$)$ nonempty set $S\subseteq\{1,2,\ldots,M\}$
with
\begin{align}
 & |S|\leq t,\\
 & \disc_{M}(S)\leq\frac{C^{*}\log t}{t^{1/4}}\cdot\frac{\log M}{1+\log\log M},\label{eq:disc-M-t}
\end{align}
where $C^{*}\geq1$ is an $($explicitly given$)$ absolute constant
independent of $M$ and $t.$
\end{thm}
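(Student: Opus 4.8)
The plan is to obtain the theorem from a \emph{single} application of the iteration in Theorem~\ref{thm:ajtai-iteration}, taking the base sets $S_p$ to be the trivial intervals $\{1,2,\ldots,p-1\}$, so that no recursion is needed. Write $B=\frac{C^{*}\log t}{t^{1/4}}\cdot\frac{\log M}{1+\log\log M}$ for the target bound. Two degenerate regimes are immediate. If $M\le t$, take $S=\{1,2,\ldots,M\}$: then $|S|\le t$ and $\disc_M(S)=0$ because $\sum_{j=1}^{M}\omega^{j}=0$ for every $M$-th root of unity $\omega\ne1$. If $B\ge1$, take $S=\{1\}$, for which $\disc_M(S)=1\le B$. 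So from now on assume $M>t$ and $B<1$. The key preliminary observation is that $B<1$ forces $M$ to be small relative to $t$: since $z\mapsto\frac{z}{1+\log z}$ is increasing for $z\ge1.4$ and $\frac{\log M}{1+\log\log M}\ge c_{1}$ for an absolute constant $c_{1}>0$ at every integer $M\ge2$, the inequality $\frac{\log M}{1+\log\log M}<\frac{t^{1/4}}{C^{*}\log t}$ yields, for $t$ large, first $\log M\le t^{1/4}$, then $\log\log M\le\frac14\log t$, and then $\log M<\frac{t^{1/4}}{2C^{*}}$. (For the finitely many small $t$ not covered by ``$t$ large'', one checks that for $C^{*}$ large enough $M>t$ and $B<1$ cannot hold simultaneously, so the construction below is only ever invoked for large $t$.)

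Now choose the parameters. Let $P=\lfloor 6t^{1/4}\rfloor$, so $5t^{1/4}\le P\le6t^{1/4}$ for $t$ large. By Rosser's bound (Fact~\ref{fact:PNT}), the interval $(P/2,P]$ contains $\Theta(P/\log P)=\Theta(t^{1/4}/\log t)$ primes. Each such prime that divides $M$ exceeds $t^{1/4}$, so if $\nu'$ of them divide $M$ then $M\ge(t^{1/4})^{\nu'}$, hence $\nu'\le\frac{4\log M}{\log t}<\frac{2t^{1/4}}{C^{*}\log t}$ by the bound on $\log M$ above; for $C^{*}$ a large enough absolute constant this is far below half the total, leaving $\Theta(t^{1/4}/\log t)$ primes $p\in(P/2,P]$ with $p\nmid M$ for use in the construction. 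For each such $p$ set $S_p=\{1,2,\ldots,p-1\}$; these sets have cardinalities $p-1\in(\tfrac{P}{2}-2,\,P-1]$ and hence differ in size by a factor at most $\Delta:=3$, and $\disc_p(S_p)=\frac{1}{p-1}$ since $\sum_{j=1}^{p-1}\omega^{j}=-1$ for every nontrivial $p$-th root of unity $\omega$. Finally let $R$ be the least positive integer with $\frac{c}{\sqrt{R}}\le\frac{B}{2}$, where $c\ge1$ is the constant of Theorem~\ref{thm:ajtai-iteration}; a short computation using $B<1\le 2c$ shows $R\le 4c^{2}t^{1/2}(1+\log\log M)^{2}/((C^{*})^{2}(\log t)^{2}(\log M)^{2})+1\le 4c^{2}t^{1/2}/((C^{*})^{2}(\log t)^{2})+1$, the last step using $1+\log\log M\le\log M$ (valid since $M\ge4$ here).

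It remains to verify the hypotheses of Theorem~\ref{thm:ajtai-iteration} and run the estimate. From the bound on $R$ one gets $P^{2}(R+1)=O\bigl(t/(\log t)^{2}\bigr)+O(\sqrt t)\le t<M$ for $t$ large, so $M\ge P^{2}(R+1)$; the remaining hypotheses ($R\ge1$, $P\ge2$, the $\Delta$-bound, $S\ne\varnothing$) hold by construction. The theorem produces an explicit $S\subseteq\{1,\ldots,M-1\}$ of pairwise distinct nonzero integers with
\[
\disc_M(S)\le\frac{c}{\sqrt{R}}+\frac{c\log M}{\log\log M}\cdot\frac{\log P}{P}\cdot\Delta+\max_{p}\disc_p(S_p).
\]
The first term is $\le\frac{B}{2}$ by the choice of $R$; the second is, using $\log P\le\log t$, $P\ge5t^{1/4}$, $\Delta=3$, and $\frac{\log M}{\log\log M}\le\frac{2\log M}{1+\log\log M}$ (valid for $M\ge4$), at most $O(1)\cdot\frac{B}{C^{*}}\le\frac{B}{4}$ once $C^{*}$ is large; and the third is $\le\frac{1}{P/2-2}=O(t^{-1/4})\le\frac{B}{4}$ because $B=\Omega(\log t\cdot t^{-1/4})$ (here using $\frac{\log M}{1+\log\log M}\ge c_{1}$). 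Hence $\disc_M(S)\le B$. For the cardinality, $|S|=R\sum_{p}|S_p|\le R\cdot\tfrac{2P}{\ln P}\cdot P=\Theta(RP^{2}/\log t)$, which by the bound on $R$ is $O\bigl(t/(\log t)^{3}\bigr)+O\bigl(\sqrt t/\log t\bigr)\le t$ for $t$ large. Taking $C^{*}$ to be the maximum of the finitely many absolute constants demanded above completes the argument.

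I expect the one genuinely delicate point to be prime availability: one must rule out the scenario in which almost every prime in $(P/2,P]$ divides $M$. The clean way to do this is precisely the implication chain $B<1\Rightarrow\log M\le t^{1/4}\Rightarrow\log M<\tfrac12 t^{1/4}/C^{*}\Rightarrow\nu'\le\tfrac12\cdot(\text{prime count in }(P/2,P])$, which is exactly what pins down $C^{*}$ as a sufficiently large absolute constant. Everything else is routine optimization of the three error terms of Theorem~\ref{thm:ajtai-iteration} against the size budget $|S|\le t$, together with the elementary identities $\disc_M(\{1,\ldots,M\})=0$ and $\disc_p(\{1,\ldots,p-1\})=1/(p-1)$ used in the degenerate cases and for the base sets.
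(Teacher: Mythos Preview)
Your proof is correct and follows essentially the same approach as the paper: handle the degenerate cases $t\ge M$ and $B\ge1$ trivially, and otherwise apply Theorem~\ref{thm:ajtai-iteration} once with the base sets $S_p=\{1,\ldots,p-1\}$. The paper streamlines the bookkeeping by choosing the parameters non-adaptively as $P=\lfloor t^{1/4}\rfloor$ and $R=\lfloor\sqrt t-1\rfloor$, which makes $|S|\le RP^2\le t$ and $M\ge P^2(R+1)$ immediate and avoids your more involved estimates tying $R$ to $B$; your adaptive choice of $R$ works too but is more laborious than necessary.
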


\begin{proof}
Facts~\ref{fact:PNT} and~\ref{fact:num-prime-factors} imply that
\begin{align}
 & \pi(P)-\pi\left(\frac{P}{2}\right)\geq\frac{P}{C\log P} &  & \text{for all }P\geq C,\label{eq:many-primes-1}\\
 & \nu(M)\leq\frac{C\log M}{1+\log\log M} &  & \text{for all }M\geq2,\label{eq:few-prime-factors-1}
\end{align}
for some integer $C\geq1$ that is an absolute constant. Moreover,
$C$ can be easily calculated from the explicit bounds in Facts~\ref{fact:PNT}
and~\ref{fact:num-prime-factors}. We will show that the theorem
holds for some constant $C^{*}\geq4C^{2}.$

For $t\geq M,$ the theorem is trivial since the set $S=\{1,2,\ldots,M\}$
achieves $\disc_{M}(S)=0.$ Also, if the right-hand side of~(\ref{eq:disc-M-t})
exceeds $1$, then (\ref{eq:disc-M-t}) holds trivially for the set
$S=\{1\}.$ In what follows, we treat the remaining case when
\begin{align}
 & t<M,\label{eq:t-less-than-M}\\
 & \frac{4C^{2}\log t}{t^{1/4}}\cdot\frac{\log M}{1+\log\log M}\le1.\label{eq:rhs-less-than-1}
\end{align}
The latter condition forces
\begin{equation}
t\geq\max\{81,C^{8}\}.\label{eq:t-81}
\end{equation}
Set $P=\lfloor t^{1/4}\rfloor$ and $R=\lfloor\sqrt{t}-1\rfloor.$
Then~(\ref{eq:t-less-than-M}) and~(\ref{eq:t-81}) imply that $P\geq\max\{3,C\}$,
$R\geq1,$ and $M\geq P^{2}(R+1).$ As a result, Theorem~\ref{thm:ajtai-iteration}
is applicable with the sets $S_{p}=\{1,2,\ldots,p-1\}$ for prime
$p\in(P/2,P].$ The discrepancy of these sets is given by $\disc_{p}(S_{p})=1/(p-1)$.
Define $S$ by~(\ref{eq:low-discrepancy-set}). The interval $(P/2,P]$
contains $\pi(P)-\pi(P/2)$ prime numbers, of which at most $\nu(M)$
are divisors of $M.$ We have
\begin{align*}
\pi(P)-\pi\left(\frac{P}{2}\right)-\nu(M) & \geq\frac{P}{C\log P}-\frac{C\log M}{1+\log\log M}\\
 & \geq\frac{t^{1/4}}{C\log t}-\frac{C\log M}{1+\log\log M}\\
 & >0,
\end{align*}
where the first step uses~(\ref{eq:many-primes-1}), (\ref{eq:few-prime-factors-1}),
and $P\geq C$, and the last step uses~(\ref{eq:rhs-less-than-1}).
We conclude that $(P/2,P]$ contains a prime that does not divide
$M,$ which in turn implies that $S$ is nonempty. Continuing, $P\geq3$
forces $\Delta\leq(P-1)/(\lceil P/2\rceil-1)\leq3$ in the notation
of Theorem~\ref{thm:ajtai-iteration}. As a result, Theorem~\ref{thm:ajtai-iteration}
guarantees~(\ref{eq:disc-M-t}) for a large enough constant $C^{*}.$
We note that $C^{*}$ can be easily calculated from the constant $c$
in Theorem~\ref{thm:ajtai-iteration}. Since $|S|\leq RP^{2}\leq t$
by definition, the proof is complete.
\end{proof}

\subsection{\label{subsec:Discrepancy-and-balanced}Discrepancy and balanced
colorings}

We will leverage the low-discrepancy integer set in Theorem~\ref{thm:explicit-low-discrepancy-set}
to construct a balanced coloring of $\binom{[n]}{k}.$ For this, we
now develop a connection between these two notions of pseudorandomness.
We will henceforth denote the modulus by $r$ since in our construction,
the modulus is set equal to the number of colors in the coloring of
$\binom{[n]}{k}.$ We start with a technical lemma.
\begin{lem}
\label{lem:equipartition}Fix integers $\ell,k,r$ with $\ell\geq k\geq1$
and $r\geq2.$ Let $Z=\{z_{1},z_{2},\ldots,z_{\ell}\}$ be a multiset
of integers. Then for all $\alpha\in[0,1],$
\begin{multline*}
\max_{a\in\ZZ}\left|\Prob_{S\in\binom{[\ell]}{k}}\left[\sum_{i\in S}z_{i}\equiv a\pmod r\right]-\frac{1}{r}\right|\\
\leq4rk\exp\left(-\frac{\lfloor\ell/k\rfloor\alpha^{2}}{8}\right)+(\disc_{r}(Z)+\alpha)^{k}.
\end{multline*}
\end{lem}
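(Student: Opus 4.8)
The plan is to pass to the discrete Fourier transform over $\ZZ_r$, which reduces the statement to a bound on the exponential sums $\Exp_{S}\prod_{i\in S}\omega^{z_i}$ over nontrivial $r$-th roots of unity $\omega$, and then to estimate each such sum by generating the random $k$-subset $S$ one ``block'' at a time, so that Lemma~\ref{lem:discrepancy-subset} controls each of the $k$ resulting factors. Concretely, write $\omega_0=\exp(2\pi\iu/r)$. For every $a\in\ZZ$ and $S\subseteq[\ell]$, the indicator of $\sum_{i\in S}z_i\equiv a\pmod r$ equals $\frac1r\sum_{j=0}^{r-1}\omega_0^{\,j(\sum_{i\in S}z_i-a)}$. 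Averaging over $S\in\binom{[\ell]}{k}$ and peeling off the $j=0$ term, which contributes exactly $1/r$, gives
\[
\Prob_{S\in\binom{[\ell]}{k}}\left[\sum_{i\in S}z_i\equiv a\pmod r\right]-\frac1r=\frac1r\sum_{j=1}^{r-1}\omega_0^{-ja}\,\Exp_{S\in\binom{[\ell]}{k}}\prod_{i\in S}\omega_0^{\,jz_i}.
\]
By the triangle inequality it then suffices to show, for each fixed $j\in\{1,\dots,r-1\}$, that $\bigl|\Exp_{S}\prod_{i\in S}\omega_0^{\,jz_i}\bigr|\le(\disc_r(Z)+\alpha)^k+4rk\exp(-\lfloor\ell/k\rfloor\alpha^2/8)$, since the right side above then has modulus at most $\frac{r-1}{r}$ times this quantity.

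Fix such a $j$ and abbreviate $w_i=\omega_0^{\,jz_i}$, a unit-modulus number; note that $\omega_0^{\,j}$ is an $r$-th root of unity other than $1$, so $\bigl|\frac{1}{|A|}\sum_{i\in A}w_i\bigr|\le\disc_r(\{z_i:i\in A\})$ for every nonempty $A\subseteq[\ell]$. The key observation is that a uniformly random $S\in\binom{[\ell]}{k}$ can be produced as follows: put $q=\lfloor\ell/k\rfloor$ and write $\ell=qk+s$ with $0\le s<k$; choose a uniformly random ordered partition $(B_1,\dots,B_k)$ of $[\ell]$ with $|B_m|=q+1$ for $m\le s$ and $|B_m|=q$ for $m>s$; then pick $x_m$ uniformly from $B_m$, independently over $m$, and set $S=\{x_1,\dots,x_k\}$. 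The $x_m$ are distinct since the $B_m$ are disjoint, and the construction commutes with the action of $S_\ell$ on $[\ell]$, which is transitive on $\binom{[\ell]}{k}$; hence $S$ is indeed uniform. Conditioning on the partition and using independence of the $x_m$ then gives
\[
\left|\Exp_{S}\prod_{i\in S}w_i\right|=\left|\Exp_{(B_1,\dots,B_k)}\;\prod_{m=1}^{k}\frac{1}{|B_m|}\sum_{i\in B_m}w_i\right|\le\Exp_{(B_1,\dots,B_k)}\;\prod_{m=1}^{k}\left|\frac{1}{|B_m|}\sum_{i\in B_m}w_i\right|.
\]

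To finish, I would note that each $B_m$ is marginally a uniformly random subset of $[\ell]$ of size $|B_m|\ge q=\lfloor\ell/k\rfloor$, so Lemma~\ref{lem:discrepancy-subset} with modulus $r$ bounds $\Prob[E_m]\le 4r\exp(-\lfloor\ell/k\rfloor\alpha^2/8)$, where $E_m$ is the event $\bigl|\frac{1}{|B_m|}\sum_{i\in B_m}w_i\bigr|>\disc_r(Z)+\alpha$. Since every factor lies in $[0,1]$, the product $\prod_m\bigl|\frac{1}{|B_m|}\sum_{i\in B_m}w_i\bigr|$ is at most $(\disc_r(Z)+\alpha)^k$ when no $E_m$ occurs and at most $1$ otherwise, hence is pointwise at most $(\disc_r(Z)+\alpha)^k$ plus the indicator that some $E_m$ occurs; taking expectations and applying the union bound over $m=1,\dots,k$ yields exactly the bound needed above. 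The one genuinely non-routine ingredient is the block-by-block generation of a uniform $k$-subset, together with the verification that it reproduces the uniform distribution on $\binom{[\ell]}{k}$: this is precisely what turns $\Exp_S\prod_{i\in S}w_i$ into an expectation of a product of $k$ block-averages, each amenable to Lemma~\ref{lem:discrepancy-subset}. The remaining manipulations — Fourier inversion, the $[0,1]$ truncation, and the union bounds — are bookkeeping.
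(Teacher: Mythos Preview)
Your proof is correct and follows essentially the same route as the paper: Fourier inversion over $\ZZ_r$ to isolate the $j=0$ term, then a block-by-block generation of the uniform $k$-subset so that the exponential sum factors conditionally, followed by Lemma~\ref{lem:discrepancy-subset} on each block and a union bound. The only cosmetic difference is that the paper takes all $k$ blocks to have the same size $\lfloor\ell/k\rfloor$ (so they need not cover $[\ell]$), whereas you use a full ordered partition with block sizes $q$ or $q+1$; both constructions yield the uniform distribution on $\binom{[\ell]}{k}$ by the same $S_\ell$-invariance argument, and both feed identically into Lemma~\ref{lem:discrepancy-subset} since each block has size at least $\lfloor\ell/k\rfloor$.
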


\begin{proof}
Let $\omega$ be a primitive $r$-th root of unity. Then
\begin{align*}
\Prob_{S\in\binom{[\ell]}{k}}\left[\sum_{i\in S}z_{i}\equiv a\pmod r\right] & =\Exp_{S\in\binom{[\ell]}{k}}\I\left[\sum_{i\in S}z_{i}\equiv a\pmod r\right]\\
 & =\Exp_{S\in\binom{[\ell]}{k}}\frac{1}{r}\sum_{t=0}^{r-1}\omega^{t(\sum_{i\in S}z_{i}-a)}\\
 & =\frac{1}{r}\sum_{t=0}^{r-1}\Exp_{S\in\binom{[\ell]}{k}}\omega^{t(\sum_{i\in S}z_{i}-a)}\\
 & =\frac{1}{r}+\frac{1}{r}\sum_{t=1}^{r-1}\Exp_{S\in\binom{[\ell]}{k}}\omega^{t(\sum_{i\in S}z_{i}-a)}\\
 & =\frac{1}{r}+\frac{1}{r}\sum_{t=1}^{r-1}\Exp_{i_{1},i_{2},\ldots,i_{k}}\omega^{t(z_{i_{1}}+z_{i_{2}}+\cdots+z_{i_{k}}-a)},
\end{align*}
where the final expectation is taken over a uniformly random tuple
of indices $i_{1},i_{2},\ldots,i_{k}\in[\ell]$ that are pairwise
distinct. Therefore,
\begin{align}
\left|\Prob_{S\in\binom{[\ell]}{k}}\left[\sum_{i\in S}z_{i}\equiv a\pmod r\right]-\frac{1}{r}\right| & =\left|\frac{1}{r}\sum_{t=1}^{r-1}\Exp_{i_{1},i_{2},\ldots,i_{k}}\omega^{t(z_{i_{1}}+z_{i_{2}}+\cdots+z_{i_{k}}-a)}\right|\nonumber \\
 & \leq\frac{1}{r}\sum_{t=1}^{r-1}\left|\Exp_{i_{1},i_{2},\ldots,i_{k}}\omega^{t(z_{i_{1}}+z_{i_{2}}+\cdots+z_{i_{k}}-a)}\right|\nonumber \\
 & =\frac{1}{r}\sum_{t=1}^{r-1}\left|\Exp_{i_{1},i_{2},\ldots,i_{k}}\prod_{j=1}^{k}\omega^{tz_{i_{j}}}\right|.\label{eq:prob-generate-a}
\end{align}
We now introduce conditioning to make $i_{1},i_{2},\ldots,i_{k}$
independent random variables. Specifically, $i_{1},i_{2},\ldots,i_{k}$
can be generated by the following two-step procedure:
\begin{enumerate}
\item pick uniformly random sets $S_{1},S_{2},\ldots,S_{k}\in\binom{[\ell]}{\lfloor\ell/k\rfloor}$
that are pairwise disjoint;
\item for $j=1,2,\ldots,k,$ pick $i_{j}$ uniformly at random from among
the elements of $S_{j}$.
\end{enumerate}
By symmetry, this procedure generates every tuple $(i_{1},i_{2},\ldots,i_{k})$
of pairwise distinct integers with equal probability. Importantly,
conditioning on $S_{1},S_{2},\ldots,S_{k}$ makes $i_{1},i_{2},\ldots,i_{k}$
independent. Now~(\ref{eq:prob-generate-a}) gives
\begin{align}
 & \max_{a\in\ZZ}\left|\Prob_{S\in\binom{[\ell]}{k}}\left[\sum_{i\in S}z_{i}\equiv a\pmod r\right]-\frac{1}{r}\right|\nonumber \\
 & \qquad\qquad\leq\frac{1}{r}\sum_{t=1}^{r-1}\left|\Exp_{i_{1},i_{2},\ldots,i_{k}}\prod_{j=1}^{k}\omega^{tz_{i_{j}}}\right|\nonumber \\
 & \qquad\qquad=\frac{1}{r}\sum_{t=1}^{r-1}\left|\Exp_{S_{1},S_{2},\ldots,S_{k}}\prod_{j=1}^{k}\Exp_{i_{j}\in S_{j}}\omega^{tz_{i_{j}}}\right|\nonumber \\
 & \qquad\qquad\leq\frac{1}{r}\sum_{t=1}^{r-1}\Exp_{S_{1},S_{2},\ldots,S_{k}}\prod_{j=1}^{k}\left|\Exp_{i_{j}\in S_{j}}\omega^{tz_{i_{j}}}\right|\nonumber \\
 & \qquad\qquad\leq\frac{1}{r}\sum_{t=1}^{r-1}\Exp_{S_{1},S_{2},\ldots,S_{k}}\prod_{j=1}^{k}\disc_{r}(\{z_{i}:i\in S_{j}\})\nonumber \\
 & \qquad\qquad\leq\Exp_{S_{1},S_{2},\ldots,S_{k}}\prod_{j=1}^{k}\disc_{r}(\{z_{i}:i\in S_{j}\}),\label{eq:equipartition-pre-bad}
\end{align}
where $\{z_{i}:i\in S_{j}\}$ for each $j$ is a multiset of cardinality
$\lfloor\ell/k\rfloor.$

Let $B_{j}$ be the event that $\{z_{i}:i\in S_{j}\}$ has $r$-discrepancy
greater than $\disc_{r}(Z)+\alpha,$ and let $B=B_{1}\vee B_{2}\vee\cdots\vee B_{k}$.
Conditioned on $B,$ we get $\prod_{j}\disc_{r}(\{z_{i}:i\in S_{j}\})\leq1$
since $r$-discrepancy is at most $1.$ Conditioned on $\overline{B},$
we have by definition that $\prod_{j}\disc_{r}(\{z_{i}:i\in S_{j}\})\leq(\disc_{r}(Z)+\alpha)^{k}.$
Thus,
\begin{align}
\Exp_{S_{1},S_{2},\ldots,S_{k}}\prod_{j=1}^{k}\disc_{r}(\{z_{i} & :i\in S_{j}\})\leq\Prob_{S_{1},S_{2},\ldots,S_{k}}[B]+(\disc_{r}(Z)+\alpha)^{k}.\label{eq:exp-bad-event}
\end{align}
Recall that $S_{1},S_{2},\ldots,S_{k}$ are identically distributed,
namely, each $S_{j}$ has the distribution of a uniformly random subset
of $[\ell]$ of cardinality $\lfloor\ell/k\rfloor.$ As a result,
Lemma~\ref{lem:discrepancy-subset} guarantees that $B_{j}$ occurs
with probability at most $4r\exp(-\lfloor\ell/k\rfloor\alpha^{2}/8)$.
Applying the union bound over all $j,$ 
\begin{equation}
\Prob_{S_{1},S_{2},\ldots,S_{k}}[B]\leq4rk\exp\left(-\frac{\lfloor\ell/k\rfloor\alpha^{2}}{8}\right).\label{eq:bad-partition}
\end{equation}
Combining~(\ref{eq:equipartition-pre-bad})\textendash (\ref{eq:bad-partition})
concludes the proof.
\end{proof}
We are now in a position to give our general transformation of a low-discrepancy
integer set into a balanced coloring of $\binom{[n]}{k}.$
\begin{thm}[From a low-discrepancy set to a balanced coloring]
\label{thm:discrepancy-to-coloring}Let $n,m,k,r$ be integers with
$n\geq m\geq k\geq1$ and $r\geq2.$ Let $Z=\{z_{1},z_{2},\ldots,z_{n}\}$
be a multiset of integers. Define $\gamma\colon\binom{[n]}{k}\to[r]$
by
\begin{equation}
\gamma(S)=1+\left(\left(\sum_{i\in S}z_{i}\right)\bmod r\right).\label{eq:gamma-defined}
\end{equation}
Let $\beta,\zeta\in[0,1]$ be arbitrary. Then $\gamma$ is $(\epsilon,\delta,m)$-balanced,
where 
\begin{align*}
\epsilon & =4r^{2}k\exp\left(-\frac{\lfloor m/k\rfloor\zeta^{2}}{8}\right)+r(\disc_{r}(Z)+\beta+\zeta)^{k},\\
\delta & =4r\exp\left(-\frac{m\beta^{2}}{8}\right).
\end{align*}
\end{thm}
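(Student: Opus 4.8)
The plan is to stitch together the two lemmas just established. Lemma~\ref{lem:equipartition} controls how nearly uniform the residue $\sum_{i\in S}z_i \bmod r$ is when $S$ is a random $k$-subset of a \emph{fixed} $\ell$-element multiset, in terms of that multiset's $r$-discrepancy; Lemma~\ref{lem:discrepancy-subset} says that passing to a uniformly random large sub-multiset cannot increase the $r$-discrepancy by much. First I would unwind the definition of $\epsilon$-balancedness. Since $\bigl|\gamma^{-1}(i)\cap\binom{A}{k}\bigr| = \binom{|A|}{k}\cdot\Prob_{S\in\binom{A}{k}}[\gamma(S)=i]$, and $\gamma(S)=i$ iff $\sum_{i'\in S}z_{i'}\equiv i-1\pmod r$ with $i-1$ ranging over a complete residue system as $i$ ranges over $[r]$, the coloring $\gamma$ is $\epsilon$-balanced on a set $A$ precisely when
\[
\max_{a\in\ZZ}\left|\Prob_{S\in\binom{A}{k}}\Bigl[\sum_{i\in S}z_i\equiv a\pmod r\Bigr]-\frac1r\right|\leq\frac{\epsilon}{r}.
\]

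Next I would fix an arbitrary $\ell\in\{m,m+1,\ldots,n\}$ and draw $A\in\binom{[n]}{\ell}$ uniformly at random. Applying Lemma~\ref{lem:discrepancy-subset} with $t=\ell$, $M=r$, and parameter $\beta$, we get that with probability at least $1-4r\exp(-\ell\beta^2/8)\geq 1-4r\exp(-m\beta^2/8)=1-\delta$ — the inequality using $\ell\geq m$ and monotonicity of $x\mapsto\exp(-x\beta^2/8)$ — the sampled multiset satisfies $\disc_r(\{z_i:i\in A\})\leq\disc_r(Z)+\beta$. I would condition on this event.

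On that event, relabel the elements of $A$ as $a_1<a_2<\cdots<a_\ell$ so that Lemma~\ref{lem:equipartition} (which is stated for an indexed multiset $\{z_1,\dots,z_\ell\}$) applies to $\{z_{a_1},\ldots,z_{a_\ell}\}$, a multiset of cardinality $\ell\geq k$, with parameter $\alpha=\zeta$. This bounds the displayed quantity above by
\[
4rk\exp\left(-\frac{\lfloor\ell/k\rfloor\zeta^2}{8}\right)+\bigl(\disc_r(\{z_i:i\in A\})+\zeta\bigr)^k
\leq 4rk\exp\left(-\frac{\lfloor m/k\rfloor\zeta^2}{8}\right)+\bigl(\disc_r(Z)+\beta+\zeta\bigr)^k,
\]
where the final step uses $\lfloor\ell/k\rfloor\geq\lfloor m/k\rfloor$ together with the discrepancy bound from the previous paragraph. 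By the definition of $\epsilon$ in the statement, the right-hand side equals $\epsilon/r$, so $\gamma$ is $\epsilon$-balanced on $A$. Hence $\Prob_{A\in\binom{[n]}{\ell}}[\gamma\text{ is }\epsilon\text{-balanced on }A]\geq 1-\delta$ for every $\ell\in\{m,\ldots,n\}$, which is exactly the definition of $(\epsilon,\delta,m)$-balancedness.

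I do not anticipate a serious obstacle: both lemmas do the real work, and the only points to watch are the harmless relabelling of $A$'s elements needed to invoke Lemma~\ref{lem:equipartition}, and the two monotonicity facts $4r\exp(-\ell\beta^2/8)\leq 4r\exp(-m\beta^2/8)$ and $\lfloor\ell/k\rfloor\geq\lfloor m/k\rfloor$ that let the bound be made uniform in $\ell$ by replacing $\ell$ with its lower bound $m$. One should also record that the hypotheses $\beta,\zeta\in[0,1]$, $r\geq2$, and $\ell\geq m\geq k\geq1$ are precisely what the two lemmas demand, so no extra assumptions are introduced.
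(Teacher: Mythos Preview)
Your proposal is correct and follows essentially the same approach as the paper: fix $\ell\in\{m,\ldots,n\}$, apply Lemma~\ref{lem:discrepancy-subset} to conclude that all but a $\delta$-fraction of $A\in\binom{[n]}{\ell}$ satisfy $\disc_r(\{z_i:i\in A\})\leq\disc_r(Z)+\beta$, then apply Lemma~\ref{lem:equipartition} to any such $A$ and use $\ell\geq m$ to make the bound uniform. The paper's writeup is organized the same way, including the reformulation of $\epsilon$-balancedness in terms of residue probabilities and the two monotonicity steps you flag.
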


\begin{proof}
Let $\ell\in\{m,m+1,\ldots,n\}$ be arbitrary. Then Lemma~\ref{lem:discrepancy-subset}
implies that for all but a $\delta$ fraction of the sets $A\in\binom{[n]}{\ell},$
\begin{equation}
\disc_{r}(\{z_{i}:i\in A\})\leq\disc_{r}(Z)+\beta.\label{eq:disc-similar}
\end{equation}
It remains to prove that $\gamma$ is $\epsilon$-balanced on every
set $A\in\binom{[n]}{\ell}$ that satisfies~(\ref{eq:disc-similar}).
We have
\begin{align*}
 & \max_{a\in[r]}\left|\frac{|\gamma^{-1}(a)\cap\binom{A}{k}|}{\binom{|A|}{k}}-\frac{1}{r}\right|\\
 & \qquad\qquad=\max_{a\in[r]}\left|\Prob_{S\in\binom{A}{k}}[\gamma(S)=a]-\frac{1}{r}\right|\\
 & \qquad\qquad=\max_{a\in\ZZ}\left|\Prob_{S\in\binom{A}{k}}\left[\sum_{i\in S}z_{i}\equiv a\pmod r\right]-\frac{1}{r}\right|\\
 & \qquad\qquad\leq4rk\exp\left(-\frac{\lfloor\ell/k\rfloor\zeta^{2}}{8}\right)+(\disc_{r}(\{z_{i}:i\in A\})+\zeta)^{k}\\
 & \qquad\qquad\leq4rk\exp\left(-\frac{\lfloor m/k\rfloor\zeta^{2}}{8}\right)+(\disc_{r}(Z)+\beta+\zeta)^{k}\\
 & \qquad\qquad=\frac{\epsilon}{r},
\end{align*}
where the second step uses the definition of $\gamma,$ the third
step applies Lemma~\ref{lem:equipartition}, the fourth step uses~(\ref{eq:disc-similar})
and~$\ell\geq m$, and the fifth step uses the definition of $\epsilon$.
We have shown that $\gamma$ is $\epsilon$-balanced on $A$, thereby
completing the proof.
\end{proof}

\subsection{\label{subsec:An-explicit-balanced}An explicit balanced coloring}

Theorem~\ref{thm:discrepancy-to-coloring} transforms any integer
set with small $r$-discrepancy into a balanced coloring with $r$
colors. We now apply this transformation to the low-discrepancy integer
set constructed earlier, resulting in an explicit balanced coloring.
\begin{thm}[Explicit balanced coloring]
\label{thm:explicit-coloring} Let $n,m,k,r$ be integers with $n/2\geq m\geq k\geq1$
and $r\geq2.$ Let $\beta,\zeta\in[0,1]$ be arbitrary. Then there
is an $($explicitly given$)$ integer $n'\in(n/2,n]$ and an $($explicitly
given$)$ $(\epsilon,\delta,m)$-balanced coloring $\gamma\colon\binom{[n']}{k}\to[r],$
where 
\begin{align}
\epsilon & =4r^{2}k\exp\left(-\frac{\lfloor m/k\rfloor\zeta^{2}}{8}\right)+r\left(\frac{C^{*}\log n}{n^{1/4}}\cdot\frac{\log r}{1+\log\log r}+\beta+\zeta\right)^{k},\label{eq:epsilon-explicit-coloring}\\
\delta & =4r\exp\left(-\frac{m\beta^{2}}{8}\right),\label{eq:delta-explicit-coloring}
\end{align}
and $C^{*}\geq1$ is the absolute constant from Theorem~\emph{\ref{thm:explicit-low-discrepancy-set}}.
\end{thm}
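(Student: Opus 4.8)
The plan is to compose the explicit low-discrepancy set of Theorem~\ref{thm:explicit-low-discrepancy-set} with the discrepancy-to-coloring transformation of Theorem~\ref{thm:discrepancy-to-coloring}; the remaining work consists of aligning parameter ranges. First, I would apply Theorem~\ref{thm:explicit-low-discrepancy-set} with modulus $M=r$ and parameter $t=n$. The hypotheses $M\geq2$ and $t\geq2$ hold because $r\geq2$ and $n\geq2k\geq2$, so we obtain an explicit nonempty set $S\subseteq\{1,2,\ldots,r\}$ with $|S|\leq n$ and $\disc_r(S)\leq\frac{C^{*}\log n}{n^{1/4}}\cdot\frac{\log r}{1+\log\log r}$, where $C^{*}$ is precisely the absolute constant named in the statement of the present theorem.

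Next, I would pad $S$ into a multiset of a convenient size without enlarging its discrepancy. Writing $s=|S|\geq1$, set $n'=s\lfloor n/s\rfloor$ if $s\leq n/2$, and $n'=s$ otherwise. In either case $n'$ is a positive multiple of $s$ with $n'\in(n/2,n]$: if $s\leq n/2$ then $n'\leq n$ and $n'>n-s\geq n/2$, while if $s>n/2$ then $n'=s\in(n/2,n]$. Let $Z=\{z_1,z_2,\ldots,z_{n'}\}$ be the multiset containing each element of $S$ with multiplicity $q:=n'/s$. Then $|Z|=n'$, and since $\frac{1}{|Z|}\sum_{j=1}^{n'}\omega^{z_j}=\frac{q}{q|S|}\sum_{z\in S}\omega^{z}=\frac{1}{|S|}\sum_{z\in S}\omega^{z}$ for every root of unity $\omega$, we get $\disc_r(Z)=\disc_r(S)$.

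Finally, I would invoke Theorem~\ref{thm:discrepancy-to-coloring} with the integers $n',m,k,r$, the multiset $Z$, and the given $\beta,\zeta$; its hypotheses $n'\geq m\geq k\geq1$ and $r\geq2$ hold because $n'>n/2\geq m$. This produces the coloring $\gamma\colon\binom{[n']}{k}\to[r]$ given by $\gamma(S')=1+\left(\left(\sum_{i\in S'}z_i\right)\bmod r\right)$ and asserts that it is $(\epsilon',\delta,m)$-balanced, where $\delta$ is as in~\eqref{eq:delta-explicit-coloring} and $\epsilon'=4r^{2}k\exp(-\lfloor m/k\rfloor\zeta^{2}/8)+r(\disc_r(Z)+\beta+\zeta)^{k}$. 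Substituting $\disc_r(Z)=\disc_r(S)\leq\frac{C^{*}\log n}{n^{1/4}}\cdot\frac{\log r}{1+\log\log r}$ and using that $\epsilon'$ is nondecreasing in $\disc_r(Z)$ yields $\epsilon'\leq\epsilon$, with $\epsilon$ as in~\eqref{eq:epsilon-explicit-coloring}. Since any $(\epsilon',\delta,m)$-balanced coloring is a fortiori $(\epsilon,\delta,m)$-balanced once $\epsilon'\leq\epsilon$ (the balance interval $[\frac{1-\epsilon'}{r},\frac{1+\epsilon'}{r}]$ only widens as the parameter grows), the coloring $\gamma$ has all the required properties, and $S$, $n'$, $Z$, and $\gamma$ are explicit throughout. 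I do not anticipate a genuine obstacle: the one point that needs care is the choice of $n'$, which must be slipped into $(n/2,n]$ while staying a multiple of $|S|$ so that the discrepancy is transferred exactly, and it is precisely the latitude to take $n'\neq n$ in the theorem statement that makes this possible.
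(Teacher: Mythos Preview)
Your proposal is correct and follows essentially the same approach as the paper: apply Theorem~\ref{thm:explicit-low-discrepancy-set} with $M=r$ and $t=n$, replicate the resulting set $S$ to form a multiset $Z$ of size $n'=|S|\lfloor n/|S|\rfloor\in(n/2,n]$ (your case split yields the same $n'$, since $\lfloor n/s\rfloor=1$ when $s>n/2$), and then invoke Theorem~\ref{thm:discrepancy-to-coloring}. The paper verifies $n'>n/2$ via the arithmetic mean of $n'\geq|S|$ and $n'>n-|S|$ rather than by cases, and leaves implicit the monotonicity step $\epsilon'\leq\epsilon$ that you spell out, but otherwise the arguments coincide.
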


\begin{proof}
By hypothesis, $n\geq2.$ Invoke Theorem~\ref{thm:explicit-low-discrepancy-set}
with $M=r$ and $t=n$ to obtain an explicit nonempty set $S\subseteq\{1,2,\ldots,r\}$
with
\[
|S|\leq n,
\]
\[
\disc_{r}(S)\leq\frac{C^{*}\log n}{n^{1/4}}\cdot\frac{\log r}{1+\log\log r}.
\]
Let $Z$ be the union of $\lfloor n/|S|\rfloor$ copies of $S.$ Then
$\disc_{r}(Z)=\disc_{r}(S)$ by the definition of $r$-discrepancy.
Letting $n'=|Z|,$ we claim that $n'\in(n/2,n].$ Indeed, the upper
bound is justified by $n'=|S|\cdot\lfloor n/|S|\rfloor\leq n,$ whereas
the lower bound is the arithmetic mean of the bounds $n'\geq|S|$
and $n'>|S|(n/|S|-1).$

Now, let $z_{1},z_{2},\ldots,z_{n'}$ be the elements of $Z$ and
define $\gamma\colon\binom{[n']}{k}\to[r]$ by (\ref{eq:gamma-defined}).
Then Theorem~\ref{thm:discrepancy-to-coloring} implies that $\gamma$
is $(\epsilon,\delta,m)$-balanced with $\epsilon,\delta$ given by~(\ref{eq:epsilon-explicit-coloring})
and~(\ref{eq:delta-explicit-coloring}), respectively.
\end{proof}
Taking $\beta=\zeta=m^{-1/4}$ in Theorem~\ref{thm:explicit-coloring},
we obtain:
\begin{cor}[Explicit balanced coloring]
\label{cor:explicit-coloring} Let $n,m,k,r$ be integers with $n/2\geq m\geq k\geq1$
and $r\geq2.$ Then there is an $($explicitly given$)$ integer $n'\in(n/2,n]$
and an $($explicitly given$)$ $(\epsilon,\delta,m)$-balanced coloring
$\gamma\colon\binom{[n']}{k}\to[r],$ where 
\begin{align}
\epsilon & =4r^{2}k\exp\left(-\frac{\sqrt{m}}{16k}\right)+r\left(\frac{3C^{*}\log^{2}(n+r)}{m^{1/4}}\right)^{k},\label{eq:epsilon-explicit-coloring-1}\\
\delta & =4r\exp\left(-\frac{\sqrt{m}}{8}\right),\label{eq:delta-explicit-coloring-1}
\end{align}
and $C^{*}\geq1$ is the absolute constant from Theorem~\emph{\ref{thm:explicit-low-discrepancy-set}}.
\end{cor}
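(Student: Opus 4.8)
The plan is to invoke Theorem~\ref{thm:explicit-coloring} directly with the specific choice $\beta=\zeta=m^{-1/4}$, which is legitimate since $m\geq1$ forces $m^{-1/4}\in(0,1]$, and then simplify the two resulting parameter expressions into the cleaner forms claimed. The integer $n'\in(n/2,n]$ and the explicitness of the coloring $\gamma\colon\binom{[n']}{k}\to[r]$ are inherited verbatim from Theorem~\ref{thm:explicit-coloring}, so the only work left is estimation of $\epsilon$ and $\delta$.

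For $\delta$ the simplification is immediate: substituting $\beta^{2}=m^{-1/2}$ into $\delta=4r\exp(-m\beta^{2}/8)$ from~(\ref{eq:delta-explicit-coloring}) gives exactly $\delta=4r\exp(-\sqrt{m}/8)$, which is~(\ref{eq:delta-explicit-coloring-1}). For $\epsilon$ I would treat the two summands of~(\ref{eq:epsilon-explicit-coloring}) separately. In the first summand $4r^{2}k\exp(-\lfloor m/k\rfloor\zeta^{2}/8)$, substitute $\zeta^{2}=m^{-1/2}$ and use the elementary fact that $\lfloor x\rfloor\geq x/2$ for every real $x\geq1$ (indeed $\lfloor x\rfloor=1\geq x/2$ when $1\leq x<2$, and $\lfloor x\rfloor\geq x-1\geq x/2$ when $x\geq2$); since $m\geq k\geq1$ we have $m/k\geq1$, hence $\lfloor m/k\rfloor\geq m/(2k)$, which gives $\exp(-\lfloor m/k\rfloor m^{-1/2}/8)\leq\exp(-\sqrt{m}/(16k))$ and matches the first term of~(\ref{eq:epsilon-explicit-coloring-1}).

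For the second summand $r\left(\frac{C^{*}\log n}{n^{1/4}}\cdot\frac{\log r}{1+\log\log r}+2m^{-1/4}\right)^{k}$, I would bound the quantity inside the parentheses by $\frac{3C^{*}\log^{2}(n+r)}{m^{1/4}}$ using three elementary observations: first, $n/2\geq m$ gives $n^{-1/4}\leq m^{-1/4}$; second, $\frac{\log r}{1+\log\log r}\leq\log r\leq\log(n+r)$ and $\log n\leq\log(n+r)$, so the first term is at most $\frac{C^{*}\log^{2}(n+r)}{m^{1/4}}$; and third, $C^{*}\geq1$ together with $\log(n+r)\geq\log 4\geq1$ (valid since $n,r\geq2$) gives $2m^{-1/4}\leq\frac{2C^{*}\log^{2}(n+r)}{m^{1/4}}$. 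Adding these and raising to the $k$-th power, which preserves the inequality since all quantities are nonnegative, yields the second term of~(\ref{eq:epsilon-explicit-coloring-1}).

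There is no substantive obstacle here; the statement is a clean specialization of Theorem~\ref{thm:explicit-coloring}. The only points requiring a moment's care are the floor-function estimate $\lfloor m/k\rfloor\geq m/(2k)$, which must be checked over the full range $m/k\geq1$ as indicated above, and the bookkeeping of which logarithmic factor dominates which in the second summand, both of which are entirely routine.
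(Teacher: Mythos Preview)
Your proposal is correct and matches the paper's approach exactly: the paper simply states ``Taking $\beta=\zeta=m^{-1/4}$ in Theorem~\ref{thm:explicit-coloring}'' without spelling out the arithmetic, and you have filled in precisely those routine estimates. The floor bound $\lfloor m/k\rfloor\geq m/(2k)$ and the logarithmic bookkeeping are handled correctly, and the implicit monotonicity (an $(\epsilon',\delta,m)$-balanced coloring with $\epsilon'\leq\epsilon$ is automatically $(\epsilon,\delta,m)$-balanced) justifies replacing the Theorem~\ref{thm:explicit-coloring} bound by the larger claimed value.
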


\noindent The parameters in Corollary~\ref{cor:explicit-coloring}
generously meet our requirements. In our setting of interest, the
integers $n,m,r$ are polynomially related. Thus, we obtain an $(m^{-K},m^{-K},m)$-balanced
coloring for any desired constant $K\geq1$ by invoking Corollary~\ref{cor:explicit-coloring}
with a large enough constant $k=k(K)$.

\section{\label{sec:hardness-amplification}Hardness amplification}

In Section~\ref{sec:Balanced-colorings}, we laid the foundation
for our main result by constructing an explicit integer set with small
discrepancy and transforming it into a highly balanced coloring of
$\binom{[n]}{k}$. In this section, we use this coloring to design
a hardness amplification method for approximate degree and its one-sided
variant.

\subsection{Pseudodistributions from balanced colorings}

Recall from the introduction that our approach centers around encoding
the vectors $e_{1},e_{2},\ldots,e_{N},0^{N}$ as $n$-bit strings
with $n\ll N$ so as to make the decoding easy for circuits but hard
for low-degree polynomials. The construction of this code requires
several steps. As a first step, we show how to convert any balanced
coloring of $\binom{[n]}{k}$ with $r$ colors into an explicit sequence
of functions $\phi_{1},\phi_{2},\ldots,\phi_{r}\colon\zoon\to\Re$
that are almost everywhere nonnegative, are supported almost entirely
on pairwise disjoint sets of strings of Hamming weight $k$, and are
pairwise indistinguishable by low-degree polynomials. We call them
\emph{pseudodistributions} to highlight the fact that each $\phi_{i}$
has $\ell_{1}$ norm approximately~$1$, nearly all of it coming
from the points where $\phi_{i}$ is nonnegative.
\begin{thm}
\label{thm:encoding-1-r}Let $\epsilon,\delta\in[0,1)$ be given.
Let $n,m,k,r$ be positive integers with $n\geq m>k$. Let $\gamma\colon\binom{[n]}{k}\to[r]$
be a given $(\epsilon,\delta,m)$-balanced coloring. Then there are
$($explicitly given$)$ functions $\phi_{1},\phi_{2},\ldots,\phi_{r}\colon\zoon\to\Re$
with the following properties.
\begin{enumerate}[topsep=3mm,labelsep=3mm,parsep=3mm,label=\textup{(\roman*)}]
\item \textbf{\label{enu:Support}Support:} $\supp\phi_{i}\subseteq\{x\in\zoon:|x|=k\text{ or }|x|\geq m\};$
\item \textbf{\label{enu:Essential-support}Essential support: $\zoon|_{k}\cap\supp\phi_{i}=\{\1_{S}:S\in\gamma^{-1}(i)\};$}
\item \textbf{\label{enu:Nonnegativity}Nonnegativity:} $\phi_{i}\geq0$
on $\zoon|_{k};$
\item \textbf{\label{enu:Normalization}Normalization:} $\sum_{x:|x|=k}\phi_{i}(x)=1;$
\item \textbf{\label{enu:Tail-bound}Tail bound:} $\sum_{x:|x|\ne k}|\phi_{i}(x)|\leq(8\epsilon+4r\delta)/(1-\epsilon);$
\item \textbf{\label{enu:Pointwise-bound}Graded bound: }for some absolute
constant $c'\in(0,1),$
\begin{align*}
\sum_{x:|x|=\ell}|\phi_{i}(x)| & \leq\frac{\epsilon+r\delta}{1-\epsilon}\cdot\frac{m^{2}}{c'\ell^{2}}\cdot\exp\left(-\frac{c'(\ell-k)}{\sqrt{nm}}\right), &  & \ell>k;
\end{align*}
\item \textbf{\label{enu:Orthogonality}Orthogonality: }for some absolute
constant $c''\in(0,1),$
\begin{align*}
 & \orth(\phi_{i}-\phi_{j})\ge c''\sqrt{\frac{n}{m}}, &  & i,j\in[r].
\end{align*}
\end{enumerate}
\end{thm}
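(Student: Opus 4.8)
The plan is to realize each $\phi_i$ as a renormalized, corrected average over the color class $\gamma^{-1}(i)$ of \emph{shifted} dual polynomials for the $\OR$ function, engineered so that the Hamming-weight band strictly between $k$ and $m$ carries no mass at all.

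\emph{Building the gadget.} Put $g:=m-k\ge 1$. Apply Lemma~\ref{lem:SW-OR-dual} with error parameter $1/2$ and $\lfloor(n-k)/g\rfloor$ variables to obtain a univariate $\OR$ dual $\tilde\omega$, then \emph{dilate} it by $g$: set $\omega(t):=\tilde\omega(t/g)$ when $g\mid t$ and $\omega(t):=0$ otherwise, for $t\in\{0,1,\dots,n-k\}$. Precomposing a univariate polynomial with $u\mapsto gu$ preserves degree, so $\orth\omega\ge\orth\tilde\omega\ge c''\sqrt{n/m}$ for an absolute constant $c''>0$; moreover $\omega(0)>\tfrac14\|\omega\|_1$, the support of $\omega$ lies in the multiples of $g$, and $|\omega(t)|\le\tfrac{1}{c'}\cdot\tfrac{m^{2}}{t^{2}}\cdot 2^{-c't/\sqrt{nm}}\cdot\|\omega\|_1$ for $t\ge 1$ (some absolute $c'\in(0,1)$), all by Lemma~\ref{lem:SW-OR-dual} after the estimate $g\sqrt{\lfloor(n-k)/g\rfloor}\le\sqrt{(m-k)(n-k)}\le\sqrt{nm}$. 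For $S\in\binom{[n]}{k}$ define
\[
\omega_{\1_S}(x):=\delta_{x|_{S},\,1^{S}}\cdot\frac{\omega(|x|-k)}{\binom{n-k}{\,|x|-k\,}},
\]
so that $\omega_{\1_S}$ is supported on $\{x\ge\1_S:\ g\mid(|x|-k)\}$, carries a constant fraction of its $\ell_1$ mass at $\1_S$ itself (weight $k$) and the rest on weights $k+g,k+2g,\dots$, i.e.\ on $\{|x|=k\}\cup\{|x|\ge m\}$. Since restricting a polynomial to the subcube $\{x|_{S}=1^{S}\}$ does not raise its degree, Minsky--Papert symmetrization (Proposition~\ref{prop:minsky-papert}) shows $\omega_{\1_S}$ inherits $\orth\omega_{\1_S}\ge\orth\omega$.

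\emph{Averaging and correcting.} Set $\Phi_i:=\frac{1}{|\gamma^{-1}(i)|}\sum_{S\in\gamma^{-1}(i)}\omega_{\1_S}$. Expanding, for $x$ with support $A:=\{j:x_j=1\}$,
\[
\Phi_i(x)=\frac{\bigl|\gamma^{-1}(i)\cap\binom{A}{k}\bigr|}{|\gamma^{-1}(i)|}\cdot\frac{\omega(|x|-k)}{\binom{n-k}{\,|x|-k\,}},
\]
which at weight $k$ equals $\omega(0)/|\gamma^{-1}(i)|$ on exactly $\{\1_S:S\in\gamma^{-1}(i)\}$ and vanishes on the remaining weight-$k$ strings; together with $\orth\Phi_i\ge\min_S\orth\omega_{\1_S}\ge\orth\omega$ (Proposition~\ref{prop:orth}\ref{item:orth-sum}) this already gives~\ref{enu:Support}, \ref{enu:Essential-support}, \ref{enu:Nonnegativity}, and --- dividing by the common value $\sum_{x:|x|=k}\Phi_i(x)=\omega(0)$ --- the normalization~\ref{enu:Normalization}. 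Crucially, the factor $\omega(|x|-k)/\binom{n-k}{|x|-k}$ does not depend on $i$, and $\Phi_i(x)\ne 0$ with $|x|>k$ forces $|A|=|x|\ge k+g=m$; hence $(\epsilon,\delta,m)$-balancedness applies to $A$, and together with balancedness on $[n]$ itself (legitimate since $\delta<1$) it yields, for all but a $\delta$-fraction of the sets $A$ of a given size $\ge m$,
\[
\left|\frac{\bigl|\gamma^{-1}(i)\cap\binom{A}{k}\bigr|}{|\gamma^{-1}(i)|}-\frac{\binom{|A|}{k}}{\binom{n}{k}}\right|\le\frac{2\epsilon}{1-\epsilon}\cdot\frac{\binom{|A|}{k}}{\binom{n}{k}},
\]
with the crude bound $\tfrac{r}{1-\epsilon}\cdot\binom{|A|}{k}\big/\binom{n}{k}$ on the exceptional sets. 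Now subtract the $i$-independent ``common part''
\[
\Psi(x):=\frac{\binom{|x|}{k}}{\binom{n}{k}}\cdot\frac{\omega(|x|-k)}{\binom{n-k}{\,|x|-k\,}}\ \ (|x|>k),\qquad \Psi(x):=0\ \ (|x|\le k),
\]
and set $\widetilde{\Phi_i}:=\Phi_i-\Psi$, $\phi_i:=\widetilde{\Phi_i}/\omega(0)$. Since $\Psi$ vanishes on weight $k$, $\phi_i$ agrees with $\omega(0)^{-1}\Phi_i$ there, preserving~\ref{enu:Essential-support}--\ref{enu:Normalization}; since $\Psi$ is independent of $i$, $\phi_i-\phi_j=\omega(0)^{-1}(\Phi_i-\Phi_j)$, so $\orth(\phi_i-\phi_j)=\orth(\Phi_i-\Phi_j)\ge\min\{\orth\Phi_i,\orth\Phi_j\}\ge\orth\omega\ge c''\sqrt{n/m}$ by Proposition~\ref{prop:orth}\ref{item:orth-sum}, which is~\ref{enu:Orthogonality}; and $\supp\Psi\subseteq\{x:|x|\ge m\}$, so~\ref{enu:Support} is unaffected. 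Finally, the displayed deviation estimate bounds $\sum_{x:|x|=k+t}|\widetilde{\Phi_i}(x)|$ by $\tfrac{2\epsilon+r\delta}{1-\epsilon}\,|\omega(t)|$ (using $\binom{n}{k+t}\binom{k+t}{k}=\binom{n}{k}\binom{n-k}{t}$); summing over $t\ge 1$ and dividing by $\omega(0)>\tfrac14\|\omega\|_1$ gives the tail bound~\ref{enu:Tail-bound}, and combining the same per-level bound with the decay estimate on $|\omega(t)|$ yields the graded bound~\ref{enu:Pointwise-bound} after a routine binomial manipulation.

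\emph{The main obstacle.} Two requirements must be met at once. First, the mass of the gadget off weight $k$ is a \emph{constant} multiple of $\|\omega\|_1$, not a negligible one, so it must be essentially cancelled --- yet on the band $(k,m)$ no $i$-independent correction can help, since balancedness says nothing about sets of size below $m$; this is precisely why the $\OR$ dual is dilated, so that each $\Phi_i$ is identically zero on that band and only weights $\ge m$ survive. Second, the correction $\Psi$ must be $i$-independent and supported off weight $k$, so that orthogonality of the \emph{differences} $\phi_i-\phi_j$ and nonnegativity of $\phi_i$ on weight $k$ both persist while the residual (weight $\ge m$) mass is driven down from $\Theta(\|\omega\|_1)$ to an $O(\epsilon+r\delta)$ fraction via the balancedness estimate above. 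Getting the dilation factor $g=m-k$, the surviving orthogonality $c''\sqrt{n/m}$, and the decay scale $\sqrt{nm}$ to line up is the quantitative heart of the argument; everything else is bookkeeping with Proposition~\ref{prop:orth} and the estimates of Lemma~\ref{lem:SW-OR-dual}.
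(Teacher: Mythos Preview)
Your proposal is correct and follows essentially the same approach as the paper. The paper defines $\phi_S$ exactly as your $\omega(0)^{-1}\omega_{\1_S}$ (with the same dilation by $\Delta=m-k$ of the univariate $\OR$ dual), sets $\phi_i(x)=\Exp_{S\in\gamma^{-1}(i)}\phi_S(x)-\I[|x|\ge m]\,\Exp_{S\in\binom{[n]}{k}}\phi_S(x)$, and then proves the same key deviation estimate you state (their Claim~4.2 gives the expectation bound $\tfrac{2\epsilon+r\delta}{1-\epsilon}\binom{\ell}{k}/\binom{n}{k}$) via the same case split on whether $\gamma$ is $\epsilon$-balanced on $A$; your $\Psi$ is precisely their subtracted term rewritten as a symmetric function of $|x|$ (since $\Exp_{S\in\binom{[n]}{k}}\I[S\subseteq A]=\binom{|A|}{k}/\binom{n}{k}$), and all remaining steps---the binomial identity $\binom{n}{\ell}\binom{\ell}{k}=\binom{n}{k}\binom{n-k}{\ell-k}$, Minsky--Papert for $\orth\omega_{\1_S}$, and the decay bookkeeping---coincide with theirs.
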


\begin{proof}
Define
\begin{align}
\Delta & =m-k,\label{eq:Delta}\\
D & =\left\lfloor \frac{n-k}{\Delta}\right\rfloor .\label{eq:D}
\end{align}
Setting $\epsilon=1/2$ in Lemma~\ref{lem:SW-OR-dual} gives an explicit
function $\omega\colon\{0,1,2,\ldots,D\}\to\Re$ with
\begin{align}
 & \omega(0)>\frac{1}{4}\|\omega\|_{1},\label{eq:construction-omega-at-zero}\\
 & |\omega(t)|\leq\frac{1}{ct^{2}\,2^{ct/\sqrt{D}}}\cdot\|\omega\|_{1} &  & (t=1,2,\ldots,D),\label{eq:construction-omega-upper-bound}\\
 & \orth\omega\geq c\sqrt{D},\label{eq:construction-omega-orthog}
\end{align}
where $0<c<1$ is an absolute constant. For convenience of notation,
we will extend $\omega$ to all of $\Re$ by setting $\omega(t)=0$
for $t\notin\{0,1,2,\ldots,D\}.$ With this extension,~(\ref{eq:construction-omega-upper-bound})
gives
\begin{align}
|\omega(t)| & \leq\frac{1}{ct^{2}\,2^{ct/\sqrt{D}}}\cdot\|\omega\|_{1}, &  & t\in[1,\infty).\label{eq:omega-construction-upper-extended}
\end{align}

For $S\in\binom{[n]}{k},$ define an auxiliary dual object $\phi_{S}\colon\zoon\to\Re$
by
\begin{equation}
\phi_{S}(x)=\binom{n-k}{|x|-k}^{-1}\omega(0)^{-1}\omega\left(\frac{|x|-k}{\Delta}\right)\prod_{i\in S}x_{i}.\label{eq:phi-S-definition}
\end{equation}
Then
\begin{align}
 & \phi_{S}(\1_{S})=1, &  & S\in\binom{[n]}{k}.\label{eq:phi-S-1-S}
\end{align}
Since $\phi_{S}(x)=0$ unless $x|_{S}=1^{k},$ we see that $\1_{S}$
is in fact the only input of Hamming weight $k$ at which $\phi_{S}$
is nonzero:
\begin{align}
\phi_{S}(\1_{T}) & =\delta_{S,T}, &  & S,T\in\binom{[n]}{k}.\label{eq:phi_S-1-T}
\end{align}
Since $\supp\omega\subseteq\{0,1,2,\ldots,D\},$ the only inputs $x$
other than $\1_{S}$ in the support of $\phi_{S}$ have Hamming weight
$|x|\in\{$$k+\Delta,k+2\Delta,\ldots,k+D\Delta\},$ so that in particular
$|x|\geq m.$ In summary,
\begin{align}
\supp\phi_{S} & \subseteq\{x:x=\1_{S}\text{ or }|x|\geq m\}, &  & S\in\binom{[n]}{k},\label{eq:phi-S-supp}\\
\supp\phi_{S} & \subseteq\bigcup_{i=0}^{D}\{x:|x|=k+i\Delta\}, &  & S\in\binom{[n]}{k}.\label{eq:phi-i-supp-multiples-of-Delta}
\end{align}

We now turn to the construction of the $\phi_{i}$. By definition
of an $(\epsilon,\delta,m)$-balanced coloring, the given coloring
$\gamma\colon\binom{[n]}{k}\to[r]$ satisfies
\begin{multline}
\Prob_{A\in\binom{[n]}{\ell}}\left[\frac{1-\epsilon}{r}\binom{|A|}{k}\leq\left|\gamma^{-1}(i)\cap\binom{A}{k}\right|\leq\frac{1+\epsilon}{r}\binom{|A|}{k}\right]\geq1-\delta,\\
\ell=m,m+1,\ldots,n.\label{eq:balanced-coloring-property}
\end{multline}
Since $\delta<1,$ taking $\ell=n$ in this equation leads to
\begin{align}
\left||\gamma^{-1}(i)|-\frac{1}{r}\binom{n}{k}\right| & \leq\frac{\epsilon}{r}\binom{n}{k}, &  & i\in[r],\label{eq:gamma-equipartition}
\end{align}
and in particular
\begin{align}
|\gamma^{-1}(i)| & \geq\frac{1-\epsilon}{r}\binom{n}{k}, &  & i\in[r].\label{eq:alpha-fair-1}
\end{align}
For $i=1,2,\ldots,r,$ we define $\phi_{i}\colon\zoon\to\Re$ by
\[
\phi_{i}(x)=\Exp_{S\in\gamma^{-1}(i)}\phi_{S}(x)-\I[|x|\geq m]\Exp_{S\in\binom{[n]}{k}}\phi_{S}(x).
\]
This definition is legitimate since $\gamma^{-1}(i)\ne\varnothing$
for every $i$ due to (\ref{eq:alpha-fair-1}) and $\epsilon<1.$
\begin{claim}
\label{claim:balance}For all $i\in[r]$ and $\ell\in\{m,m+1,\ldots,n\},$
\begin{align*}
 & \Exp_{A\in\binom{[n]}{\ell}}\left|\Prob_{S\in\gamma^{-1}(i)}[S\subseteq A]-\Prob_{S\in\binom{[n]}{k}}[S\subseteq A]\right|\leq\frac{2\epsilon+r\delta}{1-\epsilon}\binom{n}{k}^{-1}\binom{\ell}{k}.
\end{align*}
\end{claim}

\begin{proof}
Fix $i\in[r]$ and $\ell\in\{m,m+1,\ldots,n\}$ arbitrarily for the
remainder of the proof. Let $A\in\binom{[n]}{\ell}$ be uniformly
random. If $\gamma$ is $\epsilon$-balanced on $A,$ then by definition
\begin{align*}
\left|\left|\gamma^{-1}(i)\cap\binom{A}{k}\right|-\frac{1}{r}\binom{|A|}{k}\right| & \leq\frac{\epsilon}{r}\binom{|A|}{k}.
\end{align*}
If $\gamma$ is not $\epsilon$-balanced on $A,$ we have the trivial
bound
\begin{align*}
\left|\left|\gamma^{-1}(i)\cap\binom{A}{k}\right|-\frac{1}{r}\binom{|A|}{k}\right| & \leq\binom{|A|}{k}.
\end{align*}
Combining these two equations, we arrive at 
\begin{align}
\left|\left|\gamma^{-1}(i)\cap\binom{A}{k}\right|-\frac{1}{r}\binom{|A|}{k}\right| & \leq\left(\frac{\epsilon}{r}+Y_{A}\right)\binom{|A|}{k}\label{eq:gamma-eps-balanced-on-A}
\end{align}
for all $A,$ where $Y_{A}$ is the indicator random variable for
the event that $\gamma$ is not $\epsilon$-balanced on $A.$ Since
$\gamma$ is $(\epsilon,\delta,m)$-balanced, we further have
\begin{equation}
\Exp_{A\in\binom{[n]}{\ell}}Y_{A}\leq\delta.\label{eq:balance-is-common}
\end{equation}
Now
\begin{align}
 & \left|\Prob_{S\in\gamma^{-1}(i)}[S\subseteq A]-\Prob_{S\in\binom{[n]}{k}}[S\subseteq A]\right|\nonumber \\
 & \qquad\qquad=\left|\frac{|\gamma^{-1}(i)\cap\binom{A}{k}|}{|\gamma^{-1}(i)|}-\frac{\binom{|A|}{k}}{\binom{n}{k}}\right|\nonumber \\
 & \qquad\qquad=\frac{1}{|\gamma^{-1}(i)|}\binom{n}{k}^{-1}\left|\left|\gamma^{-1}(i)\cap\binom{A}{k}\right|\binom{n}{k}-|\gamma^{-1}(i)|\binom{|A|}{k}\right|\nonumber \\
 & \qquad\qquad\leq\frac{r}{1-\epsilon}\binom{n}{k}^{-2}\left|\left|\gamma^{-1}(i)\cap\binom{A}{k}\right|\binom{n}{k}-|\gamma^{-1}(i)|\binom{|A|}{k}\right|\nonumber \\
 & \qquad\qquad\leq\frac{r}{1-\epsilon}\binom{n}{k}^{-2}\left|\left|\gamma^{-1}(i)\cap\binom{A}{k}\right|\binom{n}{k}-\frac{1}{r}\binom{|A|}{k}\binom{n}{k}\right|\nonumber \\
 & \qquad\qquad\quad\qquad\hfill+\frac{r}{1-\epsilon}\binom{n}{k}^{-2}\left|\frac{1}{r}\binom{|A|}{k}\binom{n}{k}-|\gamma^{-1}(i)|\binom{|A|}{k}\right|\nonumber \\
 & \qquad\qquad\leq\frac{r}{1-\epsilon}\binom{n}{k}^{-2}\left(\frac{\epsilon}{r}+\frac{\epsilon}{r}+Y_{A}\right)\binom{|A|}{k}\binom{n}{k}\nonumber \\
 & \qquad\qquad=\frac{r}{1-\epsilon}\binom{n}{k}^{-1}\left(\frac{2\epsilon}{r}+Y_{A}\right)\binom{\ell}{k},\label{eq:intermediate-gamma-A}
\end{align}
where the third step is valid by~(\ref{eq:alpha-fair-1}), the fourth
step applies the triangle inequality, the fifth step uses~(\ref{eq:gamma-equipartition})
and~(\ref{eq:gamma-eps-balanced-on-A}), and the last step uses $|A|=\ell.$
It remains to pass to expectations with respect to $A$: 
\begin{align*}
 & \Exp_{A\in\binom{[n]}{\ell}}\left|\Prob_{S\in\gamma^{-1}(i)}[S\subseteq A]-\Prob_{S\in\binom{[n]}{k}}[S\subseteq A]\right|\\
 & \qquad\qquad\leq\Exp_{A\in\binom{[n]}{\ell}}\frac{r}{1-\epsilon}\binom{n}{k}^{-1}\left(\frac{2\epsilon}{r}+Y_{A}\right)\binom{\ell}{k}\\
 & \qquad\qquad=\frac{r}{1-\epsilon}\binom{n}{k}^{-1}\left(\frac{2\epsilon}{r}+\Exp_{A\in\binom{[n]}{\ell}}Y_{A}\right)\binom{\ell}{k}\\
 & \qquad\qquad\leq\frac{2\epsilon+r\delta}{1-\epsilon}\binom{n}{k}^{-1}\binom{\ell}{k},
\end{align*}
where the last step uses~(\ref{eq:balance-is-common}).
\end{proof}
\begin{claim}
\label{claim:pointwise-ell}For each $i\in[r]$ and $\ell\in\{m,m+1,\ldots,n\}$,
\[
\sum_{x:|x|=\ell}|\phi_{i}(x)|\leq\frac{2\epsilon+r\delta}{1-\epsilon}\cdot\left|\frac{\omega(\frac{\ell-k}{\Delta})}{\omega(0)}\right|.
\]
\end{claim}

\begin{proof}
Fix $i\in[r]$ and $\ell\in\{m,m+1,\ldots,n\}$ arbitrarily for the
remainder of the proof. Consider any input $x=\1_{A}$ with $|A|=\ell.$
In this case, the definition of $\phi_{i}$ simplifies to
\begin{align*}
\phi_{i}(\1_{A}) & =\Exp_{S\in\gamma^{-1}(i)}\phi_{S}(\1_{A})-\Exp_{S\in\binom{[n]}{k}}\phi_{S}(\1_{A}).
\end{align*}
Recall from~(\ref{eq:phi-S-definition}) that
\[
\phi_{S}(\1_{A})=\frac{\omega(\frac{\ell-k}{\Delta})}{\omega(0)\binom{n-k}{\ell-k}}\cdot\I[S\subseteq A].
\]
As a result,
\[
\phi_{i}(\1_{A})=\frac{\omega(\frac{\ell-k}{\Delta})}{\omega(0)\binom{n-k}{\ell-k}}\left(\Prob_{S\in\gamma^{-1}(i)}[S\subseteq A]-\Prob_{S\in\binom{[n]}{k}}[S\subseteq A]\right).
\]
Passing to absolute values and summing over $A\in\binom{[n]}{\ell},$
we obtain
\begin{align*}
\sum_{A\in\binom{[n]}{\ell}}|\phi_{i}(\1_{A})| & =\left|\frac{\omega(\frac{\ell-k}{\Delta})}{\omega(0)\binom{n-k}{\ell-k}}\right|\sum_{A\in\binom{[n]}{\ell}}\left|\Prob_{S\in\gamma^{-1}(i)}[S\subseteq A]-\Prob_{S\in\binom{[n]}{k}}[S\subseteq A]\right|\\
 & \leq\left|\frac{\omega(\frac{\ell-k}{\Delta})}{\omega(0)\binom{n-k}{\ell-k}}\right|\cdot\binom{n}{\ell}\cdot\frac{2\epsilon+r\delta}{1-\epsilon}\cdot\binom{n}{k}^{-1}\binom{\ell}{k}\\
 & =\left|\frac{\omega(\frac{\ell-k}{\Delta})}{\omega(0)}\right|\cdot\frac{2\epsilon+r\delta}{1-\epsilon},
\end{align*}
where the second step applies Claim~\ref{claim:balance}, and the
final step is justified by
\begin{align*}
 & \binom{n-k}{\ell-k}^{-1}\binom{n}{\ell}\binom{n}{k}^{-1}\binom{\ell}{k}\\
 & \qquad=\frac{(\ell-k)!\;(n-\ell)!}{(n-k)!}\cdot\frac{n!}{\ell!\;(n-\ell)!}\cdot\frac{k!\;(n-k)!}{n!}\cdot\frac{\ell!}{k!\;(\ell-k)!}=1. &  & \qedhere
\end{align*}
\end{proof}
We now turn to the verification of properties~\ref{enu:Support}\textendash \ref{enu:Orthogonality}
in the theorem statement.\medskip{}

\textbf{Properties~\ref{enu:Support}\textendash \ref{enu:Normalization}.}
Equation~(\ref{eq:phi-S-supp}) shows that $\phi_{i}$ is a linear
combination of functions whose support is contained in $\{x:|x|=k\text{ or }|x|\geq m\}.$
This settles the support requirement~\ref{enu:Support}. For $T\in\binom{[n]}{k}$,
\begin{align}
\phi_{i}(\1_{T}) & =\Exp_{S\in\gamma^{-1}(i)}\phi_{S}(\1_{T})\nonumber \\
 & =\Exp_{S\in\gamma^{-1}(i)}\delta_{S,T}\nonumber \\
 & =\frac{\I[T\in\gamma^{-1}(i)]}{|\gamma^{-1}(i)|}, &  & T\in\binom{[n]}{k},\label{eq:phi-i-on-1-T}
\end{align}
where the first step is immediate from the defining equation for $\phi_{i},$
and the second step applies~(\ref{eq:phi_S-1-T}). The essential
support property~\ref{enu:Essential-support} and nonnegativity property~\ref{enu:Nonnegativity}
are now immediate from~(\ref{eq:phi-i-on-1-T}). The normalization
requirement~\ref{enu:Normalization} follows by summing~(\ref{eq:phi-i-on-1-T})
over $T\in\binom{[n]}{k}.$

\medskip{}

\textbf{Properties~\ref{enu:Tail-bound} and \ref{enu:Pointwise-bound}.}
The tail bound~\ref{enu:Tail-bound} for $i\in[r]$ can be seen as
follows:
\begin{align*}
\sum_{x:|x|\ne k}|\phi_{i}(x)| & =\sum_{x:|x|\geq m}|\phi_{i}(x)|\\
 & =\sum_{\ell=m}^{n}\,\sum_{x:|x|=\ell}|\phi_{i}(x)|\\
 & \leq\frac{2\epsilon+r\delta}{1-\epsilon}\cdot\sum_{\ell=m}^{n}\left|\omega(0)^{-1}\omega\left(\frac{\ell-k}{\Delta}\right)\right|\\
 & \leq\frac{2\epsilon+r\delta}{1-\epsilon}\cdot\frac{\|\omega\|_{1}}{|\omega(0)|}\\
 & \leq\frac{8\epsilon+4r\delta}{1-\epsilon},
\end{align*}
where the first step uses the support property~\ref{enu:Support},
the third step is valid by Claim~\ref{claim:pointwise-ell}, and
the last step applies~(\ref{eq:construction-omega-at-zero}).

The graded bound~\ref{enu:Pointwise-bound} for $\ell\in(k,m)$ holds
trivially since $\phi_{i}$ vanishes on inputs of Hamming weight in
$(k,m),$ by the support property~\ref{enu:Support}. The validity
of ~\ref{enu:Pointwise-bound} for $\ell\geq m$ is borne out by
\begin{align*}
\sum_{x:|x|=\ell}|\phi_{i}(x)| & \leq\frac{2\epsilon+r\delta}{1-\epsilon}\cdot\left|\omega(0)^{-1}\omega\left(\frac{\ell-k}{\Delta}\right)\right|\\
 & \leq\frac{8\epsilon+4r\delta}{1-\epsilon}\cdot\frac{1}{\|\omega\|_{1}}\cdot\left|\omega\left(\frac{\ell-k}{\Delta}\right)\right|\\
 & \le\frac{8\epsilon+4r\delta}{1-\epsilon}\cdot\frac{1}{c\left(\frac{\ell-k}{\Delta}\right)^{2}2^{c(\ell-k)/(\Delta\sqrt{D})}}\\
 & =\frac{8\epsilon+4r\delta}{1-\epsilon}\cdot\frac{1}{c\left(\frac{\ell-k}{m-k}\right)^{2}\,2^{c(\ell-k)/((m-k)\sqrt{\lfloor(n-k)/(m-k)\rfloor})}}\\
 & \leq\frac{8\epsilon+4r\delta}{1-\epsilon}\cdot\frac{m^{2}}{c\ell^{2}\,2^{c(\ell-k)/\sqrt{nm}}},
\end{align*}
where the first step restates Claim~\ref{claim:pointwise-ell}, the
second step is justified by~(\ref{eq:construction-omega-at-zero}),
the third step appeals to~(\ref{eq:omega-construction-upper-extended}),
and the fourth step substitutes the values from~(\ref{eq:Delta})
and~(\ref{eq:D}).

\medskip{}

\textbf{Property~\ref{enu:Orthogonality}.} To begin with, we claim
that 
\begin{align}
\orth\phi_{S} & \geq c\sqrt{D}, &  & S\in\binom{[n]}{k}.\label{eq:phi-S-orth}
\end{align}
Indeed, let $p$ be a real polynomial on $\zoon$ with $\deg p<c\sqrt{D}$.
By linearity, it suffices to consider polynomials $p$ that factor
as $p(x)=p_{1}(x|_{S})p_{2}(x|_{\overline{S}})$ for some nonzero
polynomials $p_{1},p_{2}$. Now, Minsky and Papert's symmetrization
argument (Proposition~\ref{prop:minsky-papert}) guarantees that
\begin{align}
\Exp_{\substack{y\in\zoo^{n-k}\\
|y|=i
}
}p_{2}(y) & =p_{2}^{*}(i), &  & i=0,1,2,\ldots,n-k,\label{eq:p-2-star}
\end{align}
for some univariate polynomial $p_{2}^{*}$ of degree at most $\deg p_{2}$.
As a result,
\begin{align*}
\langle\phi_{S},p\rangle & =\sum_{\substack{x\in\zoon:\\
x|_{S}=1^{k}
}
}\phi_{S}(x)p(x)\\
 & =\sum_{i=0}^{D}\sum_{\substack{x\in\zoon:\\
|x|=k+i\Delta,\;x|_{S}=1^{k}
}
}\phi_{S}(x)p(x)\\
 & =\sum_{i=0}^{D}\sum_{\substack{x\in\zoon:\\
|x|=k+i\Delta,\;x|_{S}=1^{k}
}
}\binom{n-k}{i\Delta}^{-1}\frac{\omega(i)}{\omega(0)}\cdot p(x)\\
 & =\sum_{i=0}^{D}\;\Exp_{\substack{y\in\zoo^{n-k}:\\
|y|=i\Delta
}
}\left[\frac{\omega(i)}{\omega(0)}\cdot p_{1}(1^{k})p_{2}(y)\right]\\
 & =\frac{p_{1}(1^{k})}{\omega(0)}\sum_{i=0}^{D}\omega(i)p_{2}^{*}(i\Delta)\\
 & =0,
\end{align*}
where the first and third steps use the definition of $\phi_{S},$
the second step is justified by~(\ref{eq:phi-i-supp-multiples-of-Delta}),
the next-to-last step uses~(\ref{eq:p-2-star}), and the last step
is valid by~(\ref{eq:construction-omega-orthog}) since $\deg p_{2}^{*}\leq\deg p_{2}\leq\deg p<c\sqrt{D}.$
This settles~(\ref{eq:phi-S-orth}).

Now the orthogonality requirement~\ref{enu:Orthogonality} can be
seen as follows:
\begin{align*}
\orth(\phi_{i}-\phi_{j}) & =\orth\left(\Exp_{S\in\gamma^{-1}(i)}\phi_{S}-\Exp_{S\in\gamma^{-1}(j)}\phi_{S}\right)\\
 & \geq\min_{S\in\binom{[n]}{k}}\orth\phi_{S}\\
 & \geq c\sqrt{D}\\
 & =c\sqrt{\left\lfloor \frac{n-k}{m-k}\right\rfloor }\\
 & \geq c\sqrt{\left\lfloor \frac{n}{m}\right\rfloor },
\end{align*}
where the second step uses Proposition~\ref{prop:orth}\ref{item:orth-sum},
the third step is valid by~(\ref{eq:phi-S-orth}), the fourth step
applies the definition of $D,$ and the last step uses $n\geq m.$
\end{proof}

\subsection{Encoding via indistinguishable distributions}

As our next step, we will show that the pseudodistributions $\phi_{1},\phi_{2},\ldots,\phi_{r}$
in Theorem~\ref{thm:encoding-1-r} can be turned into actual probability
distributions $\lambda_{1},\lambda_{2},\ldots,\lambda_{r}$ provided
that the underlying coloring of $\binom{[n]}{k}$ is sufficiently
balanced. The resulting distributions $\lambda_{i}$ inherit all the
desirable analytic properties established for the $\phi_{i}$ in Theorem~\ref{thm:encoding-1-r}.
Specifically, the $\lambda_{i}$ are supported almost entirely on
pairwise disjoint sets of inputs of Hamming weight $k$ and are pairwise
indistinguishable by low-degree polynomials.
\begin{thm}
\label{thm:encoding-1-r-normalized}Let $0<\beta<1$ be given. Let
$n,n',m,k,r$ be positive integers with $n\geq n'\geq m>k$. Let $\gamma\colon\binom{[n']}{k}\to[r]$
be a given $(\frac{\beta}{16rm^{2}},\frac{\beta}{16r^{2}m^{2}},m)$-balanced
coloring. Then there are $($explicitly given$)$ probability distributions
$\lambda_{1},\lambda_{2},\ldots,\lambda_{r}$ on $\zoon$ such that
\begin{align}
 & \supp\lambda_{i}\subseteq\{x\in\zoon:|x|=k\text{ or }|x|\geq m\}, &  & i\in[r],\label{eq:mu-i-support}\\
 & \zoon|_{k}\cap\supp\lambda_{i}=\{\1_{S}:S\in\gamma^{-1}(i)\}, &  & i\in[r],\label{eq:mu-i-essential-support}\\
 & \lambda_{i}(\zoon|_{k})\geq1-\beta, &  & i\in[r],\label{eq:mu-i-kth-level}\\
 & \lambda_{i}(\zoon|_{\ell})\leq\frac{\exp(-c(\ell-k)/\sqrt{n'm})}{c(\ell-k+1)^{2}}, &  & i\in[r],\;\ell\geq k,\label{eq:mu-i-graded-bound}\\
 & \orth(\lambda_{i}-\lambda_{j})\geq c\sqrt{\frac{n'}{m},} &  & i,j\in[r],\label{eq:mu-i-orth}
\end{align}
where $c\in(0,1)$ is an absolute constant, independent of $n,n',m,k,r,\beta.$
\end{thm}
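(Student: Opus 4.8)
The plan is to obtain the $\lambda_i$ from the pseudodistributions $\phi_1,\dots,\phi_r$ produced by Theorem~\ref{thm:encoding-1-r}, whose only defect is possible negativity on inputs of Hamming weight at least $m$. First apply Theorem~\ref{thm:encoding-1-r} in dimension $n'$ to the given coloring $\gamma\colon\binom{[n']}{k}\to[r]$, which is $(\epsilon,\delta,m)$-balanced with $\epsilon=\frac{\beta}{16rm^{2}}$ and $\delta=\frac{\beta}{16r^{2}m^{2}}$, obtaining explicit functions $\phi_1,\dots,\phi_r\colon\zoo^{n'}\to\Re$; view each $\phi_i$ as a function on $\zoon$ supported on strings whose $1$'s all lie among the first $n'$ coordinates. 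For this choice of $\epsilon,\delta$ one has $\epsilon+r\delta\le\frac{\beta}{8rm^{2}}$, so by properties~\ref{enu:Tail-bound} and~\ref{enu:Pointwise-bound} the total tail mass $\tau:=\sum_{x:|x|\ne k}|\phi_i(x)|$ and every level mass $\sum_{x:|x|=\ell}|\phi_i(x)|$ are polynomially small; this is the slack that will absorb all losses incurred below.

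Next set $\eta(x)=\max_{1\le j\le r}\max\{-\phi_j(x),\,0\}$ and $\widetilde\phi_i=\phi_i+\eta$. Since each $\phi_j$ is nonnegative on $\zoon|_k$ by property~\ref{enu:Nonnegativity}, the correction $\eta$ is supported on $\zoon|_{\ge m}$; hence $\widetilde\phi_i\ge0$ pointwise, $\widetilde\phi_i$ agrees with $\phi_i$ on $\zoon|_k$, and $\supp\widetilde\phi_i\subseteq\{x:|x|=k\text{ or }|x|\ge m\}$ by property~\ref{enu:Support}. The decisive point is that $\langle\widetilde\phi_i,1\rangle$ does \emph{not} depend on $i$: by property~\ref{enu:Orthogonality} we have $\orth(\phi_i-\phi_j)\ge c''\sqrt{n'/m}>0$, and since orthogonal content is $\NN\cup\{\infty\}$-valued this upgrades to $\orth(\phi_i-\phi_j)\ge1$, i.e.\ $\langle\phi_i-\phi_j,1\rangle=0$; therefore $Z:=\langle\widetilde\phi_i,1\rangle=\langle\phi_i,1\rangle+\langle\eta,1\rangle$ is common to all $i$. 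Moreover $|\langle\phi_i,1\rangle-1|\le\tau$ by property~\ref{enu:Normalization} and the bound on $\tau$, while $0\le\langle\eta,1\rangle\le\sum_j\sum_{x:|x|\ge m}|\phi_j(x)|\le r\tau$, so $Z\in[\,1-\tau,\ 1+(r+1)\tau\,]$ and in particular $Z>0$. Define $\lambda_i=\widetilde\phi_i/Z$, which is then an explicit probability distribution on $\zoon$.

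It remains to verify \eqref{eq:mu-i-support}--\eqref{eq:mu-i-orth}. Equation~\eqref{eq:mu-i-support} is the support statement for $\widetilde\phi_i$ noted above; \eqref{eq:mu-i-essential-support} follows from $\widetilde\phi_i|_k=\phi_i|_k$ and property~\ref{enu:Essential-support}; and \eqref{eq:mu-i-orth} follows from $\lambda_i-\lambda_j=(\phi_i-\phi_j)/Z$ — legitimate precisely because the normalizer $Z$ is the same for every $i$ — together with property~\ref{enu:Orthogonality}. For \eqref{eq:mu-i-kth-level}, property~\ref{enu:Normalization} gives $\lambda_i(\zoon|_k)=1/Z\ge 1/(1+(r+1)\tau)\ge 1-\beta$, the last inequality being a short computation from $(r+1)\tau\le\beta$, which holds for our $\epsilon,\delta$ since $m\ge2$ and $r\ge1$. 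For the graded bound \eqref{eq:mu-i-graded-bound} with $\ell>k$ we estimate $\lambda_i(\zoon|_\ell)\le Z^{-1}\sum_{x:|x|=\ell}(|\phi_i(x)|+\eta(x))\le Z^{-1}(r+1)\max_j\sum_{x:|x|=\ell}|\phi_j(x)|$ and substitute property~\ref{enu:Pointwise-bound}; the outcome matches \eqref{eq:mu-i-graded-bound} because $Z\ge\tfrac12$, $\ell^{-2}\le(\ell-k+1)^{-2}$ (as $k\ge1$), $\exp(-c'(\ell-k)/\sqrt{n'm})\le\exp(-c(\ell-k)/\sqrt{n'm})$ for $c\le c'$, and the prefactor $\frac{\epsilon+r\delta}{1-\epsilon}=\Theta(\beta/(rm^{2}))$ swallows the factor $r+1$ and the $m^{2}$; the cases $\ell=k$ and $k<\ell<m$ are immediate. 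One takes the absolute constant in the statement to be $c=\min\{c',c'',\tfrac12\}$.

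The one genuinely delicate point — and the step I expect to be the main obstacle — is establishing that the normalizers $\langle\widetilde\phi_i,1\rangle$ coincide \emph{exactly} across $i$: an approximate equality would not suffice, since renormalizing by unequal constants would destroy the exact orthogonality required in~\eqref{eq:mu-i-orth}. Fortunately this falls out of the integrality of orthogonal content applied to property~\ref{enu:Orthogonality}, as above. Everything else is careful but routine bookkeeping — checking that the factor $r+1$ introduced in passing from $\phi_i$ to $\phi_i+\eta$, together with the factor $1/(1-\epsilon)$, is dominated by the polynomially small parameters $\epsilon,\delta=\Theta(\beta/(rm^{2}))$ of the coloring and by the freedom to shrink the absolute constant $c$.
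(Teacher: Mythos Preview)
Your proposal is correct and follows essentially the same route as the paper. The paper also applies Theorem~\ref{thm:encoding-1-r} with $\epsilon=\frac{\beta}{16rm^{2}}$ and $\delta=\frac{\beta}{16r^{2}m^{2}}$, adds an $i$-independent correction to force nonnegativity, uses $\orth(\phi_i-\phi_j)\ge 1$ to deduce that all $\langle\tilde\phi_i,1\rangle$ coincide exactly, and then normalizes; the only cosmetic differences are that the paper's correction is $-\I[|x|>k]\min_j\phi_j(x)$ rather than your $\eta(x)=\max_j(-\phi_j(x))^+$, and the paper tensors with $\NOR_{n-n'}$ at the end rather than extending the $\phi_i$ at the start.
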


\begin{proof}
By hypothesis, $\gamma$ is $(\epsilon,\delta,m)$-balanced with
\begin{align*}
 & \epsilon=\frac{\beta}{16rm^{2}},\\
 & \delta=\frac{\beta}{16r^{2}m^{2}}.
\end{align*}
Applying Theorem~\ref{thm:encoding-1-r} with these parameters gives
functions $\phi_{1},\phi_{2},\ldots,\phi_{r}\colon\zoo^{n'}\to\Re$
that obey
\begin{align}
 & \supp\phi_{i}\subseteq\{x\in\zoo^{n'}:|x|=k\text{ or }|x|\geq m\},\label{eq:phi-supp-restated}\\
 & \zoo^{n'}|_{k}\cap\supp\phi_{i}=\{\1_{S}:S\in\gamma^{-1}(i)\},\label{eq:phi-essential-support-restated}\\
 & \phi_{i}\geq0\quad\text{on }\zoo^{n'}|_{k},\label{eq:phi-nonneg-restated}\\
 & \sum_{x:|x|=k}\phi_{i}(x)=1,\label{eq:phi-normalization-restated}\\
 & \sum_{x:|x|\ne k}|\phi_{i}(x)|\leq\frac{\beta}{r},\label{eq:phi-tail-restated}\\
 & \sum_{x:|x|=\ell}|\phi_{i}(x)|\leq\frac{\beta}{rc'\ell^{2}}\cdot\exp\left(-\frac{c'(\ell-k)}{\sqrt{n'm}}\right),\qquad\ell>k,\label{eq:phi-graded-restated}\\
 & \orth(\phi_{i}-\phi_{j})\ge c''\sqrt{\frac{n'}{m}}\qquad\text{for all }i,j\in[r],\label{eq:phi-orth-restated}
\end{align}
where $c',c''\in(0,1)$ are the absolute constants defined in Theorem~\ref{thm:encoding-1-r}.
For $i\in[r],$ define $\tilde{\phi}_{i}\colon\zoo^{n'}\to\Re$ by
\begin{equation}
\tilde{\phi}_{i}(x)=\phi_{i}(x)-\I[|x|>k]\min_{j\in[r]}\phi_{j}(x).\label{eq:tilde-phi-def}
\end{equation}
Equation~(\ref{eq:phi-supp-restated}) shows that $\tilde{\phi}_{i}$
is a linear combination of functions whose support is contained in
$\{x\in\zoo^{n'}:|x|=k\text{ or }|x|\geq m\}.$ As a result,
\begin{align}
 & \supp\tilde{\phi}_{i}\subseteq\{x\in\zoo^{n'}:|x|=k\text{ or }|x|\geq m\}.\label{eq:tilde-phi-support}
\end{align}
Since $\tilde{\phi}_{i}=\phi_{i}$ on $\zoo^{n'}|_{k},$ we obtain
from (\ref{eq:phi-essential-support-restated}) and~(\ref{eq:phi-normalization-restated})
that
\begin{align}
 & \zoo^{n'}|_{k}\cap\supp\tilde{\phi}_{i}=\{\1_{S}:S\in\gamma^{-1}(i)\},\label{eq:tilde-phi-disjoint}\\
 & \sum_{x:|x|=k}\tilde{\phi}_{i}(x)=1.\label{eq:tilde-phi-normalization}
\end{align}
In particular,
\begin{equation}
\|\tilde{\phi}_{i}\|_{1}\geq1.\label{eq:tilde-phi-ell1-norm}
\end{equation}
We further claim that
\begin{align}
\tilde{\phi}_{i}(x) & \geq0, &  & x\in\zoo^{n'}.\label{eq:tilde-phi-nonnegative}
\end{align}
Indeed, the nonnegativity of $\tilde{\phi}_{i}(x)$ for $x\in\zoo^{n'}|_{k}$
follows from $\tilde{\phi}_{i}(x)=\phi_{i}(x)$ and~(\ref{eq:phi-nonneg-restated}),
whereas the nonnegativity of $\tilde{\phi}_{i}(x)$ for $x\in\zoo^{n'}|_{>k}$
follows from~(\ref{eq:tilde-phi-def}) via $\tilde{\phi}_{i}(x)=\phi_{i}(x)-\min_{j\in[r]}\phi_{j}(x)\geq\phi_{i}(x)-\phi_{i}(x)\geq0.$

On $\zoo^{n'}|_{>k},$ we have
\begin{align*}
\tilde{\phi}_{i} & =\phi_{i}-\min_{j\in[r]}\phi_{j}=\max_{j\in[r]}\{\phi_{i}-\phi_{j}\}\leq\max_{j\in[r]}|\phi_{i}-\phi_{j}|\leq\sum_{j=1}^{r}|\phi_{j}|.
\end{align*}
This conclusion is also valid on $\zoo^{n'}|_{<k}$ due to~(\ref{eq:tilde-phi-support}).
Thus, 
\begin{align}
\tilde{\phi}_{i}(x) & \leq\sum_{j=1}^{r}|\phi_{j}(x)|, &  & |x|\ne k.\label{eq:tilde-phi-non-k}
\end{align}
Summing over $x$ gives
\begin{align}
\sum_{x:|x|\ne k}\tilde{\phi}_{i}(x) & \leq\sum_{x:|x|\ne k}\sum_{j=1}^{r}|\phi_{j}(x)|\nonumber \\
 & =\sum_{j=1}^{r}\sum_{x:|x|\ne k}|\phi_{j}(x)|\nonumber \\
 & \leq\beta,\label{eq:tilde-phi-non-k-sum}
\end{align}
where the third step applies~(\ref{eq:phi-tail-restated}).

For all $i\in[r]$ and $\ell\in\{m,m+1,\ldots,n'\}$, we have the
graded bound
\begin{align}
\sum_{x:|x|=\ell}\tilde{\phi}_{i}(x) & \leq\sum_{j=1}^{r}\sum_{x:|x|=\ell}|\phi_{j}(x)|\nonumber \\
 & \leq\frac{1}{c'\ell^{2}}\cdot\exp\left(-\frac{c'(\ell-k)}{\sqrt{n'm}}\right),\label{eq:tilde-phi-graded}
\end{align}
where the first step uses~(\ref{eq:tilde-phi-non-k}), and the second
step uses~(\ref{eq:phi-graded-restated}). Finally, for $i,j\in[r],$
we have
\begin{align}
\orth(\tilde{\phi}_{i}-\tilde{\phi}_{j}) & =\orth(\phi_{i}-\phi_{j})\nonumber \\
 & \geq c''\sqrt{\frac{n'}{m}},\label{eq:tilde-phi-orthog}
\end{align}
where the first step uses the definition~(\ref{eq:tilde-phi-def}),
and the second step uses~(\ref{eq:phi-orth-restated}).

Define $c=\min\{c',c''\}.$ Equations~(\ref{eq:tilde-phi-normalization})
and~(\ref{eq:tilde-phi-nonnegative}) show that each $\tilde{\phi}_{i}$
is a nonnegative function and is not identically zero, making it possible
to define a probability distribution $\lambda_{i}$ on $\zoon$ by
\begin{align*}
\lambda_{i}(x) & =\frac{1}{\|\tilde{\phi}_{i}\|_{1}}\tilde{\phi}_{i}(x_{1}x_{2}\ldots x_{n'})\prod_{j=n'+1}^{n}(1-x_{j}).
\end{align*}
In other words, $\lambda_{i}$ is nonzero only on inputs $x$ with
$x_{n'+1}=x_{n'+2}=\cdots=x_{n}=0,$ and on such inputs $\lambda_{i}(x)$
is the properly normalized version of the nonnegative function $\tilde{\phi}_{i}(x_{1}x_{2}\ldots x_{n'})$.
Then properties~(\ref{eq:mu-i-support}) and~(\ref{eq:mu-i-essential-support})
are immediate from~(\ref{eq:tilde-phi-support}) and~(\ref{eq:tilde-phi-disjoint}),
respectively. Property~(\ref{eq:mu-i-kth-level}) follows from
\begin{align*}
\lambda_{i}(\zoon|_{k}) & =\frac{1}{\|\tilde{\phi}_{i}\|_{1}}\sum_{x\in\zoon|_{k}}\tilde{\phi}_{i}(x_{1}x_{2}\ldots x_{n'})\prod_{j=n'+1}^{n}(1-x_{j})\\
 & =\frac{1}{\|\tilde{\phi}_{i}\|_{1}}\sum_{x\in\zoo^{n'}|_{k}}\tilde{\phi}_{i}(x_{1}x_{2}\ldots x_{n'})\\
 & =\frac{1}{\|\tilde{\phi}_{i}\|_{1}}\\
 & \geq\frac{1}{1+\beta}\\
 & \geq1-\beta,
\end{align*}
where the third step uses~(\ref{eq:tilde-phi-normalization}), and
the fourth step uses~(\ref{eq:tilde-phi-normalization}) and~(\ref{eq:tilde-phi-non-k-sum}).
Property~(\ref{eq:mu-i-graded-bound}) is trivial for $\ell=k$ and
follows for $\ell>k$ from
\begin{align*}
\lambda_{i}(\zoon|_{\ell}) & =\frac{1}{\|\tilde{\phi}_{i}\|_{1}}\sum_{x\in\zoon|_{\ell}}\tilde{\phi}_{i}(x_{1}x_{2}\ldots x_{n'})\prod_{j=n'+1}^{n}(1-x_{j})\\
 & =\frac{1}{\|\tilde{\phi}_{i}\|_{1}}\sum_{x\in\zoo^{n'}|_{\ell}}\tilde{\phi}_{i}(x_{1}x_{2}\ldots x_{n'})\\
 & \leq\frac{1}{\|\tilde{\phi}_{i}\|_{1}}\cdot\frac{\exp(-c'(\ell-k)/\sqrt{n'm})}{c'\ell^{2}}\\
 & \leq\frac{\exp(-c(\ell-k)/\sqrt{n'm})}{c\ell^{2}},
\end{align*}
where the third step uses~(\ref{eq:tilde-phi-graded}), and the fourth
step uses~(\ref{eq:tilde-phi-ell1-norm}) and $c=\min\{c',c''\}.$

It remains to verify~(\ref{eq:mu-i-orth}). For this, fix $i,j\in[r]$
arbitrarily. Then $\|\tilde{\phi}_{i}\|_{1}-\|\tilde{\phi}_{j}\|_{1}=\langle\tilde{\phi}_{i},1\rangle-\langle\tilde{\phi}_{j},1\rangle=\langle\tilde{\phi}_{i}-\tilde{\phi}_{j},1\rangle=0,$
where the first step uses~(\ref{eq:tilde-phi-nonnegative}), and
the third step uses~(\ref{eq:tilde-phi-orthog}). We thus see that
\begin{equation}
\|\tilde{\phi}_{i}\|_{1}=\|\tilde{\phi}_{j}\|_{1}.\label{eq:tilde-phi-i-tilde-phi-j-ell1-norm}
\end{equation}
Next, observe that $\lambda_{i}$ can be written as the product of
two functions on disjoint sets of variables, and likewise for $\lambda_{j}.$
Namely, 
\begin{align*}
\lambda_{i} & =\frac{1}{\|\tilde{\phi}_{i}\|_{1}}\;\tilde{\phi}_{i}\otimes\NOR_{n-n'},\\
\lambda_{j} & =\frac{1}{\|\tilde{\phi}_{j}\|_{1}}\;\tilde{\phi}_{j}\otimes\NOR_{n-n'}.
\end{align*}
Now
\begin{align*}
\orth(\lambda_{i}-\lambda_{j}) & =\orth\left(\left(\frac{\tilde{\phi}_{i}}{\|\tilde{\phi}_{i}\|_{1}}-\frac{\tilde{\phi}_{j}}{\|\tilde{\phi}_{j}\|_{1}}\right)\otimes\NOR_{n-n'}\right)\\
 & \geq\orth\left(\frac{\tilde{\phi}_{i}}{\|\tilde{\phi}_{i}\|_{1}}-\frac{\tilde{\phi}_{j}}{\|\tilde{\phi}_{j}\|_{1}}\right)\\
 & =\orth\left(\frac{\tilde{\phi}_{i}-\tilde{\phi}_{j}}{\|\tilde{\phi}_{i}\|_{1}}\right)\\
 & =\orth(\tilde{\phi}_{i}-\tilde{\phi}_{j})\\
 & \geq c''\sqrt{\frac{n'}{m}},
\end{align*}
where the second step uses Proposition~\ref{prop:orth}\ref{item:orth-tensor},
the third step applies~(\ref{eq:tilde-phi-i-tilde-phi-j-ell1-norm}),
and the last step is justified by~(\ref{eq:tilde-phi-orthog}). In
view of $c=\min\{c',c''\},$ this settles~(\ref{eq:mu-i-orth}) and
completes the proof.
\end{proof}

\subsection{Hardness amplification for approximate degree}

We have reached the crux of our proof, a hardness amplification theorem
for approximate degree. Unlike previous work, our hardness amplification
is directly applicable to Boolean functions with sparse input and
does not use componentwise composition or input compression. The theorem
statement below has a large number of parameters, for maximum generality
and black-box integration with the auxiliary results of previous sections.
We will later derive a succinct and easy-to-apply corollary that will
suffice for our hardness amplification purposes.
\begin{thm}
\label{thm:hardness-amplification-twosided}Let $C^{*}\geq1$ and
$c\in(0,1)$ be the absolute constants from Theorems~\emph{\ref{thm:explicit-low-discrepancy-set}}
and~\emph{\ref{thm:encoding-1-r-normalized},} respectively. Fix
a real number $0<\beta<1$ and positive integers $n,m,k,N,\theta,D,T$
such that
\begin{align}
 & n/2\geq m>k,\label{eq:nmk}\\
 & 4(N+1)^{2}k\exp\left(-\frac{\sqrt{m}}{16k}\right)+(N+1)\left(\frac{3C^{*}\log^{2}(n+N+1)}{m^{1/4}}\right)^{k}\nonumber \\
 & \qquad\qquad\qquad\qquad\qquad\qquad\qquad\qquad\qquad\qquad\leq\frac{\beta}{16(N+1)^{2}m^{2}},\label{eq:delta-m-k-n-r-repeated-1-1}\\
 & T\geq\frac{8\e}{c}\cdot\theta(1+\ln\theta)+\theta k,\label{eq:T-theta-k}\\
 & T\geq D.\label{eq:T-greater-than-D}
\end{align}
Define
\begin{equation}
\Delta=\left(1+2^{D}\binom{n\theta}{D}\right)\exp\left(-\frac{c(T-\theta k)}{2\sqrt{nm}}\right).\label{eq:Delta-defined}
\end{equation}
Then there is an $($explicitly given$)$ mapping $H\colon(\zoon)^{\theta}\to\zoo^{N}$
such that:
\begin{enumerate}
\item \label{enu:each-output-bit-of-H}each output bit of $H$ is computable
by a monotone $(k+1)$-DNF formula; 
\item \label{enu:amplify-two-sided}for every $\epsilon\in[0,1]$ and every
$f\colon\zoo^{N}\to\zoo,$ one has
\[
\deg_{\epsilon-\beta\theta-2\Delta}((f\circ H)|_{\leq T})\geq\min\left\{ c\deg_{\epsilon}(f|_{\leq\theta})\sqrt{\frac{n}{2m}},D\right\} .
\]
\end{enumerate}
\end{thm}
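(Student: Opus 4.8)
The plan is to realize the encoding scheme sketched in the introduction: turn the explicit balanced coloring into a code, take $H$ to be its natural circuit decoder and prove~(i), then prove~(ii) by lifting a dual witness for $f|_{\le\theta}$ through the low-degree-indistinguishable distributions supplied by Theorem~\ref{thm:encoding-1-r-normalized}. Concretely, apply Corollary~\ref{cor:explicit-coloring} with $r=N+1$ (its hypotheses hold by~\eqref{eq:nmk}): the assumption~\eqref{eq:delta-m-k-n-r-repeated-1-1}, together with $k,r\ge1$, forces the parameters $\epsilon_{\mathrm c},\delta_{\mathrm c}$ it produces to satisfy $\delta_{\mathrm c}\le\epsilon_{\mathrm c}\le\beta/(16r^2m^2)$, so by the monotonicity of balancedness the resulting explicit coloring $\gamma\colon\binom{[n']}{k}\to[r]$ (with $n'\in(n/2,n]$) is $(\tfrac{\beta}{16rm^2},\tfrac{\beta}{16r^2m^2},m)$-balanced, and Theorem~\ref{thm:encoding-1-r-normalized} yields explicit probability distributions $\{\lambda_v\}_{v\in V}$ on $\zoon$, where $V=\{0^N,e_1,\dots,e_N\}$ is the set of renamed colors; write $D_0=\lceil c\sqrt{n'/m}\,\rceil$, a positive integer with $\orth(\lambda_u-\lambda_v)\ge D_0$ for all $u,v$ by~\eqref{eq:mu-i-orth} and with $D_0>c\sqrt{n/2m}$ since $n'>n/2$. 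Define $h(x)_j=\bigvee_{S\in\gamma^{-1}(e_j)}\bigwedge_{t\in S}x_t$ and $H(x_1,\dots,x_\theta)=\bigvee_{i=1}^\theta h(x_i)$ coordinatewise; each output bit of $H$ is then an OR of $\theta$ monotone $k$-DNF formulas, hence an explicit monotone $(k+1)$-DNF, which gives~(i). The only structural fact we need for~(ii) is the \emph{decoding identity}: by~\eqref{eq:mu-i-essential-support}, if $|x|=k$ and $\lambda_v(x)\ne0$ then $x=\1_S$ for some $S\in\gamma^{-1}(v)$, and then $h(x)=v$ because $|S'|=|S|=k$ makes $S'\subseteq S$ equivalent to $S'=S$.

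For~(ii), fix $f$ and $\epsilon$; we may assume $d:=\deg_\epsilon(f|_{\le\theta})\ge1$ and $\epsilon-\beta\theta-2\Delta\ge0$, the complementary cases being vacuous. Fact~\ref{fact:adeg-dual} gives $\psi\colon\zoo^N|_{\le\theta}\to\Re$ with $\langle f|_{\le\theta},\psi\rangle>\epsilon\|\psi\|_1$ and $\orth\psi\ge d$. The heart of the argument is to lift $\psi$ to a function $\rho\colon V^\theta\to\Re$ that still has orthogonal content $\ge d$: set $\rho(v)=\psi(v_1\vee\cdots\vee v_\theta)/m_{|v_1\vee\cdots\vee v_\theta|}$ when the nonzero coordinates of $v$ are pairwise distinct basis vectors, and $\rho(v)=0$ otherwise, where $m_w=\theta!/(\theta-w)!$ counts the tuples in $V^\theta$ whose coordinatewise sum is a given weight-$w$ string in $\zoo^N$. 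On $\supp\rho$ the coordinatewise OR and the coordinatewise sum of $v_1,\dots,v_\theta$ coincide, so grouping by their common value gives $\|\rho\|_1=\|\psi\|_1$ and $\sum_v\rho(v)f(v_1\vee\cdots\vee v_\theta)=\langle f|_{\le\theta},\psi\rangle>\epsilon\|\psi\|_1$; moreover, for every polynomial $q$ on $(\Re^N)^\theta$, Fact~\ref{fact:ambainis-symmetrization} applied to the alphabet $\{0^N,e_1,\dots,e_N\}$ makes $q^*(w):=\Exp_{v\in V^\theta:\,v_1+\cdots+v_\theta=w}q(v)$ a polynomial of degree $\le\deg q$ on $\NN^N|_{\le\theta}$, whence $\langle\rho,q\rangle=\langle\psi,q^*\rangle=0$ whenever $\deg q<d$, i.e.\ $\orth\rho\ge d$. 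Now put $\Phi=\sum_{v\in V^\theta}\rho(v)\bigotimes_{i=1}^\theta\lambda_{v_i}$ on $(\zoon)^\theta$: since each $\lambda_v$ is a probability distribution, $\|\Phi\|_1\le\|\rho\|_1$, and for any polynomial $p$ of degree $<dD_0$, Proposition~\ref{prop:expect-out} turns $v\mapsto\langle\bigotimes_i\lambda_{v_i},p\rangle$ into a polynomial of degree $<d$ on $V^\theta$, so $\langle\Phi,p\rangle$ is the pairing of $\rho$ against it and vanishes; hence $\orth\Phi\ge dD_0$.

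Next I would truncate $\Phi$ to low Hamming weight. The graded bound~\eqref{eq:mu-i-graded-bound} lets Lemma~\ref{cor:concentration-of-measure-Boolean-shifted} apply with $\ell=\theta$, $C=1/c$, and $\alpha=\exp(-c/\sqrt{n'm})$ — its hypothesis on $T$ being precisely~\eqref{eq:T-theta-k} — so $\Prob_{x_i\sim\lambda_{v_i}}[\sum_i|x_i|\ge T]\le\exp(-c(T-\theta k)/(2\sqrt{nm}))$ for every $v$, and therefore $\sum_{x:\,\sum_i|x_i|>T}|\Phi(x)|\le\|\rho\|_1\exp(-c(T-\theta k)/(2\sqrt{nm}))$. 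Lemma~\ref{lem:zero-high-Hamming-weight} with $B=n\theta$ and the given $D$ (legitimate by~\eqref{eq:T-greater-than-D}) then supplies $\tilde\Phi$ supported on $(\zoon)^\theta|_{\le T}$ with $\orth(\Phi-\tilde\Phi)>D$ and, by the definition~\eqref{eq:Delta-defined} of $\Delta$, $\|\Phi-\tilde\Phi\|_1\le\Delta\|\rho\|_1$; consequently $\orth\tilde\Phi\ge\min\{dD_0,D+1\}\ge\min\{c\deg_\epsilon(f|_{\le\theta})\sqrt{n/2m},D\}$ and $\|\tilde\Phi\|_1\le(1+\Delta)\|\rho\|_1$. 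Finally, on the event $\{|x_i|=k$ for all $i\}$, of probability $\ge1-\theta\beta$ under $\lambda_{v_1}\times\cdots\times\lambda_{v_\theta}$ by~\eqref{eq:mu-i-kth-level}, the decoding identity gives $(f\circ H)(x)=f(v_1\vee\cdots\vee v_\theta)$, hence $|\Exp_{x_i\sim\lambda_{v_i}}(f\circ H)(x)-f(v_1\vee\cdots\vee v_\theta)|\le\theta\beta$; averaging against $\rho$ gives $\langle f\circ H,\Phi\rangle>(\epsilon-\theta\beta)\|\rho\|_1$, so $\langle(f\circ H)|_{\le T},\tilde\Phi\rangle=\langle f\circ H,\Phi\rangle-\langle f\circ H,\Phi-\tilde\Phi\rangle>(\epsilon-\theta\beta-\Delta)\|\rho\|_1\ge(\epsilon-\theta\beta-2\Delta)\|\tilde\Phi\|_1$, the last step using $\epsilon\le1$. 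Applying Fact~\ref{fact:adeg-dual} to $\tilde\Phi$ then yields $\deg_{\epsilon-\beta\theta-2\Delta}((f\circ H)|_{\le T})\ge\orth\tilde\Phi\ge\min\{c\deg_\epsilon(f|_{\le\theta})\sqrt{n/2m},D\}$, which is~(ii).

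The step I expect to be the real obstacle is the lifting of $\psi$ to $\rho$: the function $f$ depends on the \emph{Boolean} disjunction $v_1\vee\cdots\vee v_\theta$, whereas the only symmetrization tool available (Fact~\ref{fact:ambainis-symmetrization}) controls the \emph{additive} aggregation $v_1+\cdots+v_\theta$, and it is the restriction of $\rho$ to collision-free tuples of $V^\theta$ that reconciles the two and transfers orthogonality from $\psi$ to $\rho$. Once that is in place, the remaining work is the routine but lengthy bookkeeping of three error sources — the $\theta\beta$ loss from imperfect decoding off the weight-$k$ layer, and the two copies of $\Delta$ arising from deleting the heavy Hamming-weight tail of $\Phi$ and from renormalizing $\tilde\Phi$ — and checking that they assemble into the stated error $\epsilon-\beta\theta-2\Delta$.
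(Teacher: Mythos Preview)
Your proof is correct and follows essentially the same route as the paper: obtain the balanced coloring via Corollary~\ref{cor:explicit-coloring}, feed it to Theorem~\ref{thm:encoding-1-r-normalized} to get the $\lambda_v$, define $H$ as the circuit decoder, lift the dual $\psi$ for $f|_{\le\theta}$ through the tensor distributions $\bigotimes_i\lambda_{v_i}$ using Proposition~\ref{prop:expect-out} and Fact~\ref{fact:ambainis-symmetrization}, truncate with Lemma~\ref{lem:zero-high-Hamming-weight}, and bound the correlation via the decoding identity on the weight-$k$ layer. The only differences are organizational: the paper truncates each $\Lambda_{\mathbf v}$ individually before averaging (you truncate the averaged $\Phi$), it carries out your ``$\orth\rho\ge d$'' step inline inside its Claim~\ref{claim:Psi-properties} rather than naming $\rho$, and its $h$ includes the extra terms $\bigvee_{S\in\binom{[n]}{k+1}}\bigwedge_{s\in S}z_s$, which are inert for the two-sided statement but are reused for the one-sided Theorem~\ref{thm:hardness-amplification-onesided}.
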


\begin{proof}
We may assume that
\begin{equation}
\epsilon-\beta\theta-2\Delta\geq0\label{eq:error-nonnegative}
\end{equation}
since otherwise the left-hand side in the approximate degree lower
bound of~\ref{enu:amplify-two-sided} is by definition $+\infty$.
Define $V\subseteq\Re^{N}$ by $V=\{0^{N},e_{1},e_{2},\ldots,e_{N}\}$
and set $r=N+1.$ In view of~(\ref{eq:nmk}) and~(\ref{eq:delta-m-k-n-r-repeated-1-1}),
Corollary~\ref{cor:explicit-coloring} gives an explicit integer
$n'\in(n/2,n]$ and an explicit $(\frac{\beta}{16r^{2}m^{2}},\frac{\beta}{16r^{2}m^{2}},m)$-balanced
coloring $\gamma\colon\binom{[n']}{k}\to[r]$. Alternatively, if one
is not concerned about explicitness, the existence of $\gamma$ can
be deduced from the much simpler Corollary~\ref{cor:balanced-coloring-existential}.
Specifically, (\ref{eq:delta-m-k-n-r-repeated-1-1}) forces $\sqrt{m}\geq k$
and in particular $n\geq m\geq k^{2}\geq1.$ Moreover,~(\ref{eq:delta-m-k-n-r-repeated-1-1})
implies that $3r\sqrt{k\ln(n+1)}/m^{k/4}\leq\frac{\beta}{16r^{2}m^{2}}.$
Now Corollary~\ref{cor:balanced-coloring-existential} guarantees
the existence of a $(\frac{\beta}{16r^{2}m^{2}},\frac{\beta}{16r^{2}m^{2}},m)$-balanced
coloring $\gamma\colon\binom{[n]}{k}\to[r]$.

Since $n'\geq m>k,$ Theorem~\ref{thm:encoding-1-r-normalized} gives
explicit distributions $\lambda_{0^{N}},\lambda_{e_{1}},\lambda_{e_{2}},\ldots,\lambda_{e_{N}}$
on $\zoon$ such that
\begin{align}
 & \supp\lambda_{v}\subseteq\{x\in\zoon:|x|=k\text{ or }|x|\geq m\}, &  & v\in V,\label{eq:lambda-i-support-1}\\
 & \zoon|_{k}\cap\supp\lambda_{e_{i}}=\{\1_{S}:S\in\gamma^{-1}(i)\}, &  & i\in[N],\label{eq:lambda-i-kth-level-support-via-gamma}\\
 & \zoon|_{k}\cap\supp\lambda_{0^{N}}=\{\1_{S}:S\in\gamma^{-1}(N+1)\},\label{eq:lambda-i-kth-level-support-via-gamma-0}\\
 & \lambda_{v}(\zoon|_{k})\geq1-\beta, &  & v\in V,\label{eq:lambda-i-kth-level-1}\\
 & \lambda_{v}(\zoon|_{t})\leq\frac{\exp(-c(t-k)/\sqrt{nm})}{c(t-k+1)^{2}}, &  & v\in V,\;t\geq k,\label{eq:lambda-i-graded-bound-1}\\
 & \orth(\lambda_{v}-\lambda_{u})\geq c\sqrt{\frac{n}{2m},} &  & v,u\in V.\label{eq:lambda-i-orth-1}
\end{align}
Properties~(\ref{eq:lambda-i-kth-level-support-via-gamma}) and~(\ref{eq:lambda-i-kth-level-support-via-gamma-0})
imply that 
\begin{align}
\zoon|_{k}\cap\supp\lambda_{u}\cap\supp\lambda_{v} & =\varnothing, &  & u,v\in V,\;u\ne v.\label{eq:lambda-i-j-disjoint-1}
\end{align}
For $\mathbf{v}=(\mathbf{v}_{1},\mathbf{v}_{2},\ldots,\mathbf{v}_{\theta})\in V^{\theta}$,
define 
\begin{equation}
\Lambda_{\mathbf{v}}=\bigotimes_{i=1}^{\theta}\lambda_{\mathbf{v}_{i}}.\label{eq:Lambda-v-defined}
\end{equation}
\begin{claim}
\label{claim:tilde-Lambda}For each $\mathbf{v}\in V^{\theta},$ there
is a function $\widetilde{\Lambda_{\mathbf{v}}}\colon(\zoon)^{\theta}\to\Re$
such that
\begin{align}
 & \supp\widetilde{\Lambda_{\mathbf{v}}}\subseteq(\zoon)^{\theta}|_{\leq T},\label{eq:tilde-Lambda-supp}\\
 & \orth(\Lambda_{\mathbf{v}}-\widetilde{\Lambda_{\mathbf{v}}})>D,\label{eq:tilde-Lambda-orth}\\
 & \|\Lambda_{\mathbf{v}}-\widetilde{\Lambda_{\mathbf{v}}}\|_{1}\leq\Delta.\label{eq:tilde-Lambda-ell1}
\end{align}
\end{claim}

\noindent We will settle Claim~\ref{claim:tilde-Lambda}, and all
other claims, after the proof of the theorem.

We now turn to the construction of the monotone mapping $H$ in the
theorem statement. Define $h\colon\zoon\to\zoo^{N}$ by
\begin{align}
(h(z))_{j} & =\bigvee_{S\in\binom{[n]}{k+1}\cup\gamma^{-1}(j)}\;\bigwedge_{s\in S}z_{s}, &  & j=1,2,\ldots,N.\label{eq:definition-h}
\end{align}
Clearly, this is a monotone DNF formula of width $k+1$. Define $H\colon(\zoon)^{\theta}\to\zoo^{N}$
by
\begin{align}
H(x_{1},x_{2},\ldots,x_{\theta}) & =\bigvee_{i=1}^{\theta}h(x_{i}), &  & x_{1},x_{2},\ldots,x_{\theta}\in\zoon,\label{eq:definition-H}
\end{align}
where the right-hand side is the componentwise disjunction of the
Boolean vectors $h(x_{1}),h(x_{2}),\ldots,h(x_{\theta}).$ Observe
that both $h$ and $H$ are monotone and are given explicitly in closed
form in terms of the coloring $\gamma$ constructed at the beginning
of the proof. This settles~\ref{enu:each-output-bit-of-H}.

For~\ref{enu:amplify-two-sided}, fix an arbitrary function $f\colon\zoo^{N}\to\zoo$
and abbreviate
\[
d=\deg_{\epsilon}(f|_{\leq\theta}).
\]
By the dual characterization of approximate degree (Fact~\ref{fact:adeg-dual}),
there is a function $\psi\colon\zoo^{N}|_{\leq\theta}\to\Re$ such
that
\begin{align}
 & \|\psi\|_{1}=1,\label{eq:psi-bounded}\\
 & \langle f,\psi\rangle>\epsilon,\label{eq:psi-correlated}\\
 & \orth\psi\geq d.\label{eq:psi-orthogonal}
\end{align}
Define $\Psi\colon(\zoon)^{\theta}\to\Re$ by
\begin{equation}
\Psi=\sum_{u\in\zoo^{N}|_{\leq\theta}}\psi(u)\Exp_{\substack{\mathbf{v}\in V^{\theta}:\\
\mathbf{v}_{1}+\mathbf{v}_{2}+\cdots+\mathbf{v}_{\theta}=u
}
}\widetilde{\Lambda_{\mathbf{v}}}.\label{eq:Psi-defined}
\end{equation}
We will now use (\ref{eq:nmk})\textendash (\ref{eq:Psi-defined})
to prove a sequence of claims.
\begin{claim}
\label{claim:Psi-properties}One has
\begin{align}
 & \supp\Psi\subseteq(\zoon)^{\theta}|_{\leq T},\label{eq:Psi-support-on-light-inputs}\\
 & \|\Psi\|_{1}\leq1+\Delta,\label{eq:PSI-ell1}\\
 & \orth\Psi\geq\min\left\{ cd\sqrt{\frac{n}{2m}},D\right\} .\label{eq:orth-Psi}
\end{align}
\end{claim}

\begin{claim}
\label{claim:h-on-short-inputs}Let $v\in V$ be given. Then for all
$z\in\zoon|_{k}\cap\supp\lambda_{v},$ one has $h(z)=v.$
\end{claim}

\begin{claim}
\label{claim:H-on-Lambda-tilde}Let $u\in\zoo^{N}|_{\leq\theta}$
and $\mathbf{v}=(\mathbf{v}_{1},\mathbf{v}_{2},\ldots,\mathbf{v}_{\theta})\in V^{\theta}$
be given such that $\mathbf{v}_{1}+\mathbf{v}_{2}+\cdots+\mathbf{v}_{\theta}=u.$
Then
\begin{equation}
|f(u)-\langle\widetilde{\Lambda_{\mathbf{v}}},f\circ H\rangle|\leq\beta\theta+\Delta.
\end{equation}
\end{claim}

\begin{claim}
\label{claim:Psi-correl}One has
\begin{equation}
\langle f\circ H,\Psi\rangle>(\epsilon-\beta\theta-2\Delta)\|\Psi\|_{1}.\label{eq:correl-Psi-f-H}
\end{equation}
\end{claim}

Note from~(\ref{eq:Psi-support-on-light-inputs}) that $\Psi$ is
supported on inputs of Hamming weight at most $T$ and can therefore
be regarded as a function on $(\zoon)^{\theta}|_{\leq T}$. Now the
claimed bound in~\ref{enu:amplify-two-sided} follows by Fact~\ref{fact:adeg-dual}
in view of~(\ref{eq:orth-Psi}) and (\ref{eq:correl-Psi-f-H}). The
proof of the theorem is complete.
\end{proof}
\begin{proof}[Proof of Claim~\emph{\ref{claim:tilde-Lambda}}.]
 Equations (\ref{eq:T-theta-k}), (\ref{eq:lambda-i-support-1}),
and~(\ref{eq:lambda-i-graded-bound-1}) ensure that Lemma~\ref{cor:concentration-of-measure-Boolean-shifted}
is applicable to the distributions $\lambda_{\mathbf{v}_{1}},\lambda_{\mathbf{v}_{2}},\ldots,\lambda_{\mathbf{v}_{\theta}}$
with parameters $\ell=\theta,$ $B=n,$ $C=1/c$, and $\alpha=\exp(-c/\sqrt{nm})$,
whence
\begin{align*}
\Lambda_{\mathbf{v}}((\zoon)^{\theta}|_{>T}) & \leq\exp\left(-\frac{c(T-\theta k)}{2\sqrt{nm}}\right), &  & \mathbf{v}\in V^{\theta}.
\end{align*}
In view of~(\ref{eq:T-greater-than-D}), we can now invoke Lemma~\ref{lem:zero-high-Hamming-weight}
with parameter $B=n\theta$ to obtain a function $\widetilde{\Lambda_{\mathbf{v}}}\colon(\zoon)^{\theta}\to\Re$
that satisfies~(\ref{eq:tilde-Lambda-supp})\textendash (\ref{eq:tilde-Lambda-ell1}).
\end{proof}
\begin{proof}[Proof of Claim~\emph{\ref{claim:Psi-properties}}.]
 Observe from~(\ref{eq:tilde-Lambda-supp}) that $\Psi$ is a linear
combination of functions supported on inputs of Hamming weight at
most $T.$ This settles the support property~(\ref{eq:Psi-support-on-light-inputs}).
Property~(\ref{eq:PSI-ell1}) can be verified as follows:
\begin{align*}
\|\Psi\|_{1} & \leq\sum_{u\in\zoo^{N}|_{\leq\theta}}|\psi(u)|\Exp_{\substack{\mathbf{v}\in V^{\theta}:\\
\mathbf{v}_{1}+\mathbf{v}_{2}+\cdots+\mathbf{v}_{\theta}=u
}
}\|\widetilde{\Lambda{}_{\mathbf{v}}}\|_{1}\\
 & \leq\left(\sum_{u\in\zoo^{N}|_{\leq\theta}}|\psi(u)|\right)\max_{\mathbf{v}\in V^{\theta}}\|\widetilde{\Lambda_{\mathbf{v}}}\|_{1}\\
 & =\|\psi\|_{1}\max_{\mathbf{v}\in V^{\theta}}\|\widetilde{\Lambda_{\mathbf{v}}}\|_{1}\\
 & \leq\|\psi\|_{1}\max_{\mathbf{v}\in V^{\theta}}\{\|\Lambda_{\mathbf{v}}\|_{1}+\|\widetilde{\Lambda_{\mathbf{v}}}-\Lambda_{\mathbf{v}}\|_{1}\}\\
 & \leq1+\Delta,
\end{align*}
where the first and fourth steps apply the triangle inequality, and
the last step uses~(\ref{eq:tilde-Lambda-ell1}) and~(\ref{eq:psi-bounded}).

To settle~(\ref{eq:orth-Psi}), consider an arbitrary polynomial
$P\colon(\zoon)^{\theta}\to\Re$ of degree less than $\min\{cd\sqrt{n/(2m)},D\}.$
Then
\begin{align}
\langle\Psi,P\rangle & =\sum_{u\in\zoo^{N}|_{\leq\theta}}\psi(u)\Exp_{\substack{\mathbf{v}\in V^{\theta}:\\
\mathbf{v}_{1}+\mathbf{v}_{2}+\cdots+\mathbf{v}_{\theta}=u
}
}\langle\widetilde{\Lambda_{\mathbf{v}}},P\rangle\nonumber \\
 & =\sum_{u\in\zoo^{N}|_{\leq\theta}}\psi(u)\Exp_{\substack{\mathbf{v}\in V^{\theta}:\\
\mathbf{v}_{1}+\mathbf{v}_{2}+\cdots+\mathbf{v}_{\theta}=u
}
}[\langle\Lambda_{\mathbf{v}},P\rangle+\langle\widetilde{\Lambda_{\mathbf{v}}}-\Lambda_{\mathbf{v}},P\rangle]\nonumber \\
 & =\sum_{u\in\zoo^{N}|_{\leq\theta}}\psi(u)\Exp_{\substack{\mathbf{v}\in V^{\theta}:\\
\mathbf{v}_{1}+\mathbf{v}_{2}+\cdots+\mathbf{v}_{\theta}=u
}
}\langle\Lambda_{\mathbf{v}},P\rangle,\label{eq:orth-Psi-P-almost-finished}
\end{align}
where the first and second steps use the linearity of inner product,
and the third step is valid by~(\ref{eq:tilde-Lambda-orth}). Equation
(\ref{eq:lambda-i-orth-1}) allows us to invoke Proposition~\ref{prop:expect-out}
with $\ell=\theta$ and $\phi_{v}=\lambda_{v}$ to infer that the
inner product $\langle\Lambda_{\mathbf{v}},P\rangle$ is a polynomial
in $\mathbf{v}$ of degree less than $d.$ As a result, Fact~\ref{fact:ambainis-symmetrization}
implies that the expected value in~(\ref{eq:orth-Psi-P-almost-finished})
is a polynomial in $u$ of degree less than $d.$ In summary, (\ref{eq:orth-Psi-P-almost-finished})
is the inner product of $\psi$ with a polynomial of degree less than
$d$ and is therefore zero by~(\ref{eq:psi-orthogonal}). The proof
of~(\ref{eq:orth-Psi}) is complete.
\end{proof}
\begin{proof}[Proof of Claim~\emph{\ref{claim:h-on-short-inputs}}.]
 Consider an arbitrary string $z\in\zoon|_{k}$. Then
\[
(h(z))_{j}=\bigvee_{S\in\gamma^{-1}(j)}\;\bigwedge_{s\in S}z_{s}=\I[z\in\supp\lambda_{e_{j}}],
\]
where the first step uses the defining equation~(\ref{eq:definition-h})
together with $|z|=k,$ and the second step applies~(\ref{eq:lambda-i-kth-level-support-via-gamma})
along with~$|z|=k.$ Thus, $h(z)$ can be written out explicitly
as
\begin{equation}
h(z)=(\I[z\in\supp\lambda_{e_{1}}],\I[z\in\supp\lambda_{e_{2}}],\ldots,\I[z\in\supp\lambda_{e_{N}}]).\label{eq:h-written-out}
\end{equation}
Now recall from (\ref{eq:lambda-i-j-disjoint-1}) that a string $z$
of Hamming weight $k$ can belong to at most one of the sets $\supp\lambda_{0^{N}},\supp\lambda_{e_{1}},\supp\lambda_{e_{2}},\ldots,\supp\lambda_{e_{N}}.$
As a result, if $z\in\supp\lambda_{e_{i}}$ then $z\notin\supp\lambda_{e_{j}}$
for all $j\ne i$ and consequently $h(z)=e_{i}$ by~(\ref{eq:h-written-out}).
Analogously, if $z\in\supp\lambda_{0^{N}}$ then $z\notin\supp\lambda_{e_{j}}$
for all $j$ and consequently $h(z)=0^{N}$ by~(\ref{eq:h-written-out}).
This settles the claim for all $v\in V.$
\end{proof}
\begin{proof}[Proof of Claim~\emph{\ref{claim:H-on-Lambda-tilde}}.]
 Since $u$ is a Boolean vector, the equality $\mathbf{v}_{1}+\mathbf{v}_{2}+\cdots+\mathbf{v}_{\theta}=u$
forces
\begin{equation}
\mathbf{v}_{1}\vee\mathbf{v}_{2}\vee\cdots\vee\mathbf{v}_{\theta}=u,\label{eq:vvvv-u-Boolean}
\end{equation}
where the disjunction is applied componentwise. For any input $(x_{1},x_{2},\ldots,x_{\theta})$
where $x_{i}\in\zoon|_{k}\cap\supp\lambda_{\mathbf{v}_{i}},$ we have
\[
(f\circ H)(x_{1},x_{2},\ldots,x_{\theta})=f\left(\bigvee_{i=1}^{\theta}h(x_{i})\right)=f\left(\bigvee_{i=1}^{\theta}\mathbf{v}_{i}\right)=f(u),
\]
where the second and third steps use Claim~\ref{claim:h-on-short-inputs}
and (\ref{eq:vvvv-u-Boolean}), respectively. Since $\supp\Lambda_{\mathbf{v}}=\prod_{i=1}^{\theta}\supp\lambda_{\mathbf{v}_{i}},$
we have shown that
\begin{equation}
f\circ H\equiv f(u)\qquad\qquad\text{on }\;\;(\zoon|_{k})^{\theta}\cap\supp\Lambda_{\mathbf{v}}.\label{eq:f-H-simplified}
\end{equation}
Furthermore,
\begin{equation}
\Lambda_{\mathbf{v}}((\zoon|_{k})^{\theta})=\prod_{i=1}^{\theta}\lambda_{\mathbf{v}_{i}}(\zoon|_{k})\geq(1-\beta)^{\theta}\geq1-\beta\theta,\label{eq:Lambda-v-heavy-inputs-rare}
\end{equation}
where the second step uses~(\ref{eq:lambda-i-kth-level-1}). Now
\begin{align*}
|f(u)- & \langle\widetilde{\Lambda_{\mathbf{v}}},f\circ H\rangle|\\
 & \leq|f(u)-\langle\Lambda_{\mathbf{v}},f\circ H\rangle|+|\langle\Lambda_{\mathbf{v}}-\widetilde{\Lambda_{\mathbf{v}}},f\circ H\rangle|\\
 & \leq|f(u)-\langle\Lambda_{\mathbf{v}},f\circ H\rangle|+\|\Lambda_{\mathbf{v}}-\widetilde{\Lambda_{\mathbf{v}}}\|_{1}\\
 & =\left|f(u)-\Exp_{\Lambda_{\mathbf{v}}}f\circ H\right|+\|\Lambda_{\mathbf{v}}-\widetilde{\Lambda_{\mathbf{v}}}\|_{1}\\
 & \leq\Exp_{\Lambda_{\mathbf{v}}}|f(u)-f\circ H|+\|\Lambda_{\mathbf{v}}-\widetilde{\Lambda_{\mathbf{v}}}\|_{1}\\
 & \leq0\cdot\Lambda_{\mathbf{v}}((\zoon|_{k})^{\theta})+1\cdot\Lambda_{\mathbf{v}}(\overline{(\zoon|_{k})^{\theta}})+\|\Lambda_{\mathbf{v}}-\widetilde{\Lambda_{\mathbf{v}}}\|_{1}\\
 & \leq\beta\theta+\|\Lambda_{\mathbf{v}}-\widetilde{\Lambda_{\mathbf{v}}}\|_{1}\\
 & \leq\beta\theta+\Delta,
\end{align*}
where the last three steps use~(\ref{eq:f-H-simplified}), (\ref{eq:Lambda-v-heavy-inputs-rare}),
and~(\ref{eq:tilde-Lambda-ell1}), respectively.
\end{proof}
\begin{proof}[Proof of Claim~\emph{\ref{claim:Psi-correl}}.]
To begin with,
\begin{align}
\langle f,\psi\rangle-\langle f\circ H,\Psi\rangle & =\sum_{u\in\zoo^{N}|_{\leq\theta}}\psi(u)f(u)\nonumber \\
 & \qquad\qquad-\sum_{u\in\zoo^{N}|_{\leq\theta}}\psi(u)\Exp_{\substack{\mathbf{v}\in V^{\theta}:\\
\mathbf{v}_{1}+\mathbf{v}_{2}+\cdots+\mathbf{v}_{\theta}=u
}
}\langle\widetilde{\Lambda_{\mathbf{v}}},f\circ H\rangle\nonumber \\
 & =\sum_{u\in\zoo^{N}|_{\leq\theta}}\psi(u)\Exp_{\substack{\mathbf{v}\in V^{\theta}:\\
\mathbf{v}_{1}+\mathbf{v}_{2}+\cdots+\mathbf{v}_{\theta}=u
}
}[f(u)-\langle\widetilde{\Lambda_{\mathbf{v}}},f\circ H\rangle]\nonumber \\
 & \leq\sum_{u\in\zoo^{N}|_{\leq\theta}}|\psi(u)|\Exp_{\substack{\mathbf{v}\in V^{\theta}:\\
\mathbf{v}_{1}+\mathbf{v}_{2}+\cdots+\mathbf{v}_{\theta}=u
}
}|f(u)-\langle\widetilde{\Lambda_{\mathbf{v}}},f\circ H\rangle|\nonumber \\
 & \leq\|\psi\|_{1}\max_{u\in\zoo^{N}|_{\leq\theta}}\max_{\substack{\mathbf{v}\in V^{\theta}:\\
\mathbf{v}_{1}+\mathbf{v}_{2}+\cdots+\mathbf{v}_{\theta}=u
}
}|f(u)-\langle\widetilde{\Lambda_{\mathbf{v}}},f\circ H\rangle|\nonumber \\
 & \leq\|\psi\|_{1}(\beta\theta+\Delta)\nonumber \\
 & =\beta\theta+\Delta,\label{eq:error-accumulation}
\end{align}
where the last two steps use Claim~\ref{claim:H-on-Lambda-tilde}
and~(\ref{eq:psi-bounded}), respectively. Then
\begin{align*}
\langle f\circ H,\Psi\rangle & >\epsilon-\beta\theta-\Delta\\
 & \geq\frac{\epsilon-\beta\theta-\Delta}{1+\Delta}\cdot\|\Psi\|_{1}\\
 & \geq(\epsilon-\beta\theta-2\Delta)\|\Psi\|_{1},
\end{align*}
where the first step uses~(\ref{eq:psi-correlated}) and~(\ref{eq:error-accumulation}),
the second step is justified by~(\ref{eq:error-nonnegative}) and~(\ref{eq:PSI-ell1}),
and the third step is legitimate since $a/(1+b)\geq a-b$ for all
$a\in[0,1]$ and $b\geq0$. This completes the proof of~(\ref{eq:correl-Psi-f-H}).
\end{proof}

\subsection{Hardness amplification for one-sided approximate degree}

In this section, we will prove that the construction of Theorem~\ref{thm:hardness-amplification-twosided}
amplifies not only approximate degree but also its one-sided variant.
We start with a technical lemma.
\begin{lem}
\label{lem:AytildeAy} Let $n,m,k,\theta,D,T$ be positive integers
with
\begin{align}
T & \geq n+D,\label{eq:Tnd}\\
T & \geq\theta k.\label{eq:Tktheta}
\end{align}
Let $y\in(\zoon)^{\theta}|_{>T}$ be given. Then there exists $\zeta_{y}\colon(\zoon)^{\theta}\to\Re$
such that
\begin{align}
 & \supp\zeta_{y}\subseteq(\zoon)^{\theta}|_{\leq T}\cup\{y\},\label{eq:zeta-support}\\
 & \zeta_{y}(y)=1,\label{eq:zeta-y-y}\\
 & \orth\zeta_{y}>D,\label{eq:zeta-orth}\\
 & \|\zeta_{y}\|_{1}\leq1+2^{D}\binom{n(\theta-1)}{D},\label{eq:zeta-ell1}\\
 & \zeta_{y}=0\quad\text{ on }\quad(\zoon|_{\leq k})^{\theta}.\label{eq:zeta-on-k}
\end{align}
\end{lem}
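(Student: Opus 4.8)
The plan is to reduce the statement to the generalized Razborov--Sherstov corrector of Lemma~\ref{lem:razborov-sherstov-generalized}, applied inside all blocks of the input except one carefully chosen ``heavy'' block. \textbf{Step 1: locate a heavy block.} Since $|y|>T\geq\theta k$ by~(\ref{eq:Tktheta}), the $\theta$ blocks of $y$ cannot all have Hamming weight at most $k$; fix an index $i_0$ with $|y_{i_0}|>k$. Observe also that $n\theta\geq|y|>T\geq n+D$ by~(\ref{eq:Tnd}) forces $\theta\geq2$, so the string $\hat y\in\zoo^{n(\theta-1)}$ obtained from $y$ by deleting block $i_0$ is well defined and lives in $n(\theta-1)\geq n$ coordinates. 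Its weight satisfies $|\hat y|=|y|-|y_{i_0}|>T-|y_{i_0}|\geq T-n\geq D$, so in particular $0\leq D<|\hat y|\leq n(\theta-1)$. \textbf{Step 2: invoke Lemma~\ref{lem:razborov-sherstov-generalized}.} With $B=n(\theta-1)$, the hypotheses of that lemma are met for the string $\hat y$, yielding an explicit $\zeta_{\hat y}\colon\zoo^{n(\theta-1)}\to\Re$ supported on $\{\hat x\leq\hat y:|\hat x|\leq D\}\cup\{\hat y\}$, with $\zeta_{\hat y}(\hat y)=1$, $\|\zeta_{\hat y}\|_1\leq1+2^D\binom{n(\theta-1)}{D}$, and $\orth\zeta_{\hat y}>D$.

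\textbf{Step 3: assemble $\zeta_y$.} Define $\zeta_y\colon(\zoon)^\theta\to\Re$ by freezing block $i_0$ to $y_{i_0}$:
\[
\zeta_y(x_1,\ldots,x_\theta)=\delta_{x_{i_0},\,y_{i_0}}\cdot\zeta_{\hat y}\bigl((x_j)_{j\neq i_0}\bigr).
\]
After the harmless permutation of coordinates that lists block $i_0$ first, $\zeta_y$ is exactly the tensor product $\delta_{\cdot,\,y_{i_0}}\otimes\zeta_{\hat y}$ on $\zoon\times\zoo^{n(\theta-1)}$. \textbf{Step 4: read off the five conclusions.} Equation~(\ref{eq:zeta-y-y}) is immediate. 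For~(\ref{eq:zeta-on-k}): if $|x_j|\leq k$ for every $j$, then $|x_{i_0}|\leq k<|y_{i_0}|$, so $x_{i_0}\neq y_{i_0}$ and hence $\zeta_y(x)=0$. For~(\ref{eq:zeta-support}): a nonzero value of $\zeta_y(x)$ forces $x_{i_0}=y_{i_0}$ and $(x_j)_{j\neq i_0}\in\supp\zeta_{\hat y}$, which is either $\hat y$ (so $x=y$) or a string of weight at most $D$, in which case $|x|=|y_{i_0}|+|(x_j)_{j\neq i_0}|\leq n+D\leq T$. Equation~(\ref{eq:zeta-ell1}) follows from $\|\phi\otimes\psi\|_1=\|\phi\|_1\,\|\psi\|_1$ and $\|\delta_{\cdot,\,y_{i_0}}\|_1=1$. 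Finally~(\ref{eq:zeta-orth}) follows from Proposition~\ref{prop:orth}\ref{item:orth-tensor} together with $\orth\delta_{\cdot,\,y_{i_0}}=0$ (witnessed by the constant polynomial $1$, since $\langle\delta_{\cdot,\,y_{i_0}},1\rangle=1$), giving $\orth\zeta_y=0+\orth\zeta_{\hat y}>D$; explicitness is inherited from Lemma~\ref{lem:razborov-sherstov-generalized}.

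There is no genuinely hard step here; the only thing requiring care is the design decision of what to freeze. Freezing an \emph{entire} heavy block, rather than mimicking Lemma~\ref{lem:razborov-sherstov-generalized} on the full string $y$, is what simultaneously secures property~(\ref{eq:zeta-on-k}) and shrinks the binomial coefficient in~(\ref{eq:zeta-ell1}) from $\binom{n\theta}{D}$ to $\binom{n(\theta-1)}{D}$; the resulting bound $|x|\leq n+D$ on the ``light'' part of the support is precisely where hypothesis~(\ref{eq:Tnd}) is consumed, while~(\ref{eq:Tktheta}) is exactly what guarantees that a heavy block exists.
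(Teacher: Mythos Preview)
Your proof is correct and follows essentially the same approach as the paper: pick a block $i_0$ with $|y_{i_0}|>k$, apply Lemma~\ref{lem:razborov-sherstov-generalized} with $B=n(\theta-1)$ to the remaining blocks, and tensor the result with $\delta_{\cdot,y_{i_0}}$. The paper simply assumes $i_0=1$ by symmetry rather than carrying a general index, and omits your explicit observation that $\theta\geq2$, but otherwise the arguments coincide.
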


\begin{proof}
It follows from~(\ref{eq:Tktheta}) that $y=(y_{1},y_{2},\ldots,y_{\theta})$
has a coordinate with Hamming weight greater than $k.$ By symmetry,
we may assume that 
\begin{equation}
|y_{1}|>k.\label{eq:y1-greater-m}
\end{equation}
We have $|y_{2}y_{3}\ldots y_{\theta}|=|y|-|y_{1}|>T-n\geq D,$ where
the second step uses the hypothesis $|y|>T$ along with the trivial
bound $|y_{1}|\leq n,$ whereas the third step is legitimate by~(\ref{eq:Tnd}).
Thanks to the newly obtained inequality $|y_{2}y_{3}\ldots y_{\theta}|>D,$
Lemma~\ref{lem:razborov-sherstov-generalized} is applicable with
$B=n(\theta-1)$ and gives a function $\zeta\colon(\zoon)^{\theta-1}\to\Re$
such that
\begin{align}
 & \supp\zeta\subseteq(\zoon)^{\theta-1}|_{\leq D}\cup\{y_{2}y_{3}\ldots y_{\theta}\},\label{eq:zeta-y-supp-1}\\
 & \zeta(y_{2}y_{3}\ldots y_{\theta})=1,\label{eq:zeta-y-point-mass-1}\\
 & \|\zeta\|_{1}\leq1+2^{D}\binom{n(\theta-1)}{D},\label{eq:zeta-y-ell1-1}\\
 & \orth\zeta>D.\label{eq:zeta-y-orth-1}
\end{align}
We will prove that the claimed properties~(\ref{eq:zeta-support})\textendash (\ref{eq:zeta-on-k})
are enjoyed by the function
\[
\zeta_{y}(x)=\delta_{x_{1},y_{1}}\zeta(x_{2}x_{3}\ldots x_{\theta}).
\]
To verify the support property~(\ref{eq:zeta-support}), fix any
$x$ with $\zeta_{y}(x)\ne0.$ Then necessarily $\delta_{x_{1},y_{1}}=1,$
forcing $x_{1}=y_{1}.$ Now~(\ref{eq:zeta-y-supp-1}) implies that
$x$ either equals $y$ or has Hamming weight at most $|y_{1}|+D.$
Since $|y_{1}|+D\leq n+D\leq T$ by~(\ref{eq:Tnd}), this completes
the proof of~(\ref{eq:zeta-support}).

The remaining properties are straightforward. Property~(\ref{eq:zeta-y-y})
follows from the corresponding property~(\ref{eq:zeta-y-point-mass-1})
of $\zeta$. Likewise, property~(\ref{eq:zeta-orth}) follows from~(\ref{eq:zeta-y-orth-1})
in light of Proposition~\ref{prop:orth}~\ref{item:orth-tensor}.
Property~(\ref{eq:zeta-ell1}) is immediate from~(\ref{eq:zeta-y-ell1-1}).
Finally, (\ref{eq:zeta-on-k}) is a consequence of~(\ref{eq:y1-greater-m}).
\end{proof}
We are now ready to state and prove our hardness amplification result,
which is a far-reaching generalization of Theorem~\ref{thm:hardness-amplification-twosided}.
\begin{thm}
\label{thm:hardness-amplification-onesided}Let $C^{*}\geq1$ and
$c\in(0,1)$ be the absolute constants from Theorems~\emph{\ref{thm:explicit-low-discrepancy-set}}
and~\emph{\ref{thm:encoding-1-r-normalized},} respectively. Fix
a real number $0<\beta<1$ and positive integers $n,m,k,N,\theta,D,T$
such that
\begin{align}
 & n/2\geq m>k,\label{eq:nmk-1}\\
 & 4(N+1)^{2}k\exp\left(-\frac{\sqrt{m}}{16k}\right)+(N+1)\left(\frac{3C^{*}\log^{2}(n+N+1)}{m^{1/4}}\right)^{k}\nonumber \\
 & \qquad\qquad\qquad\qquad\qquad\qquad\qquad\qquad\qquad\qquad\leq\frac{\beta}{16(N+1)^{2}m^{2}},\label{eq:delta-m-k-n-r-repeated-1-1-1}\\
 & T\geq\frac{8\e}{c}\cdot\theta(1+\ln\theta)+\theta k,\label{eq:T-theta-k-1}\\
 & T\geq D+n.\label{eq:T-greater-than-D-plus-n}
\end{align}
Define
\begin{equation}
\Delta=\left(1+2^{D}\binom{n\theta}{D}\right)\exp\left(-\frac{c(T-\theta k)}{2\sqrt{nm}}\right).\label{eq:Delta-defined-1}
\end{equation}
Then there is an $($explicitly given$)$ mapping $H\colon(\zoon)^{\theta}\to\zoo^{N}$
such that:
\begin{enumerate}
\item \label{enu:each-output-bit-of-H-onesided}each output bit of $H$
is computable by a monotone $(k+1)$-DNF formula; 
\item \label{enu:amplify-generalized-two-sided}for every $\epsilon\in[0,1]$
and every $f\colon\zoo^{N}\to\zoo,$ one has
\[
\deg_{\epsilon-\beta\theta-2\Delta}((f\circ H)|_{\leq T})\geq\min\left\{ c\deg_{\epsilon}(f|_{\leq\theta})\sqrt{\frac{n}{2m}},D\right\} ;
\]
\item \label{enu:amplify-generalized-one-sided}for every $\epsilon\in[0,1]$
and every $f\colon\zoo^{N}\to\zoo$ with $f(1^{N})=0,$ one has
\[
\onedeg_{\epsilon-\beta\theta-2\Delta}((f\circ H)|_{\leq T})\geq\min\left\{ c\onedeg_{\epsilon}(f|_{\leq\theta})\sqrt{\frac{n}{2m}},D\right\} .
\]
\end{enumerate}
\end{thm}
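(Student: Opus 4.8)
The plan is to leave the construction of Theorem~\ref{thm:hardness-amplification-twosided} essentially untouched, making a single refinement to the truncated tensors $\widetilde{\Lambda_{\mathbf{v}}}$ so that they agree with $\Lambda_{\mathbf{v}}$ on \emph{all} light inputs, and then to verify that the resulting dual object respects the extra nonnegativity constraint of Fact~\ref{fact:onedeg-dual}. Concretely, I would keep the monotone maps $h$ of~(\ref{eq:definition-h}) and $H$ of~(\ref{eq:definition-H}) exactly as in Theorem~\ref{thm:hardness-amplification-twosided}; in particular the clauses indexed by $\binom{[n]}{k+1}$ in~(\ref{eq:definition-h}), which were harmless in the two-sided proof, become essential here. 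The only change is in Claim~\ref{claim:tilde-Lambda}: rather than applying Lemma~\ref{lem:zero-high-Hamming-weight} to kill the heavy part of $\Lambda_{\mathbf{v}}$, I would apply the refined corrector of Lemma~\ref{lem:AytildeAy} to each individual $y\in(\zoon)^{\theta}|_{>T}$ and set $\widetilde{\Lambda_{\mathbf{v}}}=\Lambda_{\mathbf{v}}-\sum_{y:\,|y|>T}\Lambda_{\mathbf{v}}(y)\,\zeta_{y}$. Hypotheses~(\ref{eq:T-theta-k-1}) and~(\ref{eq:T-greater-than-D-plus-n}) supply the two conditions $T\geq\theta k$ and $T\geq D+n$ required by Lemma~\ref{lem:AytildeAy}, and properties~(\ref{eq:zeta-support})--(\ref{eq:zeta-ell1}) of $\zeta_{y}$ together with the same concentration estimate $\Lambda_{\mathbf{v}}((\zoon)^{\theta}|_{>T})\leq\exp(-c(T-\theta k)/(2\sqrt{nm}))$ (from Lemma~\ref{cor:concentration-of-measure-Boolean-shifted}) yield~(\ref{eq:tilde-Lambda-supp})--(\ref{eq:tilde-Lambda-ell1}) verbatim, using $\binom{n(\theta-1)}{D}\leq\binom{n\theta}{D}$ and the definition of $\Delta$ in~(\ref{eq:Delta-defined-1}). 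The new dividend, coming from property~(\ref{eq:zeta-on-k}) of Lemma~\ref{lem:AytildeAy}, is that $\zeta_{y}$ vanishes on $(\zoon|_{\leq k})^{\theta}$, hence $\widetilde{\Lambda_{\mathbf{v}}}=\Lambda_{\mathbf{v}}$ on $(\zoon|_{\leq k})^{\theta}$; I would record this as an additional conclusion of the modified Claim~\ref{claim:tilde-Lambda}.

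With this in hand, parts~\ref{enu:each-output-bit-of-H-onesided} and~\ref{enu:amplify-generalized-two-sided} follow at once, since the proof of part~\ref{enu:amplify-two-sided} of Theorem~\ref{thm:hardness-amplification-twosided} never uses anything about $\widetilde{\Lambda_{\mathbf{v}}}$ beyond~(\ref{eq:tilde-Lambda-supp})--(\ref{eq:tilde-Lambda-ell1}). For part~\ref{enu:amplify-generalized-one-sided}, fix $f$ with $f(1^{N})=0$, set $d=\onedeg_{\epsilon}(f|_{\leq\theta})$, assume $\epsilon-\beta\theta-2\Delta\geq0$, and apply the dual characterization of one-sided approximate degree (Fact~\ref{fact:onedeg-dual}) to obtain $\psi\colon\zoo^{N}|_{\leq\theta}\to\Re$ with $\|\psi\|_{1}=1$, $\langle f,\psi\rangle>\epsilon$, $\orth\psi\geq d$, and $\psi\geq0$ on $f^{-1}(1)$. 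Form $\Psi$ by~(\ref{eq:Psi-defined}) exactly as before. Claims~\ref{claim:Psi-properties}, \ref{claim:h-on-short-inputs}, \ref{claim:H-on-Lambda-tilde}, and~\ref{claim:Psi-correl} transcribe word for word, so $\supp\Psi\subseteq(\zoon)^{\theta}|_{\leq T}$, $\orth\Psi\geq\min\{cd\sqrt{n/(2m)},D\}$, and $\langle f\circ H,\Psi\rangle>(\epsilon-\beta\theta-2\Delta)\|\Psi\|_{1}$. It remains only to check that $\Psi(x)\geq0$ whenever $(f\circ H)(x)=1$.

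The point is that $(f\circ H)^{-1}(1)\subseteq(\zoon|_{\leq k})^{\theta}$: if a coordinate $x_{i}$ has $|x_{i}|\geq k+1$, then any $S\in\binom{\supp x_{i}}{k+1}$ makes $\bigwedge_{s\in S}(x_{i})_{s}=1$, so by~(\ref{eq:definition-h}) every coordinate of $h(x_{i})$ equals $1$, whence $H(x)=1^{N}$ and $(f\circ H)(x)=f(1^{N})=0$. So assume $(f\circ H)(x)=1$, i.e.\ $x\in(\zoon|_{\leq k})^{\theta}$; on this set $\widetilde{\Lambda_{\mathbf{v}}}=\Lambda_{\mathbf{v}}$, so $\Psi(x)=\sum_{u}\psi(u)\,\Exp_{\mathbf{v}:\,\mathbf{v}_{1}+\cdots+\mathbf{v}_{\theta}=u}\Lambda_{\mathbf{v}}(x)$ is built from the nonnegative functions $\Lambda_{\mathbf{v}}$. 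If $x\notin(\zoon|_{k})^{\theta}$, or if some $x_{i}$ lies outside $\bigcup_{v\in V}\supp\lambda_{v}$, then $\Lambda_{\mathbf{v}}(x)=0$ for every $\mathbf{v}$ (since $\supp\lambda_{v}\subseteq\zoon|_{k}\cup\zoon|_{\geq m}$) and $\Psi(x)=0$. Otherwise $x\in(\zoon|_{k})^{\theta}$ and each $x_{i}\in\supp\lambda_{w_{i}}$ for a unique $w_{i}\in V$ by the disjointness~(\ref{eq:lambda-i-j-disjoint-1}); then $\Lambda_{\mathbf{v}}(x)\neq0$ only for $\mathbf{v}=(w_{1},\dots,w_{\theta})$, so if $w_{1}+\cdots+w_{\theta}$ is not a Boolean vector this $\mathbf{v}$ never meets the summation constraint and $\Psi(x)=0$, while if $w_{1}+\cdots+w_{\theta}=u^{*}\in\zoo^{N}|_{\leq\theta}$ then $\Psi(x)$ is a nonnegative multiple of $\psi(u^{*})$. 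In the latter case Claim~\ref{claim:h-on-short-inputs} gives $(f\circ H)(x)=f(\bigvee_{i}h(x_{i}))=f(\bigvee_{i}w_{i})=f(u^{*})=1$, so $\psi(u^{*})\geq0$ and hence $\Psi(x)\geq0$. The bound on $\onedeg_{\epsilon-\beta\theta-2\Delta}((f\circ H)|_{\leq T})$ then follows from Fact~\ref{fact:onedeg-dual}.

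I expect the main obstacle to be the bookkeeping in this last step: ensuring the refined corrector $\zeta_{y}$ vanishes on \emph{all} of $(\zoon|_{\leq k})^{\theta}$ rather than merely on the heavy inputs (this is exactly what Lemma~\ref{lem:AytildeAy}~(\ref{eq:zeta-on-k}) delivers, and the reason the hypothesis is strengthened from $T\geq D$ to $T\geq D+n$), and then carefully tracking which $\mathbf{v}$ actually contribute to $\Psi$ at a light input $(\1_{S_{1}},\dots,\1_{S_{\theta}})$ so that the sign of $\Psi$ there genuinely matches that of $\psi$. Everything else is a mechanical re-run of the two-sided argument.
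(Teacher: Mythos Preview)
Your proposal is correct and follows essentially the same approach as the paper: the same maps $h,H$, the same refinement of Claim~\ref{claim:tilde-Lambda} via Lemma~\ref{lem:AytildeAy} to get $\widetilde{\Lambda_{\mathbf{v}}}=\Lambda_{\mathbf{v}}$ on $(\zoon|_{\leq k})^{\theta}$, and the same reduction of part~\ref{enu:amplify-generalized-one-sided} to verifying $\Psi\geq0$ on $(f\circ H)^{-1}(1)$. The only cosmetic difference is that the paper proves this last nonnegativity by contrapositive (fix $x$ with $\Psi(x)<0$ and show $(f\circ H)(x)=0$), whereas you argue it directly with an explicit case analysis on which $\mathbf{v}$ can contribute at a light input; the two arguments are equivalent.
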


\begin{proof}
As in the proof of Theorem~\ref{thm:hardness-amplification-twosided},
we may assume that
\begin{equation}
\epsilon-\beta\theta-2\Delta\geq0\label{eq:error-nonnegative-1}
\end{equation}
since otherwise the left-hand side in the lower bounds of~\ref{enu:amplify-generalized-two-sided}
and \ref{enu:amplify-generalized-one-sided} is by definition $+\infty$.
Define $V\subseteq\Re^{N}$ by $V=\{0^{N},e_{1},e_{2},\ldots,e_{N}\}$
and set $r=N+1.$ Arguing as in the proof of Theorem~\ref{thm:hardness-amplification-twosided},
we obtain an explicit integer $n'\in(n/2,n]$ and an explicit $(\frac{\beta}{16r^{2}m^{2}},\frac{\beta}{16r^{2}m^{2}},m)$-balanced
coloring $\gamma\colon\binom{[n']}{k}\to[r]$, which in turn results
in explicit distributions $\lambda_{0^{N}},\lambda_{e_{1}},\lambda_{e_{2}},\ldots,\lambda_{e_{N}}$
on $\zoon$ such that
\begin{align}
 & \supp\lambda_{v}\subseteq\{x\in\zoon:|x|=k\text{ or }|x|\geq m\}, &  & v\in V,\label{eq:lambda-i-support-1-1}\\
 & \zoon|_{k}\cap\supp\lambda_{e_{i}}=\{\1_{S}:S\in\gamma^{-1}(i)\}, &  & i\in[N],\label{eq:lambda-i-kth-level-support-via-gamma-1}\\
 & \zoon|_{k}\cap\supp\lambda_{0^{N}}=\{\1_{S}:S\in\gamma^{-1}(N+1)\},\label{eq:lambda-i-kth-level-support-via-gamma-0-1}\\
 & \lambda_{v}(\zoon|_{k})\geq1-\beta, &  & v\in V,\label{eq:lambda-i-kth-level-1-1}\\
 & \lambda_{v}(\zoon|_{t})\leq\frac{\exp(-c(t-k)/\sqrt{nm})}{c(t-k+1)^{2}}, &  & v\in V,\;t\geq k,\label{eq:lambda-i-graded-bound-1-1}\\
 & \orth(\lambda_{v}-\lambda_{u})\geq c\sqrt{\frac{n}{2m},} &  & v,u\in V.\label{eq:lambda-i-orth-1-1}
\end{align}
Properties~(\ref{eq:lambda-i-kth-level-support-via-gamma-1}) and~(\ref{eq:lambda-i-kth-level-support-via-gamma-0-1})
imply that 
\begin{align}
\zoon|_{k}\cap\supp\lambda_{u}\cap\supp\lambda_{v} & =\varnothing, &  & u,v\in V,\;u\ne v.\label{eq:lambda-i-j-disjoint-1-1}
\end{align}
For $\mathbf{v}=(\mathbf{v}_{1},\mathbf{v}_{2},\ldots,\mathbf{v}_{\theta})\in V^{\theta}$,
define 
\begin{equation}
\Lambda_{\mathbf{v}}=\bigotimes_{i=1}^{\theta}\lambda_{\mathbf{v}_{i}}.
\end{equation}
 
\begin{claim}
\label{claim:tilde-Lambda-onesided}For each $\mathbf{v}\in V^{\theta},$
there is a function $\widetilde{\Lambda_{\mathbf{v}}}\colon(\zoon)^{\theta}\to\Re$
such that
\begin{align}
 & \supp\widetilde{\Lambda_{\mathbf{v}}}\subseteq(\zoon)^{\theta}|_{\leq T},\label{eq:tilde-Lambda-supp-1}\\
 & \orth(\Lambda_{\mathbf{v}}-\widetilde{\Lambda_{\mathbf{v}}})>D,\label{eq:tilde-Lambda-orth-1}\\
 & \|\Lambda_{\mathbf{v}}-\widetilde{\Lambda_{\mathbf{v}}}\|_{1}\leq\Delta,\label{eq:tilde-Lambda-ell1-1}\\
 & \widetilde{\Lambda_{\mathbf{v}}}=\Lambda_{\mathbf{v}}\qquad\text{on }(\zoon|_{\leq k})^{\theta}.\label{eq:tilde-Lambda-equals-Lambda}
\end{align}
\end{claim}

We will settle Claim~\ref{claim:tilde-Lambda-onesided} after the
proof of the theorem. We now define the monotone mapping $H$ exactly
the same way as in the proof of Theorem~\ref{thm:hardness-amplification-twosided}.
Specifically, define $h\colon\zoon\to\zoo^{N}$ by
\begin{align}
(h(z))_{j} & =\bigvee_{S\in\binom{[n]}{k+1}\cup\gamma^{-1}(j)}\;\bigwedge_{s\in S}z_{s}, &  & j=1,2,\ldots,N.\label{eq:definition-h-1}
\end{align}
Define $H\colon(\zoon)^{\theta}\to\zoo^{N}$ by
\begin{align}
H(x_{1},x_{2},\ldots,x_{\theta}) & =\bigvee_{i=1}^{\theta}h(x_{i}), &  & x_{1},x_{2},\ldots,x_{\theta}\in\zoon,\label{eq:definition-H-1}
\end{align}
where the right-hand side is the componentwise disjunction of the
Boolean vectors $h(x_{1}),h(x_{2}),\ldots,h(x_{\theta}).$ With these
definitions, items~\ref{enu:each-output-bit-of-H-onesided} and~\ref{enu:amplify-generalized-two-sided}
are immediate because they are restatements of Theorem~\ref{thm:hardness-amplification-twosided}~\ref{enu:each-output-bit-of-H},~\ref{enu:amplify-two-sided}.
To prove the remaining item~\ref{enu:amplify-generalized-one-sided},
fix an arbitrary function $f\colon\zoo^{N}\to\zoo$ with
\begin{equation}
f(1^{N})=0,\label{eq:f-at-11111}
\end{equation}
and abbreviate
\[
d=\onedeg_{\epsilon}(f|_{\leq\theta}).
\]
By the dual characterization of one-sided approximate degree (Fact~\ref{fact:onedeg-dual}),
there is a function $\psi\colon\zoo^{N}|_{\leq\theta}\to\Re$ such
that
\begin{align}
 & \|\psi\|_{1}=1,\label{eq:psi-bounded-onnesided}\\
 & \langle f,\psi\rangle>\epsilon,\label{eq:psi-correlated-onesided}\\
 & \orth\psi\geq d,\label{eq:psi-orthogonal-onesided}\\
 & \psi(x)\geq0\quad\text{whenever}\quad f(x)=1.\label{eq:psi-nonneg-on-f1}
\end{align}
Define $\Psi\colon(\zoon)^{\theta}\to\Re$ by
\begin{equation}
\Psi=\sum_{u\in\zoo^{N}|_{\leq\theta}}\psi(u)\Exp_{\substack{\mathbf{v}\in V^{\theta}:\\
\mathbf{v}_{1}+\mathbf{v}_{2}+\cdots+\mathbf{v}_{\theta}=u
}
}\widetilde{\Lambda_{\mathbf{v}}}.\label{eq:Psi-defined-1}
\end{equation}
Equations~(\ref{eq:nmk-1})\textendash (\ref{eq:Psi-defined-1})
subsume the corresponding equations (\ref{eq:nmk})\textendash (\ref{eq:Psi-defined})
in the proof of Theorem~\ref{thm:hardness-amplification-twosided}.
Recall that from (\ref{eq:nmk})\textendash (\ref{eq:Psi-defined}),
we deduced Claims~\ref{claim:Psi-properties}\textendash \ref{claim:Psi-correl}.
As a result, Claims~\ref{claim:Psi-properties}\textendash \ref{claim:Psi-correl}
remain valid here as well. In particular, we have
\begin{align}
 & \supp\Psi\subseteq(\zoon)^{\theta}|_{\leq T},\label{eq:Psi-support-on-light-inputs-1}\\
 & \orth\Psi\geq\min\left\{ cd\sqrt{\frac{n}{2m}},D\right\} ,\label{eq:orth-Psi-1}\\
 & h\equiv v\quad\text{on }\zoon|_{k}\cap\supp\lambda_{v} &  & (v\in V),\label{eq:h-on-short-inputs}\\
 & \langle f\circ H,\Psi\rangle>(\epsilon-\beta\theta-2\Delta)\|\Psi\|_{1}.\label{eq:correl-Psi-f-H-1}
\end{align}
Moreover, we will shortly prove the following new claim.
\begin{claim}
\label{claim:Psi-nonnegative} $\Psi(x)\geq0$ whenever $(f\circ H)(x)=1.$
\end{claim}

\noindent The lower bound on the one-sided approximate degree in~\ref{enu:amplify-generalized-one-sided}
now follows from the dual characterization of one-sided approximate
degree (Fact~\ref{fact:onedeg-dual}) in view of (\ref{eq:Psi-support-on-light-inputs-1})\textendash (\ref{eq:correl-Psi-f-H-1})
and Claim~\ref{claim:Psi-nonnegative}. This completes the proof
of Theorem~\ref{thm:hardness-amplification-onesided}. 
\end{proof}
\begin{proof}[Proof of Claim~\emph{\ref{claim:tilde-Lambda-onesided}.}]
 Fix $\mathbf{v}\in V^{\theta}$ arbitrarily for the remainder of
the proof. Equations (\ref{eq:T-theta-k-1}), (\ref{eq:lambda-i-support-1-1}),
and~(\ref{eq:lambda-i-graded-bound-1-1}) ensure that Lemma~\ref{cor:concentration-of-measure-Boolean-shifted}
is applicable to the distributions $\lambda_{\mathbf{v}_{1}},\lambda_{\mathbf{v}_{2}},\ldots,\lambda_{\mathbf{v}_{\theta}}$
with parameters $\ell=\theta,$ $B=n,$ $C=1/c$, and $\alpha=\exp(-c/\sqrt{nm})$,
whence
\begin{align}
\Lambda_{\mathbf{v}}((\zoon)^{\theta}|_{>T}) & \leq\exp\left(-\frac{c(T-\theta k)}{2\sqrt{nm}}\right).\label{eq:Lambda-v-concentration}
\end{align}
Recall from (\ref{eq:T-theta-k-1}) and (\ref{eq:T-greater-than-D-plus-n})
that $T\geq D+n$ and $T\geq\theta k,$ which makes Lemma~\ref{lem:AytildeAy}
applicable. Define $\widetilde{\Lambda_{\mathbf{v}}}\colon(\zoon)^{\theta}\to\Re$
by
\begin{equation}
\widetilde{\Lambda_{\mathbf{v}}}=\Lambda_{\mathbf{v}}-\sum_{y\in(\zoon)^{\theta}|_{>T}}\Lambda_{\mathbf{v}}(y)\zeta_{y},\label{eq:tilde-Lambda-definition}
\end{equation}
where $\zeta_{y}$ is as given by Lemma~\ref{lem:AytildeAy}. To
verify the support property~(\ref{eq:tilde-Lambda-supp-1}), fix
any input $x$ of Hamming weight $|x|>T.$ For all $y$ in the summation
with $y\ne x$, we have $\zeta_{y}(x)=0$ in view of~(\ref{eq:zeta-support}).
As a result, (\ref{eq:tilde-Lambda-definition}) simplifies to $\widetilde{\Lambda_{\mathbf{v}}}(x)=\Lambda_{\mathbf{v}}(x)-\Lambda_{\mathbf{v}}(x)\zeta_{x}(x).$
In view of~(\ref{eq:zeta-y-y}), we conclude that $\widetilde{\Lambda_{\mathbf{v}}}(x)=0$.

The orthogonality property~(\ref{eq:tilde-Lambda-orth-1}) follows
from
\[
\orth(\Lambda_{\mathbf{v}}-\widetilde{\Lambda_{\mathbf{v}}})\geq\min_{y\in(\zoon)^{\theta}|_{>T}}\orth\zeta_{y}>D,
\]
where the first step uses the defining equation~(\ref{eq:tilde-Lambda-definition})
and Proposition~\ref{prop:orth}~\ref{item:orth-sum}, and the second
step is legitimate by~(\ref{eq:zeta-orth}). 

Property~(\ref{eq:tilde-Lambda-ell1-1}) can be verified as follows:

\begin{align*}
\|\Lambda_{\mathbf{v}}-\widetilde{\Lambda_{\mathbf{v}}}\|_{1} & \leq\sum_{y\in(\zoon)^{\theta}|_{>T}}\Lambda_{\mathbf{v}}(y)\|\zeta_{y}\|_{1}\\
 & \leq\left(1+2^{D}\binom{n(\theta-1)}{D}\right)\sum_{y\in(\zoon)^{\theta}|_{>T}}\Lambda_{\mathbf{v}}(y)\\
 & \leq\left(1+2^{D}\binom{n(\theta-1)}{D}\right)\exp\left(-\frac{c(T-\theta k)}{2\sqrt{nm}}\right)\\
 & \leq\Delta,
\end{align*}
where the first step uses the triangle inequality along with the defining
equation~(\ref{eq:tilde-Lambda-definition}), the second step applies~(\ref{eq:zeta-ell1}),
the third step is valid by~(\ref{eq:Lambda-v-concentration}), and
the fourth step uses the definition~(\ref{eq:Delta-defined-1}).

Finally, (\ref{eq:tilde-Lambda-equals-Lambda}) follows from the definition~(\ref{eq:tilde-Lambda-definition})
in view of~(\ref{eq:zeta-on-k}).
\end{proof}
\begin{proof}[Proof of Claim~\emph{\ref{claim:Psi-nonnegative}.}]
 We will prove the claim in contrapositive form. Specifically, fix
an arbitrary string $x=(x_{1},x_{2},\ldots,x_{\theta})\in(\zoon)^{\theta}$
with $\Psi(x)<0$. Our objective is to deduce that $(f\circ H)(x)=0.$ 

There are two cases to consider. If $|x_{i}|>k$ for some $i,$ then
the defining equation~(\ref{eq:definition-h-1}) implies that $h(x_{i})=1^{N}$.
As a result,
\[
(f\circ H)(x)=f(H(x))=f\left(\bigvee_{i=1}^{\theta}h(x_{i})\right)=f(1^{N})=0,
\]
where the last step uses~(\ref{eq:f-at-11111}).

We now treat the complementary case $x\in(\zoon|_{\leq k})^{\theta}.$
By (\ref{eq:tilde-Lambda-equals-Lambda}) and (\ref{eq:Psi-defined-1}),
\begin{align}
\Psi(x) & =\sum_{u\in\zoo^{N}|_{\leq\theta}}\psi(u)\Exp_{\substack{\mathbf{v}\in V^{\theta}:\\
\mathbf{v}_{1}+\mathbf{v}_{2}+\cdots+\mathbf{v}_{\theta}=u
}
}\Lambda_{\mathbf{v}}(x)\nonumber \\
 & =\sum_{u\in\zoo^{N}|_{\leq\theta}}\psi(u)\Exp_{\substack{\mathbf{v}\in V^{\theta}:\\
\mathbf{v}_{1}+\mathbf{v}_{2}+\cdots+\mathbf{v}_{\theta}=u
}
}\;\prod_{i=1}^{\theta}\lambda_{\mathbf{v}_{i}}(x_{i}).\label{eq:Psi-rewritten}
\end{align}
It follows from $\Psi(x)<0$ that the summation in~(\ref{eq:Psi-rewritten})
contains at least one negative term, corresponding to a string $u\in\zoo^{N}|_{\leq\theta}$.
This forces
\begin{equation}
\psi(u)<0\label{eq:psi-negative-at-u}
\end{equation}
and additionally implies the existence of $\mathbf{v}_{1},\mathbf{v}_{2},\ldots,\mathbf{v}_{\theta}\in V$
with 
\begin{align}
 & x_{i}\in\supp\lambda_{\mathbf{v}_{i}}, &  & i=1,2,\ldots,\theta,\label{eq:x-i-supp-lambda-v-i}\\
 & \sum_{i=1}^{\theta}\mathbf{v}_{i}=u.\label{eq:sum-v-i-equals-u}
\end{align}
Since $x\in(\zoon|_{\leq k})^{\theta}$ in the case under consideration,
it follows from~(\ref{eq:lambda-i-support-1-1}) and (\ref{eq:x-i-supp-lambda-v-i})
that $|x_{i}|=k$ for all $i.$ Now (\ref{eq:h-on-short-inputs})
ensures that $h(x_{i})=\mathbf{v}_{i}$ for all $i,$ which in turn
makes it possible to rewrite~(\ref{eq:sum-v-i-equals-u}) as $\sum_{i=1}^{\theta}h(x_{i})=u.$
Since $u,h(x_{1}),h(x_{2}),\ldots,h(x_{\theta})\in\zoo^{N},$ we conclude
that $\bigvee_{i=1}^{\theta}h(x_{i})=u.$ As a result,
\[
(f\circ H)(x)=f\left(\bigvee_{i=1}^{\theta}h(x_{i})\right)=f(u)=0,
\]
where the last step is immediate from (\ref{eq:psi-nonneg-on-f1})
and (\ref{eq:psi-negative-at-u}).
\end{proof}

\subsection{Specializing the parameters}

Theorems~\ref{thm:hardness-amplification-twosided} and~\ref{thm:hardness-amplification-onesided}
have a large number of parameters that one can adjust to produce various
hardness amplification theorems. We do so in this section. For any
constants $\alpha\in(0,1]$ and $C\geq1,$ we show how to transform
a function $f$ on $\theta^{C}$ bits with approximate degree
\begin{equation}
\deg_{\epsilon}(f|_{\leq\theta})\geq\theta^{1-\alpha}\label{eq:degree-original}
\end{equation}
into a function $F$ on $T^{1+\alpha}$ bits with approximate degree
\begin{equation}
\deg_{\epsilon-\frac{1}{T}}(F|_{\leq T})\geq T^{1-\frac{2}{3}\alpha}.\label{eq:degree-transformed}
\end{equation}
Comparing the exponents in~(\ref{eq:degree-original}) and~(\ref{eq:degree-transformed}),
we see that $F$ is harder to approximate than $f$ relative to the
Hamming weight of the inputs for $F$ and $f$, respectively. Moreover,
we show that $F$ is expressible as $F=f\circ H$ for some mapping
$H$ whose output bits are computable by monotone DNF formulas of
constant width. In particular, if $f$ is a monotone DNF formula of
constant width, then so is $F.$ The formal statement follows.
\begin{cor}
\label{cor:hardness-amplification}Fix reals $\alpha\in(0,1],$ $A\geq1,$
and $C\geq1$ arbitrarily. Then for all large enough integers $\theta,$
there is an $($explicitly given$)$ mapping $H\colon\{0,1\}^{\lfloor T^{1+\alpha}\rfloor}\to\zoo^{\lfloor\theta^{C}\rfloor}$
with $T=\lfloor\theta\log^{2}\theta\rfloor$ such that the output
bits of $H$ are computable by monotone $\lceil50(A+C)/\alpha\rceil$-DNF
formulas and
\begin{equation}
\deg_{\epsilon-\frac{1}{T^{A}}}((f\circ H)|_{\leq T})\geq T^{1-\frac{2}{3}\alpha}\label{eq:transformation-blackbox}
\end{equation}
for every $\epsilon\in[0,1]$ and every function $f\colon\zoo^{\lfloor\theta^{C}\rfloor}\to\zoo$
with $\deg_{\epsilon}(f|_{\leq\theta})\geq\theta^{1-\alpha}$.
\end{cor}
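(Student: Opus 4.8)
The plan is to invoke the general hardness amplification result, Theorem~\ref{thm:hardness-amplification-twosided}, with a carefully calibrated choice of its parameters, read off the conclusion, and then pad the resulting map to the prescribed number of input bits. For a large integer $\theta$, set
\[
N=\lfloor\theta^{C}\rfloor,\qquad T=\lfloor\theta\log^{2}\theta\rfloor,\qquad k=\left\lceil\frac{50(A+C)}{\alpha}\right\rceil-1,\qquad D=\lceil T^{1-\frac{2}{3}\alpha}\rceil,
\]
let $n=\lfloor\lfloor T^{1+\alpha}\rfloor/\theta\rfloor=\Theta(\theta^{\alpha}\log^{2+2\alpha}\theta)$, and choose $m=\lceil\theta^{\alpha/3}/\log^{C'}\theta\rceil$ for a suitably large absolute constant $C'$, together with $\beta=\theta^{-(A+2)}$. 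Theorem~\ref{thm:hardness-amplification-twosided} then produces an explicit map $H_{0}\colon(\zoon)^{\theta}\to\zoo^{N}$ whose output bits are monotone $(k+1)$-DNF formulas, i.e.\ monotone $\lceil 50(A+C)/\alpha\rceil$-DNF formulas. Since $(\zoon)^{\theta}=\{0,1\}^{n\theta}$ and $n\theta\le\lfloor T^{1+\alpha}\rfloor$, I would obtain $H$ by extending $H_{0}$ to $\{0,1\}^{\lfloor T^{1+\alpha}\rfloor}$ so that it ignores the at most $\theta$ surplus input bits; this leaves the DNF width unchanged, and it does not hurt approximate degree because $(f\circ H_{0})|_{\le T}$ is the restriction of $(f\circ H)|_{\le T}$ to the subcube where the surplus bits vanish, and restricting a function can only lower its approximate degree.

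Next I would verify the four hypotheses (\ref{eq:nmk})--(\ref{eq:T-greater-than-D}) for all large $\theta$. Inequality (\ref{eq:nmk}) holds because $n=\theta^{\alpha+o(1)}$ is far larger than $m=\theta^{\alpha/3-o(1)}$, which in turn exceeds the constant $k$; inequality (\ref{eq:T-theta-k}) holds because $T\sim\theta\log^{2}\theta$ dominates $\frac{8\e}{c}\theta(1+\ln\theta)+\theta k$; and (\ref{eq:T-greater-than-D}) is immediate from $T^{1-\frac{2}{3}\alpha}\le T$. The substantive hypothesis is the balance inequality (\ref{eq:delta-m-k-n-r-repeated-1-1}). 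Its first summand $4(N+1)^{2}k\exp(-\sqrt{m}/16k)$ decays faster than any fixed power of $\theta$ since $\sqrt{m}=\theta^{\Omega(1)}$, hence is eventually far below the right-hand side $\beta/(16(N+1)^{2}m^{2})=\theta^{-\Theta(1)}$. Its second summand is $(N+1)\bigl(3C^{*}\log^{2}(n+N+1)/m^{1/4}\bigr)^{k}=\theta^{C-k\alpha/12}\cdot\mathrm{polylog}(\theta)$, and comparing exponents with $\beta/(16(N+1)^{2}m^{2})=\theta^{-(A+2)-2C-2\alpha/3}\cdot\mathrm{polylog}(\theta)$ shows it suffices that $k\alpha/12>3C+A+3$, which our choice $k\ge 50(A+C)/\alpha-1$ satisfies with room to spare; this is exactly where the width $\lceil 50(A+C)/\alpha\rceil$ comes from.

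It then remains to extract the conclusion (\ref{eq:transformation-blackbox}) from Theorem~\ref{thm:hardness-amplification-twosided}~\ref{enu:amplify-two-sided}. I would bound $\Delta$ from (\ref{eq:Delta-defined}) using $\log\bigl(1+2^{D}\binom{n\theta}{D}\bigr)=O(D\log(n\theta))=O(T^{1-\frac{2}{3}\alpha}\log\theta)$ against $c(T-\theta k)/(2\sqrt{nm})=\Omega(T/\sqrt{nm})$; with $C'$ large enough, $\sqrt{nm}=O(T^{\frac{2}{3}\alpha}/\log^{2}\theta)$, so $T/\sqrt{nm}=\Omega(T^{1-\frac{2}{3}\alpha}\log^{2}\theta)$ dominates $O(D\log(n\theta)+A\log T)$, giving $\Delta\le\tfrac14 T^{-A}$; combined with $\beta\theta=\theta^{-(A+1)}\le\tfrac12 T^{-A}$ this yields $\beta\theta+2\Delta\le T^{-A}$. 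Since $\deg_{\eta}$ is non-increasing in $\eta$, Theorem~\ref{thm:hardness-amplification-twosided}~\ref{enu:amplify-two-sided} gives, for every $\epsilon\in[0,1]$ and every $f\colon\zoo^{N}\to\zoo$,
\[
\deg_{\epsilon-T^{-A}}\bigl((f\circ H_{0})|_{\le T}\bigr)\ \ge\ \deg_{\epsilon-\beta\theta-2\Delta}\bigl((f\circ H_{0})|_{\le T}\bigr)\ \ge\ \min\Bigl\{c\deg_{\epsilon}(f|_{\le\theta})\sqrt{\tfrac{n}{2m}},\,D\Bigr\},
\]
and it is enough to check that this minimum is at least $T^{1-\frac{2}{3}\alpha}$: for the second term, $D\ge T^{1-\frac{2}{3}\alpha}$ by construction; for the first, the hypothesis $\deg_{\epsilon}(f|_{\le\theta})\ge\theta^{1-\alpha}$ reduces it to $c\theta^{1-\alpha}\sqrt{n/(2m)}\ge T^{1-\frac{2}{3}\alpha}$, i.e.\ $n/m=\Omega(\theta^{2\alpha/3}\log^{4-\frac{8}{3}\alpha}\theta)$, which holds for our $n,m$ with the same $C'$ (the $\Delta$-estimate being the more demanding of the two constraints on $m$). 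The main obstacle is precisely this simultaneous calibration of $k$, $m$, $D$ and $\beta$: $k$ must be pushed up for (\ref{eq:delta-m-k-n-r-repeated-1-1}), while $m$ must be small enough that $\Delta$ is negligible yet large enough that $\sqrt{n/m}$ still delivers degree $T^{1-\frac{2}{3}\alpha}$, and the admissible window for $m$ must absorb the $\log^{2}\theta$ factor deliberately built into $T$.
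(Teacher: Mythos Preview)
Your proposal is correct and follows essentially the same approach as the paper: invoke Theorem~\ref{thm:hardness-amplification-twosided} with suitably chosen parameters and pad $H$ to the target input length. The only differences are cosmetic parameter choices: the paper takes the cleaner settings $n=\lfloor\theta^{\alpha}\rfloor$, $m=\lfloor\theta^{\alpha/4}\rfloor$, $D=\lceil\theta^{1-\frac{5}{8}\alpha}\rceil$, $\beta=\frac{1}{2\theta T^{A}}$, thereby avoiding the polylog correction you introduce in $m$, and then observes at the end that $\frac{c}{2}\theta^{1-\frac{5}{8}\alpha}\geq T^{1-\frac{2}{3}\alpha}$ for large $\theta$; your choices push $n$ and $m$ closer to the constraint boundaries but work equally well.
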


\begin{proof}
Invoke Theorem~\ref{thm:hardness-amplification-twosided} with parameters
\begin{align}
\beta & =\frac{1}{2\theta\lfloor\theta\log^{2}\theta\rfloor^{A}},\label{eq:apply-beta}\\
N & =\lfloor\theta^{C}\rfloor,\\
n & =\lfloor\theta^{\alpha}\rfloor,\\
m & =\lfloor\theta^{\alpha/4}\rfloor,\\
k & =\left\lceil \frac{50(A+C)}{\alpha}\right\rceil -1,\\
D & =\lceil\theta^{1-\frac{5}{8}\alpha}\rceil,\\
T & =\lfloor\theta\log^{2}\theta\rfloor.\label{eq:apply-T}
\end{align}
Provided that $\theta$ is large enough, these parameter settings
satisfy the theorem hypotheses~(\ref{eq:nmk})\textendash (\ref{eq:T-greater-than-D}),
whereas (\ref{eq:Delta-defined}) gives 
\begin{equation}
\Delta\leq\frac{1}{4T^{A}}.\label{eq:apply-Delta}
\end{equation}
As a result, Theorem~\ref{thm:hardness-amplification-twosided} guarantees
that
\begin{equation}
\deg_{\epsilon-\frac{1}{T^{A}}}((f\circ H)|_{\leq T})\geq\frac{c}{2}\cdot\theta^{1-\frac{5}{8}\alpha},\label{eq:conclusion-deg-composition}
\end{equation}
where $c\in(0,1)$ is the absolute constant from Theorem~\ref{thm:encoding-1-r-normalized}
and $H\colon\zoo^{\lfloor T^{1+\alpha}\rfloor}\to\zoo^{\lfloor\theta^{C}\rfloor}$
is an explicit mapping whose output bits are computable by monotone
$\lceil50(A+C)/\alpha\rceil$-DNF formulas. (In fact, $H$ uses only
$n\theta\approx(T/\log^{2}T)^{1+\alpha}$ input bits, but this improvement
is not relevant for our purposes.) Provided that $\theta$ is large
enough relative to the absolute constant $c,$ we infer~(\ref{eq:transformation-blackbox})
immediately from~(\ref{eq:conclusion-deg-composition}).
\end{proof}
Analogously, we have the following hardness amplification result for
\emph{one-sided} approximate degree.
\begin{cor}
\label{cor:hardness-amplification-onesided}Fix reals $\alpha\in(0,1],$
$A\geq1,$ and $C\geq1$ arbitrarily. Then for all large enough integers
$\theta,$ there is an $($explicitly given$)$ mapping $H\colon\{0,1\}^{\lfloor T^{1+\alpha}\rfloor}\to\zoo^{\lfloor\theta^{C}\rfloor}$
with $T=\lfloor\theta\log^{2}\theta\rfloor$ such that the output
bits of $H$ are computable by monotone $\lceil50(A+C)/\alpha\rceil$-DNF
formulas and
\begin{equation}
\deg_{\epsilon-\frac{1}{T^{A}}}^{+}((f\circ H)|_{\leq T})\geq T^{1-\frac{2}{3}\alpha}\label{eq:transformation-blackbox-onesided}
\end{equation}
for every $\epsilon\in[0,1]$ and every function $f\colon\zoo^{\lfloor\theta^{C}\rfloor}\to\zoo$
such that $\deg_{\epsilon}^{+}(f|_{\leq\theta})\geq\theta^{1-\alpha}$
and $f(1^{\lfloor\theta^{C}\rfloor})=0.$
\end{cor}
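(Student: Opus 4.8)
The plan is to transcribe the proof of Corollary~\ref{cor:hardness-amplification} almost verbatim, replacing Theorem~\ref{thm:hardness-amplification-twosided} by Theorem~\ref{thm:hardness-amplification-onesided} and invoking its one-sided conclusion~\ref{enu:amplify-generalized-one-sided} in place of the two-sided conclusion. Concretely, I would apply Theorem~\ref{thm:hardness-amplification-onesided} with exactly the same parameter settings~(\ref{eq:apply-beta})--(\ref{eq:apply-T}) used before: $\beta=1/(2\theta\lfloor\theta\log^{2}\theta\rfloor^{A})$, $N=\lfloor\theta^{C}\rfloor$, $n=\lfloor\theta^{\alpha}\rfloor$, $m=\lfloor\theta^{\alpha/4}\rfloor$, $k=\lceil 50(A+C)/\alpha\rceil-1$, $D=\lceil\theta^{1-\frac{5}{8}\alpha}\rceil$, and $T=\lfloor\theta\log^{2}\theta\rfloor$.

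First I would check that, for all large enough $\theta$, these settings satisfy the hypotheses~(\ref{eq:nmk-1})--(\ref{eq:T-greater-than-D-plus-n}) of Theorem~\ref{thm:hardness-amplification-onesided}. The hypotheses~(\ref{eq:nmk-1})--(\ref{eq:T-theta-k-1}) and the resulting bound $\Delta\leq 1/(4T^{A})$ from~(\ref{eq:Delta-defined-1}) are identical to those already verified in the proof of Corollary~\ref{cor:hardness-amplification} and carry over without change. The one genuinely new requirement is~(\ref{eq:T-greater-than-D-plus-n}), which asks for $T\geq D+n$ rather than merely $T\geq D$; this is immediate because $D+n\leq\theta^{1-\frac{5}{8}\alpha}+\theta^{\alpha}+O(1)=o(\theta\log^{2}\theta)$ as $\theta\to\infty$, since $\alpha\leq 1$ makes both exponents strictly below $1$. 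With the hypotheses in force, items~\ref{enu:each-output-bit-of-H-onesided} and~\ref{enu:amplify-generalized-two-sided} of Theorem~\ref{thm:hardness-amplification-onesided} give an explicit $H\colon\{0,1\}^{\lfloor T^{1+\alpha}\rfloor}\to\zoo^{\lfloor\theta^{C}\rfloor}$ whose output bits are monotone $\lceil 50(A+C)/\alpha\rceil$-DNF formulas, and which already satisfies the two-sided bound of Corollary~\ref{cor:hardness-amplification}.

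For the one-sided bound, I would use the hypothesis $f(1^{\lfloor\theta^{C}\rfloor})=0$ — which is exactly the side condition needed to invoke item~\ref{enu:amplify-generalized-one-sided} — together with $\onedeg_{\epsilon}(f|_{\leq\theta})\geq\theta^{1-\alpha}$, to obtain
\[
\onedeg_{\epsilon-1/T^{A}}((f\circ H)|_{\leq T})\;\geq\;\min\!\left\{c\,\theta^{1-\alpha}\sqrt{\tfrac{n}{2m}},\;D\right\}\;\geq\;\frac{c}{2}\,\theta^{1-\frac{5}{8}\alpha}
\]
for all large $\theta$, since $\sqrt{n/(2m)}=\Theta(\theta^{3\alpha/8})$ makes the first term $\Theta(\theta^{1-\frac{5}{8}\alpha})$ and $D=\lceil\theta^{1-\frac{5}{8}\alpha}\rceil$ matches it in order. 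Finally, because $T^{1-\frac{2}{3}\alpha}\leq\theta^{1-\frac{2}{3}\alpha}\log^{2}\theta$ and $1-\frac{2}{3}\alpha<1-\frac{5}{8}\alpha$ with gap $\alpha/24$, the inequality $\frac{c}{2}\theta^{1-\frac{5}{8}\alpha}\geq T^{1-\frac{2}{3}\alpha}$ holds as soon as $\frac{c}{2}\theta^{\alpha/24}\geq\log^{2}\theta$, which is true for all large $\theta$; this yields~(\ref{eq:transformation-blackbox-onesided}). Since the argument is a faithful copy of the proof of Corollary~\ref{cor:hardness-amplification}, there is no real obstacle — the only points requiring care are threading the hypothesis $f(1^{N})=0$ through to item~\ref{enu:amplify-generalized-one-sided} of Theorem~\ref{thm:hardness-amplification-onesided}, and confirming that the slightly stronger constraint~(\ref{eq:T-greater-than-D-plus-n}) holds under the chosen parameters.
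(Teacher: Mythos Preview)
Your proposal is correct and mirrors the paper's own proof essentially verbatim: invoke Theorem~\ref{thm:hardness-amplification-onesided} with the same parameter settings~(\ref{eq:apply-beta})--(\ref{eq:apply-T}), check the one new hypothesis~(\ref{eq:T-greater-than-D-plus-n}), and conclude via item~\ref{enu:amplify-generalized-one-sided}. One tiny quibble: when $\alpha=1$ the exponent of $n$ is exactly $1$, not strictly below it, but the conclusion $D+n=o(\theta\log^{2}\theta)$ still holds thanks to the $\log^{2}\theta$ factor in $T$, so the argument goes through unchanged.
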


\begin{proof}
The proof is the same, mutatis mutandis, as that of Corollary~\ref{cor:hardness-amplification}.
Specifically, invoke Theorem~\ref{thm:hardness-amplification-onesided}
with parameters~(\ref{eq:apply-beta})\textendash (\ref{eq:apply-T}).
Provided that $\theta$ is large enough, these parameter settings
satisfy the theorem hypotheses~(\ref{eq:nmk-1})\textendash (\ref{eq:T-greater-than-D-plus-n}),
whereas (\ref{eq:Delta-defined-1}) gives~(\ref{eq:apply-Delta}).
As a result, Theorem~\ref{thm:hardness-amplification-onesided} guarantees
that
\begin{equation}
\deg_{\epsilon-\frac{1}{T^{A}}}^{+}((f\circ H)|_{\leq T})\geq\frac{c}{2}\cdot\theta^{1-\frac{5}{8}\alpha},\label{eq:conclusion-deg-composition-1}
\end{equation}
where $c\in(0,1)$ is the absolute constant from Theorem~\ref{thm:encoding-1-r-normalized}
and $H\colon\zoo^{\lfloor T^{1+\alpha}\rfloor}\to\zoo^{\lfloor\theta^{C}\rfloor}$
is an explicit mapping whose output bits are computable by monotone
$\lceil50(A+C)/\alpha\rceil$-DNF formulas. Provided that $\theta$
is large enough relative to the absolute constant $c,$ this settles~(\ref{eq:transformation-blackbox-onesided}).
\end{proof}

\section{Main results}

In this section, we will settle our main results on approximate degree
and present their applications to communication complexity.

\subsection{Approximate degree of DNF and CNF formulas}

We will start with the two-sided case. Our proof here amounts to taking
the trivial one-variable formula $x_{1}$ and iteratively applying
the hardness amplification of Corollary~\ref{cor:hardness-amplification}.
\begin{thm}
\label{thm:adeg-dnf}For every $\delta\in(0,1]$ and $\Delta\geq1,$
there is a constant $c\geq1$ and an $($explicitly given$)$ family
$\{f_{n}\}_{n=1}^{\infty}$ of functions $f_{n}\colon\{0,1\}^{n}\to\zoo$
such that each $f_{n}$ is computable by a monotone $c$-DNF formula
and satisfies
\begin{align}
\deg_{\frac{1}{2}-\frac{1}{n^{\Delta}}}(f_{n}) & \geq\frac{1}{c}\cdot n^{1-\delta}, &  & n=1,2,3,\ldots.\label{eq:hard-dnf}
\end{align}
\end{thm}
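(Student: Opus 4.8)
The plan is to start from the trivial monotone formula $x_{1}$ and iterate the hardness amplification of Corollary~\ref{cor:hardness-amplification} a number of times depending only on $\delta$. Note that $\deg_{\epsilon}(x_{1}|_{\leq\theta})\geq1=\theta^{1-\alpha}$ with $\alpha=1$, for every $\epsilon\in[0,1/2)$ and every integer $\theta\geq1$ (a degree-$0$ approximant of $x_{1}$ within error below $1/2$ is impossible); this is the base case. Throughout, I will maintain the invariant that after $i$ rounds there is an explicit monotone DNF formula $f^{(i)}$ on $M_{i}$ bits, an integer $\theta_{i}$, and a real $\epsilon_{i}$, with $\deg_{\epsilon_{i}}(f^{(i)}|_{\leq\theta_{i}})\geq\theta_{i}^{1-\alpha_{i}}$ where $\alpha_{i}=(2/3)^{i}$.

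In round $i$ I will first pad $f^{(i-1)}$ with dummy variables so that it is defined on exactly $\lfloor\theta_{i-1}^{2}\rfloor$ bits; this is legitimate since $M_{i-1}\leq\theta_{i-1}^{2}$, and padding neither decreases the restricted approximate degree nor increases the width of a monotone DNF. Then I apply Corollary~\ref{cor:hardness-amplification} with $\theta=\theta_{i-1}$, $C=2$, $\alpha=\alpha_{i-1}$, $\epsilon=\epsilon_{i-1}$, and a parameter $A=A(\delta,\Delta)$ fixed below, obtaining an explicit map $H_{i}\colon\{0,1\}^{\lfloor T_{i}^{1+\alpha_{i-1}}\rfloor}\to\{0,1\}^{\lfloor\theta_{i-1}^{2}\rfloor}$ with $T_{i}=\lfloor\theta_{i-1}\log^{2}\theta_{i-1}\rfloor$, whose output bits are monotone $\lceil50(A+2)/\alpha_{i-1}\rceil$-DNF formulas, together with $\deg_{\epsilon_{i-1}-1/T_{i}^{A}}((f^{(i-1)}\circ H_{i})|_{\leq T_{i}})\geq T_{i}^{1-\frac{2}{3}\alpha_{i-1}}$. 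Setting $f^{(i)}=f^{(i-1)}\circ H_{i}$, $\theta_{i}=T_{i}$, $\epsilon_{i}=\epsilon_{i-1}-1/T_{i}^{A}$, $\alpha_{i}=\frac{2}{3}\alpha_{i-1}$, and $M_{i}=\lfloor T_{i}^{1+\alpha_{i-1}}\rfloor$ restores the invariant. Substituting monotone DNF formulas for the variables of a monotone DNF formula again produces a monotone DNF formula whose width is the product of the two widths, so $f^{(i)}$ has width at most $w_{i}:=\prod_{j=1}^{i}\lceil50(A+2)/\alpha_{j-1}\rceil$; once the number of rounds is fixed this is a single constant.

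Fix the number of rounds $i^{*}=i^{*}(\delta)$ large enough that $\frac{1-(2/3)^{i^{*}}}{1+(2/3)^{i^{*}-1}}\geq1-\delta/2$, which is possible because the left side tends to $1$. Since $\theta_{i}\leq\theta_{i-1}\log^{2}\theta_{i-1}\leq\theta_{i-1}^{3/2}$ for $\theta_{i-1}$ large and $M_{i^{*}}\leq\theta_{i^{*}}^{2}$, the final number of variables $M_{i^{*}}$ is an eventually strictly increasing function of the starting integer $\theta_{0}$ whose consecutive values have ratio tending to $1$; hence for all large $n$ one can take $\theta_{0}$ to be the largest integer with $M_{i^{*}}\leq n$, which forces $n/2\leq M_{i^{*}}\leq n$, and therefore $\theta_{0}\geq(n/2)^{1/(2\cdot(3/2)^{i^{*}})}$ and $T_{i}\geq T_{1}\geq\theta_{0}/2\geq n^{1/P}$ for all $i$, where $P=P(\delta)$ is a constant. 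Choose $A:=\lceil P(\Delta+1)\rceil$ and $\epsilon_{0}:=\frac{1}{2}-\frac{1}{2n^{\Delta}}$; since $\epsilon_{0}<1/2$ the base case applies. Then, for $n$ large, $\epsilon_{i^{*}}=\epsilon_{0}-\sum_{i=1}^{i^{*}}T_{i}^{-A}\geq\frac{1}{2}-\frac{1}{2n^{\Delta}}-i^{*}n^{-A/P}\geq\frac{1}{2}-\frac{1}{n^{\Delta}}$, while $M_{i^{*}}\leq T_{i^{*}}^{1+\alpha_{i^{*}-1}}$ gives $T_{i^{*}}\geq(n/2)^{1/(1+(2/3)^{i^{*}-1})}$, so that $\deg_{\epsilon_{i^{*}}}(f^{(i^{*})}|_{\leq\theta_{i^{*}}})\geq T_{i^{*}}^{1-(2/3)^{i^{*}}}\geq(n/2)^{1-\delta/2}\geq n^{1-\delta}$.

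Finally, let $f_{n}$ be $f^{(i^{*})}$ padded with dummy variables to exactly $n$ bits; it is an explicitly given monotone DNF of width at most $w_{i^{*}}$. Since padding does not decrease the restricted approximate degree, $\deg_{\epsilon}$ is nonincreasing in $\epsilon$, and restricting to a subdomain cannot increase the approximate degree, we get $\deg_{\frac{1}{2}-\frac{1}{n^{\Delta}}}(f_{n})\geq\deg_{\epsilon_{i^{*}}}(f_{n}|_{\leq\theta_{i^{*}}})\geq T_{i^{*}}^{1-(2/3)^{i^{*}}}\geq n^{1-\delta}$ for all $n$ above a threshold $n_{0}(\delta,\Delta)$; for $n\leq n_{0}$ take $f_{n}=x_{1}$, where the bound is trivial once $c\geq n_{0}$. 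With $c=\max\{w_{i^{*}},n_{0}\}$ this gives~(\ref{eq:hard-dnf}). The only real difficulty is bookkeeping: one must check that the geometric decay $\alpha_{i}=(2/3)^{i}$ reaches exponent $1-\delta$ in $O(\log(1/\delta))$ rounds, that the accumulated error $\sum_{i}T_{i}^{-A}$ stays below $\frac{1}{2n^{\Delta}}$ (forcing $A$ to grow with $\Delta$), and that the relation $M_{i^{*}}\leq T_{i^{*}}^{1+(2/3)^{i^{*}-1}}$ is tight enough to transfer the lower bound from the degree parameter $T_{i^{*}}$ back to the number of variables $n$; the parameters must be fixed in the order $\delta\rightsquigarrow i^{*}\rightsquigarrow P\rightsquigarrow A\rightsquigarrow c$, and all "for large enough $\theta$" hypotheses of Corollary~\ref{cor:hardness-amplification} are met for $n$ large, with the finitely many remaining $n$ absorbed into $c$.
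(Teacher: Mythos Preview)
Your proposal is correct and follows essentially the same approach as the paper: start from the dictator $x_1$, iterate Corollary~\ref{cor:hardness-amplification} a constant number $i^*=i^*(\delta)$ of times so that the exponent $1-(2/3)^{i^*}$ relative to $1+(2/3)^{i^*-1}$ exceeds $1-\delta$, control the accumulated error $\sum T_i^{-A}$ by choosing $A$ large in terms of $\Delta$, and handle small $n$ by absorbing into the constant $c$. The only cosmetic differences are that the paper uses a round-dependent $C=1+(2/3)^{i-2}$ (tracking the exact number of variables) where you use $C=2$ with padding, and the paper sets $T_0=\lfloor n/\log^{2K}n\rfloor$ explicitly where you invert $\theta_0\mapsto M_{i^*}(\theta_0)$ via a monotonicity-and-ratio argument; both are valid bookkeeping choices.
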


\begin{proof}
Let $K\geq1$ be the smallest integer such that 
\begin{align}
 & \frac{1-(2/3)^{K}}{1+(2/3)^{K-1}}>1-\delta.\label{eq:K-defined}
\end{align}
Define
\begin{equation}
A=2\Delta+3.\label{eq:A-Delta}
\end{equation}
Now, let $n\geq1$ be any large enough integer. Define $T_{0},T_{1},T_{2},\ldots,T_{K}$
recursively by $T_{0}=\lfloor n/\log^{2K}n\rfloor$ and $T_{i}=\lfloor T_{i-1}\log^{2}T_{i-1}\rfloor$
for $i\geq1.$ Thus,
\begin{align}
T_{i} & \leq\frac{n}{\log^{2(K-i)}n}, &  & i=0,1,2,\ldots,K,\label{eq:T-i-upper-bound}\\
T_{i} & \sim\frac{n}{\log^{2(K-i)}n}, &  & i=0,1,2,\ldots,K,\label{eq:T-i-asymptotic}
\end{align}
where $\sim$ denotes equality up to lower-order terms. Provided that
$n$ is larger than a certain constant, inductive application of Corollary~\ref{cor:hardness-amplification}
gives functions
\begin{align}
g_{n,i}\colon\zoo^{\lfloor T_{i}^{1+(2/3)^{i-1}}\rfloor} & \to\zoo, &  & i=0,1,2,\ldots,K,
\end{align}
 such that
\begin{align}
\deg_{\frac{1}{2}-\frac{1}{T_{0}^{A}}-\frac{1}{T_{1}^{A}}-\cdots-\frac{1}{T_{i}^{A}}}(g_{n,i}|_{\leq T_{i}}) & \geq T_{i}^{1-(2/3)^{i}}, &  & i=0,1,2,\ldots,K,\label{eq:inductive-approx-degree}
\end{align}
and each $g_{n,i}$ is an explicitly constructed monotone $c_{i}$-DNF
formula for some constant $c_{i}$ independent of $n.$ In more detail,
the requirement~(\ref{eq:inductive-approx-degree}) for $i=0$ is
equivalent to $\deg_{\frac{1}{2}-\frac{1}{T_{0}^{A}}}(g_{n,0}|_{\leq T_{0}})>0$
and is trivially satisfied by the ``dictator'' function $g_{n,0}(x)=x_{1}$,
whereas for $i\geq1$ the function $g_{n,i}$ is obtained constructively
from $g_{n,i-1}$ by invoking Corollary~\ref{cor:hardness-amplification}
with 
\begin{align*}
\alpha & =\left(\frac{2}{3}\right)^{i-1},\\
C & =1+\left(\frac{2}{3}\right)^{i-2},\\
\theta & =T_{i-1},\\
f & =g_{n,i-1},\\
\epsilon & =\frac{1}{2}-\frac{1}{T_{0}^{A}}-\frac{1}{T_{1}^{A}}-\cdots-\frac{1}{T_{i-1}^{A}}.
\end{align*}

Specializing~(\ref{eq:T-i-upper-bound})\textendash (\ref{eq:inductive-approx-degree})
to $i=K$, the function $g_{n,K}$ is a monotone $c_{K}$-DNF formula
for some constant $c_{K}$ independent of $n,$ takes at most $N:=n^{1+(2/3)^{K-1}}$
input variables, and has approximate degree 
\begin{align*}
\deg_{\frac{1}{2}-\frac{1}{N^{\Delta+1}}}(g_{n,K}) & \geq\deg_{\frac{1}{2}-\frac{1}{T_{0}^{A}}-\frac{1}{T_{1}^{A}}-\cdots-\frac{1}{T_{K}^{A}}}(g_{n,K})\\
 & \geq\deg_{\frac{1}{2}-\frac{1}{T_{0}^{A}}-\frac{1}{T_{1}^{A}}-\cdots-\frac{1}{T_{K}^{A}}}(g_{n,K}|_{\leq T_{K}})\\
 & =\Omega(n^{1-(2/3)^{K}})\\
 & =\omega(N^{1-\delta}),
\end{align*}
where the first and last steps hold for all large enough $n$ due
to~(\ref{eq:A-Delta}) and~(\ref{eq:K-defined}), respectively.
The desired function family $\{f_{n}\}_{n=1}^{\infty}$ can then be
defined by setting 
\[
f_{n}=g_{\lfloor n^{1/(1+(2/3)^{K-1})}\rfloor,K}
\]
for all $n$ larger than a certain constant $n_{0},$ and taking the
remaining functions $f_{1},f_{2},\ldots,f_{n_{0}}$ to be the dictator
function $x\mapsto x_{1}.$
\end{proof}
Theorem~\ref{thm:adeg-dnf} immediately implies Theorems~\ref{thm:MAIN-dnf}
and \ref{thm:MAIN-dnf-low-error} from the introduction. We now move
on to the one-sided case.
\begin{thm}
\label{thm:adeg-dnf-onesided}For every $\delta\in(0,1]$ and $\Delta\geq1,$
there is a constant $c\geq1$ and an $($explicitly given$)$ family
$\{f_{n}\}_{n=1}^{\infty}$ of functions $f_{n}\colon\{0,1\}^{n}\to\zoo$
such that each $f_{n}$ is computable by a monotone $c$-DNF formula
and satisfies
\begin{align}
\deg_{\frac{1}{2}-\frac{1}{n^{\Delta}}}^{+}(\neg f_{n}) & \geq\frac{1}{c}\cdot n^{1-\delta}, &  & n=1,2,3,\ldots.\label{eq:hard-dnf-onesided}
\end{align}
\end{thm}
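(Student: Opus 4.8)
The plan is to follow the proof of Theorem~\ref{thm:adeg-dnf} almost verbatim, replacing the two-sided hardness amplification of Corollary~\ref{cor:hardness-amplification} by its one-sided counterpart Corollary~\ref{cor:hardness-amplification-onesided}, and carrying as the inductive invariant a lower bound on the one-sided approximate degree of the \emph{negation} of the function under construction. The only algebraic input needed is the identity $\neg(f\circ H)=(\neg f)\circ H$: if $g$ is a monotone DNF formula of constant width and each output bit of $H$ is a monotone DNF formula of constant width, then $g\circ H$ is again a monotone DNF formula of constant width (distribute the ANDs over the ORs), whereas $\deg_{\epsilon}^{+}(\neg(g\circ H)|_{\leq T})=\deg_{\epsilon}^{+}((\neg g)\circ H|_{\leq T})$ is precisely the quantity controlled by Corollary~\ref{cor:hardness-amplification-onesided} applied with the function $\neg g$ in the role of its $f$. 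The single hypothesis of Corollary~\ref{cor:hardness-amplification-onesided} absent in the two-sided version, namely $f(1^{\lfloor\theta^{C}\rfloor})=0$, translates under $f=\neg g$ into $g(1^{\ldots})=1$; this holds automatically for every non-constant monotone $g$, and every iterate in our construction is non-constant because its negation has one-sided approximate degree at least $1$.

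In detail, let $K$ be the least integer with $\frac{1-(2/3)^{K}}{1+(2/3)^{K-1}}>1-\delta$, put $A=2\Delta+3$, and define $T_{0},T_{1},\ldots,T_{K}$ exactly as in the proof of Theorem~\ref{thm:adeg-dnf}. Take the base function $g_{n,0}(x)=x_{1}$: since no degree-$0$ polynomial can be a one-sided $(\frac12-1/T_{0}^{A})$-approximant of $1-x_{1}$ on inputs of Hamming weight at most $T_{0}$ (which include $0^{\ldots}$ and $e_{1}$), we have $\deg_{\frac12-1/T_{0}^{A}}^{+}(\neg g_{n,0}|_{\leq T_{0}})\geq1$, and trivially $\neg g_{n,0}(1^{\ldots})=0$. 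For $i=1,2,\ldots,K$, invoke Corollary~\ref{cor:hardness-amplification-onesided} with $\alpha=(2/3)^{i-1}$, $C=1+(2/3)^{i-2}$, $\theta=T_{i-1}$, $f=\neg g_{n,i-1}$, and $\epsilon=\frac12-\sum_{j<i}1/T_{j}^{A}$; its hypotheses hold for all large $n$, its output mapping $H$ has constant-width monotone DNF output bits (indeed it can be taken to be the same $H$ used at the $i$-th step of the proof of Theorem~\ref{thm:adeg-dnf}), and we set $g_{n,i}=g_{n,i-1}\circ H$. Then $g_{n,i}$ is a monotone $c_{i}$-DNF formula of polynomial size (width and size survive the $K=O_{\delta}(1)$ compositions), hence non-constant and so $\neg g_{n,i}(1^{\ldots})=0$; and since $\neg g_{n,i}=(\neg g_{n,i-1})\circ H$, the conclusion of the corollary—together with the inductive hypothesis $\deg_{\frac12-\sum_{j<i}1/T_{j}^{A}}^{+}(\neg g_{n,i-1}|_{\leq T_{i-1}})\geq T_{i-1}^{1-(2/3)^{i-1}}$—yields $\deg_{\frac12-\sum_{j\leq i}1/T_{j}^{A}}^{+}(\neg g_{n,i}|_{\leq T_{i}})\geq T_{i}^{1-(2/3)^{i}}$, which restores the invariant.

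Specializing to $i=K$, the function $g_{n,K}$ is a monotone $c_{K}$-DNF formula on $N:=n^{1+(2/3)^{K-1}}$ variables, and the very computation at the end of the proof of Theorem~\ref{thm:adeg-dnf}—using $\deg_{\epsilon}^{+}(\cdot)\geq\deg_{\epsilon}^{+}(\cdot|_{\leq T_{K}})$, monotonicity of $\deg_{\epsilon}^{+}$ in $\epsilon$, and the choices of $A$ and $K$—gives $\deg_{\frac12-1/N^{\Delta+1}}^{+}(\neg g_{n,K})=\omega(N^{1-\delta})$. Re-indexing by setting $f_{n}=g_{\lfloor n^{1/(1+(2/3)^{K-1})}\rfloor,K}$ for all large $n$ and $f_{n}(x)=x_{1}$ otherwise produces the family asserted in~(\ref{eq:hard-dnf-onesided}). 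I do not anticipate a genuine obstacle here—the substantive work is already done in Corollary~\ref{cor:hardness-amplification-onesided}—and the only points requiring care are (i) the propagation of the all-ones precondition, which is the reason the invariant is phrased for $\neg g_{n,i}$ rather than $g_{n,i}$, (ii) the base case, and (iii) checking that constant width and polynomial size are preserved under the constantly many compositions.
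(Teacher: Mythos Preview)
Your proposal is correct and matches the paper's proof essentially line for line: the same choice of $K$ and $A=2\Delta+3$, the same sequence $T_0,\ldots,T_K$, the dictator base case, the same parameter settings in each invocation of Corollary~\ref{cor:hardness-amplification-onesided}, and the same final re-indexing. The paper justifies the all-ones hypothesis by noting that $g_{n,i-1}$ is a monotone DNF (hence $\neg g_{n,i-1}(1^{\ldots})=0$), which is equivalent to your non-constancy argument; and the identity $\neg(g\circ H)=(\neg g)\circ H$ you highlight is exactly the observation the paper records just before the proof.
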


\noindent This result subsumes Theorem~\ref{thm:adeg-dnf} and settles
Theorem~\ref{thm:MAIN-one-sided} in the introduction. The proof
below makes repeated use of the following observation: if one applies
Corollary~\ref{cor:hardness-amplification-onesided} to a function
$f$ that is the \emph{negation} of a constant-width monotone DNF
formula, then the resulting composition $f\circ H$ is again the \emph{negation}
of a constant-width monotone DNF formula. This is easy to see by writing
$\neg(f\circ H)=(\neg f)\circ H$ and noting that both $\neg f$ and
$H$ are computable by constant-width monotone DNF formulas. 
\begin{proof}[Proof of Theorem~\emph{\ref{thm:adeg-dnf-onesided}}.]
 Much of the proof is identical to that of Theorem~\ref{thm:adeg-dnf}.
As before, let $K\geq1$ be the smallest integer such that 
\begin{align}
 & \frac{1-(2/3)^{K}}{1+(2/3)^{K-1}}>1-\delta.\label{eq:K-defined-1}
\end{align}
Define
\begin{equation}
A=2\Delta+3.\label{eq:A-Delta-1}
\end{equation}
Now, let $n\geq1$ be any large enough integer. Define $T_{0},T_{1},T_{2},\ldots,T_{K}$
recursively by $T_{0}=\lfloor n/\log^{2K}n\rfloor$ and $T_{i}=\lfloor T_{i-1}\log^{2}T_{i-1}\rfloor$
for $i\geq1.$ Thus,
\begin{align}
T_{i} & \leq\frac{n}{\log^{2(K-i)}n}, &  & i=0,1,2,\ldots,K,\label{eq:T-i-upper-bound-1}\\
T_{i} & \sim\frac{n}{\log^{2(K-i)}n}, &  & i=0,1,2,\ldots,K,\label{eq:T-i-asymptotic-1}
\end{align}
where $\sim$ denotes equality up to lower-order terms. Provided that
$n$ is larger than a certain constant, inductive application of Corollary~\ref{cor:hardness-amplification-onesided}
gives functions
\begin{align}
g_{n,i}\colon\zoo^{\lfloor T_{i}^{1+(2/3)^{i-1}}\rfloor} & \to\zoo, &  & i=0,1,2,\ldots,K,
\end{align}
 such that
\begin{align}
\deg_{\frac{1}{2}-\frac{1}{T_{0}^{A}}-\frac{1}{T_{1}^{A}}-\cdots-\frac{1}{T_{i}^{A}}}^{+}(\neg g_{n,i}|_{\leq T_{i}}) & \geq T_{i}^{1-(2/3)^{i}}, &  & i=0,1,2,\ldots,K,\label{eq:inductive-approx-degree-1}
\end{align}
and each $g_{n,i}$ is an explicitly constructed monotone $c_{i}$-DNF
formula for some constant $c_{i}$ independent of $n.$ In more detail,
the requirement~(\ref{eq:inductive-approx-degree-1}) for $i=0$
is equivalent to $\deg_{\frac{1}{2}-\frac{1}{T_{0}^{A}}}^{+}(\neg g_{n,0}|_{\leq T_{0}})>0$
and is trivially satisfied by the ``dictator'' function $g_{n,0}(x)=x_{1}$.
For $i\geq1$, we obtain $g_{n,i}$ from $g_{n,i-1}$ by applying
Corollary~\ref{cor:hardness-amplification-onesided} with 
\begin{align*}
\alpha & =\left(\frac{2}{3}\right)^{i-1},\\
C & =1+\left(\frac{2}{3}\right)^{i-2},\\
\theta & =T_{i-1},\\
f & =\neg g_{n,i-1},\\
\epsilon & =\frac{1}{2}-\frac{1}{T_{0}^{A}}-\frac{1}{T_{1}^{A}}-\cdots-\frac{1}{T_{i-1}^{A}}.
\end{align*}
This appeal to Corollary~\ref{cor:hardness-amplification-onesided}
is legitimate because $g_{n,i-1}$ is a monotone DNF formula and therefore
its negation $f=\neg g_{n,i-1}$ evaluates to $0$ on the all-ones
input. 

Specializing~(\ref{eq:T-i-upper-bound-1})\textendash (\ref{eq:inductive-approx-degree-1})
to $i=K$, the function $g_{n,K}$ is a monotone $c_{K}$-DNF formula
for some constant $c_{K}$ independent of $n,$ takes at most $N:=n^{1+(2/3)^{K-1}}$
input variables, and has one-sided approximate degree 
\begin{align*}
\deg_{\frac{1}{2}-\frac{1}{N^{\Delta+1}}}^{+}(\neg g_{n,K}) & \geq\deg_{\frac{1}{2}-\frac{1}{T_{0}^{A}}-\frac{1}{T_{1}^{A}}-\cdots-\frac{1}{T_{K}^{A}}}^{+}(\neg g_{n,K})\\
 & \geq\deg_{\frac{1}{2}-\frac{1}{T_{0}^{A}}-\frac{1}{T_{1}^{A}}-\cdots-\frac{1}{T_{K}^{A}}}^{+}(\neg g_{n,K}|_{\leq T_{K}})\\
 & =\Omega(n^{1-(2/3)^{K}})\\
 & =\omega(N^{1-\delta}),
\end{align*}
where the first and last steps hold for all large enough $n$ due
to~(\ref{eq:A-Delta-1}) and~(\ref{eq:K-defined-1}), respectively.
The desired function family $\{f_{n}\}_{n=1}^{\infty}$ can then be
defined by setting 
\[
f_{n}=g_{\lfloor n^{1/(1+(2/3)^{K-1})}\rfloor,K}
\]
for all $n$ larger than a certain constant $n_{0},$ and taking the
remaining functions $f_{1},f_{2},\ldots,f_{n_{0}}$ to be the dictator
function $x\mapsto x_{1}.$
\end{proof}

\subsection{Quantum communication complexity}

Using the pattern matrix method, we will ``lift'' our approximate
degree results to a near-optimal lower bound on the communication
complexity of DNF formulas in the two-party quantum model. Before
we can apply the pattern matrix method, there is a technicality to
address with regard to the representation of Boolean values as real
numbers. In this paper, we have followed the standard convention of
representing ``true'' and ``false'' as $1$ and $0,$ respectively.
There is another common encoding, inspired by Fourier analysis and
used in the pattern matrix method~\cite{sherstov07quantum,sherstov13directional},
whereby ``true'' and ``false'' are represented as $-1$ and $1,$
respectively. To switch back and forth between these representations,
we will use the following proposition.
\begin{prop}
\label{prop:adeg-scaling-translation}For any function $f\colon X\to\Re$
on a finite subset $X$ of Euclidean space, and any reals $\epsilon\geq0$
and $c\ne0,$
\begin{align*}
 & \deg_{\epsilon}(f+c)=\deg_{\epsilon}(f),\\
 & \deg_{|c|\epsilon}(cf)=\deg_{\epsilon}(f).
\end{align*}
\end{prop}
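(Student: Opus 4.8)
The plan is to verify both identities directly from the definition of $\epsilon$-approximate degree, using the fact that a polynomial $p$ approximates $f$ within $\epsilon$ pointwise if and only if a correspondingly transformed polynomial approximates the transformed function within the (possibly rescaled) error. Recall that $\deg_\epsilon(f)$ is the minimum degree of a polynomial $p$ with $\|f-p\|_\infty \le \epsilon$, with the convention $\deg_\epsilon = +\infty$ when $\epsilon < 0$. Since the two statements are claimed for $\epsilon \ge 0$ and $c \ne 0$, no pathological cases arise.

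For the first identity, $\deg_\epsilon(f+c) = \deg_\epsilon(f)$: given any polynomial $p$ with $\|f - p\|_\infty \le \epsilon$, the polynomial $p + c$ has the same degree as $p$ (adding a constant does not change total degree) and satisfies $\|(f+c) - (p+c)\|_\infty = \|f - p\|_\infty \le \epsilon$. Conversely, if $q$ approximates $f + c$ within $\epsilon$, then $q - c$ has the same degree and approximates $f$ within $\epsilon$. This gives a degree-preserving bijection between the two sets of valid approximants, hence equality of the minima. For the second identity, $\deg_{|c|\epsilon}(cf) = \deg_\epsilon(f)$: if $\|f - p\|_\infty \le \epsilon$, then $cp$ has the same degree as $p$ (scaling by a nonzero constant preserves total degree) and $\|cf - cp\|_\infty = |c|\,\|f - p\|_\infty \le |c|\epsilon$; conversely, if $\|cf - q\|_\infty \le |c|\epsilon$, then $q/c$ has the same degree and $\|f - q/c\|_\infty = \frac{1}{|c|}\|cf - q\|_\infty \le \epsilon$. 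Again this is a degree-preserving bijection, so the minima agree.

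I do not anticipate any genuine obstacle here; the proposition is a bookkeeping lemma whose only subtlety is making sure the degree is genuinely preserved (which it is, since $c \ne 0$ guarantees $cp$ and $p$ have the same leading monomials, and adding a constant affects only the degree-$0$ term) and that the error scaling is tracked correctly (the $\|\cdot\|_\infty$ norm is absolutely homogeneous). I would write the proof as two short paragraphs, one per identity, each presenting the forward and backward degree-preserving maps between the families of valid approximating polynomials and concluding equality of the minimum degrees.
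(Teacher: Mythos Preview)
Your proposal is correct and takes essentially the same approach as the paper: the paper's proof simply records the two equivalences $\|f-p\|_\infty\le\epsilon \Leftrightarrow \|(f+c)-(p+c)\|_\infty\le\epsilon$ and $\|f-p\|_\infty\le\epsilon \Leftrightarrow \|cf-cp\|_\infty\le|c|\epsilon$ (the latter using $c\ne0$), which is exactly your degree-preserving bijection argument stated more tersely.
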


\begin{proof}
For any polynomial $p,$ we have the following equivalences:
\begin{align*}
 & \|f-p\|_{\infty}\leq\epsilon\qquad\Leftrightarrow\qquad\|(f+c)-(p+c)\|_{\infty}\leq\epsilon,\\
 & \|f-p\|_{\infty}\leq\epsilon\qquad\Leftrightarrow\qquad\|cf-cp\|_{\infty}\leq|c|\epsilon,
\end{align*}
where the second line uses $c\ne0.$
\end{proof}
\noindent As a corollary, we can relate in a precise way the approximate
degree of a Boolean function $f\colon X\to\zoo$ and the approximate
degree of the associated $\pm1$-valued function $f'\colon X\to\pomo$
given by $f'=$$(-1)^{f}.$
\begin{cor}
\label{cor:adeg-zoo-pomo}For any Boolean function $f\colon X\to\zoo$
and any $\epsilon\geq0,$
\[
\deg_{\epsilon}((-1)^{f})=\deg_{\epsilon/2}(f).
\]
\end{cor}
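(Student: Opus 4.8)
The proof is an immediate consequence of Proposition~\ref{prop:adeg-scaling-translation}, once one writes the $\pm1$ encoding algebraically. Since $f$ is Boolean, $(-1)^{f}=1-2f$ pointwise on $X$; equivalently, $(-1)^{f}=-2f+1$, which is an affine image of $f$ with multiplicative constant $c=-2$ and additive constant $1$. So the entire task reduces to tracking how the error parameter transforms under this affine change of representation.

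\textbf{Key steps, in order.} First I would record the identity $(-1)^{f}=-2f+1$, valid because $f(x)\in\{0,1\}$ for every $x\in X$. Next I would apply the translation part of Proposition~\ref{prop:adeg-scaling-translation} (with the role of ``$f$'' played by $-2f$ and the additive constant $c=1$) to get $\deg_{\epsilon}((-1)^{f})=\deg_{\epsilon}(-2f+1)=\deg_{\epsilon}(-2f)$. Then I would apply the scaling part of Proposition~\ref{prop:adeg-scaling-translation} with multiplicative constant $c=-2$ and error $\epsilon/2$: since $|c|\cdot(\epsilon/2)=2\cdot(\epsilon/2)=\epsilon$, this yields $\deg_{\epsilon}(-2f)=\deg_{\epsilon/2}(f)$. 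Chaining the two equalities gives
\[
\deg_{\epsilon}((-1)^{f})=\deg_{\epsilon}(-2f+1)=\deg_{\epsilon}(-2f)=\deg_{\epsilon/2}(f),
\]
as claimed.

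\textbf{Where the difficulty lies.} There is essentially no obstacle: the only point requiring a moment's care is getting the constant right in the scaling step, namely that approximating $(-1)^{f}$ to error $\epsilon$ corresponds to approximating $f$ to error $\epsilon/2$ because the two encodings differ by a factor of $2$ in scale. Everything else is a direct invocation of Proposition~\ref{prop:adeg-scaling-translation}, and no appeal to the dual characterization (Fact~\ref{fact:adeg-dual}) or to symmetrization is needed.
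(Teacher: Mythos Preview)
Your proposal is correct and is essentially identical to the paper's proof: the paper also writes $(-1)^{f}=1-2f$ and then applies the translation and scaling parts of Proposition~\ref{prop:adeg-scaling-translation} in sequence to obtain the same chain $\deg_{\epsilon}((-1)^{f})=\deg_{\epsilon}(1-2f)=\deg_{\epsilon}(-2f)=\deg_{\epsilon/2}(f)$.
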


\begin{proof}
Since $f$ is Boolean-valued, we have the equality of functions $(-1)^{f}=1-2f.$
Now $\deg_{\epsilon}((-1)^{f})=\deg_{\epsilon}(1-2f)=\deg_{\epsilon}(-2f)=\deg_{2\cdot\epsilon/2}(-2f)=\deg_{\epsilon/2}(f),$
where the second and fourth steps apply Proposition~\ref{prop:adeg-scaling-translation}.
\end{proof}
Corollary~\ref{cor:adeg-zoo-pomo} makes it easy to convert approximate
degree results between the $0,1$ representation and $\pm1$ representation.
For communication complexity, no conversion is necessary in the first
place:
\begin{align}
 & Q_{\epsilon}^{*}(F)=Q_{\epsilon}^{*}((-1)^{F}), &  & F\colon X\times Y\to\zoo,\label{eq:quantum-pomon-zoon}
\end{align}
where $Q_{\epsilon}^{*}$ denotes $\epsilon$-error quantum communication
complexity with arbitrary prior entanglement. This equality holds
because the representation of ``true'' and ``false'' in a communication
protocol is a purely syntactic matter, and one can relabel the output
values $0,1$ as $1,-1$, respectively, without affecting the protocol's
correctness or communication cost. We note that~(\ref{eq:quantum-pomon-zoon})
and Corollary~\ref{cor:adeg-zoo-pomo} pertain to the encoding of
the \emph{output }of a Boolean function $f$. How ``true'' and ``false''
bits are represented in the \emph{input} to $f$ is immaterial both
for communication complexity and approximate degree because the bijection
$(0,1)\leftrightarrow(1,-1)$ is a linear map.

We are now in a position to prove the promised communication lower
bounds. The pattern matrix method for two-party quantum communication
is given by the following theorem~\cite[Theorem~1.1]{sherstov07quantum}.
\begin{thm}[Sherstov]
\label{thm:pattern-matrix-quantum}Let $f\colon\zoo^{t}\to\zoo$
be given. Define $F\colon\zoo^{4t}\times\zoo^{4t}\to\zoo$ by 
\[
F(x,y)=f\left(\bigvee_{i=1}^{4}(x_{1,i}\wedge y_{1,i}),\ldots,\bigvee_{i=1}^{4}(x_{t,i}\wedge y_{t,i})\right).
\]
 Then for all $\alpha\in[0,1)$ and $\beta<\alpha/2,$
\[
Q_{\beta}^{*}(F)\geq\frac{1}{4}\deg_{\alpha/2}(f)-\frac{1}{2}\log\left(\frac{3}{\alpha-2\beta}\right).
\]
\end{thm}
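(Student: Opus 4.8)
The plan is to combine the \emph{generalized discrepancy method} for bounded-error quantum communication with arbitrary prior entanglement (Razborov, Klauck) with a Fourier-analytic lift of an approximate-degree dual witness for $f$. First I would pass to the $\pm1$ output encoding: put $g=(-1)^{f}$ and $M=(-1)^{F}$, so that by~(\ref{eq:quantum-pomon-zoon}) it suffices to lower bound $Q_{\beta}^{*}(M)$, and by Corollary~\ref{cor:adeg-zoo-pomo} we have $\deg_{\alpha}(g)=\deg_{\alpha/2}(f)=:d$. Applying the dual characterization of approximate degree (Fact~\ref{fact:adeg-dual}) to $g$ then yields a function $\psi\colon\zoo^{t}\to\Re$ with $\|\psi\|_{1}=1$, $\langle g,\psi\rangle>\alpha$, and $\orth\psi\ge d$; equivalently, the Fourier expansion $\psi=\sum_{S}\hat\psi(S)\,\chi_{S}$ is supported on sets $S$ with $|S|\ge d$.

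Next I would \emph{lift} $\psi$ to a real matrix $\Psi$ on the domain $\zoo^{4t}\times\zoo^{4t}$ of $M$. For $x=(x_{j,i})$ and $y=(y_{j,i})$ let $z(x,y)\in\zoo^{t}$ be the ``selected'' string $z(x,y)_{j}=\bigvee_{i=1}^{4}(x_{j,i}\wedge y_{j,i})$, and set $\Psi(x,y)=w(x,y)\,\psi(z(x,y))$, where $w\ge0$ is a fixed product weight over the $t$ disjoint coordinate-blocks, chosen so that the pushforward under $z$ of the induced distribution on $(x,y)$ is exactly uniform on $\zoo^{t}$. Three facts then have to be checked: (i) large correlation, $\langle M,\Psi\rangle=c_{t}\,\Exp_{u}[g(u)\psi(u)]=c_{t}\,\langle g,\psi\rangle>\alpha\,c_{t}$ for a normalizing constant $c_{t}$ depending only on $t$; (ii) $\|\Psi\|_{1}=c_{t}$, using $w\ge0$; and (iii) the spectral norm $\|\Psi\|$ is exponentially smaller than $c_{t}/\sqrt{|X||Y|}$, with a decay of $2^{-d/2}$ (up to absolute constants).

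Item~(iii) is the heart of the matter and the step I expect to be the main obstacle. Expanding through the Fourier series of $\psi$ gives $\|\Psi\|\le\sum_{|S|\ge d}|\hat\psi(S)|\,\|\Psi_{S}\|$, where $\Psi_{S}$ carries the entries $w(x,y)\,\chi_{S}(z(x,y))$. Because $z$ splits as a product over the $t$ coordinate-blocks and $\chi_{S}$ factors accordingly, the singular values of $\Psi_{S}$ are products of the singular values of the $16\times16$ ``block'' matrices, one per $j$; the active blocks ($j\in S$) are reweightings of the four-bit gadget $(a,b)\mapsto\bigvee_{i=1}^{4}(a_{i}\wedge b_{i})$, and an explicit eigenvalue computation shows each active block contributes a contraction factor at most $\tfrac12$, while inactive blocks contribute only the trivial normalization. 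This is precisely the spectral content of the pattern-matrix lemma of~\cite{sherstov07quantum}. Combining with the trivial bound $\sum_{|S|\ge d}|\hat\psi(S)|\le 2^{t/2}\|\psi\|_{2}\le 2^{t/2}$ and with $|S|\ge d$ produces the claimed $2^{-d/2}$ decay.

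Finally I would invoke the generalized discrepancy bound for $Q^{*}$: for any sign matrix $M$ on $X\times Y$ and any real matrix $\Psi$ there is a universal inequality of the form
\[
Q_{\beta}^{*}(M)\ \ge\ \frac{1}{2}\,\log\frac{\langle M,\Psi\rangle-2\beta\,\|\Psi\|_{1}}{\|\Psi\|\,\sqrt{|X|\,|Y|}}\ -\ O(1).
\]
Substituting (i)--(iii) with $|X|=|Y|=2^{4t}$, the numerator is at least $(\alpha-2\beta)\,c_{t}$ and the denominator at most a constant times $c_{t}\,2^{-d/2}$; after tracking the elementary normalization factors this simplifies to $Q_{\beta}^{*}(F)=Q_{\beta}^{*}(M)\ge\frac14 d-\frac12\log\frac{3}{\alpha-2\beta}$, which is the assertion. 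Here the constant $\frac14$ emerges as $\frac12$ (from the logarithm in the discrepancy bound) times $\frac12$ (the per-Fourier-level contraction exponent of the AND-OR gadget), and the correlation slack $\alpha-2\beta$ is exactly what survives the subtraction of the error term $2\beta\|\Psi\|_{1}$ in the numerator. Everything outside step~(iii) and the cited quantum primitive is routine bookkeeping.
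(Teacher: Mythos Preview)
The paper does not prove Theorem~\ref{thm:pattern-matrix-quantum} at all: it is quoted verbatim as \cite[Theorem~1.1]{sherstov07quantum}, and the only work done in the present paper is the one-line translation from the $\pm1$ output convention to the $0,1$ convention via~(\ref{eq:quantum-pomon-zoon}) and Corollary~\ref{cor:adeg-zoo-pomo}. Your proposal, by contrast, is a sketch of a full proof of the pattern matrix method itself. So in terms of ``comparing with the paper's own proof,'' you are doing vastly more than the paper does; the paper treats this theorem as a black box.

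That said, your sketch has a genuine gap in step~(iii). Bounding $\|\Psi\|\le\sum_{|S|\ge d}|\hat\psi(S)|\,\|\Psi_{S}\|$ by the triangle inequality and then invoking $\sum_{S}|\hat\psi(S)|\le 2^{t/2}$ loses a factor of $2^{t/2}$, which is fatal: you would end up with $\|\Psi\|$ bounded by (normalization)$\cdot 2^{t/2-d}$ rather than the needed (normalization)$\cdot 2^{-d/2}$, and the final inequality would be vacuous for $d\le t/2$. The actual pattern matrix argument in~\cite{sherstov07quantum} does \emph{not} use the triangle inequality here. The key structural fact is that the rank-one pieces $\Psi_{S}$ for different $S$ are mutually orthogonal on both sides (they form a joint singular value decomposition), so $\|\Psi\|=\max_{|S|\ge d}|\hat\psi(S)|\,\|\Psi_{S}\|$, and one then bounds each coefficient by $|\hat\psi(S)|\le\|\psi\|_{1}=1$ rather than summing. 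With that correction the rest of your outline (dual witness, tensor factorization of the gadget, generalized discrepancy bound for $Q^{*}$) is the standard route and matches what~\cite{sherstov07quantum} actually does.
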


\noindent The original statement in~\cite[Theorem~1.1]{sherstov07quantum}
uses the $\pm1$ representation for the range of $f$ and $F.$ We
translated it to the $0,1$ representation, as stated in Theorem~\ref{thm:pattern-matrix-quantum},
by applying (\ref{eq:quantum-pomon-zoon}) to $F$ and Corollary~\ref{cor:adeg-zoo-pomo}
to $f.$ By combining Theorems~\ref{thm:adeg-dnf} and~\ref{thm:pattern-matrix-quantum},
we obtain our main result on the quantum communication complexity
of DNF formulas:
\begin{thm}
\label{thm:quantum-cc} For all $\delta\in(0,1]$ and $A\geq1,$ there
is a constant $c\geq1$ and an $($explicitly given$)$ family $\{F_{n}\}_{n=1}^{\infty}$
of two-party communication problems $F_{n}\colon\{0,1\}^{n}\times\zoon\to\zoo$
such that each $F_{n}$ is computable by a monotone $c$-DNF formula
and satisfies
\begin{align}
Q_{\frac{1}{2}-\frac{1}{n^{A}}}^{*}(F_{n}) & =\Omega(n^{1-\delta}).\label{eq:Q-dnf}
\end{align}
\end{thm}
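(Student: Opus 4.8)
The plan is to lift the approximate degree lower bound of Theorem~\ref{thm:adeg-dnf} to the quantum setting using the pattern matrix method (Theorem~\ref{thm:pattern-matrix-quantum}). Fix $\delta\in(0,1]$ and $A\geq1$. Since the claim $Q^{*}_{\frac12-1/n^{A}}(F_n)=\Omega(n^{1-\delta})$ is vacuous when $\delta=1$, we may assume $\delta<1$. Invoke Theorem~\ref{thm:adeg-dnf} with parameters $\delta$ and $\Delta=A+1$ to obtain a constant $c_0\geq1$ and an explicit family $\{f_m\}_{m=1}^{\infty}$ of monotone $c_0$-DNF formulas $f_m\colon\zoo^{m}\to\zoo$ with
\[
\deg_{\frac12-\frac{1}{m^{A+1}}}(f_m)\;\geq\;\frac{1}{c_0}\,m^{1-\delta},\qquad m=1,2,3,\ldots.
\]
For each $n$, set $m=\lfloor n/4\rfloor$ and let $F_n\colon\zoo^{n}\times\zoo^{n}\to\zoo$ be the function obtained from $f_m$ by the pattern matrix construction of Theorem~\ref{thm:pattern-matrix-quantum} (acting on the first $4m$ coordinates of each party's input and ignoring the remaining $n-4m$); for $n<4$ we take $F_n$ trivial. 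No conversion between the $0,1$ and $\pm1$ encodings is needed, as Theorem~\ref{thm:pattern-matrix-quantum} is already stated in the $0,1$ encoding.

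First I would verify that $F_n$ is a monotone DNF formula of constant width. In the pattern matrix construction, the $j$-th bit fed to $f_m$ is $\bigvee_{i=1}^{4}(x_{j,i}\wedge y_{j,i})$, a monotone $2$-DNF with four terms. Each term of $f_m$ is a conjunction of at most $c_0$ such bits; distributing the conjunction over the disjunctions turns it into a disjunction of at most $4^{c_0}$ monotone terms, each of width at most $2c_0$. Taking the disjunction over all $m^{O(1)}$ terms of $f_m$, we conclude that $F_n$ is a monotone $2c_0$-DNF formula of size $n^{O(1)}$; in particular it has the required constant width.

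Next I would apply Theorem~\ref{thm:pattern-matrix-quantum} with $\alpha=1-\frac{1}{n^{A}}$ and $\beta=\frac12-\frac{1}{n^{A}}$. These choices satisfy $\alpha\in[0,1)$ and $\beta<\alpha/2=\frac12-\frac{1}{2n^{A}}$, with $\alpha-2\beta=\frac{1}{n^{A}}$, so
\[
Q^{*}_{\frac12-\frac{1}{n^{A}}}(F_n)\;\geq\;\frac14\deg_{\frac12-\frac{1}{2n^{A}}}(f_m)-\frac12\log\bigl(3n^{A}\bigr).
\]
To bound the approximate degree on the right, note that $m=\lfloor n/4\rfloor=\Theta(n)$, so $m^{A+1}\geq2n^{A}$ for all sufficiently large $n$; hence $\frac12-\frac{1}{2n^{A}}\leq\frac12-\frac{1}{m^{A+1}}$, and since approximate degree is nonincreasing in the error parameter,
\[
\deg_{\frac12-\frac{1}{2n^{A}}}(f_m)\;\geq\;\deg_{\frac12-\frac{1}{m^{A+1}}}(f_m)\;\geq\;\frac{1}{c_0}\,m^{1-\delta}\;=\;\Omega(n^{1-\delta}).
\]
Combining the two displays and using that $n^{1-\delta}$ dominates $\log n$ because $\delta<1$, we obtain $Q^{*}_{\frac12-\frac{1}{n^{A}}}(F_n)=\Omega(n^{1-\delta})$ (with the implied constant absorbing the finitely many small values of $n$ and the trivial $F_n$'s). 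This argument has no genuine obstacle; the only points requiring care are choosing $\alpha,\beta$ so that the target error $\frac12-\frac{1}{n^{A}}$ is realized while the error $\alpha/2$ at which the degree bound is invoked stays within the range covered by Theorem~\ref{thm:adeg-dnf}, and checking that the additive $O(\log n)$ loss inherent in the pattern matrix method is negligible against $n^{1-\delta}$.
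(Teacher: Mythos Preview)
Your proposal is correct and follows essentially the same approach as the paper: invoke Theorem~\ref{thm:adeg-dnf} to get a hard constant-width monotone DNF $f_m$, lift it via the pattern matrix construction of Theorem~\ref{thm:pattern-matrix-quantum} with $m=\lfloor n/4\rfloor$, and choose $\alpha,\beta$ so that the target error is $\frac12-1/n^{A}$ while $\alpha/2$ stays in the range covered by the degree bound. The paper's only differences are cosmetic parameter choices (it takes $\Delta=2A$ rather than $A+1$ and sets $\alpha=1-2/\lfloor n/4\rfloor^{2A}$ rather than $1-1/n^{A}$), and it does not single out the case $\delta=1$ explicitly.
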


\begin{proof}
Theorem~\ref{thm:adeg-dnf} gives a constant $c'\geq1$ and an explicit
family $\{f_{n}\}_{n=1}^{\infty}$ of functions $f_{n}\colon\{0,1\}^{n}\to\zoo$
such that each $f_{n}$ is computable by a monotone $c'$-DNF formula
and satisfies
\begin{align}
\deg_{\frac{1}{2}-\frac{1}{n^{2A}}}(f_{n}) & \geq\frac{1}{c'}\cdot n^{1-\delta}, &  & n=1,2,3,\ldots.\label{eq:hard-dnf-3}
\end{align}
For $n\geq4,$ define $F_{n}\colon\zoon\times\zoon\to\zoo$ by
\[
F_{n}(x,y)=f_{\lfloor n/4\rfloor}\left(\bigvee_{i=1}^{4}(x_{1,i}\wedge y_{1,i}),\ldots,\bigvee_{i=1}^{4}(x_{\lfloor n/4\rfloor,i}\wedge y_{\lfloor n/4\rfloor,i})\right),
\]
where we index the strings $x$ and $y$ as arrays of $\lfloor n/4\rfloor\times4$
bits. Clearly, $F_{n}$ is computable by a monotone $2c'$-DNF formula.
We now invoke the pattern matrix method for quantum communication
(Theorem~\ref{thm:pattern-matrix-quantum}) with parameters
\begin{align*}
\alpha & =1-\frac{2}{\lfloor n/4\rfloor^{2A}},\\
\beta & =\frac{1}{2}-\frac{1}{n^{A}},\\
f & =f_{\lfloor n/4\rfloor},
\end{align*}
which satisfy $\beta<\alpha/2$ for all $n\geq24.$ As a result, 
\begin{align*}
Q_{\frac{1}{2}-\frac{1}{n^{A}}}^{*}(F_{n}) & \geq\frac{1}{4}\cdot\deg_{\frac{1}{2}-\frac{1}{\lfloor n/4\rfloor^{2A}}}(f_{\lfloor n/4\rfloor})-\frac{1}{2}\log\left(\frac{3}{\frac{2}{n^{A}}-\frac{2}{\lfloor n/4\rfloor^{2A}}}\right)\\
 & \geq\frac{1}{4}\cdot\frac{1}{c'}\cdot\left\lfloor \frac{n}{4}\right\rfloor ^{1-\delta}-\frac{1}{2}\log\left(\frac{3}{\frac{2}{n^{A}}-\frac{2}{\lfloor n/4\rfloor^{2A}}}\right)
\end{align*}
for all $n\geq24,$ where the first inequality applies the pattern
matrix method, and the second inequality uses~(\ref{eq:hard-dnf-3}).
Now~(\ref{eq:Q-dnf}) follows since $A,c',\delta$ are constants.
\end{proof}
\noindent Theorem~\ref{thm:quantum-cc} settles Theorem~\ref{thm:MAIN-quantum}
from the introduction.

\subsection{Randomized multiparty communication}

We now turn to communication lower bounds for DNF formulas in the
$k$-party number-on-the-forehead model. Analogous to (\ref{eq:quantum-pomon-zoon}),
we have
\begin{align}
 & R_{\epsilon}(F)=R_{\epsilon}((-1)^{F}), &  & F\colon X_{1}\times X_{2}\times\cdots\times X_{k}\to\zoo,\label{eq:multiparty-pomon-zoon}
\end{align}
where $R_{\epsilon}$ denotes $\epsilon$-error number-on-the-forehead
randomized communication complexity. The \emph{$k$-party set disjointness
problem} $\DISJ_{n,k}\colon(\zoon)^{k}\to\zoo$ is given by 
\[
\DISJ_{n,k}(x_{1},x_{2},\ldots,x_{k})=\bigwedge_{j=1}^{n}\bigvee_{i=1}^{k}\overline{x_{i,j}}.
\]
In other words, the problem asks whether there is a coordinate $j$
in which each of the Boolean vectors $x_{1},x_{2},\ldots,x_{k}$ has
a $1.$ If one views $x_{1},x_{2},\ldots,x_{k}$ as the characteristic
vectors of corresponding sets $S_{1},S_{2},\ldots,S_{k}$, then the
set disjointness function evaluates to true if and only if $S_{1}\cap S_{2}\cap\cdots\cap S_{k}=\varnothing.$
For a communication problem $g\colon X_{1}\times X_{2}\times\cdots\times X_{k}\to\zoo$
and a function $f\colon\zoon\to\zoo,$ we view the componentwise composition
$f\circ g$ as a $k$-party communication problem on $X_{1}^{n}\times X_{2}^{n}\times\cdots\times X_{k}^{n}.$
The multiparty pattern matrix method~\cite[Theorem~5.1]{sherstov13directional}
gives a lower bound on the communication complexity of $f\circ\DISJ_{m,k}$
in terms of the approximate degree of $f$:
\begin{thm}[Sherstov]
\label{thm:pattern-matrix-randomized}Let $f\colon\zoo^{n}\to\zoo$
be given. Consider the $k$-party communication problem $F$ defined
by $F=f\circ\DISJ_{m,k}.$ Then for all $\alpha,\beta\geq0$ with
$\beta<\alpha/2,$ one has 
\[
R_{\beta}(F)\geq\frac{\deg_{\alpha/2}(f)}{2}\cdot\log\left(\frac{\sqrt{m}}{C2^{k}k}\right)-\log\frac{1}{\alpha-2\beta},
\]
where $C>0$ is an absolute constant.
\end{thm}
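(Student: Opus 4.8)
The plan is to establish the inequality via the \emph{multiparty generalized discrepancy method}, of which the displayed bound is the standard quantitative output; the two-party instance recorded as Theorem~\ref{thm:pattern-matrix-quantum} is a useful template. Three ingredients are needed: a dual witness for $f$; a ``gadget object'' for $\DISJ_{m,k}$ with tiny cylinder-intersection norm; and the composition of the two into a dual object $\Psi$ for $F$ that is simultaneously well correlated with $F$ and of small cylinder-intersection norm.

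First I would pass to the $\pm1$ output encoding: by~(\ref{eq:multiparty-pomon-zoon}) it suffices to lower-bound $R_\beta((-1)^F)$, and by Corollary~\ref{cor:adeg-zoo-pomo} the hypothesis $\deg_{\alpha/2}(f)=d$ is the same as $\deg_\alpha((-1)^f)=d$. Fact~\ref{fact:adeg-dual}, applied to $(-1)^f$, then yields $\Lambda\colon\zoo^n\to\Re$ with $\|\Lambda\|_1=1$, $\langle(-1)^f,\Lambda\rangle>\alpha$, and $\orth\Lambda\geq d$. Separately one needs a gadget object $\mu\colon(\zoo^m)^k\to\Re$ for $\DISJ_{m,k}$: a signed combination of product distributions on the $k$-party domain $(\zoo^m)^k$ whose sign pattern on its support encodes $(-1)^{\DISJ_{m,k}}$ (so that the block-composition correlation is \emph{exact}), whose total mass is controlled, and whose $k$-party discrepancy (maximal normalized correlation with a cylinder intersection) is at most $C2^kk/\sqrt m$. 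The existence of such a $\mu$ is exactly the BNS-style multiparty discrepancy estimate for the conjunction gadget underlying set disjointness, and I would import it from~\cite{bns92,sherstov13directional}.

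Next I would form the block-composed dual object in the style of~(\ref{eq:dual-block-compose}): $\Psi(x_1,\ldots,x_n)=\Lambda(\I[\mu(x_1)>0],\ldots,\I[\mu(x_n)>0])\prod_{i=1}^n|\mu(x_i)|$, regarded as a function on $((\zoo^m)^k)^n\cong(\zoo^{mn})^k$, with the $i$-th player receiving one argument of each of the $n$ gadget copies. The correlation identity $\langle(-1)^F,\Psi\rangle=\langle(-1)^f,\Lambda\rangle>\alpha$ holds because on the support of $\Psi$ the sign of each factor $\mu(x_i)$ records the value of the $i$-th copy of $\DISJ_{m,k}$, and $\prod_i|\mu(x_i)|$ merely reweights. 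For the cylinder-intersection norm $\|\Psi\|_U$, I would expand it block by block and invoke the key cancellation, a multiparty analogue of Proposition~\ref{prop:expect-out}: a block ``not read'' by a given cylinder intersection contributes a $\langle\mu,1\rangle$-type scalar, and when these scalars are aggregated against the Fourier expansion of $\Lambda$ they are annihilated unless at least $d$ blocks are read, since $\orth\Lambda\geq d$. Each read block then contributes a factor at most $C2^kk/\sqrt m$, so $\|\Psi\|_U\leq(C2^kk/\sqrt m)^{d}$ (binomial prefactors being absorbed, as we may assume $C2^kk/\sqrt m<1$, the stated bound being vacuous otherwise).

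Finally I would feed these estimates into the multiparty generalized discrepancy inequality, which asserts that for any $k$-party randomized protocol of cost $c$ and error $\beta$ one has $\bigl(\langle(-1)^F,\Psi\rangle-2\beta\|\Psi\|_1\bigr)^2\leq 2^{2c}\,\|\Psi\|_U\|\Psi\|_1$; the squaring here is the Cauchy--Schwarz step responsible for the factor $\tfrac12$ in the statement. With $\langle(-1)^F,\Psi\rangle>\alpha$, $\|\Psi\|_1\leq1$, and $\|\Psi\|_U\leq(C2^kk/\sqrt m)^{d}$, this rearranges to $c\geq\tfrac d2\log(\sqrt m/(C2^kk))-\log\tfrac1{\alpha-2\beta}$, which is the claim. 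The main obstacle is the gadget object $\mu$: one must exhibit a signed combination of product distributions on $(\zoo^m)^k$ that faithfully encodes $\DISJ_{m,k}$ while achieving discrepancy $O(2^kk/\sqrt m)$, and then verify rigorously that the sub-$d$-degree terms in the cylinder-intersection expansion of $\Psi$ cancel against $\orth\Lambda\geq d$; everything else is routine bookkeeping around the generalized discrepancy method.
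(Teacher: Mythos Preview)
The paper does not prove this result; it simply quotes it from~\cite[Theorem~5.1]{sherstov13directional} and remarks that the $\{0,1\}$ version follows from the $\pm1$ version via~(\ref{eq:multiparty-pomon-zoon}) and Corollary~\ref{cor:adeg-zoo-pomo}. So the relevant comparison is between your sketch and the actual argument in~\cite{sherstov13directional}, whose machinery is laid out in Theorems~\ref{thm:iter-discrepancy-DISJ} and~\ref{thm:iter-discrepancy-2-approximation} of the present paper (and used there to prove the one-sided analogue, Theorem~\ref{thm:pattern-matrix-N-MA}).

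Your proposal has a genuine gap at the step where you bound the cylinder-intersection norm of $\Psi$ by a ``block-by-block'' expansion. That argument is correct for $k=2$ --- it is exactly the pattern-matrix calculation --- but it breaks for $k\ge3$. A $k$-dimensional cylinder intersection $\chi$ on $(X_1\times\cdots\times X_k)^n$ does \emph{not} factor across the $n$ gadget blocks: there is no meaningful notion of a block being ``read'' or ``not read'' by $\chi$, because each factor $\phi_i$ of $\chi$ already depends on the full $n$-tuple of inputs to all parties except party~$i$. Consequently the cancellation you appeal to (``unread blocks contribute $\langle\mu,1\rangle$ and are annihilated by $\orth\Lambda\ge d$'') is not available, and the claimed bound $\|\Psi\|_U\le(C2^kk/\sqrt m)^d$ does not follow from ordinary discrepancy of the gadget.

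The actual route, visible in the proof of Theorem~\ref{thm:pattern-matrix-N-MA}, replaces ordinary discrepancy by \emph{repeated discrepancy} $\iterdisc_\pi(\DISJ_{m,k})$ (Theorem~\ref{thm:iter-discrepancy-DISJ}) and invokes Theorem~\ref{thm:iter-discrepancy-2-approximation}: for any cylinder intersection $\chi$ on $X^n$, the averaged function $L_{\pi,n}\chi$ on $\zoon$ is approximable to error $(c\cdot\iterdisc_\pi(\DISJ_{m,k}))^d$ by a polynomial of degree $d-1$. One then pairs this approximant against the dual witness $\psi$ (your $\Lambda$) using $\orth\psi\ge d$, exactly as in~(\ref{eq:disc-intermediate})--(\ref{eq:psi-L-chi-E}). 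So the orthogonality of $\psi$ is used not against a tensor-product structure of $\chi$, but against a low-degree polynomial approximant of $L_{\pi,n}\chi$; proving that such an approximant exists is the nontrivial content of~\cite{sherstov13directional}, and it is what your sketch is missing.
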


\noindent 

\noindent The actual statement of the pattern matrix method in~\cite[Theorem~5.1]{sherstov13directional}
is for functions $f$ and $F$ with range $\pomo$. Theorem~\ref{thm:pattern-matrix-randomized}
above, stated for functions with range $\{0,1\}$, is immediate from~\cite[Theorem~5.1]{sherstov13directional}
by applying~(\ref{eq:multiparty-pomon-zoon}) to $F$ and Corollary~\ref{cor:adeg-zoo-pomo}
to $f$. We are now ready for our main result on the randomized multiparty
communication complexity of DNF formulas.
\begin{thm}
\label{thm:multiparty-cc} Fix arbitrary constants $\delta\in(0,1]$
and $A\geq1$. Then for all integers $n,k\geq2,$ there is an $($explicitly
given$)$ $k$-party communication problem $F_{n,k}\colon(\zoon)^{k}\to\zoo$
with
\begin{align}
 & R_{1/3}(F_{n,k})\geq\left(\frac{n}{c4^{k}k^{2}}\right)^{1-\delta},\label{eq:R-dnf-const-error}\\
 & R_{\frac{1}{2}-\frac{1}{n^{A}}}(F_{n,k})\geq\frac{n^{1-\delta}}{c4^{k}},\label{eq:R-dnf-large-error}
\end{align}
where $c\geq1$ is a constant independent of $n$ and $k.$ Moreover,
each $F_{n,k}$ is computable by a monotone DNF formula of width $ck$
and size $n^{c}$.
\end{thm}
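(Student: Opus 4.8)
The plan is to obtain Theorem~\ref{thm:multiparty-cc} by feeding the hard DNF formula of Theorem~\ref{thm:adeg-dnf} into the multiparty pattern matrix method (Theorem~\ref{thm:pattern-matrix-randomized}). The one technical obstacle is that the pattern matrix gadget $\DISJ_{m,k}=\bigwedge_{j}\bigvee_{i}\overline{x_{i,j}}$ is a \emph{CNF}, so composing the hard DNF with it directly would leave the depth-$2$ world. I would therefore compose instead with its negation $\neg\DISJ_{m,k}=\bigvee_{j=1}^{m}\bigwedge_{i=1}^{k}x_{i,j}$, which is a monotone $k$-DNF. Concretely: fix $\delta_{0}=\delta/2$ and a large enough constant $\Delta_{0}=\Delta_{0}(\delta,A)$, let $f_{t}$ be the monotone constant-width DNF on $t$ variables supplied by Theorem~\ref{thm:adeg-dnf} with $\deg_{\frac{1}{2}-t^{-\Delta_{0}}}(f_{t})\geq t^{1-\delta_{0}}/c'$ (here $c'$ is that theorem's constant), put $m=\lceil 4C^{2}4^{k}k^{2}\rceil$ so that $\log(\sqrt{m}/(C2^{k}k))\geq 1$ with $C$ the constant of Theorem~\ref{thm:pattern-matrix-randomized}, set $t=\lfloor n/m\rfloor$, and define $F_{n,k}=f_{t}\circ\neg\DISJ_{m,k}$, padded out with dummy variables to a $k$-party problem on $(\zoon)^{k}$ (padding changes neither the communication complexity nor the DNF structure).

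To apply the pattern matrix method, note that the componentwise map $b\mapsto 1-b$ on Boolean inputs is affine and hence preserves approximate degree; consequently $F_{n,k}=g_{t}\circ\DISJ_{m,k}$ as Boolean functions, where $g_{t}(y)=f_{t}(1-y_{1},\dots,1-y_{t})$ satisfies $\deg_{\epsilon}(g_{t})=\deg_{\epsilon}(f_{t})$ for all $\epsilon\geq 0$. I would then invoke Theorem~\ref{thm:pattern-matrix-randomized} with $f=g_{t}$ in two parameter settings. For the bounded-error claim take $\alpha=5/6$ and $\beta=1/3$ (so $\beta<\alpha/2$), and bound $\deg_{5/12}(g_{t})=\deg_{5/12}(f_{t})\geq\deg_{\frac{1}{2}-t^{-\Delta_{0}}}(f_{t})\geq t^{1-\delta_{0}}/c'$, using that approximate degree is nonincreasing in the error and $5/12<\frac{1}{2}-t^{-\Delta_{0}}$ since $\Delta_{0}$ is large. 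For the large-error claim take $\alpha=1-n^{-A}$ and $\beta=\frac{1}{2}-n^{-A}$ (so $\beta<\alpha/2$ and $\alpha-2\beta=n^{-A}$), and bound $\deg_{\frac{1}{2}-\frac{1}{2n^{A}}}(g_{t})\geq t^{1-\delta_{0}}/c'$, which holds provided $\Delta_{0}$ was chosen so that $t^{\Delta_{0}}\geq 2n^{A}$ throughout the range of parameters we use.

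Carrying this out, Theorem~\ref{thm:pattern-matrix-randomized} gives $R_{1/3}(F_{n,k})\geq\frac{1}{2c'}t^{1-\delta_{0}}-O(1)$ and $R_{\frac{1}{2}-n^{-A}}(F_{n,k})\geq\frac{1}{2c'}t^{1-\delta_{0}}-A\log n$, and whenever $t$ is polynomially large in $n$ --- which I arrange by restricting the above construction to parameter regimes with, say, $4^{k}k^{2}\leq n^{1-\delta/4}$, forcing $t=\Theta(n/(4^{k}k^{2}))\geq n^{\delta/4}$ --- the subtracted lower-order terms are absorbed, leaving $R_{1/3}(F_{n,k}),\,R_{\frac{1}{2}-n^{-A}}(F_{n,k})=\Omega\bigl((n/(4^{k}k^{2}))^{1-\delta_{0}}\bigr)$. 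Since $\delta_{0}=\delta/2<\delta$, the quantity $(n/(4^{k}k^{2}))^{1-\delta_{0}}$ exceeds $(n/(c4^{k}k^{2}))^{1-\delta}$ for a suitable constant $c$; and it also exceeds $n^{1-\delta}/(c4^{k})$, because the ratio of the two is $\Theta(n^{\delta/2}\cdot 2^{k\delta}\cdot k^{-(2-\delta)})$, where $2^{k\delta}k^{-(2-\delta)}$ is bounded below by a positive constant over all $k\geq 2$ while $n^{\delta/2}\to\infty$ --- this is exactly the point at which the large-error bound legitimately drops the $k^{2}$ from its denominator. For the DNF structure, substituting each (positive) variable of the width-$O(1)$ formula $f_{t}$ by the width-$k$ DNF $\neg\DISJ_{m,k}$ and distributing the resulting ANDs over ORs expresses $F_{n,k}$ as a monotone DNF whose terms have width at most $\operatorname{width}(f_{t})\cdot k=O(k)$ and whose size is at most $\operatorname{size}(f_{t})\cdot m^{\operatorname{width}(f_{t})}=n^{O(1)}$ in this regime.

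The remaining parameter regimes are handled trivially: if $4^{k}\geq n$, or $n^{1-\delta/4}<4^{k}k^{2}$ with $4^{k}<n$, or $n$ lies below a constant threshold depending on $\delta$ and $A$, then both claimed right-hand sides fall below $1$, and one simply takes $F_{n,k}$ to be a dictator (a trivial monotone DNF) whose randomized complexity is at least $1$. The genuinely delicate part of the proof is the bookkeeping that makes a \emph{single} function $F_{n,k}$ satisfy both displayed bounds simultaneously and that reconciles the $1-\delta$ exponents in the statement with the $1-\delta_{0}$ exponents produced by Theorem~\ref{thm:adeg-dnf}; the communication-complexity and formula-size estimates themselves are routine given the two cited theorems.
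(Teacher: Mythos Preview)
Your approach is the paper's: compose the hard DNF of Theorem~\ref{thm:adeg-dnf} (invoked with exponent $1-\delta/2$ and a large error exponent) with $\neg\DISJ_{m,k}$ for $m=\Theta(4^{k}k^{2})$, rewrite this as $g_{t}\circ\DISJ_{m,k}$ via $g_{t}(y)=f_{t}(\neg y)$, and apply Theorem~\ref{thm:pattern-matrix-randomized}. The DNF width and size accounting and the two pattern-matrix invocations are handled the same way there.

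There is one slip in your regime splitting. You assert that when $n^{1-\delta/4}<4^{k}k^{2}$ and $4^{k}<n$, both right-hand sides fall below~$1$. This fails for~\eqref{eq:R-dnf-const-error}: take $4^{k}k^{2}\approx n^{1-\delta/8}$; then $n/(c4^{k}k^{2})\approx n^{\delta/8}/c$ and the right-hand side of~\eqref{eq:R-dnf-const-error} is roughly $n^{\delta(1-\delta)/8}$, which is unbounded in~$n$. A dictator, with $R_{1/3}=1$, cannot cover this range. The remedy (and what the paper does) is to decouple the thresholds: \emph{define} $F_{n,k}$ via the composition whenever $t=\lfloor n/m\rfloor\geq1$, and for~\eqref{eq:R-dnf-const-error} observe that the pattern-matrix bound loses only an additive constant, so it already gives $R_{1/3}(F_{n,k})=\Omega(t^{1-\delta/2})$ throughout the range $n\geq c_{0}4^{k}k^{2}$ for a suitable constant $c_{0}$; below that range the right-hand side of~\eqref{eq:R-dnf-const-error} really is at most~$1$. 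Your stricter requirement $t\geq n^{\delta/4}$ is needed only to absorb the $A\log n$ loss in the analysis of~\eqref{eq:R-dnf-large-error}, and for that bound your case split is correct.
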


It will be helpful to keep in mind that the conclusion of Theorem~\ref{thm:multiparty-cc}
is ``monotone'' in $c,$ in the sense that proving Theorem~\ref{thm:multiparty-cc}
for a given constant $c$ proves it for all larger constants as well.
\begin{proof}
Theorem~\ref{thm:adeg-dnf} gives a constant $c'\geq1$ and an explicit
family $\{f_{n}\}_{n=1}^{\infty}$ of functions $f_{n}\colon\{0,1\}^{n}\to\zoo$
such that each $f_{n}$ is computable by a monotone DNF formula of
width $c'$ and satisfies
\begin{align}
\deg_{\frac{1}{2}-\frac{1}{n^{2A/\delta}}}(f_{n}) & \geq\frac{1}{c'}\cdot n^{1-\frac{\delta}{2}}, &  & n=1,2,3,\ldots.\label{eq:hard-dnf-f}
\end{align}
Let $C>0$ be the absolute constant from Theorem~\ref{thm:pattern-matrix-randomized}.
For arbitrary integers $n,k\geq2,$ define
\[
F_{n,k}=\begin{cases}
\AND_{k} & \text{if }n<\lceil C2^{k+1}k\rceil^{2},\\
f_{\lfloor n/\lceil C2^{k+1}k\rceil^{2}\rfloor}\circ\neg\DISJ_{\lceil C2^{k+1}k\rceil^{2},k} & \text{otherwise.}
\end{cases}
\]
We first analyze the cost of representing $F_{n,k}$ as a DNF formula.
If $n<\lceil C2^{k+1}k\rceil^{2},$ then by definition $F_{n,k}$
is a monotone DNF formula of width $k$ and size $1.$ In the complementary
case, $f_{\lfloor n/\lceil C2^{k+1}k\rceil^{2}\rfloor}$ is by construction
a monotone DNF formula of width $c'$ and hence of size at most $n^{c'},$
whereas $\neg\DISJ_{\lceil C2^{k+1}k\rceil^{2},k}$ is by definition
a monotone DNF formula of width $k$ and size at most $\lceil C2^{k+1}k\rceil^{2}\leq n.$
As a result, the composed function $F_{n,k}$ is a monotone DNF formula
of width $c'k$ and size at most $n^{c'}\cdot n^{c'}=n^{2c'}.$ In
particular, the claim in the theorem statement regarding the width
and size of $F_{n,k}$ as a monotone DNF formula is valid for any
constant $c\geq2c'.$

We now turn to the communication complexity of $F_{n,k}.$ Since $F_{n,k}$
is nonconstant, we have the trivial bound
\begin{align}
R_{\epsilon}(F_{n,k}) & \geq1, &  & 0\leq\epsilon<\frac{1}{2}.\label{eq:F-nk-trivial-cc}
\end{align}
We further claim that
\begin{multline}
R_{\epsilon}(F_{n,k})\geq\frac{1}{2c'}\cdot\left\lfloor \frac{n}{\lceil C2^{k+1}k\rceil^{2}}\right\rfloor ^{1-\frac{\delta}{2}}\\
+\log\left(1-\frac{2}{\lfloor n/\lceil C2^{k+1}k\rceil^{2}\rfloor^{2A/\delta}}-2\epsilon\right)\qquad\label{eq:R-F-nk-general-error}
\end{multline}
whenever the logarithmic term is well-defined. For $n<\lceil C2^{k+1}k\rceil^{2},$
this claim is vacuous. In the complementary case $n\geq\lceil C2^{k+1}k\rceil^{2},$
consider the family $\{g_{n}\}_{n=1}^{\infty}$ of functions $g_{n}\colon\zoon\to\zoo$
given by $g_{n}(x_{1},x_{2},\ldots,x_{n})=f_{n}(\neg x_{1},\neg x_{2},\ldots,\neg x_{n})$.
For each $n,$ it is clear that $g_{n}$ and $f_{n}$ have the same
approximate degree. Since $F_{n,k}=g_{\lfloor n/\lceil C2^{k+1}k\rceil^{2}\rfloor}\circ\DISJ_{\lceil C2^{k+1}k\rceil^{2},k},$
one now obtains~(\ref{eq:R-F-nk-general-error}) directly from~(\ref{eq:hard-dnf-f})
and the multiparty pattern matrix method (Theorem~\ref{thm:pattern-matrix-randomized}).

For a sufficiently large constant $c\geq1$, the communication lower
bound~(\ref{eq:R-dnf-const-error}) follows from~(\ref{eq:R-F-nk-general-error})
for $n\geq c4^{k}k^{2}$ and follows from~(\ref{eq:F-nk-trivial-cc})
for $n<c4^{k}k^{2}$.

The proof of~(\ref{eq:R-dnf-large-error}) is more tedious. Take
the constant $c\geq1$ large enough that the following relations hold:
\begin{align}
 & \left\lfloor \frac{n^{\delta}}{\lceil2C\rceil^{2}\log^{2}n}\right\rfloor \geq2n^{\delta/2} &  & \text{for all }n\geq c,\label{eq:many-vars}\\
 & n^{\delta^{2}/4}\geq1+A\log n &  & \text{for all }n\geq c,\label{eq:n-log}\\
 & c\geq2c'.\label{eq:c-c-prime}
\end{align}
If $n^{1-\delta}<c4^{k},$ then~(\ref{eq:R-dnf-large-error}) holds
due to~(\ref{eq:F-nk-trivial-cc}). In what follows, we treat the
complementary case when
\begin{equation}
\frac{n^{1-\delta}}{c4^{k}}\geq1\label{eq:n-ge-c4k}
\end{equation}
and in particular 
\begin{align}
n & \geq c,\label{eq:n-ge-c}\\
k & \leq\log n.\label{eq:k-le-logn}
\end{align}
Then
\begin{align}
\left\lfloor \frac{n}{\lceil C2^{k+1}k\rceil^{2}}\right\rfloor ^{1-\frac{\delta}{2}} & \geq\left\lfloor \frac{n}{\lceil2C\rceil^{2}4^{k}k^{2}}\right\rfloor ^{1-\frac{\delta}{2}}\nonumber \\
 & \geq\left\lfloor \frac{n^{1-\delta}}{4^{k}}\cdot\frac{n^{\delta}}{\lceil2C\rceil^{2}\log^{2}n}\right\rfloor ^{1-\frac{\delta}{2}}\nonumber \\
 & \geq\left\lfloor \frac{n^{1-\delta}}{4^{k}}\cdot2n^{\delta/2}\right\rfloor ^{1-\frac{\delta}{2}}\nonumber \\
 & \geq\left(\frac{n^{1-\delta}}{4^{k}}\cdot n^{\delta/2}\right)^{1-\frac{\delta}{2}}\nonumber \\
 & \geq\frac{n^{1-\delta}}{4^{k}}\cdot n^{\delta^{2}/4}\nonumber \\
 & \geq\frac{n^{1-\delta}}{4^{k}}\cdot(1+A\log n),\label{eq:first-summand}
\end{align}
where the second step uses~(\ref{eq:k-le-logn}), the third step
uses~(\ref{eq:many-vars}) and~(\ref{eq:n-ge-c}), the fourth step
is valid by~(\ref{eq:n-ge-c4k}), and the last step uses~(\ref{eq:n-log})
and~(\ref{eq:n-ge-c}). Continuing,
\begin{align}
 & \log\left(1-\frac{2}{\lfloor n/\lceil C2^{k+1}k\rceil^{2}\rfloor^{2A/\delta}}-2\left(\frac{1}{2}-\frac{1}{n^{A}}\right)\right)\nonumber \\
 & \hspace{2cm}=\log\left(\frac{2}{n^{A}}-\frac{2}{\lfloor n/\lceil C2^{k+1}k\rceil^{2}\rfloor^{2A/\delta}}\right)\nonumber \\
 & \hspace{2cm}\geq\log\left(\frac{2}{n^{A}}-\frac{2}{\lfloor n/(\lceil2C\rceil^{2}4^{k}k^{2})\rfloor^{2A/\delta}}\right)\nonumber \\
 & \hspace{2cm}\geq\log\left(\frac{2}{n^{A}}-\frac{2}{\lfloor n/(\lceil2C\rceil^{2}n^{1-\delta}\log^{2}n)\rfloor^{2A/\delta}}\right)\nonumber \\
 & \hspace{2cm}\geq\log\left(\frac{2}{n^{A}}-\frac{2}{(2n^{\delta/2})^{2A/\delta}}\right)\nonumber \\
 & \hspace{2cm}\geq\log\left(\frac{2}{n^{A}}-\frac{1}{n^{A}}\right)\nonumber \\
 & \hspace{2cm}=-A\log n,\label{eq:second-summand}
\end{align}
where the third step uses~(\ref{eq:n-ge-c4k}) and~(\ref{eq:k-le-logn}),
and the fourth step uses~(\ref{eq:many-vars}) and~(\ref{eq:n-ge-c}).
Now

\begin{align*}
R_{\frac{1}{2}-\frac{1}{n^{A}}}(F_{n,k}) & \geq\frac{1}{2c'}\cdot\frac{n^{1-\delta}}{4^{k}}\cdot(1+A\log n)-A\log n\\
 & \geq\frac{n^{1-\delta}}{c4^{k}}\cdot(1+A\log n)-A\log n\\
 & \geq\frac{n^{1-\delta}}{c4^{k}},
\end{align*}
where the first step substitutes the bounds~(\ref{eq:first-summand})
and~(\ref{eq:second-summand}) into~(\ref{eq:R-F-nk-general-error}),
the second step uses~(\ref{eq:c-c-prime}), and the third step is
valid by~(\ref{eq:n-ge-c4k}). This completes the proof of~(\ref{eq:R-dnf-large-error}).
\end{proof}
Theorem~\ref{thm:multiparty-cc} settles Theorem~\ref{thm:MAIN-multiparty}
from the introduction.
\begin{rem}
\label{rem:quantum-NOF}In this section, we considered $k$-party
number-on-the-forehead bounded-error communication complexity with
\emph{classical} players. The model naturally extends to \emph{quantum}
players, and our lower bound in Theorem~\ref{thm:multiparty-cc}
implies an $\Omega(n^{1-\delta}/4^{k}k)$ communication lower bound
in this quantum $k$-party number-on-the-forehead model for computing
an explicit DNF formula $F\colon(\zoon)^{k}\to\zoo$ of size $n^{O(1)}$
and width $O(k)$ with error probability $\frac{1}{2}-\frac{1}{n^{A}},$
where the constants $\delta>0$ and $A\geq1$ can be set arbitrarily.
In more detail, the multiparty pattern matrix method actually gives
a bound on the \emph{generalized discrepancy} of the composed communication
problem $F$. By the results of~\cite{LSS2009quantum-multiparty},
generalized discrepancy leads in turn to a lower bound on the communication
complexity of $F$ in the quantum $k$-party number-on-the-forehead
model. Quantitatively, the authors of~\cite{LSS2009quantum-multiparty}
show that any classical communication lower bound obtained via generalized
discrepancy carries over to the quantum model with only a factor of
$\Theta(k)$ loss.
\end{rem}

\subsection{Nondeterministic and Merlin\textendash Arthur multiparty communication}

To obtain our results on nondeterminism and Merlin\textendash Arthur
communication, we will now develop a general technique for transforming
lower bounds on one-sided approximate degree into lower bounds in
these communication models. The technique in question is implicit
in the papers~\cite{GS09npconp,sherstov13directional} but has not
been previously formalized in our sought generality.

Consider a $k$-party communication problem $F\colon X_{1}\times X_{2}\times\cdots\times X_{k}\to\zoo,$
for some finite sets $X_{1},X_{2},\ldots,X_{k}$. A fundamental notion
in the study of multiparty communication is that of a \emph{cylinder
intersection~}\cite{bns92}\emph{, }defined as any function $\chi:X_{1}\times X_{2}\times\cdots\times X_{k}\to\zoo$
of the form 
\[
\chi(x_{1},\dots,x_{k})=\prod_{i=1}^{k}\phi_{i}(x_{1},\dots,x_{i-1},x_{i+1},\dots,x_{k})
\]
for some $\phi_{i}:X_{1}\times\cdots X_{i-1}\times X_{i+1}\times\cdots X_{k}\to\zoo,$
$i=1,2,\dots,k.$ In other words, a cylinder intersection is the product
of $k$ Boolean functions, where the $i$-th function does not depend
on the $i$-th coordinate. For a probability distribution $\mu$ on
the domain of $F,$ the \emph{discrepancy of $F$ with respect to
$\mu$} is denoted $\disc_{\mu}(F)$ and defined as 
\begin{align*}
\disc_{\mu}(F)=\max_{\chi}\left|\Exp_{(x_{1},\ldots,x_{k})\sim\mu}(-1)^{F(x_{1},\ldots,x_{k})}\chi(x_{1},\dots,x_{k})\right|,
\end{align*}
where the maximum is taken over all cylinder intersections $\chi.$
This notion of discrepancy was defined by Babai, Nisan, and Szegedy~\cite{bns92}
and is unrelated to the one that we encountered in Section~\ref{subsec:Discrepancy-defined}.
It is of interest to us because of the following theorem~\cite[Theorem~4.1]{GS09npconp},
which gives a lower bound on nondeterministic and Merlin\textendash Arthur
communication complexity in terms of discrepancy.
\begin{thm}[Gavinsky and Sherstov]
\label{thm:GS} Let $F:X\to\zoo$ be a given $k$-party communication
problem, where $X=X_{1}\times\cdots\times X_{k}.$ Fix a function
$H:X\to\zoo$ and a probability distribution $\Pi$ on $X.$ Put 
\begin{align*}
\alpha & =\Pi(F^{-1}(1)\cap H^{-1}(1)),\\
\beta & =\Pi(F^{-1}(1)\cap H^{-1}(0)),\\
Q & =\log\frac{\alpha}{\beta+\disc_{\Pi}(H)}.
\end{align*}
Then 
\begin{align}
 & N(F)\geq Q,\\
 & \ma_{1/3}(F)^{2}\geq\min\left\{ \Omega(Q),\;\Omega\left(\frac{Q}{\log\{2/\alpha\}}\right)^{2}\right\} \label{eq:GS-ma}\\
 & \phantom{\ma_{1/3}(F)^{2}}\geq\Omega(Q)-\left(\log\frac{2}{\alpha}\right)^{2}.\label{eq:GS-ma-simplified}
\end{align}
 
\end{thm}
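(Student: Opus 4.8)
The plan is to prove both halves of the theorem through the standard correspondence, in the number-on-the-forehead model, between communication protocols and covers of $F^{-1}(1)$ by cylinder intersections, together with the single structural fact that, for any cylinder intersection $\chi\colon X\to\{0,1\}$,
$\Pi(\chi\cap H^{-1}(1))-\Pi(\chi\cap H^{-1}(0))=-\Exp_{x\sim\Pi}[(-1)^{H(x)}\chi(x)]\le\disc_{\Pi}(H)$, which is immediate from the definition of $\disc_{\Pi}(\cdot)$. For the nondeterministic bound, recall that (by the standard cylinder-intersection-cover characterization of nondeterministic NOF complexity) a cost-$N(F)$ protocol yields cylinder intersections $\chi_{1},\dots,\chi_{t}$ with $t\le 2^{N(F)}$, each contained in $F^{-1}(1)$, whose union is $F^{-1}(1)$. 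Since $\chi_{i}\cap H^{-1}(0)\subseteq F^{-1}(1)\cap H^{-1}(0)$, the structural fact gives $\Pi(\chi_{i}\cap H^{-1}(1))\le\beta+\disc_{\Pi}(H)$ for each $i$, and a union bound over the cover yields $\alpha\le t(\beta+\disc_{\Pi}(H))\le 2^{N(F)}(\beta+\disc_{\Pi}(H))$, which rearranges to $N(F)\ge Q$.

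For the Merlin--Arthur bounds, set $c=\ma_{1/3}(F)$ and decompose an optimal protocol into a Merlin guess of length $a$ and an Arthur randomized protocol of communication at most $b$, with $a+b\le c$; crucially the split is not ours to choose. First amplify Arthur's protocol by majority vote of $O(a+\log(2/\alpha))$ runs so that its error drops to $\delta'\le\min\{1/4,\alpha 2^{-a-2}\}$; the amplified Arthur communication is $b'=O(b(a+\log(2/\alpha)))$. For each guess $m$, conditioning on Arthur's randomness makes the protocol deterministic, so its acceptance set is a disjoint union of at most $2^{b'}$ cylinder intersections, and averaging over the randomness gives $|\Exp_{x\sim\Pi}[(-1)^{H(x)}p_{m}(x)]|\le 2^{b'}\disc_{\Pi}(H)$ for its acceptance probability $p_{m}$. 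Writing $P_{m}^{1}=\sum_{x\in H^{-1}(1)}\Pi(x)p_{m}(x)$ and $P_{m}^{0}=\sum_{x\in H^{-1}(0)}\Pi(x)p_{m}(x)$, this reads $P_{m}^{1}\le P_{m}^{0}+2^{b'}\disc_{\Pi}(H)$; the Merlin--Arthur promise gives $P_{m}^{0}\le\beta+\delta'$ (mass $\le\beta$ on $F^{-1}(1)\cap H^{-1}(0)$, leakage $\le\delta'$ onto $F^{-1}(0)$), while $\sum_{m}P_{m}^{1}\ge(1-\delta')\alpha$ because each $x\in F^{-1}(1)\cap H^{-1}(1)$ is accepted with probability $\ge1-\delta'$ under some guess. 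Summing over the $2^{a}$ guesses, $(1-\delta')\alpha\le 2^{a}(\beta+\delta'+2^{b'}\disc_{\Pi}(H))$, and using $\delta'\le1/4$ together with $2^{a}\delta'\le\alpha/4$ this collapses to $\alpha/2\le 2^{a+b'}(\beta+\disc_{\Pi}(H))$, that is, $a+b'\ge Q-1$.

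Finally, substituting $b'=O(b(a+\log(2/\alpha)))$ and using $a\le c$, $ab\le(a+b)^{2}/4\le c^{2}/4$, and $b\log(2/\alpha)\le c\log(2/\alpha)$ turns $a+b'\ge Q-1$ into $Q=O(c^{2}+c\log(2/\alpha))$. If $c\ge\log(2/\alpha)$ this gives $c^{2}=\Omega(Q)$; if $c<\log(2/\alpha)$ it gives $c=\Omega(Q/\log(2/\alpha))$, hence $c^{2}=\Omega(Q/\log(2/\alpha))^{2}$; either way $\ma_{1/3}(F)^{2}=c^{2}\ge\min\{\Omega(Q),\,\Omega(Q/\log(2/\alpha))^{2}\}$, which is (\ref{eq:GS-ma}), and $\min\{\Omega(Q),\Omega(Q/\log(2/\alpha))^{2}\}\ge\Omega(Q)-(\log(2/\alpha))^{2}$ by AM--GM, which is (\ref{eq:GS-ma-simplified}). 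I expect the Merlin--Arthur part to be the main obstacle: one must simultaneously track three competing error contributions --- the $\beta$-mass of $F^{-1}(1)$ where $H=0$, the $\delta'$-leakage of the amplified protocol onto $F^{-1}(0)$ (which forces amplification depth $\Omega(\log(1/\alpha))$ and is exactly the origin of the $\log(2/\alpha)$ factor), and the $2^{b'}\disc_{\Pi}(H)$ term coming from the many cylinder intersections --- and the final estimate must hold uniformly over every admissible split $a+b\le c$, so the last step is an optimization via AM--GM rather than a direct computation, whereas the nondeterministic bound is a one-line union bound.
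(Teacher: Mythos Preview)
The paper does not actually prove this theorem: it quotes the nondeterministic and Merlin--Arthur bounds verbatim from Gavinsky and Sherstov~\cite[Theorem~4.1]{GS09npconp}, remarks that the $\{0,1\}$ versus $\{-1,+1\}$ range convention is immaterial, and then derives the simplified bound~(\ref{eq:GS-ma-simplified}) from~(\ref{eq:GS-ma}) in one line via the elementary inequality $(q/a)^{2}\geq 2q-a^{2}$ (equivalently $(q/a-a)^{2}\geq 0$), which is exactly your ``AM--GM'' step.

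Your proposal, by contrast, is a complete self-contained proof, and it is correct. The nondeterministic part is the standard cylinder-intersection-cover argument. Your Merlin--Arthur argument---amplifying Arthur to error $\delta'\leq\alpha 2^{-a-2}$, bounding $P_{m}^{1}\leq P_{m}^{0}+2^{b'}\disc_{\Pi}(H)$ via the cylinder-intersection decomposition of each deterministic branch, summing over the $2^{a}$ guesses, and then optimizing over the split $a+b\leq c$---is essentially the argument of~\cite{GS09npconp} reconstructed from scratch, and each step checks out (in particular, $\sum_{m}P_{m}^{1}\geq(1-\delta')\alpha$ follows because $\sum_{m}p_{m}(x)\geq\max_{m}p_{m}(x)\geq 1-\delta'$ for $x\in F^{-1}(1)$, and the absorption of the linear term $c$ into $c\log(2/\alpha)$ uses $\log(2/\alpha)\geq 1$). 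So you have supplied strictly more than the paper does; the paper's ``proof'' is only the citation plus your final AM--GM step.
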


\noindent We note that the original statement in~\cite{GS09npconp}
is for functions with range $\pomo.$ The above version for $\zoo$
follows immediately because the output values of a communication protocol
serve as textual labels that can be changed at will. Equation~(\ref{eq:GS-ma-simplified}),
which is also not part of the statement in~\cite{GS09npconp}, follows
from~(\ref{eq:GS-ma}) in view of the inequality $(q/a)^{2}\geq2q-a^{2}$
for all reals $q,a$ with $a\ne0.$ (Start with $(q/a-a)^{2}\geq0$
and multiply out the left-hand side.)

We will need yet another notion of discrepancy, introduced in~\cite{sherstov13directional}
and called ``repeated discrepancy.'' Let $G$ be a $k$-party communication
problem on $X=X_{1}\times X_{2}\times\cdots\times X_{k}$. A probability
distribution $\pi$ on the domain of $G$ is called \emph{balanced}
if $\pi(G^{-1}(0))=\pi(G^{-1}(1))=1/2.$ For such $\pi$, the \emph{repeated
discrepancy of $G$ with respect to $\pi$ }is given by
\[
\iterdisc_{\pi}(G)\,=\,\sup_{d,r\in\ZZ^{+}}\;\,\max_{\chi}\left|\Exp_{\ldots,x_{i,j},\ldots}\left[\chi(\dots,x_{i,j},\ldots)\prod_{i=1}^{d}(-1)^{G(x_{i,1})}\right]\right|^{1/d},
\]
where the maximum is over $k$-dimensional cylinder intersections
$\chi$ on $X^{dr}=X_{1}^{dr}\times X_{2}^{dr}\times\cdots\times X_{k}^{dr},$
and the arguments $x_{i,j}$ $(i=1,2,\dots,d;\;j=1,2,\dots,r)$ are
chosen independently according to $\pi$ conditioned on $G(x_{i,1})=G(x_{i,2})=\cdots=G(x_{i,r})$
for each $i$. The repeated discrepancy of a communication problem
is much harder to bound from above than standard discrepancy. The
following result from~\cite[Theorem~4.27]{sherstov13directional}
bounds the repeated discrepancy of set disjointness. 
\begin{thm}[Sherstov]
\label{thm:iter-discrepancy-DISJ} Let $m$ and $k$ be positive
integers. Then there is a balanced probability distribution $\pi$
on the domain of $\DISJ_{m,k}$ such that
\[
\iterdisc_{\pi}(\DISJ_{m,k})\leq\left(\frac{ck2^{k}}{\sqrt{m}}\right)^{1/2},
\]
where $c>0$ is an absolute constant independent of $m,k,\pi.$
\end{thm}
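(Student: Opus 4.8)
The plan is to prove the bound via the number-on-the-forehead discrepancy method for set disjointness, in the form needed to tolerate the per-block conditioning that appears in the definition of $\iterdisc$. I would first take $\pi$ to be a balanced \emph{planted-needle} distribution on the domain $(\{0,1\}^{m})^{k}$ of $\DISJ_{m,k}$: a uniformly random coordinate $J\in[m]$ is designated as the needle, each column $j\ne J$ is drawn so as to be guaranteed not all-ones (hence contributing nothing to $\neg\DISJ_{m,k}$), and the needle column $J$ is drawn all-ones with probability exactly $\tfrac12$. Because only the needle can be all-ones, $\DISJ_{m,k}(x)$ is simply the indicator that column $J$ is not all-ones, so $\pi$ is balanced and a $\pi$-sample carries only one bit of non-trivial information, hidden at a position uniform among $m$ coordinates. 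I would choose the column distributions so that the conditional laws $\pi|_{\DISJ_{m,k}=0}$ and $\pi|_{\DISJ_{m,k}=1}$ are each product distributions over the $k$ parties (each a planted-needle law over the $m$ columns); these are precisely the structural properties that drive the disjointness discrepancy analysis of~\cite{sherstov12mdisj,sherstov13directional}, which I would adapt here.

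With $\pi$ fixed, let $d,r\in\ZZ^{+}$, let $\chi$ be a $k$-dimensional cylinder intersection on $X^{dr}$ with $X=(\{0,1\}^{m})^{k}$, and consider $E=\Exp[\chi\cdot\prod_{i=1}^{d}(-1)^{\DISJ_{m,k}(x_{i,1})}]$, where the arguments $x_{i,j}$ are drawn i.i.d.\ from $\pi$ and conditioned, within each block $i$, on $\DISJ_{m,k}(x_{i,1})=\cdots=\DISJ_{m,k}(x_{i,r})$. Since $\pi$ is balanced, conditioning on block coherence leaves every $x_{i,j}$ marginally distributed as $\pi$, and the crucial point is that the correlated ``extra'' samples $x_{i,j}$ with $j\ge 2$ occur in $E$ only inside $\chi$, never inside the factor $(-1)^{\DISJ_{m,k}(x_{i,1})}$. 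I would therefore integrate out these extra samples against their (per-block, product-over-parties) conditional distribution, which replaces $\chi$ by a convex combination of cylinder intersections $\widehat{\chi}$ in the variables $x_{1,1},\dots,x_{d,1}$, whose joint law is $\pi^{\otimes d}$. This reduces $|E|$ to a ``$d$-wise'' discrepancy of $\DISJ_{m,k}$ under $\pi$ against a cylinder intersection, uniformly in $r$, so that the only feature distinguishing repeated from ordinary discrepancy --- the coherence conditioning --- has been absorbed into the (still cylinder-intersection) test object.

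It remains to bound this $d$-wise quantity by the Babai--Nisan--Szegedy cube method. Expanding $(-1)^{\DISJ_{m,k}}=2\prod_{j=1}^{m}(1-\AND_{k}(\text{column }j))-1$ and exploiting independence across the $m$ columns, the uniformly random needle position contributes a saving of order $1/\sqrt m$, the $k$ rounds of Cauchy--Schwarz inherent in the cube method cost a factor exponential in $k$, and the $d$ independent blocks multiply; carried out as in~\cite{sherstov12mdisj}, this yields $|E|\le(ck2^{k}/\sqrt m)^{d/2}$ for an absolute constant $c$. Taking the $1/d$-th power and the supremum over $d$ and $r$ then gives $\iterdisc_{\pi}(\DISJ_{m,k})\le(ck2^{k}/\sqrt m)^{1/2}$, as required.

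The main obstacle is the integrating-out step together with the non-product nature of $\pi$ itself: the coherence conditioning is genuinely non-product, and it can be eliminated only because the definition of $\iterdisc$ keeps the correlated samples strictly inside the cylinder intersection --- this is exactly the structural reason ordinary discrepancy, which does not amplify under repetition in the number-on-the-forehead model, is insufficient and the \emph{repeated} notion is needed. Making the elimination work cleanly relies on the delicate choice of $\pi$ from~\cite{sherstov12mdisj} (so that its conditional restrictions are product over parties), and the subsequent cube-measure bookkeeping over the $m$ columns for general $k$, while technical, follows the template established there.
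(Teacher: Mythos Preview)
The paper does not prove this theorem; it is quoted from \cite[Theorem~4.27]{sherstov13directional} and used as a black box, so there is no in-paper proof to compare against.

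Your proposed reduction has a genuine gap at the integrating-out step. You claim that averaging the extra samples $x_{i,j}$ ($j\ge 2$) against their conditional law turns $\chi$ into a convex combination of cylinder intersections in $x_{1,1},\dots,x_{d,1}$ alone, with those variables distributed as $\pi^{\otimes d}$. But the conditional law of $x_{i,2},\dots,x_{i,r}$ given the first sample is $\pi_{b_i}$ with $b_i=\DISJ_{m,k}(x_{i,1})$, and $b_i$ depends on \emph{all $k$} parties' shares of $x_{i,1}$. Hence each averaged factor $\tilde\phi_\ell$ acquires a dependence on $b_1,\dots,b_d$ and therefore on party~$\ell$'s own input, destroying the cylinder structure. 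If instead you treat $b\in\{0,1\}^d$ as an external coin and average $\chi$ for each fixed $b$, then $x_{i,1}$ is drawn from $\prod_i\pi_{b_i}$ (not $\pi^{\otimes d}$), the sign $\prod_i(-1)^{\DISJ_{m,k}(x_{i,1})}=\prod_i(-1)^{b_i}$ becomes a constant, and the bound collapses to the trivial $|E|\le 1$. Either way, the reduction to the $r=1$ case does not go through.

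This is not a technicality: the whole point of \emph{repeated} discrepancy is that the $r-1$ extra coherent samples in each block feed genuine information about the hidden bit $b_i$ into the cylinder intersection (every party sees the other parties' shares of those samples). The proof in \cite{sherstov13directional} has to control this directly, carrying the conditioning through the cube analysis rather than first collapsing to $r=1$. Your final paragraph partly anticipates trouble here but underestimates it: no choice of $\pi$ with product-over-parties conditionals can by itself circumvent the obstruction above, because the selector $b_i$ of which conditional to use is itself a function of all $k$ parties.
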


It was shown in~\cite{sherstov13directional} that repeated discrepancy
gives a highly efficient way to transform multiparty communication
protocols into polynomials. For a nonnegative integer $d$ and a function
$f$ on a finite subset of Euclidean space, define 
\[
E(f,d)=\min_{p}\{\|f-p\|_{\infty}:\deg p\leq d\},
\]
where the minimum is taken over polynomials of degree at most $d.$
In other words, $E(f,d)$ stands for the minimum error in an $\ell_{\infty}$-norm
approximation of $f$ by a polynomial of degree at most $d.$ The
following result was proved in~\cite[Theorem~4.2]{sherstov13directional}.
\begin{thm}[Sherstov]
\label{thm:iter-discrepancy-2-approximation} Let $G\colon X\to\zoo$
be a $k$-party communication problem, where $X=X_{1}\times X_{2}\times\cdots\times X_{k}$.
For an integer $n\geq1$ and a balanced probability distribution $\pi$
on the domain of $G$, consider the linear operator $L_{\pi,n}\colon\Re^{X^{n}}\to\Re^{\zoon}$
given by 
\begin{align}
(L_{\pi,n}\chi)(z) & =\Exp_{x_{1}\sim\pi_{z_{1}}}\cdots\Exp_{x_{n}\sim\pi_{z_{n}}}\chi(x_{1},\dots,x_{n}), &  & z\in\zoon,\label{eq:L-pi-n-defined}
\end{align}
where $\pi_{0}$ and $\pi_{1}$ are the probability distributions
induced by $\pi$ on $G^{-1}(0)$ and $G^{-1}(1),$ respectively.
Then for some absolute constant $c>0$ and every $k$-dimensional
cylinder intersection $\chi$ on $X^{n}=X_{1}^{n}\times X_{2}^{n}\times\cdots\times X_{k}^{n},$
\begin{align*}
E(L_{\pi,n}\chi,d-1) & \leq(c\iterdisc_{\pi}(G))^{d}, &  & d=1,2,\dots,n.
\end{align*}
\end{thm}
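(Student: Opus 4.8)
The plan is to argue by linear-programming duality for best $\ell_\infty$-approximation and to convert the resulting correlation bound into a statement about repeated discrepancy, in the lineage of the pattern-matrix method. First I would invoke the dual characterization of best polynomial approximation: for any $\psi\colon\zoon\to\Re$ with $\|\psi\|_1\le 1$ and $\orth\psi\ge d$ one has $E(L_{\pi,n}\chi,d-1)\ge\langle L_{\pi,n}\chi,\psi\rangle$, with equality for the optimal $\psi$, so it suffices to bound $\langle L_{\pi,n}\chi,\psi\rangle$ over all such $\psi$. Next I would unfold the operator. Using the explicit form of $L_{\pi,n}$ in~(\ref{eq:L-pi-n-defined}) together with the fact that $\pi$ is balanced (so that the mixture $\frac12\pi_0+\frac12\pi_1$ is exactly $\pi$, and hence drawing a uniform bit $z_i$ and then $x_i\sim\pi_{z_i}$ is the same as drawing $x_i\sim\pi$), I would move the pairing inside the expectation and substitute $z_i=G(x_i)$. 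This rewrites $\langle L_{\pi,n}\chi,\psi\rangle=\mathbb{E}_{x\sim\pi^{\otimes n}}\!\left[\chi(x)\cdot 2^n\psi(G(x_1),\dots,G(x_n))\right]$, so that each Boolean coordinate of $\psi$ is replaced by a factor of the form $(-1)^{G(x_i)}$ on the ``active'' coordinates and by the raw marginal of $\pi$ on the ``inactive'' ones.

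The heart of the argument — and the step I expect to be the main obstacle — is to exploit $\orth\psi\ge d$ to turn this single expectation into a $d$-fold product of repeated-discrepancy-type quantities. A term-by-term Fourier expansion of $\psi$ is hopeless, since $\psi$ can have $\ell_1$ Fourier mass as large as $2^{\Omega(n)}$ and one would incur a counting loss over the $\sim 2^n$ monomials. Instead I would use the orthogonality to degree below $d$ structurally: it forces $\psi$ to behave like a $d$-th order finite difference along some $d$ ``blocks'' of coordinates, while the remaining coordinates are genuinely free and under $x_i\sim\pi$ contribute only the marginal. To absorb the normalization and counting losses cleanly, I would pass to the tensor power in which each block is duplicated into $r$ copies conditioned to share a common $G$-value — precisely the $r$ appearing in the definition of $\iterdisc_\pi(G)$ — and then apply Hölder's inequality to factor the amplified expectation into $d$ identical factors, each a correlation of a cylinder intersection on $X^{dr}$ against $\prod_i(-1)^{G(\cdot)}$. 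Here I would repeatedly use the two stability properties of cylinder intersections, namely that fixing some of the $n$ coordinates of a cylinder intersection yields a cylinder intersection on fewer copies, and that a tensor product of cylinder intersections is again a cylinder intersection. Bounding each of the $d$ factors by $\iterdisc_\pi(G)^d$ via the definition of repeated discrepancy, and taking a $d$-th root over the amplified object, then yields $\langle L_{\pi,n}\chi,\psi\rangle\le(c\,\iterdisc_\pi(G))^d$ with $c$ the absolute constant soaked up in the Hölder and normalization steps.

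Finally I would assemble: since the bound holds for every dual witness $\psi$, we get $E(L_{\pi,n}\chi,d-1)\le(c\,\iterdisc_\pi(G))^d$ for each $d=1,2,\dots,n$. The base case $d=1$ reduces to the familiar fact that the best constant approximation of $L_{\pi,1}\chi$ is controlled by a standard discrepancy of $G$, which serves as a sanity check on the constants. The only genuinely delicate point is the block-extraction-and-amplification of the second paragraph — choosing the $d$ blocks, introducing the $r$-fold redundancy so that the $\ell_1$-normalization and the $\sim 2^n$-fold counting losses are fully defeated, and verifying that every intermediate object remains a cylinder intersection; the remainder is routine duality and bookkeeping.
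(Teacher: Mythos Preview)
This theorem is not proved in the present paper; it is quoted from \cite[Theorem~4.2]{sherstov13directional} and used as a black box in the proof of Theorem~\ref{thm:pattern-matrix-N-MA}. There is thus no in-paper proof to compare against.

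On your proposal: the duality step and the rewriting
\[
\langle L_{\pi,n}\chi,\psi\rangle \;=\; 2^{n}\,\Exp_{x\sim\pi^{\otimes n}}\!\bigl[\chi(x)\,\psi(G(x_1),\dots,G(x_n))\bigr]
\]
are correct. The gap is exactly where you place it, and it is real. The assertion that $\orth\psi\ge d$ ``forces $\psi$ to behave like a $d$-th order finite difference along some $d$ blocks'' is not a theorem as stated --- a generic $\psi$ orthogonal to degree-$(d-1)$ polynomials does not factor through any fixed $d$ coordinates --- and the tensor-power-plus-H\"older outline does not repair this: you have not said what object is being tensored, how the $r$-fold duplication interacts with the single fixed cylinder intersection $\chi$ you started from, or why H\"older would produce $d$ factors each controlled by $\iterdisc_\pi(G)$ rather than by something larger. (H\"older/Cauchy--Schwarz is the natural tool for bounding $\iterdisc_\pi(G)$ \emph{itself} in terms of the structure of $G$, not for the present reduction.) What the argument actually requires at this point is a combinatorial decomposition of the dual witness: every $\psi$ with $\|\psi\|_1=1$ and $\orth\psi\ge d$ must be written as a signed, $\ell_1$-bounded (by $c^d$) mixture of elementary witnesses each of which, after pushing through $L_{\pi,n}$, is termwise a correlation of a cylinder intersection against $\prod_{i=1}^{d}(-1)^{G(x_{i,1})}$ under the conditioned distribution --- i.e., exactly the quantity inside the definition of $\iterdisc_\pi(G)$, with the parameter $r$ there absorbing the inactive coordinates. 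Your proposal correctly identifies that such a decomposition is the crux, but supplies neither its statement nor its proof, and the H\"older heuristic is not a substitute.
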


We are now in a position to derive the promised lower bound on nondeterministic
and Merlin\textendash Arthur communication complexity in terms of
one-sided approximate degree. Our proof combines Theorems~\ref{thm:GS}\textendash \ref{thm:iter-discrepancy-2-approximation}
in a way closely analogous to the proof of~\cite[Theorem~6.9]{sherstov13directional}.
\begin{thm}
\label{thm:pattern-matrix-N-MA}Let $f\colon\zoon\to\zoo$ be given.
Let $m$ and $k$ be positive integers, and put $F=f\circ\DISJ_{m,k}$.
Then for all $\epsilon\in(0,1/2],$
\begin{align}
N(F) & \geq\frac{\onedeg_{\epsilon}(f)}{2}\log\left(\frac{\sqrt{m}}{Ck2^{k}}\right)-\log\frac{1}{\epsilon},\label{eq:N-bound}\\
\ma_{1/3}(F)^{2} & \geq\frac{\onedeg_{\epsilon}(f)}{C}\log\left(\frac{\sqrt{m}}{Ck2^{k}}\right)-\left(\log\frac{2}{\epsilon}\right)^{2},\label{eq:MA-bound}
\end{align}
where $C\geq1$ is an absolute constant, independent of $f,n,m,k,\epsilon.$
\end{thm}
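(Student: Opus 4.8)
The plan is to feed a one-sided dual witness for $f$ into the Gavinsky--Sherstov criterion (Theorem~\ref{thm:GS}), choosing the certificate function $H$ and the hard distribution $\Pi$ so that the sign pattern of the witness is routed through $H$. First, by the dual characterization of one-sided approximate degree (Fact~\ref{fact:onedeg-dual}), fix $\psi\colon\zoon\to\Re$ with $\|\psi\|_1=1$, $\langle f,\psi\rangle>\epsilon$, $\orth\psi\ge d$, and $\psi\ge0$ on $f^{-1}(1)$, where $d=\onedeg_\epsilon(f)$; we may assume $1\le d\le n$, since the claimed bounds are trivial otherwise. Let $\pi$ be the balanced distribution on the domain of $\DISJ_{m,k}$ from Theorem~\ref{thm:iter-discrepancy-DISJ}, with conditionals $\pi_0,\pi_1$ on $\DISJ_{m,k}^{-1}(0)$ and $\DISJ_{m,k}^{-1}(1)$, and for $z\in\zoon$ put $\pi_z=\pi_{z_1}\times\cdots\times\pi_{z_n}$ on the domain $\mathcal X$ of $F=f\circ\DISJ_{m,k}$. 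Define the probability distribution $\Pi=\sum_z|\psi(z)|\,\pi_z$. Because $\DISJ_{m,k}^{-1}(0)$ and $\DISJ_{m,k}^{-1}(1)$ partition the domain of $\DISJ_{m,k}$, the supports of the $\pi_z$ are pairwise disjoint, so every $x\in\supp\Pi$ determines a unique $z=z(x)$ with $z_j=\DISJ_{m,k}(x^{(j)})$, and on $\supp\pi_z$ one has $F\equiv f(z)$. Take $H(x)=\I[\psi(z(x))>0]$, defined arbitrarily off $\supp\Pi$.

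With these choices the parameters of Theorem~\ref{thm:GS} read off cleanly: on $\supp\pi_z$ we have $F\equiv f(z)$, and $(-1)^H$ equals the sign of $\psi(z)$ wherever $\psi(z)\ne0$, so $\alpha=\Pi(F^{-1}(1)\cap H^{-1}(1))=\sum_{z:\,f(z)=1,\ \psi(z)>0}|\psi(z)|=\langle f,\psi\rangle>\epsilon$ by the nonnegativity of $\psi$ on $f^{-1}(1)$, and the same nonnegativity forces $\beta=\Pi(F^{-1}(1)\cap H^{-1}(0))=0$. For the discrepancy, for any $k$-dimensional cylinder intersection $\chi$ on $\mathcal X$ the product structure of $\pi_z$ together with the definition~(\ref{eq:L-pi-n-defined}) of the operator $L_{\pi,n}$ gives $\Exp_{x\sim\Pi}(-1)^{H(x)}\chi(x)=\sum_z\psi(z)\,(L_{\pi,n}\chi)(z)=\langle\psi,L_{\pi,n}\chi\rangle$. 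Since $\orth\psi\ge d$ and $\|\psi\|_1=1$, replacing $L_{\pi,n}\chi$ by its best degree-$(d-1)$ approximation yields $|\langle\psi,L_{\pi,n}\chi\rangle|\le E(L_{\pi,n}\chi,d-1)$, and Theorems~\ref{thm:iter-discrepancy-2-approximation} and~\ref{thm:iter-discrepancy-DISJ} (applicable since $1\le d\le n$) bound this by $(c\,\iterdisc_\pi(\DISJ_{m,k}))^d\le(C'k2^k/\sqrt m)^{d/2}$ for an absolute constant $C'$; hence $\disc_\Pi(H)\le(C'k2^k/\sqrt m)^{d/2}$.

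Substituting into $Q=\log\frac{\alpha}{\beta+\disc_\Pi(H)}$ gives $Q\ge\frac d2\log\frac{\sqrt m}{C'k2^k}-\log\frac1\epsilon$, and Theorem~\ref{thm:GS} then delivers $N(F)\ge Q$, which is~(\ref{eq:N-bound}), as well as $\ma_{1/3}(F)^2\ge\Omega(Q)-(\log(2/\alpha))^2\ge\Omega(Q)-(\log(2/\epsilon))^2$; absorbing the $\Omega$-constant and the lower-order $\epsilon$-terms into a single constant $C$ yields~(\ref{eq:MA-bound}). The argument mirrors the proof of~\cite[Theorem~6.9]{sherstov13directional}. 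The genuinely delicate design choice --- and the one place where the hypothesis that $\psi$ is a \emph{one-sided} witness is used --- is the definition of $H$: routing the sign of $\psi$ through $H$ is exactly what makes $\beta=0$ and collapses $\disc_\Pi(H)$ to $|\langle\psi,L_{\pi,n}\chi\rangle|$, so that $\orth\psi\ge d$ can be brought to bear. The technically heaviest ingredient is then the repeated-discrepancy estimate for $\DISJ_{m,k}$, which we invoke as a black box via Theorems~\ref{thm:iter-discrepancy-DISJ} and~\ref{thm:iter-discrepancy-2-approximation}; without it, the ordinary discrepancy of $\DISJ_{m,k}$ under any fixed distribution is far too large for the scheme to work.
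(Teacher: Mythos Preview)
Your proposal is correct and takes essentially the same approach as the paper. The only cosmetic difference is packaging: the paper first builds the signed measure $\Psi(x)=2^{n}\psi(\DISJ_{m,k}(x_1),\ldots,\DISJ_{m,k}(x_n))\prod_i\pi(x_i)$ and then sets $\Pi=|\Psi|$ and $H=\I[\Psi\ge0]$, whereas you construct $\Pi=\sum_z|\psi(z)|\pi_z$ and $H(x)=\I[\psi(z(x))>0]$ directly; these are the same objects on $\supp\Pi$, and your sign slip in $\Exp_{x\sim\Pi}(-1)^{H(x)}\chi(x)=\langle\psi,L_{\pi,n}\chi\rangle$ (it is $-\langle\psi,L_{\pi,n}\chi\rangle$) is harmless since discrepancy takes an absolute value.
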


\begin{proof}
Abbreviate $d=\onedeg_{\epsilon}(f)$. Let $X=(\zoom)^{k}$ denote
the domain of $\DISJ_{m,k}$. By Theorem~\ref{thm:iter-discrepancy-DISJ},
there is a probability distribution $\pi$ on $X$ such that
\begin{align}
 & \pi(\DISJ_{m,k}^{-1}(0))=\pi(\DISJ_{m,k}^{-1}(1))=\frac{1}{2},\label{eq:mu-balanced}\\
 & \iterdisc_{\pi}(\DISJ_{m,k})\leq\left(\frac{c'k2^{k}}{\sqrt{m}}\right)^{1/2},\label{eq:mu-rdisc}
\end{align}
where $c'>0$ is an absolute constant independent of $m$ and $k.$
By the dual characterization of one-sided approximate degree (Fact~\ref{fact:onedeg-dual}),
there exists a function $\psi\colon\zoon\to\Re$ such that
\begin{align}
 & \langle f,\psi\rangle>\epsilon,\label{eq:pm-onesided-psi-correl}\\
 & \|\psi\|_{1}=1,\label{eq:pm-onesided-psi-ell1}\\
 & \orth\psi\geq d,\label{eq:pm-onesided-psi-orth}\\
 & \psi\geq0\;\text{ on }f^{-1}(1).\label{eq:pm-onesided-psi-nonneg}
\end{align}

Define $\Psi\colon X^{n}\to\Re$ by
\begin{equation}
\Psi(x)=2^{n}\psi(\DISJ_{m,k}(x_{1}),\ldots,\DISJ_{m,k}(x_{n}))\prod_{i=1}^{n}\pi(x_{i}).\label{eq:Psi-def}
\end{equation}
\begin{claim}
\label{claim:pm-Psi}$\Psi$ satisfies
\begin{align}
 & \langle F,\Psi\rangle>\epsilon,\label{eq:pm-F-Psi-inner}\\
 & \|\Psi\|_{1}=1,\label{eq:pm-Psi-ell1}\\
 & \Psi\geq0\quad\text{ on }F^{-1}(1).\label{eq:pm-Psi-nonneg}
\end{align}
\end{claim}

We will carry on with the theorem proof and settle the claims later.
Equation~(\ref{eq:pm-Psi-ell1}) allows us to write
\begin{equation}
\Psi(x)=\Pi(x)\cdot(-1)^{1-H(x)}\label{eq:Psi-Lambda-H}
\end{equation}
for some Boolean function $H\colon X^{n}\to\zoo$ and a probability
distribution $\Pi$ on $X^{n}.$ Indeed, one can explicitly define
$\Pi(x)=|\Psi(x)|$ and $H(x)=\I[\Psi(x)\geq0].$ 
\begin{claim}
\label{claim:pm-FH}One has
\begin{align}
 & \Pi(F^{-1}(1)\cap H^{-1}(0))=0,\label{eq:Lambda-F1-H0}\\
 & \Pi(F^{-1}(1)\cap H^{-1}(1))>\epsilon.\label{eq:Lambda-F1-H1}
\end{align}
\end{claim}

\begin{claim}
\label{claim:disc-Pi-H}There is an absolute constant $c>0$ such
that
\[
\disc_{\Pi}(H)\leq\left(\frac{ck2^{k}}{\sqrt{m}}\right)^{d/2}.
\]
\end{claim}

The sought communication bounds~(\ref{eq:N-bound}) and~(\ref{eq:MA-bound})
follow from Theorem~\ref{thm:GS} in view of Claims~\ref{claim:pm-FH}
and~\ref{claim:disc-Pi-H}.
\end{proof}
We now settle the claims used in the proof of Theorem~\ref{thm:pattern-matrix-N-MA}.
\begin{proof}[Proof of Claim~\emph{\ref{claim:pm-Psi}.}]
 We have
\begin{align*}
\langle F,\Psi\rangle & =2^{n}\Exp_{x_{1},\ldots,x_{n}\sim\pi}f(\ldots,\DISJ_{m,k}(x_{i}),\ldots)\psi(\ldots,\DISJ_{m,k}(x_{i}),\ldots)\\
 & =2^{n}\Exp_{z\in\zoon}f(z)\psi(z)\\
 & =\langle f,\psi\rangle\\
 & >\epsilon,
\end{align*}
where the second step uses~(\ref{eq:mu-balanced}), and the third
step is legitimate by~(\ref{eq:pm-onesided-psi-correl}). Analogously,
\begin{align*}
\|\Psi\|_{1} & =2^{n}\Exp_{x_{1},\ldots,x_{n}\sim\pi}|\psi(\DISJ_{m,k}(x_{1}),\ldots,\DISJ_{m,k}(x_{n}))|\\
 & =2^{n}\Exp_{z\in\zoon}|\psi(z)|\\
 & =1,
\end{align*}
where the last two steps are valid by~(\ref{eq:mu-balanced}) and~(\ref{eq:pm-onesided-psi-ell1}),
respectively. The final property~(\ref{eq:pm-Psi-nonneg}) can be
seen from the following chain of implications:
\begin{align*}
x\in F^{-1}(1) & \Rightarrow(\DISJ_{m,k}(x_{1}),\ldots,\DISJ_{m,k}(x_{n}))\in f^{-1}(1)\\
 & \Rightarrow\psi(\DISJ_{m,k}(x_{1}),\ldots,\DISJ_{m,k}(x_{n}))\geq0\\
 & \Rightarrow\Psi(x)\geq0,
\end{align*}
where the first and third steps use the definitions of $F$ and $\Psi,$
respectively, and the second step is valid by~(\ref{eq:pm-onesided-psi-nonneg}). 
\end{proof}
\begin{proof}[Proof of Claim~\emph{\ref{claim:pm-FH}.}]
 Fix any point $x\in F^{-1}(1)\cap H^{-1}(0).$ Then~(\ref{eq:pm-Psi-nonneg})
implies that $\Psi(x)\geq0,$ or equivalently $\Pi(x)\cdot(-1)^{1-H(x)}\geq0.$
This forces $\Pi(x)\leq0$ due to $H(x)=0$. Since $\Pi$ is a probability
distribution, we conclude that $\Pi(x)=0$. The proof of (\ref{eq:Lambda-F1-H0})
is complete. The remaining relation~(\ref{eq:Lambda-F1-H1}) can
be seen as follows:
\begin{align*}
\Pi(F^{-1}(1)\cap H^{-1}(1)) & =\Pi(F^{-1}(1)\cap H^{-1}(1))-\Pi(F^{-1}(1)\cap H^{-1}(0))\\
 & =\sum_{X^{n}}\Pi(x)F(x)(-1)^{1-H(x)}\\
 & =\langle\Psi,F\rangle\\
 & >\epsilon,
\end{align*}
where the first step exploits~(\ref{eq:Lambda-F1-H0}), and the last
step applies~(\ref{eq:pm-F-Psi-inner}).
\end{proof}
\begin{proof}[Proof of Claim~\emph{\ref{claim:disc-Pi-H}.}]
 Let $\pi_{0}$ and $\pi_{1}$ be the probability distributions induced
by $\pi$ on $\DISJ_{m,k}^{-1}(0)$ and $\DISJ_{m,k}^{-1}(1),$ respectively,
and let $L_{\pi,n}\colon\Re^{X^{n}}\to\Re^{\zoon}$ be the linear
operator given by~(\ref{eq:L-pi-n-defined}). Then for any cylinder
intersection $\chi\colon X^{n}\to\zoo$, we have
\begin{align}
\left|\Exp_{x\sim\Pi}(-1)^{H(x)}\chi(x)\right| & =\left|\sum_{X^{n}}\Pi(x)(-1)^{H(x)}\chi(x)\right|\nonumber \\
 & =\left|\sum_{X^{n}}\Psi(x)\chi(x)\right|\nonumber \\
 & =\left|2^{n}\Exp_{x_{1},\ldots,x_{n}\sim\pi}\psi(\ldots,\DISJ_{m,k}(x_{i}),\ldots)\chi(x)\right|\nonumber \\
 & =\left|\sum_{z\in\zoon}\psi(z)\Exp_{x_{1}\sim\pi_{z_{1}}}\ldots\Exp_{x_{n}\sim\pi_{z_{n}}}\chi(x)\right|\nonumber \\
 & =|\langle\psi,L_{\pi,n}\chi\rangle|,\label{eq:disc-intermediate}
\end{align}
where the second step uses~(\ref{eq:Psi-Lambda-H}), the third step
invokes the definition~(\ref{eq:Psi-def}), the fourth step is justified
by~(\ref{eq:mu-balanced}), and the last step is valid by the definition
of $L_{\pi,n}$.

For every polynomial $p\colon\zoon\to\Re$ of degree less than $d$,
we have
\begin{align}
|\langle\psi,L_{\pi,n}\chi\rangle| & =|\langle\psi,L_{\pi,n}\chi-p\rangle+\langle\psi,p\rangle|\nonumber \\
 & =|\langle\psi,L_{\pi,n}\chi-p\rangle|\nonumber \\
 & \leq\|\psi\|_{1}\,\|L_{\pi,n}\chi-p\|_{\infty}\nonumber \\
 & =\|L_{\pi,n}\chi-p\|_{\infty},\label{eq:psi-L-chi}
\end{align}
where the second step uses~(\ref{eq:pm-onesided-psi-orth}), the
third step applies H\"{o}lder's inequality, and the fourth step substitutes~(\ref{eq:pm-onesided-psi-ell1}).
Taking the infimum in~(\ref{eq:psi-L-chi}) over all polynomials
$p$ of degree less than $d,$ we arrive at 
\begin{equation}
|\langle\psi,L_{\pi,n}\chi\rangle|\leq E(L_{\pi,n}\chi,d-1).\label{eq:psi-L-chi-E}
\end{equation}
Now
\begin{align*}
\disc_{\Pi}(H) & =\max_{\chi}\left|\Exp_{x\sim\Pi}(-1)^{H(x)}\chi(x)\right|\\
 & \leq\max_{\chi}E(L_{\pi,n}\chi,d-1)\\
 & \leq(c''\iterdisc_{\pi}(\DISJ_{m,k}))^{d}\\
 & \leq\left(c''\left(\frac{c'k2^{k}}{\sqrt{m}}\right)^{1/2}\right)^{d},
\end{align*}
where the first step maximizes over all cylinder intersections $\chi,$
the second step combines~(\ref{eq:disc-intermediate}) and~(\ref{eq:psi-L-chi-E}),
the third step is valid for some absolute constant $c''>0$ by Theorem~\ref{thm:iter-discrepancy-2-approximation},
and the fourth step holds by~(\ref{eq:mu-rdisc}).
\end{proof}
This completes the proof of Theorem~\ref{thm:pattern-matrix-N-MA}.
By combining it with our main result on one-sided approximate degree,
we now obtain our sought lower bounds for nondeterministic and Merlin\textendash Arthur
multiparty communication. 
\begin{thm}
\label{thm:multiparty-cc-N-MA} Let $\delta>0$ be arbitrary. Then
for all integers $n,k\geq2,$ there is an $($explicitly given$)$
 $k$-party communication problem $F_{n,k}\colon(\zoon)^{k}\to\zoo$
with
\begin{align}
 & N(F_{n,k})\leq c\log n,\label{eq:N-dnf}\\
 & N(\neg F_{n,k})\geq\left(\frac{n}{c4^{k}k^{2}}\right)^{1-\delta},\label{eq:N-cnf}\\
 & R_{1/3}(\neg F_{n,k})\geq\left(\frac{n}{c4^{k}k^{2}}\right)^{1-\delta},\label{eq:R-cnf}\\
 & \ma_{1/3}(\neg F_{n,k})\geq\left(\frac{n}{c4^{k}k^{2}}\right)^{\frac{1-\delta}{2}},\label{eq:ma-cnf}
\end{align}
where $c\geq1$ is a constant independent of $n$ and $k.$ Moreover,
each $F_{n,k}$ is computable by a monotone DNF formula of width $ck$
and size $n^{c}$.
\end{thm}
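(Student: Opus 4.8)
The plan is to combine our one-sided approximate degree lower bound for DNF formulas (Theorem~\ref{thm:adeg-dnf-onesided}) with the one-sided pattern matrix method (Theorem~\ref{thm:pattern-matrix-N-MA}) and the randomized pattern matrix method (Theorem~\ref{thm:pattern-matrix-randomized}), closely following the template of the proof of Theorem~\ref{thm:multiparty-cc}. Let $C\geq1$ be the absolute constant from Theorem~\ref{thm:pattern-matrix-N-MA}, fix a small constant $\delta'>0$ to be chosen in terms of $\delta$, and let $\{f_N\}_{N=1}^{\infty}$ be the explicit family of monotone $c'$-DNF formulas supplied by Theorem~\ref{thm:adeg-dnf-onesided} (with $\Delta=1$), so that $\onedeg_{1/3}(\neg f_N)\geq\onedeg_{1/2-1/N}(\neg f_N)\geq\frac{1}{c'}N^{1-\delta'}$ for all large $N$. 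Put $m=\lceil C2^{k+1}k\rceil^{2}$ and $N=\lfloor n/m\rfloor$. In the regime $n<m$ (equivalently, roughly $n=O(4^{k}k^{2})$) we set $F_{n,k}=\AND_{k}$: all three claimed lower bounds then hold trivially because their right-hand sides are at most $1$ once $c$ is large enough, while $N(\AND_{k})\leq2\leq c\log n$ and $\AND_{k}$ is a monotone DNF of width $k$ and size $1$. In the complementary regime $n\geq m$ we define
\[
  F_{n,k}\;=\;f_N\circ\neg\DISJ_{m,k},
\]
a $k$-party problem on $(\zoon)^{k}$ (padded with $n-mN$ dummy coordinates).

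For the ``Moreover'' clause and the nondeterministic upper bound: $\neg\DISJ_{m,k}(x_1,\dots,x_k)=\bigvee_{j=1}^{m}\bigwedge_{i=1}^{k}x_{i,j}$ is a monotone DNF of width $k$ and size $m\leq n$, and $f_N$ is a monotone DNF of width $c'$ and hence size at most $N^{c'}\leq n^{c'}$; distributing the composition shows that $F_{n,k}$ is a monotone DNF of width $c'k$ and size at most $n^{2c'}$, so the width/size bounds hold for any $c\geq 2c'$. The bound $N(F_{n,k})\leq c\log n$ then follows from the standard nondeterministic protocol that guesses one of the at most $n^{2c'}$ terms of $F_{n,k}$ and evaluates it with $2$ further bits, exactly as in the corollary following Theorem~\ref{thm:MAIN-N-MA}.

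For the three lower bounds, note that $\neg F_{n,k}=(\neg f_N)\circ\neg\DISJ_{m,k}=h\circ\DISJ_{m,k}$, where $h(z)=(\neg f_N)(\neg z_1,\dots,\neg z_N)$; since complementing input bits is a linear bijection of the hypercube, $h$ has the same approximate degree and one-sided approximate degree as $\neg f_N$, whence $\onedeg_{1/3}(h)\geq\frac{1}{c'}N^{1-\delta'}$ and, using the basic relation $\deg_\epsilon(f_N)=\deg_\epsilon(\neg f_N)\geq\onedeg_\epsilon(\neg f_N)$, also $\deg_{9/20}(h)=\deg_{9/20}(f_N)\geq\onedeg_{9/20}(\neg f_N)\geq\frac{1}{c'}N^{1-\delta'}$ for large $N$. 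Our choice $\sqrt m=\lceil C2^{k+1}k\rceil\geq 2Ck2^{k}$ makes $\log(\sqrt m/(Ck2^{k}))\geq1$, so applying Theorem~\ref{thm:pattern-matrix-N-MA} to $h\circ\DISJ_{m,k}=\neg F_{n,k}$ with $\epsilon=1/3$ gives $N(\neg F_{n,k})\geq\tfrac12\onedeg_{1/3}(h)-\log3=\Omega(N^{1-\delta'})$ and $\ma_{1/3}(\neg F_{n,k})^{2}\geq\tfrac1C\onedeg_{1/3}(h)-(\log6)^{2}=\Omega(N^{1-\delta'})$, while applying Theorem~\ref{thm:pattern-matrix-randomized} with $\alpha=9/10$, $\beta=1/3$ gives $R_{1/3}(\neg F_{n,k})\geq\tfrac12\deg_{9/20}(h)\log(\sqrt m/(C2^{k}k))-O(1)=\Omega(N^{1-\delta'})$. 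Since $N=\lfloor n/m\rfloor\geq n/(2m)=\Omega(n/(4^{k}k^{2}))$, each of these is $\Omega(n/(c_1 4^{k}k^{2}))^{1-\delta'}$ for a suitable constant $c_1$; choosing $\delta'$ small in terms of $\delta$ and absorbing the constants $c',C,c_1$ into a factor of $N^{\delta'/2}$ of slack — legitimate once $n\geq c4^{k}k^{2}$ for a sufficiently large $c$, the complementary range being handled above — turns all three right-hand sides into $(n/(c4^{k}k^{2}))^{1-\delta}$.

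The genuinely new content is the single observation $\neg(f_N\circ\neg\DISJ_{m,k})=h\circ\DISJ_{m,k}$, which routes a one-sided approximate degree lower bound (necessarily asymmetric between $f^{-1}(0)$ and $f^{-1}(1)$) into the nondeterministic and Merlin--Arthur machinery of Theorem~\ref{thm:pattern-matrix-N-MA}. Everything else is bookkeeping, and the step I expect to be most error-prone is the final parameter juggling: checking that $N$ is large enough for the ``for all large $N$'' hypothesis of Theorem~\ref{thm:adeg-dnf-onesided} and for $1/3,\,9/20\leq1/2-1/N$, that the polynomial slack $N^{\delta'/2}$ dominates the fixed constants, and that the cases $n<m$ and $n\geq m$ (with $m=\Theta(4^{k}k^{2})$, still $n^{o(1)}$ when $k\leq\log n$) are partitioned correctly — all mirroring the computation already carried out in the proof of Theorem~\ref{thm:multiparty-cc}.
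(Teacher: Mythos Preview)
Your proposal is correct and essentially identical to the paper's proof: the paper also defines $F_{n,k}=f_{\lfloor n/m\rfloor}\circ\neg\DISJ_{m,k}$ with $m=\lceil C2^{k+1}k\rceil^{2}$ (and $F_{n,k}=\AND_k$ for $n<m$), rewrites $\neg F_{n,k}=g\circ\DISJ_{m,k}$ with $g(z)=\neg f_N(\neg z)$, and applies Theorems~\ref{thm:pattern-matrix-N-MA} and~\ref{thm:pattern-matrix-randomized}. The only tweak you should adopt is the paper's choice of $C$ as the \emph{maximum} of the absolute constants from Theorems~\ref{thm:pattern-matrix-randomized} and~\ref{thm:pattern-matrix-N-MA}, so that $\log(\sqrt{m}/(C2^{k}k))\geq1$ holds simultaneously for both applications; your current text fixes $C$ from Theorem~\ref{thm:pattern-matrix-N-MA} alone, which does not guarantee positivity of the log factor in the randomized bound.
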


\begin{proof}
Theorem~\ref{thm:adeg-dnf-onesided} gives a constant $c'\geq1$
and an explicit family $\{f_{n}\}_{n=1}^{\infty}$ of functions $f_{n}\colon\{0,1\}^{n}\to\zoo$
such that each $f_{n}$ is computable by a monotone DNF formula of
width $c'$ and satisfies
\begin{align}
\deg_{3/8}^{+}(\neg f_{n}) & \geq\frac{1}{c'}\cdot n^{1-\delta}, &  & n=1,2,3,\ldots.\label{eq:hard-dnf-f-1}
\end{align}
In particular,
\begin{align}
\deg_{3/8}(\neg f_{n}) & \geq\frac{1}{c'}\cdot n^{1-\delta}, &  & n=1,2,3,\ldots.\label{eq:hard-dnf-f-2}
\end{align}
Let $C\geq1$ be the maximum of the absolute constants from Theorems~\ref{thm:pattern-matrix-randomized}
and~\ref{thm:pattern-matrix-N-MA}. For arbitrary integers $n,k\geq2,$
define
\[
F_{n,k}=\begin{cases}
\AND_{k} & \text{if }n<\lceil C2^{k+1}k\rceil^{2},\\
f_{\lfloor n/\lceil C2^{k+1}k\rceil^{2}\rfloor}\circ\neg\DISJ_{\lceil C2^{k+1}k\rceil^{2},k} & \text{otherwise.}
\end{cases}
\]
We first analyze the cost of representing $F_{n,k}$ as a DNF formula.
If $n<\lceil C2^{k+1}k\rceil^{2},$ then by definition $F_{n,k}$
is a monotone DNF formula of width $k$ and size $1.$ In the complementary
case, $f_{\lfloor n/\lceil C2^{k+1}k\rceil^{2}\rfloor}$ is by construction
a monotone DNF formula of width $c'$ and hence of size at most $n^{c'},$
whereas $\neg\DISJ_{\lceil C2^{k+1}k\rceil^{2},k}$ is by definition
a monotone DNF formula of width $k$ and size at most $\lceil C2^{k+1}k\rceil^{2}\leq n.$
As a result, the composed function $F_{n,k}$ is a monotone DNF formula
of width $c'k$ and size at most $n^{c'}\cdot n^{c'}=n^{2c'}.$ In
particular, the claim in the theorem statement regarding the width
and size of $F_{n,k}$ as a monotone DNF formula is valid for any
large enough $c.$ This in turn implies the upper bound in~(\ref{eq:N-dnf}):
consider the nondeterministic protocol in which the parties ``guess''
one of the terms of the DNF formula for $F_{n,k}$ (for a cost of
$\lceil\log n^{2c'}\rceil$ bits), evaluate it (using another $2$
bits of communication), and output the result.

We now turn to the communication lower bounds. Since $F_{n,k}$ is
nonconstant, we have the trivial bounds
\begin{align}
 & N(\neg F_{n,k})\geq1,\label{eq:N-F-nk-trivial-cc}\\
 & R_{1/3}(\neg F_{n,k})\geq1,\\
 & \ma_{1/3}(\neg F_{n,k})\geq1.\label{eq:MA-F-nk-trivial-cc}
\end{align}
We further claim that
\begin{align}
 & N(\neg F_{n,k})\geq\frac{1}{2c'}\cdot\left\lfloor \frac{n}{\lceil C2^{k+1}k\rceil^{2}}\right\rfloor ^{1-\delta}-\log\frac{8}{3},\label{eq:N-general}\\
 & R_{1/3}(\neg F_{n,k})\geq\frac{1}{2c'}\cdot\left\lfloor \frac{n}{\lceil C2^{k+1}k\rceil^{2}}\right\rfloor ^{1-\delta}-\log12,\label{eq:R}\\
 & \ma_{1/3}(\neg F_{n,k})^{2}\geq\frac{1}{Cc'}\cdot\left\lfloor \frac{n}{\lceil C2^{k+1}k\rceil^{2}}\right\rfloor ^{1-\delta}-\left(\log\frac{16}{3}\right)^{2}.\label{eq:MA-general}
\end{align}
For $n<\lceil C2^{k+1}k\rceil^{2},$ these claims are trivial since
communication complexity is nonnegative. In the complementary case
$n\geq\lceil C2^{k+1}k\rceil^{2},$ consider the family $\{g_{n}\}_{n=1}^{\infty}$
of functions $g_{n}\colon\zoon\to\zoo$ given by $g_{n}(x_{1},x_{2},\ldots,x_{n})=\neg f_{n}(\neg x_{1},\neg x_{2},\ldots,\neg x_{n})$.
For each $n,$ it is clear that $g_{n}$ and $\neg f_{n}$ have the
same one-sided approximate degree. Since $\neg F_{n,k}=g_{\lfloor n/\lceil C2^{k+1}k\rceil^{2}\rfloor}\circ\DISJ_{\lceil C2^{k+1}k\rceil^{2},k},$
one now obtains~(\ref{eq:N-general}) and (\ref{eq:MA-general})
directly from~(\ref{eq:hard-dnf-f-1}) and Theorem~\ref{thm:pattern-matrix-N-MA}.
Analogously, $g_{n}$ and $\neg f_{n}$ have the same two-sided approximate
degree for each $n$, and one obtains~(\ref{eq:R}) from (\ref{eq:hard-dnf-f-2})
and Theorem~\ref{thm:pattern-matrix-randomized}. 

For a large enough constant $c\geq1$, the communication lower bounds~(\ref{eq:N-cnf})\textendash (\ref{eq:ma-cnf})
follow from~(\ref{eq:N-general})\textendash (\ref{eq:MA-general})
for $n\geq c4^{k}k^{2}$, and from~(\ref{eq:N-F-nk-trivial-cc})\textendash (\ref{eq:MA-F-nk-trivial-cc})
for $n<c4^{k}k^{2}$.
\end{proof}
Theorem~\ref{thm:multiparty-cc-N-MA} settles Theorem~\ref{thm:MAIN-N-MA}
from the introduction.

\section*{Acknowledgments}

The author is thankful to Justin Thaler and Mark Bun for useful comments
on an earlier version of this paper.

\bibliographystyle{siamplain}
\bibliography{refs}

\begin{thebibliography}{10}

\bibitem{aaronson-shi04distinctness}
{\sc S.~Aaronson and Y.~Shi}, {\em Quantum lower bounds for the collision and
  the element distinctness problems}, J. ACM, 51 (2004), pp.~595--605,
  \href{http://dx.doi.org/10.1145/1008731.1008735}
  {doi:10.1145/1008731.1008735}.

\bibitem{AIKPS90aperiodic-set}
{\sc M.~Ajtai, H.~Iwaniec, J.~Koml{\'o}s, J.~Pintz, and E.~Szemer{\'e}di}, {\em
  Construction of a thin set with small {F}ourier coefficients}, Bulletin of
  the London Mathematical Society, 22 (1990), pp.~583--590,
  \href{http://dx.doi.org/10.1112/blms/22.6.583} {doi:10.1112/blms/22.6.583}.

\bibitem{babai85arthur-merlin}
{\sc L.~Babai}, {\em Trading group theory for randomness}, in
  \textit{Proceedings of the Seventeenth Annual ACM Symposium on Theory of
  Computing} \textup{(STOC)}, 1985, pp.~421--429,
  \href{http://dx.doi.org/10.1145/22145.22192} {doi:10.1145/22145.22192}.

\bibitem{BFS86cc}
{\sc L.~Babai, P.~Frankl, and J.~Simon}, {\em Complexity classes in
  communication complexity theory}, in \textit{Proceedings of the
  Twenty-Seventh Annual IEEE Symposium on Foundations of Computer Science}
  \textup{(FOCS)}, 1986, pp.~337--347,
  \href{http://dx.doi.org/10.1109/SFCS.1986.15} {doi:10.1109/SFCS.1986.15}.

\bibitem{bm88am}
{\sc L.~Babai and S.~Moran}, {\em {A}rthur-{M}erlin games: A randomized proof
  system, and a hierarchy of complexity classes}, J. Comput. Syst. Sci., 36
  (1988), pp.~254--276, \href{http://dx.doi.org/10.1016/0022-0000(88)90028-1}
  {doi:10.1016/0022-0000(88)90028-1}.

\bibitem{bns92}
{\sc L.~Babai, N.~Nisan, and M.~Szegedy}, {\em Multiparty protocols,
  pseudorandom generators for logspace, and time-space trade-offs}, J. Comput.
  Syst. Sci., 45 (1992), pp.~204--232,
  \href{http://dx.doi.org/10.1016/0022-0000(92)90047-M}
  {doi:10.1016/0022-0000(92)90047-M}.

\bibitem{baryossef04info-complexity}
{\sc Z.~Bar-Yossef, T.~S. Jayram, R.~Kumar, and D.~Sivakumar}, {\em An
  information statistics approach to data stream and communication complexity},
  J. Comput. Syst. Sci., 68 (2004), pp.~702--732,
  \href{http://dx.doi.org/10.1016/j.jcss.2003.11.006}
  {doi:10.1016/j.jcss.2003.11.006}.

\bibitem{beals-et-al01quantum-by-polynomials}
{\sc R.~Beals, H.~Buhrman, R.~Cleve, M.~Mosca, and R.~de~Wolf}, {\em Quantum
  lower bounds by polynomials}, J. ACM, 48 (2001), pp.~778--797,
  \href{http://dx.doi.org/10.1145/502090.502097} {doi:10.1145/502090.502097}.

\bibitem{BDPW07p-rp}
{\sc P.~Beame, M.~David, T.~Pitassi, and P.~Woelfel}, {\em Separating
  deterministic from nondeterministic {NOF} multiparty communication
  complexity}, in \textit{Proceedings of the Thirty-Fourth International
  Colloquium on Automata, Languages and Programming} \textup{(ICALP)}, 2007,
  pp.~134--145, \href{http://dx.doi.org/10.1007/978-3-540-73420-8\_14}
  {doi:10.1007/978-3-540-73420-8\_14}.

\bibitem{BDPW10d-bpp}
{\sc P.~Beame, M.~David, T.~Pitassi, and P.~Woelfel}, {\em Separating
  deterministic from randomized multiparty communication complexity}, Theory of
  Computing, 6 (2010), pp.~201--225,
  \href{http://dx.doi.org/10.4086/toc.2010.v006a009}
  {doi:10.4086/toc.2010.v006a009}.

\bibitem{beame-huyn-ngoc09multiparty-focs}
{\sc P.~Beame and T.~Huynh}, {\em Multiparty communication complexity and
  threshold circuit size of {$\mathsf{AC}^0$}}, {SIAM} J. Comput., 41 (2012),
  pp.~484--518, \href{http://dx.doi.org/10.1137/100792779}
  {doi:10.1137/100792779}.

\bibitem{beame06corruption}
{\sc P.~Beame, T.~Pitassi, N.~Segerlind, and A.~Wigderson}, {\em A strong
  direct product theorem for corruption and the multiparty communication
  complexity of disjointness}, Computational Complexity, 15 (2006),
  pp.~391--432, \href{http://dx.doi.org/10.1007/s00037-007-0220-2}
  {doi:10.1007/s00037-007-0220-2}.

\bibitem{buhrman-dewolf01polynomials}
{\sc H.~Buhrman and R.~de~Wolf}, {\em Communication complexity lower bounds by
  polynomials}, in \textit{Proceedings of the Sixteenth Annual IEEE Conference
  on Computational Complexity} \textup{(CCC)}, 2001, pp.~120--130,
  \href{http://dx.doi.org/10.1109/CCC.2001.933879}
  {doi:10.1109/CCC.2001.933879}.

\bibitem{BKT17poly-strikes-back}
{\sc M.~Bun, R.~Kothari, and J.~Thaler}, {\em The polynomial method strikes
  back: Tight quantum query bounds via dual polynomials}, Theory Comput., 16
  (2020), pp.~1--71, \href{http://dx.doi.org/10.4086/toc.2020.v016a010}
  {doi:10.4086/toc.2020.v016a010}.

\bibitem{bun-thaler13and-or-tree}
{\sc M.~Bun and J.~Thaler}, {\em Dual lower bounds for approximate degree and
  {M}arkov--{B}ernstein inequalities}, Inf. Comput., 243 (2015), pp.~2--25,
  \href{http://dx.doi.org/10.1016/j.ic.2014.12.003}
  {doi:10.1016/j.ic.2014.12.003}.

\bibitem{bun-thaler13amplification}
{\sc M.~Bun and J.~Thaler}, {\em Hardness amplification and the approximate
  degree of constant-depth circuits}, in \textit{Proceedings of the
  Forty-Second International Colloquium on Automata, Languages and Programming}
  \textup{(ICALP)}, 2015, pp.~268--280,
  \href{http://dx.doi.org/10.1007/978-3-662-47672-7\_22}
  {doi:10.1007/978-3-662-47672-7\_22}.

\bibitem{bun-thaler17adeg-ac0}
{\sc M.~Bun and J.~Thaler}, {\em A nearly optimal lower bound on the
  approximate degree of {$\AC^0$}}, {SIAM} J. Comput., 49 (2020),
  \href{http://dx.doi.org/10.1137/17M1161737} {doi:10.1137/17M1161737}.

\bibitem{BT18ac0-large-error}
{\sc M.~Bun and J.~Thaler}, {\em The large-error approximate degree of
  {$\classAC^0$}}, Theory of Computing, 17 (2021), pp.~1--46,
  \href{http://dx.doi.org/10.4086/toc.2021.v017a007}
  {doi:10.4086/toc.2021.v017a007}.

\bibitem{cfl83multiparty}
{\sc A.~K. Chandra, M.~L. Furst, and R.~J. Lipton}, {\em Multi-party
  protocols}, in \textit{Proceedings of the Fifteenth Annual ACM Symposium on
  Theory of Computing} \textup{(STOC)}, 1983, pp.~94--99,
  \href{http://dx.doi.org/10.1145/800061.808737} {doi:10.1145/800061.808737}.

\bibitem{chatt-ada08disjointness}
{\sc A.~Chattopadhyay and A.~Ada}, {\em Multiparty communication complexity of
  disjointness}, in Electronic Colloquium on Computational Complexity (ECCC),
  January 2008.
\newblock Report TR08-002.

\bibitem{chernoff52bounds}
{\sc H.~Chernoff}, {\em A measure of asymptotic efficiency for tests of a
  hypothesis based on the sum of observations}, Ann. Math. Statist., 23 (1952),
  pp.~493--507.

\bibitem{david-pitassi-viola08bpp-np}
{\sc M.~David, T.~Pitassi, and E.~Viola}, {\em Improved separations between
  nondeterministic and randomized multiparty communication}, ACM Transactions
  on Computation Theory (TOCT), 1 (2009),
  \href{http://dx.doi.org/10.1145/1595391.1595392}
  {doi:10.1145/1595391.1595392}.

\bibitem{GS09npconp}
{\sc D.~Gavinsky and A.~A. Sherstov}, {\em A separation of {$\mathsf{NP}$} and
  {$\mathsf{coNP}$} in multiparty communication complexity}, Theory of
  Computing, 6 (2010), pp.~227--245,
  \href{http://dx.doi.org/10.4086/toc.2010.v006a010}
  {doi:10.4086/toc.2010.v006a010}.

\bibitem{hoeffding-bound}
{\sc W.~Hoeffding}, {\em Probability inequalities for sums of bounded random
  variables}, Journal of the American Statistical Association, 58 (1963),
  pp.~13--30, \href{http://dx.doi.org/10.1080/01621459.1963.10500830}
  {doi:10.1080/01621459.1963.10500830}.

\bibitem{KS92disj}
{\sc B.~Kalyanasundaram and G.~Schnitger}, {\em The probabilistic communication
  complexity of set intersection}, SIAM J. Discrete Math., 5 (1992),
  pp.~545--557, \href{http://dx.doi.org/10.1137/0405044} {doi:10.1137/0405044}.

\bibitem{lee09formulas}
{\sc T.~Lee}, {\em A note on the sign degree of formulas}, 2009.
\newblock Available at \url{http://arxiv.org/abs/0909.4607}.

\bibitem{LSS2009quantum-multiparty}
{\sc T.~Lee, G.~Schechtman, and A.~Shraibman}, {\em Lower bounds on quantum
  multiparty communication complexity}, in \textit{Proceedings of the
  Twenty-Fourth Annual IEEE Conference on Computational Complexity}
  \textup{(CCC)}, 2009, pp.~254--262,
  \href{http://dx.doi.org/10.1109/CCC.2009.24} {doi:10.1109/CCC.2009.24}.

\bibitem{lee-shraibman08disjointness}
{\sc T.~Lee and A.~Shraibman}, {\em Disjointness is hard in the multiparty
  number-on-the-forehead model}, Computational Complexity, 18 (2009),
  pp.~309--336, \href{http://dx.doi.org/10.1007/s00037-009-0276-2}
  {doi:10.1007/s00037-009-0276-2}.

\bibitem{mande-thaler-zhu20distinctness}
{\sc N.~S. Mande, J.~Thaler, and S.~Zhu}, {\em Improved approximate degree
  bounds for $k$-distinctness}, in \textit{Proceedings of the 15th Conference
  on the Theory of Quantum Computation, Communication and Cryptography}
  \textup{(TQC)}, vol.~158, 2020, pp.~2:1--2:22,
  \href{http://dx.doi.org/10.4230/LIPIcs.TQC.2020.2}
  {doi:10.4230/LIPIcs.TQC.2020.2}.

\bibitem{minsky88perceptrons}
{\sc M.~L. Minsky and S.~A. Papert}, {\em Perceptrons: {A}n Introduction to
  Computational Geometry}, MIT Press, Cambridge, Mass., 1969.

\bibitem{nisan-szegedy94degree}
{\sc N.~Nisan and M.~Szegedy}, {\em On the degree of {B}oolean functions as
  real polynomials}, Computational Complexity, 4 (1994), pp.~301--313,
  \href{http://dx.doi.org/10.1007/BF01263419} {doi:10.1007/BF01263419}.

\bibitem{razborov90disj}
{\sc A.~A. Razborov}, {\em On the distributional complexity of disjointness},
  Theor. Comput. Sci., 106 (1992), pp.~385--390,
  \href{http://dx.doi.org/10.1016/0304-3975(92)90260-M}
  {doi:10.1016/0304-3975(92)90260-M}.

\bibitem{razborov02quantum}
{\sc A.~A. Razborov}, {\em Quantum communication complexity of symmetric
  predicates}, Izvestiya of the Russian Academy of Sciences, Mathematics, 67
  (2002), pp.~145--159.

\bibitem{RS07dc-dnf}
{\sc A.~A. Razborov and A.~A. Sherstov}, {\em The sign-rank of {$\AC^0$}}, SIAM
  J. Comput., 39 (2010), pp.~1833--1855,
  \href{http://dx.doi.org/10.1137/080744037} {doi:10.1137/080744037}.

\bibitem{rosser41primes}
{\sc B.~Rosser}, {\em Explicit bounds for some functions of prime numbers},
  American Journal of Mathematics, 63 (1941), pp.~211--232.

\bibitem{dual-survey}
{\sc A.~A. Sherstov}, {\em Communication lower bounds using dual polynomials},
  Bulletin of the {EATCS}, 95 (2008), pp.~59--93.

\bibitem{sherstov07ac-majmaj}
{\sc A.~A. Sherstov}, {\em Separating {$\AC^0$} from depth-$2$ majority
  circuits}, SIAM J. Comput., 38 (2009), pp.~2113--2129,
  \href{http://dx.doi.org/10.1137/08071421X} {doi:10.1137/08071421X}.

\bibitem{sherstov07quantum}
{\sc A.~A. Sherstov}, {\em The pattern matrix method}, SIAM J. Comput., 40
  (2011), pp.~1969--2000, \href{http://dx.doi.org/10.1137/080733644}
  {doi:10.1137/080733644}.

\bibitem{sherstov11quantum-sdpt}
{\sc A.~A. Sherstov}, {\em Strong direct product theorems for quantum
  communication and query complexity}, SIAM J. Comput., 41 (2012),
  pp.~1122--1165, \href{http://dx.doi.org/10.1137/110842661}
  {doi:10.1137/110842661}.

\bibitem{sherstov13and-or}
{\sc A.~A. Sherstov}, {\em Approximating the {AND-OR} tree}, Theory of
  Computing, 9 (2013), pp.~653--663,
  \href{http://dx.doi.org/10.4086/toc.2013.v009a020}
  {doi:10.4086/toc.2013.v009a020}.

\bibitem{sherstov09hshs}
{\sc A.~A. Sherstov}, {\em The intersection of two halfspaces has high
  threshold degree}, {SIAM} J. Comput., 42 (2013), pp.~2329--2374,
  \href{http://dx.doi.org/10.1137/100785260} {doi:10.1137/100785260}.

\bibitem{sherstov12noisy}
{\sc A.~A. Sherstov}, {\em Making polynomials robust to noise}, Theory of
  Computing, 9 (2013), pp.~593--615,
  \href{http://dx.doi.org/10.4086/toc.2013.v009a018}
  {doi:10.4086/toc.2013.v009a018}.

\bibitem{sherstov13directional}
{\sc A.~A. Sherstov}, {\em Communication lower bounds using directional
  derivatives}, J.~{ACM}, 61 (2014), pp.~1--71,
  \href{http://dx.doi.org/10.1145/2629334} {doi:10.1145/2629334}.

\bibitem{sherstov12mdisj}
{\sc A.~A. Sherstov}, {\em The multiparty communication complexity of set
  disjointness}, {SIAM} J. Comput., 45 (2016), pp.~1450--1489,
  \href{http://dx.doi.org/10.1137/120891587} {doi:10.1137/120891587}.

\bibitem{sherstov14sign-deg-ac0}
{\sc A.~A. Sherstov}, {\em Breaking the {Minsky--Papert} barrier for
  constant-depth circuits}, {SIAM} J. Comput., 47 (2018), pp.~1809--1857,
  \href{http://dx.doi.org/10.1137/15M1015704} {doi:10.1137/15M1015704}.

\bibitem{sherstov15asymmetry}
{\sc A.~A. Sherstov}, {\em The power of asymmetry in constant-depth circuits},
  {SIAM} J. Comput., 47 (2018), pp.~2362--2434,
  \href{http://dx.doi.org/10.1137/16M1064477} {doi:10.1137/16M1064477}.

\bibitem{sherstov17algopoly}
{\sc A.~A. Sherstov}, {\em Algorithmic polynomials}, {SIAM} J. Comput., 49
  (2020), pp.~1173--1231, \href{http://dx.doi.org/10.1137/19M1278831}
  {doi:10.1137/19M1278831}.

\bibitem{sherstov18hardest-hs}
{\sc A.~A. Sherstov}, {\em The hardest halfspace}, Comput. Complex., 30 (2021),
  pp.~1--85, \href{http://dx.doi.org/10.1007/s00037-021-00211-4}
  {doi:10.1007/s00037-021-00211-4}.

\bibitem{sherstov-wu18sign-ac0}
{\sc A.~A. Sherstov and P.~Wu}, {\em Near-optimal lower bounds on the threshold
  degree and sign-rank of {$\AC^0$}}, in \textit{Proceedings of the Fifty-First
  Annual ACM Symposium on Theory of Computing} \textup{(STOC)}, 2019,
  pp.~401--412, \href{http://dx.doi.org/10.1145/3313276.3316408}
  {doi:10.1145/3313276.3316408}.

\bibitem{spalek08dual-or}
{\sc R.~\v{S}palek}, {\em A dual polynomial for {OR}}.
\newblock Available at \url{http://arxiv.org/abs/0803.4516}, 2008.

\bibitem{dewolf-thesis}
{\sc R.~{\noopsort{Wolf}}{de Wolf}}, {\em Quantum Computing and Communication
  Complexity}, PhD thesis, University of Amsterdam, 2001.

\end{thebibliography}

\appendix

\section{\label{app:ajtai-iteration}Constructing low-discrepancy integer
sets}

\addtocontents{toc}{\setcounter{tocdepth}{-10}} 

The purpose of this appendix is to provide a detailed and self-contained
proof of Theorem~\ref{thm:ajtai-iteration}, restated below.
\begin{thm*}
Fix an integer $R\geq1$ and reals $P\geq2$ and $\Delta\geq1$. Let
$m$ be an integer with
\[
m\geq P^{2}(R+1).
\]
Fix a set $S_{p}\subseteq\{1,2,\ldots,p-1\}$ for each prime $p\in(P/2,P]$
with $p\nmid m$. Suppose further that the cardinalities of any two
sets from among the $S_{p}$ differ by a factor of at most $\Delta.$
Consider the multiset
\begin{multline*}
S=\{(r+s\cdot(p^{-1})_{m})\bmod m:\\
\qquad r=1,\ldots,R;\quad p\in(P/2,P]\text{ prime with }p\nmid m;\quad s\in S_{p}\}.
\end{multline*}
Then the elements of $S$ are pairwise distinct and nonzero. Moreover,
if $S\ne\varnothing$ then
\begin{equation}
\disc_{m}(S)\leq\frac{c}{\sqrt{R}}+\frac{c\log m}{\log\log m}\cdot\frac{\log P}{P}\cdot\Delta+\max_{p}\{\disc_{p}(S_{p})\}\label{eq:disc-S}
\end{equation}
for some $($explicitly given$)$ constant $c\geq1$ independent of
$P,R,m,\Delta.$
\end{thm*}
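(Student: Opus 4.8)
The plan is to prove the two assertions separately, the distinctness/nonvanishing being elementary and the discrepancy estimate being the substance. For distinctness, suppose triples $(r,p,s)$ and $(r',p',s')$ yield the same residue, so $r+s(p^{-1})_m\equiv r'+s'(p'^{-1})_m\pmod m$. Multiplying by $pp'$ and simplifying with $p(p^{-1})_m\equiv1\pmod m$ turns this into $pp'(r-r')\equiv ps'-p's\pmod m$; since each side has absolute value $<P^2R<P^2(R+1)\le m$, the congruence is an equality $pp'(r-r')=ps'-p's$. Reducing mod $p$ gives $p\mid p's$, hence $p=p'$ (as $0<s<p$ and $p\nmid p'$ when $p\ne p'$), and then $p(r-r')=s'-s$ with $|s'-s|<p$ forces $r=r'$ and $s=s'$. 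Nonvanishing is the same argument applied to $r+s(p^{-1})_m\equiv0$: multiplying by $p$ yields $pr+s\equiv0\pmod m$, impossible because $0<pr+s\le PR+P<m$.

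For the discrepancy bound, fix a nontrivial $m$-th root of unity $\omega$, of order $d\mid m$, $d\ge2$. Since the exponents are taken mod $m$ and $\omega^m=1$, the exponential sum factors as
\[
\sum_{z\in S}\omega^z=\Bigl(\sum_{r=1}^R\omega^r\Bigr)\Bigl(\sum_{p}\sum_{s\in S_p}\omega^{s(p^{-1})_m}\Bigr)=:A(\omega)\,B(\omega),
\]
and I would record the two easy facts $|A(\omega)|\le\min\{R,d/2\}$ (trivially $\le R$, and $\le d/2$ because $\omega$ is a primitive $d$-th root of unity, so $|1-\omega|\ge 2\sin(\pi/d)\ge4/d$) and, for every prime $p$ occurring in $S$, $\gcd(p,d)=1$ (else $p\mid m$, which is excluded), so $\omega^{(p^{-1})_m}=\omega^{(p^{-1})_d}$ is again a primitive $d$-th root of unity — this is the identity, essentially Fact~\ref{fact:rel-prime}, that moves the analysis of $B$ from modulus $m$ to modulus $d$. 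When $d\le\sqrt R$ we are already done: $|A(\omega)|/R\le d/(2R)\le 1/(2\sqrt R)$ while $|B(\omega)|\le\sum_p|S_p|=|S|/R$, so the corresponding term in $\disc_m(S)$ is at most $1/(2\sqrt R)$, which accounts for the $c/\sqrt R$ summand.

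The hard case is $d>\sqrt R$, where $|A(\omega)|/R$ can be as large as $1$ and one must extract cancellation directly from $B(\omega)=\sum_p\sum_{s\in S_p}\omega^{s(p^{-1})_d}$. Here I would follow the prime-sieving argument of Ajtai et al.~\cite{AIKPS90aperiodic-set} in the form established in~\cite[Theorem~3.6]{sherstov18hardest-hs}: as $p$ ranges over the $\pi(P)-\pi(P/2)\gg P/\log P$ primes in $(P/2,P]$ that do not divide $m$, the inverses $(p^{-1})_d$ equidistribute in $\mathbb{Z}_d$ well enough to force cancellation in $B(\omega)$, except for (i) the residue classes mod $d$ occupied by prime divisors of $m$, which can never be hit, and (ii) a residual self-correlation inside each individual $S_p$. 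Contribution (ii) is by definition at most $\max_p\{\disc_p(S_p)\}\cdot\sum_p|S_p|$, giving the last summand. For (i) one uses $\nu(m)\le(1+o(1))\log m/\log\log m$ (Fact~\ref{fact:num-prime-factors}), that each excluded class carries at most $\max_p|S_p|$ of the mass, and that the factor-$\Delta$ imbalance hypothesis forces $\max_p|S_p|\le\Delta\cdot\tfrac{\sum_p|S_p|}{\pi(P)-\pi(P/2)}\le c'\Delta\tfrac{\log P}{P}\sum_p|S_p|$ by Fact~\ref{fact:PNT}; multiplying by $\nu(m)$ produces the $\tfrac{c\log m}{\log\log m}\cdot\tfrac{\log P}{P}\cdot\Delta$ term. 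Taking the maximum over $\omega$ and collecting the three contributions yields~\eqref{eq:disc-S} with an explicit constant traceable to Facts~\ref{fact:PNT} and~\ref{fact:num-prime-factors}.

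The main obstacle is making the prime-equidistribution step uniform over \emph{every} divisor $d$ of $m$, in particular over $d$ much larger than $P$, where a dyadic window of primes occupies only a sparse set of residue classes mod $d$ and no honest equidistribution is available; this is precisely the delicate point of~\cite{AIKPS90aperiodic-set,sherstov18hardest-hs}, handled there by reorganizing the sum over $p$ via the identity $(p^{-1})_m=m-\tfrac{m(m^{-1})_p-1}{p}$ (a consequence of Fact~\ref{fact:rel-prime}) and a finer decomposition of the resulting $p$-sum. Relative to those references the only new bookkeeping is the factor $\Delta$, which enters linearly through the bound on $\max_p|S_p|$ above and hence costs only a single factor of $\Delta$ in the final estimate; the rest of the argument carries over verbatim.
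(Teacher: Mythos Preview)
Your distinctness and nonvanishing arguments are correct and match the paper's.

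The discrepancy argument has a genuine gap: your dichotomy is on the wrong parameter. You split according to the multiplicative order $d$ of $\omega$, using $|A(\omega)|\le d/2$. But the sharp geometric-series bound is
\[
|A(\omega)|\le\frac{2}{|1-\omega|}=\frac{1}{|\sin(\pi k/m)|}\le\frac{m}{2\min(k,m-k)},
\]
which depends on $k$ itself, not merely on $d=m/\gcd(k,m)$. These can differ enormously: when $m$ is prime, every nontrivial $\omega=e(k/m)$ has order $d=m>\sqrt{R}$ (recall $m\ge P^2(R+1)>R$), so your ``small $d$'' case never fires and you are forced to extract \emph{all} cancellation from $B(\omega)$. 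For $k$ near $m/2$ that is impossible, yet there $|A(\omega)|$ is essentially~$1$ and that is precisely where the saving comes from. The paper's dichotomy (Claims~\ref{claim:k-small} and~\ref{claim:k-large}) is on $\ell:=\min(k,m-k)$: Claim~\ref{claim:k-large} gives $|A(\omega)B(\omega)|/|S|\le m/(2R\ell)$, while Claim~\ref{claim:k-small} gives $2\pi\ell/m+\max_p\disc_p(S_p)+(\text{exceptional primes})$; the crossover at $\ell\asymp m/\sqrt{R}$ is what produces the $c/\sqrt{R}$ term.

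Your description of the mechanism in the ``hard'' case is also off. The cancellation in $B(\omega)$ does not come from equidistribution of $(p^{-1})_d$ as $p$ varies. Rather, for each fixed $p$ the identity of Fact~\ref{fact:rel-prime} gives
\[
e\!\left(\frac{ks(p^{-1})_m}{m}\right)=e\!\left(-\frac{Ks(m^{-1})_p}{p}\right)e\!\left(\frac{Ks}{pm}\right),\qquad K\in\{k,k-m\};
\]
when $|K|=\ell$ is small the second factor is $1+O(\ell/m)$ and the first is a nontrivial $p$-th root of unity provided $p\nmid K$, so the inner sum over $s\in S_p$ is at most $\disc_p(S_p)\,|S_p|$. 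The exceptional primes are therefore those dividing $k$ or $m-k$, numbering at most $\nu(k)+\nu(m-k)$; they have nothing to do with ``residue classes occupied by prime divisors of $m$'' (those primes are already excluded from the construction). Your accounting for the factor $\Delta$ on the exceptional primes is, however, correct and is exactly the one new ingredient relative to~\cite{sherstov18hardest-hs}.
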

\noindent The special case $\Delta=1$ in this result was proved in~\cite[Theorem~3.6]{sherstov18hardest-hs},
and that proof applies with cosmetic changes to any $\Delta\geq1.$
As a service to the reader, we provide the complete derivation below;
the treatment here is the same word for word as in~\cite{sherstov18hardest-hs}
except for one minor point of departure to handle arbitrary $\Delta\geq1$.
We use the same notation as in~\cite{sherstov18hardest-hs} and in
particular denote the modulus by lowercase $m,$ as opposed to the
uppercase $M$ in the main body of our paper (Theorem~\ref{thm:ajtai-iteration}).
Analogous to~\cite{sherstov18hardest-hs}, the presentation is broken
down into five key milestones, corresponding to Sections~\ref{subsec:Shorthand-notation}\textendash \ref{subsec:Finishing-the-proof}
below.

\subsection{Exponential notation\label{subsec:Shorthand-notation}}

In the remainder of this manuscript, we adopt the shorthand
\[
e(x)=\exp(2\pi x\iu),
\]
where $\iu$ is the imaginary unit. We will need the following bounds~\cite[Section~6.1]{sherstov18hardest-hs}:
\begin{align}
|1-e(x)| & \leq2\pi x, &  & 0\leq x\leq1,\label{eq:e-close-to-1}\\
|1-e(x)| & \geq4\min(x,1-x), &  & 0\leq x\leq1.\label{eq:e-far-from-1}
\end{align}

Let $\Pcal$ denote the set of prime numbers $p\in(P/2,P]$ with $p\nmid m.$
In this notation, the multiset $S$ is given by
\[
S=\{(r+s\cdot(p^{-1})_{m})\bmod m:p\in\Pcal,\;s\in S_{p},\;r=1,2,\ldots,R\}.
\]
There are precisely $\pi(P)-\pi(P/2)$ primes in $(P/2,P],$ of which
at most $\nu(m)$ are prime divisors of $m.$ Therefore,
\begin{equation}
|\Pcal|\geq\pi(P)-\pi\left(\frac{P}{2}\right)-\nu(m).\label{eq:Pcal-lower}
\end{equation}

\subsection{\label{subsec:Elements-of-Sm-are-nonzero}Elements of \emph{S} are
nonzero and distinct}

As our first step, we verify that the elements of $S$ are nonzero
and distinct modulo $m$. This part of the argument is reproduced
word for word from~\cite[Section~6.2]{sherstov18hardest-hs}.

Specifically, consider any $r\in\{1,2,\ldots,R\}$, any prime $p\in(P/2,P]$
with $p\nmid m,$ and any $s\in S_{p}.$ Then $pr+s\in[1,PR+P-1]\subseteq[1,m).$
This means that $pr+s\not\equiv0\pmod m,$ which in turn implies that
$r+s\cdot(p^{-1})_{m}\not\equiv0\pmod m$.

We now show that the multiset $S$ contains no repeated elements.
For this, consider any $r,r'\in\{1,2,\ldots,R\},$ any primes $p,p'\in\Pcal,$
and any $s\in S_{p}$ and $s'\in S_{p'}$ such that 
\begin{equation}
r+s\cdot(p^{-1})_{m}\equiv r'+s'\cdot(p'^{-1})_{m}\pmod m.\label{eq:modular-equality}
\end{equation}
Our goal is to show that $p=p',r=r',s=s'.$ To this end, multiply
(\ref{eq:modular-equality}) through by $pp'$ to obtain 
\begin{equation}
r\cdot pp'+s\cdot p'\equiv r'\cdot pp'+s'\cdot p\pmod m.\label{eq:modular-equality-pp}
\end{equation}
The left-hand side and right-hand side of (\ref{eq:modular-equality-pp})
are integers in $[1,RP^{2}+(P-1)P]\subseteq[1,m),$ whence
\begin{equation}
r\cdot pp'+s\cdot p'=r'\cdot pp'+s'\cdot p.\label{eq:straight-equality}
\end{equation}
This implies that $p\mid s\cdot p',$ which in view of $s<p$ and
the primality of $p$ and $p'$ forces $p=p'.$ Now (\ref{eq:straight-equality})
simplifies to 
\begin{equation}
r\cdot p+s=r'\cdot p+s',\label{eq:straight-equality-simplified}
\end{equation}
which in turn yields $s\equiv s'\pmod p$. Recalling that $s,s'\in\{1,2,\ldots,p-1\},$
we arrive at $s=s'.$ Finally, substituting $s=s'$ in~(\ref{eq:straight-equality-simplified})
gives $r=r'.$

\subsection{\label{subsec:Correlation-bound-for-k-small}Correlation for \emph{k}
small}

So far, we have shown that the elements of $S$ are distinct and nonzero.
To bound the $m$-discrepancy of this set, we must bound the exponential
sum 
\begin{equation}
\left|\sum_{s\in S}e\left(\frac{k}{m}\cdot s\right)\right|\label{eq:exp-sum}
\end{equation}
for all $k=1,2,\ldots,m-1.$ This subsection and the next provide
two complementary bounds on~(\ref{eq:exp-sum}). The first bound,
presented below, is preferable when $k$ is close to zero modulo $m.$
\begin{claim}
\label{claim:k-small}Let $k\in\{1,2,\ldots,m-1\}$ be given. Then
\begin{multline*}
\left|\sum_{s\in S}e\left(\frac{k}{m}\cdot s\right)\right|\\
\leq\left(\frac{2\pi\min(k,m-k)}{m}+\max_{p\in\Pcal}\{\disc_{p}(S_{p})\}+\frac{\nu(k)+\nu(m-k)}{|\Pcal|}\cdot\Delta\right)|S|.
\end{multline*}
\end{claim}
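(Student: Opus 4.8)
The plan is to expand the exponential sum over the explicit structure of $S$ and then estimate it by grouping the contributions according to the three parameters $r$, $p$, and $s$ that define the elements of $S$. Writing $s = (r + s'\cdot(p^{-1})_m) \bmod m$ for $r \in \{1,\dots,R\}$, $p \in \Pcal$, and $s' \in S_p$, the sum over $s \in S$ factors (after separating the $r$-sum from the $(p,s')$-sum) into a product of a geometric sum in $r$ and a double sum over $p$ and $s'$. This is the same starting point as in \cite[Section~6.3]{sherstov18hardest-hs}; however, in the regime where $k$ is close to $0$ or $m$, the geometric sum in $r$ is \emph{not} small, so we instead bound the $r$-sum trivially by $R$ and must work harder on the $(p,s')$-part.

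First I would reduce to the inner sum $\sum_{p\in\Pcal}\sum_{s'\in S_p} e\!\left(\frac{k}{m}\cdot s'\cdot (p^{-1})_m\right)$. The key number-theoretic observation, exactly as in \cite{sherstov18hardest-hs}, is that for a prime $p \in \Pcal$ one has $\frac{k\cdot(p^{-1})_m}{m} \equiv -\frac{k\cdot(m^{-1})_p}{p} \pmod 1$ up to a correction coming from Fact~\ref{fact:rel-prime} (applied with $a=p$, $b=m$), so that $e\!\left(\frac{k}{m}\cdot s'\cdot (p^{-1})_m\right)$ differs from $e\!\left(-\frac{k\cdot(m^{-1})_p}{p}\cdot s'\right)$ by a factor of $e\!\left(\frac{k s'}{pm}\right)$ plus integer shifts. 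The latter factor is within $2\pi \cdot \frac{k\cdot\min(\dots)}{pm}$, summed up appropriately, which yields the first term $\frac{2\pi\min(k,m-k)}{m}$ after using $|S_p| < p$ and $|\Pcal|\cdot R \cdot \max|S_p| \le |S|$ (one has to be slightly careful: since $\min(k,m-k)$ may correspond to $m-k$ rather than $k$, I would split into the cases $k \le m/2$ and $k > m/2$, replacing $e(kx)$ by $\overline{e((m-k)x)}$ in the second case). Once this substitution is made, the inner sum over $s' \in S_p$ is, up to the above error, $|S_p|$ times $\frac{1}{|S_p|}\sum_{s'\in S_p} e\!\left(-\frac{k(m^{-1})_p}{p} s'\right)$, whose modulus is at most $\disc_p(S_p)$ \emph{provided} $p \nmid k(m^{-1})_p$, i.e.\ provided $p \nmid k$ (since $\gcd((m^{-1})_p, p) = 1$).

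The remaining point — and the one I expect to be the main obstacle — is handling the ``bad'' primes $p \in \Pcal$ that divide $k$: for those primes $-\frac{k(m^{-1})_p}{p}$ is an integer, so $e(-\frac{k(m^{-1})_p}{p} s') = 1$ and the inner sum over $s'$ is just $|S_p|$ with no cancellation. The number of such primes is at most $\nu(k)$, and by the near-equality of the $|S_p|$ (differing by a factor $\le \Delta$) each contributes at most $\frac{\Delta}{|\Pcal|}$ of the total $(p,s')$-mass; summing over the bad primes gives the term $\frac{\nu(k)}{|\Pcal|}\cdot\Delta \cdot |S|$. (In the case $k > m/2$, the relevant bad primes are those dividing $m-k$, which accounts for the $\nu(m-k)$ term; one should in fact keep both terms to cover the substitution symmetrically.) This is precisely where the hypothesis that the cardinalities of the $S_p$ differ by a factor of at most $\Delta$ — the only departure from the $\Delta=1$ treatment of \cite{sherstov18hardest-hs} — is used, and tracking the $\Delta$ faithfully through the weighting argument is the one genuinely new piece of bookkeeping.

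Putting the three contributions together — the $2\pi\min(k,m-k)/m$ error from the Fact~\ref{fact:rel-prime} correction, the $\max_p \disc_p(S_p)$ term from the good primes, and the $\frac{\nu(k)+\nu(m-k)}{|\Pcal|}\Delta$ term from the bad primes — and multiplying back by the trivial bound $R$ on the $r$-sum together with the identity $|S| = R\sum_{p\in\Pcal}|S_p|$, yields the claimed inequality. I would present the computation as a single displayed chain of inequalities for $\left|\sum_{s\in S} e(ks/m)\right|$, normalizing by $|S|$ at the end, and I would state the split into $k \le m/2$ versus $k > m/2$ explicitly at the outset so that the two halves of the $\min$ and of $\nu(k)+\nu(m-k)$ are visibly symmetric.
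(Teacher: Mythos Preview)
Your proposal is correct and follows essentially the same approach as the paper: the paper unifies your two cases $k \le m/2$ and $k > m/2$ by defining $\Pcal' \subseteq \Pcal$ as the primes dividing neither $k$ nor $m-k$ (whence the sum $\nu(k)+\nu(m-k)$ arises directly) and then letting $K \in \{k, k-m\}$ be whichever has smaller absolute value. One small slip to fix when you write it up: the inequality $|\Pcal|\cdot R \cdot \max_p|S_p| \le |S|$ is backwards and in any case unnecessary---the identity $|S| = R\sum_{p\in\Pcal}|S_p|$ already does the job, and what cancels the $p$ in the denominator of the correction term is $s' < p$ for each $s'\in S_p$, not $|S_p| < p$.
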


This claim generalizes the analogous statement in~\cite[Claim 6.10]{sherstov18hardest-hs},
where the special case $\Delta=1$ was considered.
\begin{proof}
Let $\Pcal'$ be the set of those primes in $\Pcal$ that divide neither
$k$ nor $m-k.$ Then clearly 
\begin{equation}
|\Pcal\setminus\Pcal'|\leq\nu(k)+\nu(m-k).\label{eq:P-minus-P'}
\end{equation}
Exactly as in~\cite{sherstov18hardest-hs}, we have
\begin{align}
 & \left|\sum_{s\in S}e\left(\frac{k}{m}\cdot s\right)\right|\nonumber \\
 & \qquad=\left|\sum_{r=1}^{R}\sum_{p\in\Pcal}\sum_{s\in S_{p}}e\left(\frac{k}{m}\cdot(r+s\cdot(p^{-1})_{m})\right)\right|\nonumber \\
 & \qquad\leq\sum_{r=1}^{R}\sum_{p\in\Pcal}\left|\sum_{s\in S_{p}}e\left(\frac{k}{m}\cdot(r+s\cdot(p^{-1})_{m})\right)\right|\nonumber \\
 & \qquad=R\sum_{p\in\Pcal}\left|\sum_{s\in S_{p}}e\left(\frac{ks\cdot(p^{-1})_{m}}{m}\right)\right|\nonumber \\
 & \qquad\le R\sum_{p\in\Pcal'}\left|\sum_{s\in S_{p}}e\left(\frac{ks\cdot(p^{-1})_{m}}{m}\right)\right|+R\sum_{p\in\Pcal\setminus\Pcal'}\left|\sum_{s\in S_{p}}e\left(\frac{ks\cdot(p^{-1})_{m}}{m}\right)\right|\nonumber \\
 & \qquad\leq R\sum_{p\in\Pcal'}\left|\sum_{s\in S_{p}}e\left(\frac{ks\cdot(p^{-1})_{m}}{m}\right)\right|+R\sum_{p\in\Pcal\setminus\Pcal'}|S_{p}|.\label{eq:k-near-zero-decomposition}
\end{align}

We proceed to bound the two summations in~(\ref{eq:k-near-zero-decomposition}).
Bounding the second summation is straightforward:
\begin{align}
R\sum_{p\in\Pcal\setminus\Pcal'}|S_{p}| & \leq R\cdot\frac{|\Pcal\setminus\Pcal'|}{|\Pcal|}\sum_{p\in\Pcal}\Delta|S_{p}|\nonumber \\
 & =\frac{|\Pcal\setminus\Pcal'|}{|\Pcal|}\cdot\Delta|S|\nonumber \\
 & \leq\frac{\nu(k)+\nu(m-k)}{|\Pcal|}\cdot\Delta|S|,\label{eq:k-near-zero-easy-sum}
\end{align}
where the first step is valid because the cardinalities of any two
sets $S_{p}$ differ by a factor of at most $\Delta,$ and the last
step uses~(\ref{eq:P-minus-P'}). This three-line derivation is our
only point of departure from the treatment in~\cite{sherstov18hardest-hs}.

The other summation in~(\ref{eq:k-near-zero-decomposition}) is analyzed
exactly as in~\cite{sherstov18hardest-hs}. For $p\in\Pcal'$ and
$K\in\{k,k-m\},$ we have
\begin{align*}
 & \hspace{-7mm}\left|\sum_{s\in S_{p}}e\left(\frac{ks\cdot(p^{-1})_{m}}{m}\right)\right|\\
 & =\left|\sum_{s\in S_{p}}e\left(\frac{Ks\cdot(p^{-1})_{m}}{m}\right)\right|\\
 & =\left|\sum_{s\in S_{p}}e\left(-\frac{Ks\cdot(m^{-1})_{p}}{p}\right)e\left(\frac{Ks}{pm}\right)\right|\\
 & \leq\left|\sum_{s\in S_{p}}e\left(-\frac{Ks\cdot(m^{-1})_{p}}{p}\right)\left(e\left(\frac{Ks}{pm}\right)-1\right)\right|+\left|\sum_{s\in S_{p}}e\left(-\frac{Ks\cdot(m^{-1})_{p}}{p}\right)\right|\\
 & \leq\left|\sum_{s\in S_{p}}e\left(-\frac{Ks\cdot(m^{-1})_{p}}{p}\right)\left(e\left(\frac{Ks}{pm}\right)-1\right)\right|+\disc_{p}(S_{p})\cdot|S_{p}|\\
 & \leq\sum_{s\in S_{p}}\left|e\left(\frac{Ks}{pm}\right)-1\right|+\disc_{p}(S_{p})\cdot|S_{p}|\\
 & =\sum_{s\in S_{p}}\left|e\left(\frac{|K|s}{pm}\right)-1\right|+\disc_{p}(S_{p})\cdot|S_{p}|\\
 & \leq|S_{p}|\cdot\frac{2\pi|K|}{m}+\disc_{p}(S_{p})\cdot|S_{p}|,
\end{align*}
where the second step uses Fact~\ref{fact:rel-prime} and the relative
primality of $p$ and $m$; the third step applies the triangle inequality;
the fourth step follows from $p\nmid|K|$, and the last step is valid
by~(\ref{eq:e-close-to-1}) and $s<p$. We have shown that
\begin{align*}
\left|\sum_{s\in S_{p}}e\left(\frac{ks\cdot(p^{-1})_{m}}{m}\right)\right| & \leq\frac{2\pi\min(k,m-k)}{m}\cdot|S_{p}|+\disc_{p}(S_{p})\cdot|S_{p}|
\end{align*}
for $p\in\Pcal'.$ Summing over $\Pcal',$
\begin{align}
R\sum_{p\in\Pcal'} & \left|\sum_{s\in S_{p}}e\left(\frac{ks\cdot(p^{-1})_{m}}{m}\right)\right|\nonumber \\
 & \qquad\leq R\sum_{p\in\Pcal'}\left(\frac{2\pi\min(k,m-k)}{m}\cdot|S_{p}|+\disc_{p}(S_{p})\cdot|S_{p}|\right)\nonumber \\
 & \qquad\leq R\sum_{p\in\Pcal}\left(\frac{2\pi\min(k,m-k)}{m}\cdot|S_{p}|+\disc_{p}(S_{p})\cdot|S_{p}|\right)\nonumber \\
 & \qquad\leq\left(\frac{2\pi\min(k,m-k)}{m}+\max_{p\in\Pcal}\{\disc_{p}(S_{p})\}\right)R\sum_{p\in\Pcal}|S_{p}|\nonumber \\
 & \qquad=\left(\frac{2\pi\min(k,m-k)}{m}+\max_{p\in\Pcal}\{\disc_{p}(S_{p})\}\right)|S|.\label{eq:k-near-zero-hard-sum}
\end{align}
By~(\ref{eq:k-near-zero-decomposition})\textendash (\ref{eq:k-near-zero-hard-sum}),
the proof of the claim is complete.
\end{proof}

\subsection{\label{subsec:Correlation-bound-for-k-large}Correlation for $k$
large}

We now present an alternative bound on the exponential sum~(\ref{eq:exp-sum}),
which is preferable to the bound of Claim~\ref{claim:k-small} when
$k$ is far from zero modulo $m.$ This part of the proof is reproduced
verbatim from~\cite[Section~6.4]{sherstov18hardest-hs}.
\begin{claim}
\label{claim:k-large}Let $k\in\{1,2,\ldots,m-1\}$ be given. Then
\[
\left|\sum_{s\in S}e\left(\frac{k}{m}\cdot s\right)\right|\leq\frac{m}{2R\min(k,m-k)}\cdot|S|.
\]
\end{claim}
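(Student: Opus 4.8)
The plan is to exploit the arithmetic-progression structure of $S$ in the parameter $r$, which turns the relevant exponential sum into a union of short geometric series. Writing out the definition of $S$ and using that $e(\tfrac{k}{m}\cdot s)$ depends only on $s\bmod m$ (since $k$ is an integer), one obtains
\[
\sum_{s\in S}e\!\left(\frac{k}{m}\cdot s\right)=\sum_{p\in\Pcal}\sum_{s\in S_p}\left(\sum_{r=1}^{R}e\!\left(\frac{kr}{m}\right)\right)e\!\left(\frac{k\,s\,(p^{-1})_m}{m}\right),
\]
where the splitting of the multiset $S$ into the triples $(r,p,s)$ is legitimate by the distinctness of the elements of $S$ established in Section~\ref{subsec:Elements-of-Sm-are-nonzero} (and in any case harmless for an upper bound). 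Note that the inner geometric sum over $r$ does not depend on $p$ or $s$.

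Next I would bound that geometric sum. Since $1\le k\le m-1$, we have $e(k/m)\ne 1$, so
\[
\left|\sum_{r=1}^{R}e\!\left(\frac{kr}{m}\right)\right|=\left|\frac{e(k(R+1)/m)-e(k/m)}{e(k/m)-1}\right|\le\frac{2}{\left|1-e(k/m)\right|},
\]
and by the elementary inequality~(\ref{eq:e-far-from-1}), $\left|1-e(k/m)\right|\ge 4\min(k/m,1-k/m)=4\min(k,m-k)/m$. Hence
\[
\left|\sum_{r=1}^{R}e\!\left(\frac{kr}{m}\right)\right|\le\frac{m}{2\min(k,m-k)}.
\]

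Finally I would assemble the estimate by the triangle inequality, pulling the above bound (which is uniform in $p,s$, and each $e(\cdot)$ factor has modulus $1$) outside the double sum:
\[
\left|\sum_{s\in S}e\!\left(\frac{k}{m}\cdot s\right)\right|\le\frac{m}{2\min(k,m-k)}\sum_{p\in\Pcal}\sum_{s\in S_p}1=\frac{m}{2\min(k,m-k)}\cdot\frac{|S|}{R},
\]
using $\sum_{p\in\Pcal}|S_p|=|S|/R$ by the definition of $S$. This is precisely the claimed bound $\frac{m}{2R\min(k,m-k)}\cdot|S|$. I do not anticipate a genuine obstacle here: the only point requiring a moment's care is the geometric-series manipulation together with the use of~(\ref{eq:e-far-from-1}) so that the estimate covers $k$ close to $m$ as well as $k$ close to $0$; everything else is a direct triangle-inequality calculation.
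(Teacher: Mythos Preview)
Your proposal is correct and follows essentially the same argument as the paper: expand $S$ over the triples $(r,p,s)$, sum the geometric series in $r$, bound it by $2/|1-e(k/m)|\le m/(2\min(k,m-k))$ via~(\ref{eq:e-far-from-1}), and use $\sum_{p\in\Pcal}|S_p|=|S|/R$. The only cosmetic difference is that you factor out the $r$-sum before taking absolute values, whereas the paper applies the triangle inequality first and then simplifies the inner sum.
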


\begin{proof}[Proof:]
\begin{align*}
\left|\sum_{s\in S}e\left(\frac{k}{m}\cdot s\right)\right| & =\left|\sum_{p\in\Pcal}\sum_{s\in S_{p}}\sum_{r=1}^{R}e\left(\frac{k}{m}\cdot(r+s\cdot(p^{-1})_{m})\right)\right|\\
 & \leq\sum_{p\in\Pcal}\sum_{s\in S_{p}}\left|\sum_{r=1}^{R}e\left(\frac{k}{m}\cdot(r+s\cdot(p^{-1})_{m})\right)\right|\\
 & =\sum_{p\in\Pcal}\sum_{s\in S_{p}}\left|\sum_{r=1}^{R}e\left(\frac{kr}{m}\right)\right|\\
 & =\sum_{p\in\Pcal}\sum_{s\in S_{p}}\frac{|1-e(kR/m)|}{|1-e(k/m)|}\\
 & \leq\sum_{p\in\Pcal}\sum_{s\in S_{p}}\frac{2}{|1-e(k/m)|}\\
 & \leq\sum_{p\in\Pcal}\sum_{s\in S_{p}}\frac{m}{2\min(k,m-k)}\\
 & =\frac{m}{2R\min(k,m-k)}\cdot|S|,
\end{align*}
where the last two steps use~(\ref{eq:e-far-from-1}) and $|S|=R\sum_{p\in\Pcal}|S_{p}|$,
respectively.
\end{proof}

\subsection{\label{subsec:Finishing-the-proof}Finishing the proof}

The remainder of the proof is reproduced without changes from~\cite[Section~6.5]{sherstov18hardest-hs},
except for the use of the updated bound in Claim~\ref{claim:k-small}
for arbitrary $\Delta\geq1.$

Specifically, Facts~\ref{fact:PNT} and~\ref{fact:num-prime-factors}
imply that
\begin{align}
\pi(P)-\pi\left(\frac{P}{2}\right) & \geq\frac{P}{C\log P}\qquad\qquad(P\geq C),\label{eq:many-primes}\\
\max_{k=1,2,\ldots,m}\nu(k) & \leq\frac{C\log m}{\log\log m},\label{eq:few-prime-factors}
\end{align}
where $C\geq1$ is a constant independent of $R,P,m,\Delta.$ Moreover,
$C$ can be easily calculated from the explicit bounds in Facts~\ref{fact:PNT}
and~\ref{fact:num-prime-factors}. We will show that the theorem
conclusion~(\ref{eq:disc-S}) holds with $c=4C^{2}.$ We may assume
that
\begin{align}
 & P\geq C,\label{eq:P-geq-C}\\
 & \frac{C\log m}{\log\log m}\leq\frac{P}{2C\log P},\label{eq:more-primes-than-prime-factors}
\end{align}
since otherwise the right-hand side of~(\ref{eq:disc-S}) exceeds~$1$
and the theorem is trivially true. By~(\ref{eq:Pcal-lower}) and
(\ref{eq:many-primes})\textendash (\ref{eq:more-primes-than-prime-factors}),
we obtain
\[
|\Pcal|\geq\frac{P}{2C\log P},
\]
which along with~(\ref{eq:few-prime-factors}) gives
\begin{align}
\max_{k=1,2,\ldots,m-1}\frac{\nu(k)+\nu(m-k)}{|\Pcal|} & \leq\frac{2C\log m}{\log\log m}\cdot\frac{2C\log P}{P}\nonumber \\
 & =\frac{c\log m}{\log\log m}\cdot\frac{\log P}{P}.\label{eq:ratio-bound}
\end{align}
Claims~\ref{claim:k-small} and~\ref{claim:k-large} ensure that
for every $k=1,2,\ldots,m-1,$
\begin{align*}
\left|\sum_{s\in S}e\left(\frac{k}{m}\cdot s\right)\right| & \leq\left(\min\left(\frac{2\pi\min(k,m-k)}{m},\frac{m}{2R\min(k,m-k)}\right)\right.\\
 & \qquad\qquad\left.+\max_{p\in\Pcal}\{\disc_{p}(S_{p})\}+\frac{\nu(k)+\nu(m-k)}{|\Pcal|}\cdot\Delta\right)|S|\\
 & \leq\left(\sqrt{\frac{\pi}{R}}+\max_{p\in\Pcal}\{\disc_{p}(S_{p})\}+\frac{\nu(k)+\nu(m-k)}{|\Pcal|}\cdot\Delta\right)|S|\\
 & \leq\left(\frac{c}{\sqrt{R}}+\max_{p\in\Pcal}\{\disc_{p}(S_{p})\}+\frac{\nu(k)+\nu(m-k)}{|\Pcal|}\cdot\Delta\right)|S|;
\end{align*}
here we are using the updated bound from Claim~\ref{claim:k-small}
in this paper for general $\Delta$. Substituting the estimate from~(\ref{eq:ratio-bound}),
we conclude that
\begin{multline*}
\max_{k=1,2,\ldots,m-1}\left|\sum_{s\in S}e\left(\frac{k}{m}\cdot s\right)\right|\\
\leq\left(\frac{c}{\sqrt{R}}+\max_{p\in\Pcal}\{\disc_{p}(S_{p})\}+\frac{c\log m}{\log\log m}\cdot\frac{\log P}{P}\cdot\Delta\right)|S|.\qquad\qquad
\end{multline*}
This conclusion is equivalent to~(\ref{eq:disc-S}). The proof of
Theorem~\ref{thm:ajtai-iteration} is complete.

\addtocontents{toc}{\setcounter{tocdepth}{1}}  
\end{document}